\let\newfloat\newfloat@ltx
\def\HC{\mathcal{H}}
\def\KC{\mathcal{K}}
\def\LC{\mathcal{L}}
\def\ad{^{\dagger}}
\tikzset{every picture/.style=remember picture}
\newcommand{\dya}[1]{\ket{#1}\!\bra{#1}}
\newcommand{\poly}{\operatorname{poly}}
\newcommand{\Ebb}{\mathbb{E}}
\newcommand{\Ubb}{\mathbb{U}}
\newcommand{\AC}{\mathcal{A}}
\newcommand{\BC}{\mathcal{B}}
\newcommand{\DC}{\mathcal{D}}
\newcommand{\IC}{\mathcal{I}}
\newcommand{\MC}{\mathcal{M}}
\newcommand{\NC}{\mathcal{N}}
\newcommand{\OC}{\mathcal{O}}
\newcommand{\PC}{\mathcal{P}}
\newcommand{\QC}{\mathcal{Q}}
\newcommand{\SC}{\mathcal{S}}
\newcommand{\UC}{\mathcal{U}}
\newcommand{\VC}{\mathcal{V}}
\newcommand{\WC}{\mathcal{W}}
\newcommand{\XC}{\mathcal{X}}
\newcommand{\Var}{{\rm Var}}
\newcommand{\Cov}{{\rm Cov}}
\renewcommand{\geq}{\geqslant}
\renewcommand{\leq}{\leqslant}
\renewcommand{\vec}[1]{\boldsymbol{#1}}  
\newcommand{\bs}{\textsf{BS}}
\def\be{\begin{equation}}
\def\ee{\end{equation}}
\def\bs{\begin{split}}
\def\e{\end{split}}
\def\ba{\begin{eqnarray}}
\def\bea{\begin{eqnarray}}
\def\tea{\end{eqnarray}}
\def\ea{\end{eqnarray}}
\def\eea{\end{eqnarray}}
\newtheorem{theorem}{Theorem}
\newtheorem{lemma}{Supplemental Lemma}
\newtheorem{corollary}{Corollary}
\newtheorem{proposition}{Proposition}
\newtheorem*{proposition*}{Proposition}
\newtheorem{supplemental_proposition}{Supplemental Proposition}
\newtheorem{supplemental_corollary}{Supplemental Corollary}
\newtheorem{supplemental_theorem}{Supplemental Theorem}
\newtheorem{definition}{Definition}
\newcommand{\sth}[1]{{\color{black} #1}}
\def\be{\begin{equation}}
\def\te{\end{equation}}
\def\ee{\end{equation}}
\def\ba{\begin{eqnarray}}
\def\bea{\begin{eqnarray}}
\def\tea{\end{eqnarray}}
\def\ea{\end{eqnarray}}
\def\eea{\end{eqnarray}}
\begin{document}

\title{Exponential concentration in quantum kernel methods}

\author{Supanut Thanasilp}
\affiliation{Centre for Quantum Technologies, National University of Singapore, 3 Science Drive 2 117543, Singapore.}
\affiliation{Institute of Physics, Ecole Polytechnique F\'{e}d\'{e}rale de Lausanne (EPFL), CH-1015 Lausanne, Switzerland}

\author{Samson Wang}
\affiliation{Imperial College London, London, UK.}

\author{M. Cerezo}
\affiliation{Information Sciences, Los Alamos National Laboratory, Los Alamos, NM, USA.}
\affiliation{Quantum Science Center, Oak Ridge, TN 37931, USA}

\author{Zo\"{e} Holmes}
\affiliation{Information Sciences, Los Alamos National Laboratory, Los Alamos, NM, USA.}
\affiliation{Institute of Physics, Ecole Polytechnique F\'{e}d\'{e}rale de Lausanne (EPFL), CH-1015 Lausanne, Switzerland}

\date{\today}

\begin{abstract}
Kernel methods in Quantum Machine Learning (QML) have recently gained significant attention as a potential candidate for achieving a quantum advantage in data analysis. Among other attractive properties, when training a kernel-based model one is guaranteed to find the optimal model's parameters due to the convexity of the training landscape. However, this is based on the assumption that the quantum kernel can be efficiently obtained from quantum hardware.
In this work we study the performance of quantum kernel models from the perspective of the resources needed to accurately estimate kernel values. 
We show that, under certain conditions, values of quantum kernels over different input data can be exponentially concentrated (in the number of qubits) towards some fixed value. Thus on training with a polynomial number of measurements, one ends up with a trivial model where the predictions on unseen inputs are \textit{independent of the input data}.
We identify four sources that can lead to concentration including: \textit{expressivity of data embedding, global measurements, entanglement} and \textit{noise}. For each source, an associated concentration bound of quantum kernels is analytically derived.
Lastly, we show that when dealing with classical data, 
training a parametrized data embedding with a kernel alignment method is also susceptible to exponential concentration. Our results are verified through numerical simulations for several QML tasks. Altogether, we provide guidelines indicating that certain features should be avoided to ensure the efficient evaluation of quantum kernels and so the performance of quantum kernel methods. 

\end{abstract}

\maketitle

\section{Introduction}

Quantum machine learning (QML) has generated tremendous amounts of excitement, but it is important not to over-hype its potential.
On the one hand, a family of impressive results have recently established a provable separation between the power of classical and quantum machine learning methods in a range of contexts~\cite{biamonte2017quantum,huang2021quantum,huang2021information,aharonov2021quantum,sweke2021on,huang2021power,kubler2021inductive,liu2021rigorous, jager2023universal, wu2023quantum}. On the other, many proposals remain heuristic and there are significant questions yet to be answered on the efficient scalability of QML methods.

Quantum kernel methods, which involve embedding classical data into quantum states and then computing their inner-products (i.e., their kernels), or in the case of quantum data directly computing input state overlaps, are widely viewed as particularly promising family of QML algorithms to achieve a practical quantum advantage. 
To ensure provable quantum speed-up over classical algorithms, the key is to construct the embedding (also called a quantum feature map) that is capable of recognizing classically intractable complex patterns~\cite{huang2021power,kubler2021inductive,liu2021rigorous}. 
Quantum kernels are expected to find use in a mix of scientific and practical applications including classifying types of supernovae in cosmology~\cite{peters2021machine}, probing phase transitions in quantum many-body physics~\cite{sancho2022quantum} and detecting fraud in finance~\cite{kyriienko2022unsupervised}. Moreover, kernel methods are famously said to enjoy trainability guarantees due to the convexity of their loss landscapes~\cite{schuld2022is,schuld2021supervised,gentinetta2022complexity,hofmann2008kernel}.

This is in contrast to Quantum Neural Networks (QNNs) where the loss landscape is generally non-convex~\cite{cerezo2020variationalreview,huembeli2021characterizing} and can exhibit Barren Plateaus (BPs).
A barren plateau is a cost landscape where the magnitudes
of gradients vanish exponentially with growing problem size~\cite{mcclean2018barren,cerezo2020cost,larocca2021diagnosing,marrero2020entanglement,patti2020entanglement,holmes2021connecting,holmes2021barren,zhao2021analyzing,wang2020noise,thanasilp2021subtleties,cerezo2020impact,arrasmith2020effect,wang2021can}. 
There are a number of causes that can lead to barren plateaus, including using variational ansatze that are too expressive~\cite{mcclean2018barren,holmes2020barren,tangpanitanon2020expressibility} or too entangling~\cite{marrero2020entanglement,sharma2020trainability}. However, barren plateaus can even arise for inexpressive and low-entangling QNNs if the cost function relies on measuring global properties of the system~\cite{cerezo2020cost} or if the training dataset is too random~\cite{thanasilp2021subtleties,li2022concentration}. Hardware errors can also wash out landscape features leading to noise-induced barren plateaus~\cite{wang2020noise,franca2020limitations}. 

Here we argue that quantum kernel methods experience a similar barrier to barren plateaus. Crucially, the trainability guarantees enjoyed by kernel methods only become meaningful when the values of the kernel can be efficiently estimated to a sufficient precision such that the statistical estimates contain information about the input data. We show that under certain conditions, the value of quantum kernels can exponentially concentrate (with increasing number of qubits) around a fixed value. In such cases, the number of shots required to resolve the kernels to a sufficiently high accuracy scales exponentially. This indicates that the efficient evaluation of quantum kernels cannot always be taken for granted. Consequently, when the kernel values are estimated with a polynomial number of measurements, the trained model with high probability becomes independent of input data. That is, the predictions of the model on unseen data are the same for any target problem that suffers from exponential concentration and thus the learnt model is, for all intents and purposes, useless. This is summarized in Fig.~\ref{fig:summary}.

\begin{figure*}
    \centering
    \includegraphics[width=0.95\textwidth]{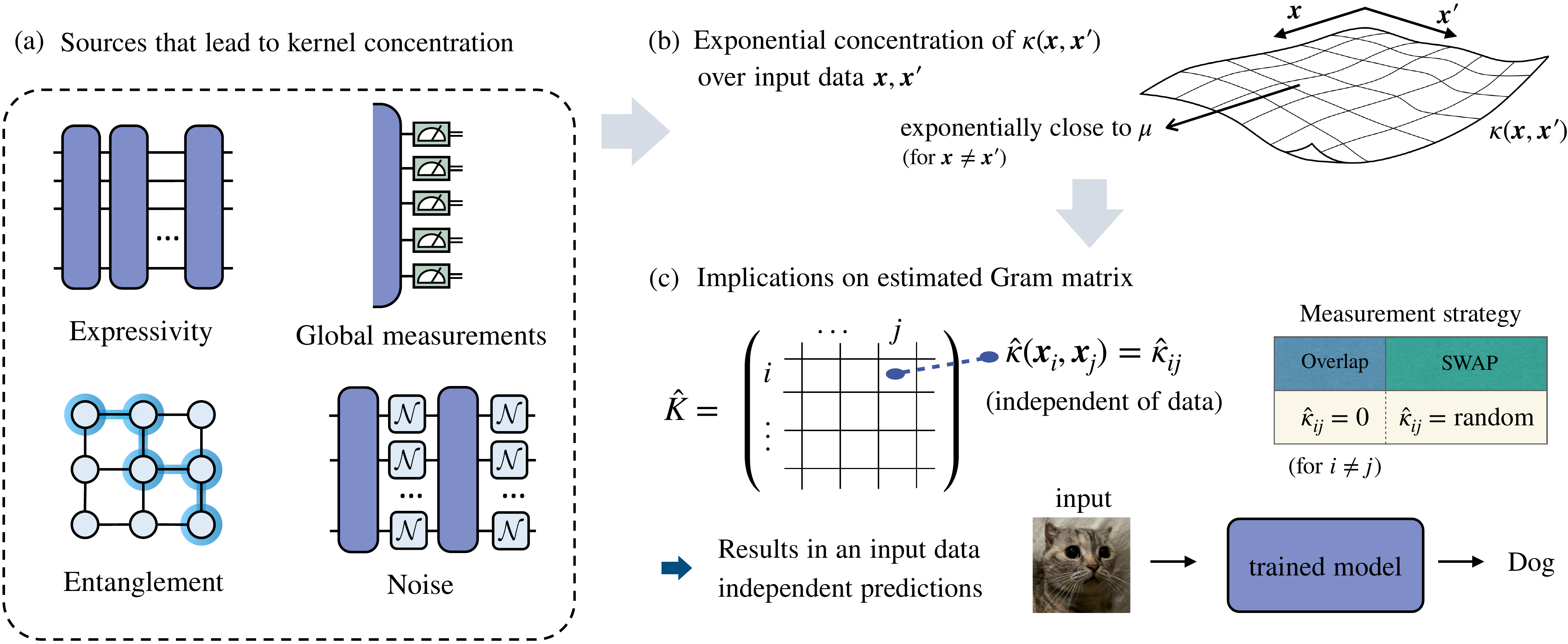}
    \caption{\textbf{Exponential concentration and its implications on kernel methods:} 
    The exponential concentration (in the number of qubits $n$) of quantum kernels $\kappa(\vec{x},\vec{x'})$, over all possible input data pairs $\vec{x},\vec{x'}$, can be seen to stem from the difficulty of information extraction from data quantum states due to various sources (illustrated in panels (a) and (b)). 
    The kernel concentration has a detrimental impact on the performance of quantum kernel-based methods. As shown in panel (c), for a polynomial (in $n$) number of measurement shots, the statistical estimates of the off-diagonal elements in the Gram matrix $\hat{\kappa}(\vec{x}_i,\vec{x}_j)$ contain no information about the input data (with high probability) i.e., each $\hat{\kappa}(\vec{x}_i,\vec{x}_j) = \hat{\kappa}_{ij}$. The exact behaviour of the estimated kernel value depends on the measurement strategy: for the Loschmidt Echo test (i.e., the overlap test), $\hat{\kappa}_{ij}$ concentrates to $0$ for $i \neq j$ (corresponding to the estimated Gram matrix being an identity $\mathbb{1}$) and for the SWAP test $\hat{\kappa}_{i,j}$ for $i \neq j$ is indistinguishable from a data-independent random variable (corresponding to the estimated Gram matrix being a random matrix). Ultimately, this leads to a trivial model where the predictions on unseen inputs are independent of the training data.} 
    \label{fig:summary}
\end{figure*}

\begin{table}[t]
\centering
\begin{tabular}{||c | c | c||}
    \hline
    Sources of concentration  & Kernels & QNNs\\
    \hline \hline
    Expressivity & Theorem~\ref{thm:expressivity-kernel}  & Ref.~\cite{mcclean2018barren,holmes2021connecting}   \\
    \hline
    Global measurements & Proposition~\ref{prop:global-measurement}  & Ref.~\cite{cerezo2020cost} \\
    \hline
    Entanglement & Corollary~\ref{coro:entanglement-induced}  & Ref.~\cite{marrero2020entanglement,sharma2020trainability}, \\
    \hline
    Noise & Theorem~\ref{thm:noise-kernel} & Ref.~\cite{wang2020noise}\\
    \hline
\end{tabular}
\caption{\textbf{Summary of our main results:}
This table summarizes our key analytical results on different sources that lead to the exponential concentration in quantum kernels as compared with BP results of QNNs in the literature.}
\label{table:KernelsVsQML}
\end{table}

This concentration of quantum kernels can in broad terms be viewed as a result of the fact that it can be extremely difficult to extract any useful information from the (necessarily) exponentially large Hilbert space (especially in the presence of noise). 
We show that analogous to the causes of BPs for QNNs there are at least three different mechanisms that can lead to the exponential concentration of the encoded quantum states, including (i) the expressivity of the encoded quantum state ensemble, (ii) the entanglement in encoded quantum states with a local observable and (iii) the effect of noise. We further show that for the case of the commonly used fidelity kernel~\cite{schuld2021supervised,havlivcek2019supervised}, the dependence of global measurements to evaluate the kernel can lead to exponential concentration even when the expressivity of the embedding and the entanglement of the data states are low. 
In all cases, we establish exponential concentration by deriving an analytic bound (summarized in Table~\ref{table:KernelsVsQML}). We further provide numerical results demonstrating these effects for different learning tasks.

Our work on embedding-induced concentration suggests that problem-inspired embeddings should be used over problem-agnostic embeddings (which are typically highly expressive and entangling). 
For instance, one can construct embeddings encoding the geometrical properties of the data~\cite{larocca2022group,meyer2022exploiting,skolik2022equivariant,sauvage2022building,glick2021covariant}. However, additional care should be taken if such embeddings are to be found through optimizing embedding architectures, since we show this training embedding process can also exhibit barren plateaus. 
Furthermore, we consider the projected quantum kernel which is constructed by measuring local subsystems and has been shown to maintain good generalization in a situation where the fidelity kernel fails to generalize~\cite{huang2021power,kubler2021inductive}. 

In contrast to QNNs where the trainability barrier caused by BPs is now common knowledge, the community is generally less aware of the problems posed by exponential concentration for quantum kernel methods. 
\sth{The problem of exponential concentration for the fidelity quantum kernel was first observed in Ref.~\cite{huang2021power} and later analyzed in Ref.~\cite{kubler2021inductive, shaydulin2021importance, canatar2022bandwidth} in the context of generalization. Ref.~\cite{kubler2021inductive} discusses exponential concentration in the context of a projected quantum kernel for a specific example embedding.
On the other hand, Refs.~\cite{gentinetta2022complexity, liu2021rigorous} provide a rigorous study of the number of measurement shots required to successfully train the fidelity kernel but do not address the issue of exponential concentration.} Here we provide a systematic treatment of the causes and effects of exponential concentration in the presence of shot noise. We intend our results to be viewed as a guideline to the types of kernels and embeddings to be avoided for successful training. Moreover, our results on noise-induced kernel concentration serve as a warning against using deep encoding schemes in the near-term. \sth{For a more detailed survey of how our results fit in the context of prior work see Appendix~\ref{ap:priorwork}.} 



\section{Results}
\subsection{Framework}\label{sec:framework-kernels-qml}

Our results apply generally to any method that involves quantum kernels. This includes both supervised learning tasks such as regression and classification tasks, as well as unsupervised learning tasks such as generative modeling and dimensional reduction. However, for concreteness we focus on supervised learning on classical data. Here, 
one is given repeated access to a training dataset $ \mathcal{S} := \{\vec{x}_i,\vec{y}_i\}_{i=1}^{N_s}$, where $\vec{x}_i \in \mathcal{X}$ are input vectors and $\vec{y}_i \in \mathcal{Y}$ are associated labels. The input vectors and labels are related by some unknown target function $f: \mathcal{X} \rightarrow \mathcal{Y}$. Our task is to use the dataset to train a parameterized QML model $h_{\vec{a}}$, i.e. a function $h_{\vec{a}}: \mathcal{X} \rightarrow \mathcal{Y}$ parameterized by $\vec{a}$, to approximate $f$.

The model can be trained by  introducing an empirical loss $\mathcal{L}_{\vec{a}}$ which quantifies the degree to which the model $h_{\vec{a}}$ agrees with the target function $f$ over the training data $\mathcal{S}$. The optimal parameters of the model are given by 
\begin{align}
    \vec{a}_{\rm opt} := {\rm argmin}_{\vec{a}} \mathcal{L}_{\vec{a}}(\mathcal{S}) \,, \label{eq:optimal-params}
\end{align}
and can be obtained by minimizing the empirical loss.
Once trained, the model is tested on some unseen data. The hope is that if the dataset is sufficiently large and appropriately chosen, the optimized function $h_{ \vec{a}_{\rm opt}}$ not only agrees on the training set but also accurately predicts the correct labels on unseen inputs. This is
exactly the question of \textit{generalization}~\cite{huang2021power,kubler2021inductive,liu2021rigorous,heyraud2022noisy,shaydulin2021importance,canatar2022bandwidth,wang2021towards,jerbi2021quantum, caro2020pseudo, bu2021onthestatistical, banchi2021generalization,gyurik2021structural, abbas2020power, du2021efficient, caro2021encodingdependent, chen2021expressibility, popescu2021learning, caro2021generalization, cai2022sample, caro2022outofdistribution, poland2020no, sharma2020reformulation, Volkoff2021Universal, jerbi2021parametrized,wu2023quantum, jager2023universal}: 
does successful training on the training data imply good predictive power on
unseen data?

In what follows we focus on quantum kernel methods. 
Here, each individual input data point $\vec{x}_i$ is encoded into an $n$-qubit data-encoded quantum state $\rho(\vec{x}_i)$ using a data-embedding unitary $U(\vec{x}_i)$, so that
\begin{align}
    \rho(\vec{x}_i) = U(\vec{x}_i) \rho_0 U^\dagger(\vec{x}_i) \;,
\end{align}
for some initial state $\rho_0$.
Consequently, the training input dataset can be seen as an ensemble of data-encoded quantum states. For now, we leave the choice of $U(\vec{x}_i)$ entirely arbitrary, and thus this framework includes all unitary embedding schemes.

For a given input data pair $\vec{x}$ and $\vec{x'}$ we evaluate a similarity measure  $\kappa(\vec{x}, \vec{x'})$ between two encoded quantum states on a quantum computer. Formally, this is a function $\kappa:\XC\times\XC\rightarrow\mathbb{R}$ corresponding to an inner product of data states, and is known as a quantum kernel \cite{schuld2021supervised,havlivcek2019supervised, huang2021power}. Here, we consider two common choices of quantum kernels. First, we study the fidelity quantum kernel~\cite{schuld2021supervised,havlivcek2019supervised}, which is defined as 
\begin{align}
    \kappa^{FQ}(\vec{x},\vec{x'}) & = \Tr[\rho(\vec{x})\rho(\vec{x'})] \;. \label{eq:fidelity-kernel-mt}
\end{align}
Second, we consider the projected quantum kernel~\cite{huang2021power}, given by
\begin{align}
    \kappa^{PQ}(\vec{x},\vec{x'}) = {\rm exp}\left( - \gamma \sum_{k=1}^n \| \rho_k(\vec{x}) - \rho_k(\vec{x'})\|^2_2\right) \;, \label{eq:projected-gaussian-kernel-mt}
\end{align}
where $\rho_k(\vec{x})$ is the reduced state of $\rho(\vec{x})$ on the $k$-the qubit, $\|\cdot\|_2$ is the Schatten 2-norm and $\gamma$ is a positive hyperparameter. 

The power of kernel-based learning methods stems from the fact that 
they map data from $\XC$ to a higher-dimensional feature space (in this case the $2^n$-dimensional Hilbert space) where inner products are taken and a decision boundary such as a support vector machine can be trained~\cite{mohri2018foundations}. Notably, thanks to the Representer Theorem, the optimal kernel-based model is guaranteed to be expressed as a linear combination of the kernels evaluated over the training dataset (see Chapter 5 in~\cite{mohri2018foundations}). More concretely, for a kernel-based QML model $h_{\vec{a}}$ depends on the input data through the inner product between states. We have that the optimal solution is given by
\begin{align}\label{eq:model-prediction}
    h_{\vec{a}_{\rm opt}}(\vec{x}) = \sum_{i = 1}^{N_s} a^{(i)}_{\rm opt} \kappa(\vec{x},\vec{x}_i) \;,
\end{align}
where $\vec{a}_{\rm opt} = (a^{(1)}_{\rm opt}, ... , a^{(N_s)}_{\rm opt})$. Additionally, if the loss $\mathcal{L}$ is appropriately chosen, then the loss landscape can be guaranteed to be convex. It follows that by constructing the Gram matrix $K$ whose entries are kernels over training input pairs,
\begin{align}\label{eq:gram}
    [K]_{ij} = \kappa(\vec{x}_i,\vec{x}_j) \;,
\end{align}
where $ \vec{x}_i, \vec{x}_j \in \mathcal{S}$, the optimal parameters $\vec{a}_{\rm opt}$ can be found by solving the convex optimization problem in Eq.~\eqref{eq:optimal-params}. Thus if the Gram matrix can be calculated exactly, kernel-based methods are perfectly trainable.

As an example, in kernel ridge regression, we consider a square loss function $\mathcal{L}_{\vec{a}}(\mathcal{S})= \frac{1}{2}\sum_{i = 1}^{N_s} (h_{\vec{a}}(\vec{x}_i) - y_i)^2 + \frac{\lambda}{2} \| \vec{a} \|^2_{\HC}$ with a regularization $ \lambda$ and a norm in a feature space $\| \vec{a} \|_{\HC}^2$. The optimal parameters can be analytically shown to be of the form
\begin{align}
    \vec{a}_{\rm opt} = (K - \lambda\mathbb{1})^{-1} \vec{y} \;,
\end{align}
where $\vec{y}$ is a training label vector with the $i^{\rm th}$ component $y_i$. 
Another common example is a support vector machine where we consider a binary classification problem with the corresponding labels $y \in \{-1,+1\}$. Using a hinge loss function with no regularization, i.e.
$\mathcal{L}_{\vec{a}}(\mathcal{S})= \frac{1}{N_s}\sum_{i = 1}^{N_s} {\rm max}(0, 1 - h_{\vec{a}}(\vec{x}_i)y_i)$, the optimization problem in Eq.~\eqref{eq:optimal-params} can be reformulated as
\begin{align}\label{eq:svm-dual-problem}
    \vec{a}_{\rm opt} = {\rm argmax}_{\vec{a}} \left[\sum_{i=1}^{N_s} a^{(i)} - \frac{1}{2}\sum_{i,j=1}^{N_s} a^{(i)}a^{(j)} y_iy_j K_{ij} \right] \; ,
\end{align}
subject to $ 0 \leq a^{(i)}$ for all $i$. Assuming that the Gram matrix $K$ can be accurately and efficiently obtained, solving for the optimal parameters can done with a number of iterations in $\OC(\poly(N_s))$.

\subsection{Why exponential concentration is problematic}\label{sec:concentraion-problem}

By virtue of their convex optimization landscapes, kernel methods are guaranteed to obtain the optimal model from a given Gram matrix. However, due to the probabilistic nature of quantum devices, in practice the entries of the Gram matrix can only be estimated via repeated measurements on a quantum device. Thus the model is only ever trained on a statistical estimate of the Gram matrix, $\hat{K}$, instead of the exact one, $K$. The resulting statistical uncertainty, as we will argue here, inhibits how well quantum kernel methods may perform.

The heart of the problem is that, in a wide range of circumstances, the value of quantum kernels \textit{exponentially concentrate}. That is, as the size of the problem increases, the difference between kernel values become increasingly small and so, more shots are required to distinguish between kernel entries. With a polynomial shot budget this leads to an optimized model which is insensitive to the input data and cannot generalize well. 

More generally, exponential concentration can be formally defined as follows. 
\begin{definition} [Exponential concentration]\label{def:exp-concentration}
Consider a quantity $X(\vec{\alpha})$ that depends on a set of variables $\vec{\alpha}$ and can be measured from a quantum computer as the expectation of some observable. $X(\vec{\alpha})$ is said to be deterministically  exponentially concentrated in the number of qubits $n$ towards a certain \sth{$\vec{\alpha}$-independent} value $\mu$ if
\begin{align}
    |X(\vec{\alpha}) - \mu |\leq \beta \in O(1/b^n) \;,
\end{align}
for some $b>1$ and all $\vec{\alpha}$. Analogously, $X(\vec{\alpha})$ is probabilistically exponentially concentrated if
\begin{align} \label{eq:def-prob-concentration}
    {\rm Pr}_{\vec{\alpha}}[|X(\vec{\alpha}) - \mu| \geq \delta] \leq \frac{\beta}{\delta^2} \;\; , \; \beta \in O(1/b^n) \;,
\end{align}
for $b> 1$. That is, the probability that $X(\vec{\alpha})$ deviates from $\mu$ by a small amount $\delta$ is exponentially small for all $\vec{\alpha}$.

\sth{If $\mu$ additionally exponentially vanishes in the number of qubits i.e., $\mu \in \OC(1/b'^n)$ for some $b' >1$, we say that $X(\vec{\alpha})$ exponentially concentrates towards an exponentially small value.}
\end{definition}

We remark that using Chebyshev's inequality, probabilistic exponential concentration can also be diagnosed by analysing the variance of $X(\vec{\alpha})$. That is, $X(\vec{\alpha})$ is exponentially concentrated towards its mean $\mu = \mathbb{E}_{\vec{\alpha}}[ X(\vec{\alpha})]$ if
\begin{align} \label{eq:def-var-concentration}
   \Var_{\vec{\alpha}} [X(\vec{\alpha})] \in \mathcal{O}(1/b^n) \;,
\end{align}
for $b>1$, thus satisfying Definition \ref{def:exp-concentration}. Here the variance is taken over $\vec{\alpha}$. If $0 \leq X(\vec{\alpha}) \leq 1$ for all $\vec{\alpha}$ (as for quantum kernels) one can demonstrate exponential concentration by showing that the mean $\mu = \Ebb_{\vec{\alpha}}[X(\vec{\alpha})]$ is exponentially small which directly implies that $\Var_{\vec{\alpha}}[ X(\vec{\alpha})] \in \OC(1/b^n)$. \sth{Furthermore, in this context when $\mu$ vanishes exponentially, we can say that the probability of deviating from zero by an arbitrary constant amount is exponentially small.}

Definition~\ref{def:exp-concentration} is rather general and applies to a number of QML frameworks. In the case of quantum neural networks,  $X(\vec{\alpha}) = C(\vec{\theta})$, where $\vec{\alpha}=\vec{\theta}$ and $C(\vec{\theta})$ is a cost function that depends on some variational ansatz parameters $\vec{\theta}$. 
In the context of quantum landscape theory, such concentration is central to studying the BP phenomenon. In particular, the equivalence between exponentially concentrating costs and vanishing gradients cost gradients is demonstrated in Ref.~\cite{arrasmith2021equivalence}.
In the context of quantum kernels, the quantity of interest is the quantum kernel, i.e., $X(\vec{\alpha}) = \kappa(\vec{x},\vec{x'})$ where the set of variables is a pair of input data $\vec{\alpha} = \{ \vec{x}, \vec{x'} \}$. Hence the probability in Eq.~\eqref{eq:def-prob-concentration} and the variance in Eq.~\eqref{eq:def-var-concentration} is now taken over all possible pairs of input data $\{\vec{x}, \vec{x'}\}$.

To understand the problems caused by exponential concentration, let us first consider the fidelity kernel.
In practice, the kernel value is statistically estimated from measuring $N$ samples where (on all but classically simulable quantum devices) we assume we are restricted to $N \in\OC(\poly(n))$. For a given input data pair $\vec{x}$ and $\vec{x'}$, we consider two common measurement strategies to estimate the kernel value: (i) the Loschmidt Echo test (i.e., the overlap test) and (ii) the SWAP test. In either case, the fidelity quantum kernel is equivalent to the expectation value of an observable $O$ for some quantum state $\rho$ with the exact expression for $O$ and $\rho$ depending on the strategy used. If we write the eigendecomposition of the observable as $O = \sum_i o_i |o_i\rangle\langle o_i |$ where $o_i$ and $ |o_i\rangle$ are the eigenvalues and eigenvectors of $O$ respectively, then the statistical estimate after $N$ measurements is given by
\begin{align}\label{eq:estimate-fidelity-kernel}
    \widehat{\kappa}^{\rm FQ}(\vec{x},\vec{x'}) = \frac{1}{N}\sum_{m=1}^{N} \lambda_m \; .
\end{align}
Here $\lambda_m$ is the outcome of the $m^{\rm th}$ measurement and can be treated as a random variable which takes the value $o_i$ with probability $p_i = \Tr[|o_i\rangle\langle o_i | \rho]$.

The behavior of the statistical estimate depends on the measurement strategy taken. When employing the Loschmidt Echo test, the kernel value corresponds to the probability of observing the all-zero bitstring. To estimate this probability, we assign $+1$ to the outcome of obtaining the all-zero bitstring and assign $0$ to other bitstrings. If the kernel value concentrates to \sth{an exponentially small value i.e., $\mu \in \OC(1/b^n)$,} then the chance of \textit{never} obtaining the all-zero bitstring from $N$ samples is $(1 - \mu)^N \approx 1 - N \mu$. That is, with a polynomial number of samples $N \in \OC(\poly(n))$, it is very likely that none are the all-zero bitstring and hence likely that the statistical estimate of the kernel is zero. This is formalized in the following proposition (proven in Appendix~\ref{appendix:fidelity}).

\begin{proposition}\label{prop-stat-kernel-overlap}
Consider the fidelity quantum kernel as defined in Eq.~\eqref{eq:fidelity-kernel-mt}. Assume that the kernel values $\kappa^{\rm FQ}(\vec{x},\vec{x'})$ exponentially concentrate towards \sth{an exponentially small value} as per Definition~\ref{def:exp-concentration}. 
Supposing an $N \in \OC(\poly(n))$ shot Loschmidt Echo test is used to estimate the Gram matrix for a training dataset $\SC = \{\vec{x}_i , y_i\}$ of size $N_s$
then, with a probability exponentially close to $1$, the statistical estimate of the Gram matrix $\widehat{K}$ is equal to the identity matrix. That is,
\begin{align}
    {\rm Pr}[ \widehat{K} =  \mathbb{1} ] \geq 1 - \delta' \; \; , \; \delta' \in \OC(c^{-n}) \; 
\end{align}
for some $c > 1$.
\end{proposition}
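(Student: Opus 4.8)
The plan is to decompose the event $\{\widehat{K} = \mathbb{1}\}$ into statements about the diagonal and off-diagonal entries of the estimated Gram matrix, and to control the off-diagonal part with a two-level union bound: first over the $\binom{N_s}{2}$ pairs of training points, then over the $N$ measurement shots used for each pair. Throughout I take $N_s \in \OC(\poly(n))$ and the initial state $\rho_0$ to be pure, so every data state $\rho(\vec{x}_i)$ is pure. The diagonal is then immediate: for $i=j$ the Loschmidt Echo circuit implements $U^\dagger(\vec{x}_i)U(\vec{x}_i) = \mathbb{1}$ on $\rho_0$, so the all-zero bitstring is observed with certainty and $\widehat{\kappa}^{\rm FQ}(\vec{x}_i,\vec{x}_i) = 1$ deterministically. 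Hence $\widehat{K} = \mathbb{1}$ is equivalent to $\widehat{\kappa}^{\rm FQ}(\vec{x}_i,\vec{x}_j) = 0$ for all $i \neq j$, and it suffices to upper bound the probability that some off-diagonal estimate is nonzero.

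Next I would use the assumed exponential concentration to show that, with probability exponentially close to $1$, every off-diagonal \emph{exact} kernel value is exponentially small. In the deterministic case of Definition~\ref{def:exp-concentration} this is automatic: $\kappa^{\rm FQ}(\vec{x}_i,\vec{x}_j) \leq \mu + \beta \in \OC(1/\tilde b^{\,n})$ for all pairs. In the probabilistic case I would apply Eq.~\eqref{eq:def-prob-concentration} with a deviation $\delta = 1/b''^{\,n}$ chosen so that $1 < b'' < \sqrt{b}$, treating the training points as drawn from the underlying data distribution; then for a single pair $\Pro[\kappa^{\rm FQ} \geq \mu + \delta] \leq \beta/\delta^2 \in \OC((b''^2/b)^n)$, while $\varepsilon_n := \mu + \delta \in \OC(1/\tilde b^{\,n})$ with $\tilde b = \min(b',b'') > 1$. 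A union bound over the $\binom{N_s}{2} \in \OC(\poly(n))$ off-diagonal pairs then gives $\Pro[\exists\, i\neq j:\ \kappa^{\rm FQ}(\vec{x}_i,\vec{x}_j) \geq \varepsilon_n] \in \OC(1/c_1^{\,n})$ for some $c_1 > 1$.

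Conditioned on all off-diagonal exact values being at most $\varepsilon_n$, I would then bound the chance that $N$ shots of the overlap test ever return the all-zero bitstring: for a fixed pair this probability is $1 - (1-\kappa^{\rm FQ}(\vec{x}_i,\vec{x}_j))^N \leq N\varepsilon_n$ by Bernoulli's inequality, and a union bound over the $\binom{N_s}{2}$ off-diagonal pairs yields $\OC(\poly(n)\, N\, \varepsilon_n) = \OC(\poly(n)/\tilde b^{\,n})$, which is still exponentially small because $N \in \OC(\poly(n))$. Combining this with the bound of the previous paragraph through one further union bound gives $\Pro[\widehat{K} \neq \mathbb{1}] \leq \OC(1/c_1^{\,n}) + \OC(\poly(n)/\tilde b^{\,n}) \in \OC(1/c^{\,n})$ for a suitable $c > 1$, which is exactly the claim.

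The main obstacle I anticipate is bookkeeping around the exponential rates rather than anything conceptual: one must choose the Chebyshev deviation $\delta$ small enough that $\mu + \delta$ stays exponentially small, yet large enough that $\beta/\delta^2$ is still exponentially small, and then verify that the two successive polynomial-factor losses (the union bound over pairs, and the union bound over shots) do not degrade the exponential decay into a subexponential one. Given $\beta,\mu \in \OC(1/b^{\,n})$-type scalings and $N_s, N \in \OC(\poly(n))$ this always closes, but it is the step where the constants $b,b',b''$ need to be tracked carefully to extract the final base $c>1$. A secondary point worth stating explicitly is the assumptions $N_s \in \OC(\poly(n))$ and $\rho_0$ pure --- without the former the union bound over pairs fails, and without the latter the diagonal entries of $\widehat{K}$ would not be exactly $1$.
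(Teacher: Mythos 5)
Your proposal is correct and follows essentially the same route as the paper's proof: use the concentration assumption (via Chebyshev with a suitably chosen threshold) to argue every exact off-diagonal kernel value is exponentially small with probability exponentially close to $1$, then bound the chance of ever observing the all-zero bitstring by $N\varepsilon_n$ via Bernoulli's inequality, and combine over the $\OC(\poly(n))$ pairs. Your only deviations are cosmetic improvements — a union bound over pairs where the paper asserts independence and takes a product, and an explicit treatment of the diagonal entries — neither of which changes the substance of the argument.
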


In the case of the SWAP test the measurement outcomes are either $+1$ with probability $p_+ = 1/2 + \kappa^{\rm FQ}(\vec{x},\vec{x'})/2$, or $-1$ with probability $1 - p_+$. Thus computing the kernel value amounts to determining the perturbation from the uniform distribution where the $+1$ and $-1$ outcomes occur with equal probabilities. Intuitively, when the kernel value concentrates to \sth{an exponentially small value}, the perturbation cannot be detected with a polynomial number of measurement shots. In other words, a statistical estimate using only a polynomial number of shots does not contain information about the input data pair \sth{with probability exponentially close to $1$}. This is formally stated in the following proposition which is derived by reducing the problem of \sth{distinguishing distributions (i.e., one associated with a kernel value and the uniform distribution)} to a hypothesis testing task. 
\begin{proposition}\label{prop-stat-kernel-swap}
Assume that the fidelity quantum kernel $\kappa^{\rm FQ}(\vec{x},\vec{x'})$ exponentially concentrates towards some \sth{exponentially small value} as per Definition~\ref{def:exp-concentration}. 
Suppose an $N \in \OC(\poly(n))$ shot SWAP test is used to estimate the Gram matrix for a training dataset $\SC = \{\vec{x}_i , y_i\}$ of size $N_s$. Then, with probability exponentially close to $1$ (i.e., probability at least $1-\delta'$ such that $\delta' \in \OC(c^{-n})$ for some $c>1$), 
the estimate of the Gram matrix $\hat K$ is statistically indistinguishable from the matrix $\widehat{K}^{\rm (rand)}_N$ whose diagonal elements are $1$ and off-diagonal elements are instances of 
\begin{align}\label{eq:k0-no-input-mt}
    \widehat{\kappa}^{(\rm rand)}_N = \frac{1}{N} \sum_{m = 1}^N \Tilde{\lambda}_m \;, 
\end{align}
where each $\Tilde{\lambda}_m$ takes either $+1$ or $-1$ with equal probability. We note that $\widehat{\kappa}^{(\rm rand)}_N$ does not contain any information about the input data $\SC = \{\vec{x}_i , y_i\}$.
\end{proposition}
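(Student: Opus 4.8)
The plan is to phrase the claim as a binary hypothesis test between two explicit product distributions over the raw SWAP-test measurement records, bound the total variation (TV) distance between them, and then transfer the resulting indistinguishability to the Gram matrix estimate (and to anything computed from it) via the data-processing inequality for TV.

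First I would set up the two distributions. For the pair $(\vec{x}_i,\vec{x}_j)$ with $i\neq j$, each of the $N$ SWAP-test shots independently returns $+1$ with probability $\tfrac12+\tfrac12\kappa^{\rm FQ}(\vec{x}_i,\vec{x}_j)$ and $-1$ otherwise, so the record producing the off-diagonal entries of $\widehat{K}$ is a product of $\binom{N_s}{2}$ blocks of $N$ biased coins; call its law $P_{\rm true}$. The reference matrix $\widehat{K}^{(\rm rand)}_N$ is generated by the same structure with every bias set to zero, i.e. by a product $P_{\rm rand}$ of $N\binom{N_s}{2}$ fair coins, and is manifestly independent of $\SC$. (For pure data states the diagonal of both matrices is deterministically $1$, so attention restricts to the off-diagonal part.) Any procedure, bounded or not, that distinguishes $\widehat{K}$ from $\widehat{K}^{(\rm rand)}_N$ is a test between $P_{\rm true}$ and $P_{\rm rand}$, whose optimal success probability is $\tfrac12\big(1+\mathrm{TV}(P_{\rm true},P_{\rm rand})\big)$; it therefore suffices to show $\mathrm{TV}(P_{\rm true},P_{\rm rand})$ is exponentially small with probability exponentially close to $1$ over the draw of the inputs.

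Next I would control the kernel magnitudes. By the assumed exponential concentration towards an exponentially small value (Definition~\ref{def:exp-concentration}, with $\mu\in O(b'^{-n})$), applying Eq.~\eqref{eq:def-prob-concentration} with the threshold $\delta=b^{-n/3}$ gives $\Pr\big[\,|\kappa^{\rm FQ}(\vec{x},\vec{x}')|>\mu+\delta\,\big]\le \beta/\delta^2\in O(b^{-n/3})$; a union bound over the $\binom{N_s}{2}\in\poly(n)$ off-diagonal pairs then shows that, with probability at least $1-\delta'$ for some $\delta'\in O(c^{-n})$, every off-diagonal kernel value obeys $|\kappa^{\rm FQ}(\vec{x}_i,\vec{x}_j)|\le\varepsilon_n:=\mu+\delta$, which is exponentially small. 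Condition on this event from now on.

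The core step is a per-block TV bound: using Pinsker's inequality together with tensorization of the Kullback--Leibler divergence and $D_{\rm KL}\big(\mathrm{Bern}(\tfrac12+\tfrac{\kappa}{2})\,\|\,\mathrm{Bern}(\tfrac12)\big)=O(\kappa^2)$, a single biased block of length $N$ obeys $\mathrm{TV}\big(\mathrm{Bern}(\tfrac12+\tfrac{\kappa}{2})^{\otimes N},\mathrm{Bern}(\tfrac12)^{\otimes N}\big)=O\big(\sqrt{N}\,|\kappa|\big)=O\big(\sqrt{N}\,\varepsilon_n\big)$. Subadditivity of TV over product distributions yields $\mathrm{TV}(P_{\rm true},P_{\rm rand})\le\binom{N_s}{2}\,O\big(\sqrt{N}\,\varepsilon_n\big)$, which with $N,N_s\in\poly(n)$ and $\varepsilon_n$ exponentially small is again exponentially small. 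Finally, since $\widehat{K}$ is a fixed function of the raw record, the data-processing inequality implies the law of $\widehat{K}$ is within this same exponentially small TV distance of the law of $\widehat{K}^{(\rm rand)}_N$ --- and likewise for any downstream object such as $\vec{a}_{\rm opt}$ or the model predictions $h_{\vec{a}_{\rm opt}}$ --- which is the claimed statistical indistinguishability; because $\widehat{\kappa}^{(\rm rand)}_N$ does not reference $\SC$, the trained model carries (up to exponentially small TV) no information about the input data. I expect the main obstacle to be the bookkeeping in the concentration step: choosing $\delta$ so that the union-bound failure probability and the accumulated $\sqrt{N}$-weighted bias are simultaneously exponentially small, and making explicit the mild assumption $N_s\in\poly(n)$ so that the union bound preserves exponential smallness.
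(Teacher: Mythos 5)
Your proposal is correct and reaches the same conclusion as the paper, but through a noticeably different technical route, so a comparison is worthwhile. The paper (Appendix on the SWAP test, via Supplemental Lemmas on hypothesis testing) proceeds entry by entry: it bounds the distinguishing advantage for a single kernel value using the telescoping inequality $\| \PC^{\otimes N} - \QC^{\otimes N}\|_1 \leq N \| \PC - \QC\|_1$, which is linear in $N$, establishes via Chebyshev that each kernel value is exponentially small with probability exponentially close to $1$, and then takes a union bound over the $N_s(N_s-1)/2$ per-entry indistinguishability events before arguing that the optimal parameters and predictions are mere post-processing. You instead bundle the entire measurement record into one product distribution $P_{\rm true}$ versus $P_{\rm rand}$, bound each block by Pinsker plus KL tensorization to get $\mathrm{TV}\in O(\sqrt{N}\,|\kappa|)$ — quadratically sharper in $N$ than the paper's bound, though the improvement is immaterial here since $N\in\OC(\poly(n))$ and $|\kappa|$ is exponentially small — and then invoke subadditivity of TV over products and the data-processing inequality once, globally. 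Your organization buys a single clean TV statement about the joint law of $\widehat{K}$ (and everything downstream) rather than an intersection of per-entry events, and it makes the data-independence of the trained model an immediate corollary of data processing; the paper's route is more elementary (no KL divergence needed) and matches its modular presentation where the single-pair lemma is reused for the projected kernel. Your bookkeeping of the concentration step ($\delta = b^{-n/3}$, union bound over $\poly(n)$ pairs) is sound and parallels the paper's choice of $\delta_c = \beta^{1/4}$. No gaps.
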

\sth{We refer the readers to Appendix~\ref{appendix:stat-indis-basic} for an introduction to some preliminary tools for a hypothesis testing and Appendix~\ref{appendix:fidelity-swap} for further technical details regarding the SWAP test, which includes formal definitions of statistical indistinguishability (i.e., Definition~\ref{def:StatIndist} for distributions and Definition~\ref{def:StatIndistOutputs} for outputs), and a proof of the proposition.}

\begin{figure}[t]
\includegraphics[width=.85\columnwidth]{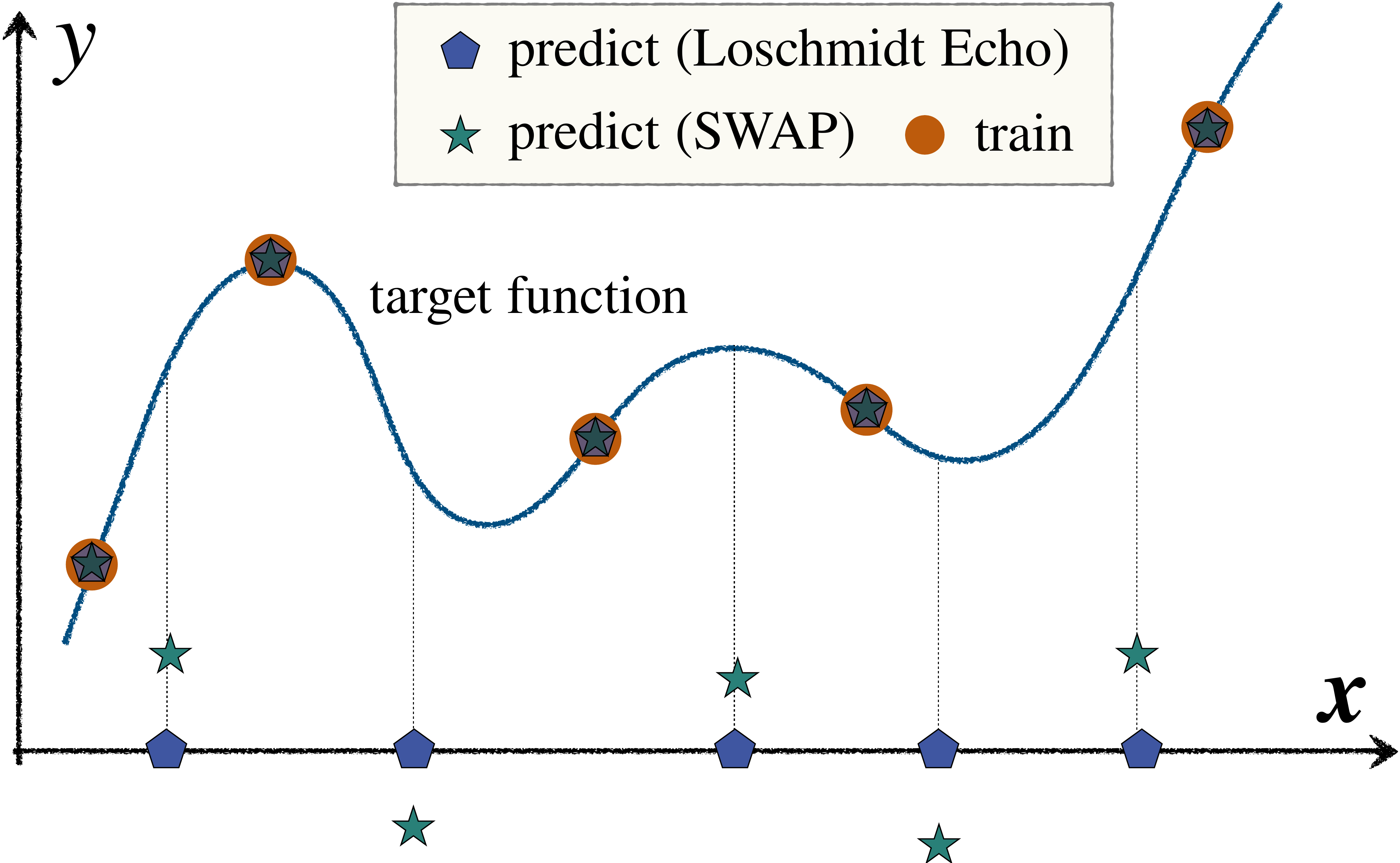}
\caption{\textbf{Schematic of effect of exponential concentration and shot noise on training and generalization performance.} For the unseen (test) data, the behavior depends on how kernel values are statistically estimated. In the case of the Loschmidt Echo test, the model predictions are zero with high probability. On using the SWAP test, the model predictions fluctuate around zero (due to shot noise). On the other hand, for the training data, the training labels are effectively hard-coded by the optimization process. (For simplicity we here consider the limit of no regularization.) %
}\label{fig:effect-exp-con}
\end{figure}

Although statistical estimates of the kernel behave differently depending on the choice of measurement strategy, they are both in effect independent of the input data for large $n$. Thus training with this estimated Gram matrix leads to a model whose predictions are independent of the input training data. 
We present numerical simulations to support our theoretical findings in Appendix~\ref{appendix:fidelity-numerics}. 

Crucially, this conclusion applies generally beyond kernel ridge regression to other kernel methods including both supervised learning tasks and unsupervised learning tasks. As a concrete example, we consider the optimal solution for kernel ridge regression in the presence of exponential concentration.

\begin{corollary}\label{coro:opt-params}
Consider a kernel ridge regression task with a squared loss function and regularization $\lambda$ using the same assumptions as Proposition~\ref{prop-stat-kernel-overlap}. Denote $\vec{y}$ as a vector with its $i^{\rm th}$ elements equal to $y_i$.
\begin{enumerate}
    \item For the Loschimdt Echo test, the optimal parameters are found to be
    \begin{align}
        \vec{a}_{0}(\vec{y}, \lambda) = \frac{\vec{y}}{1 - \lambda} \;,
    \end{align}
    with probability at least $1 - \delta$ with $\delta \in \OC(b^{-n})$ for some $b > 1$. 

    For a test data point $\vec{x} \notin \mathcal{S}$, the model prediction is $0$ with probability at least $1 - \delta'$ such that $\delta' \in \OC(b'^{-n})$ for some $b' > 1$.
    \item For the SWAP test, the optimal parameters are statistically indistinguishable from the vector
    \begin{align}
        \vec{a}_{\rm rand}(\vec{y}, \lambda) = \left( \widehat{K}^{\rm (rand)}_N - \lambda \mathbb{1} \right)^{-1} \vec{y} \;,
    \end{align}
    \sth{with probability at least $1 - \tilde{\delta}$ with $\tilde{\delta} \in \OC(\tilde{b}^{-n})$ for some $\tilde{b} > 1$. Here,} $\widehat{K}^{\rm (rand)}_N$ is a data-independent random matrix whose diagonal elements are $1$ and off-diagonal elements are instances of $\widehat{\kappa}^{(\rm rand)}_N $ in Eq.~\eqref{eq:k0-no-input-mt}.

    In addition, \sth{with probability exponentially close to $1$,} the model prediction \sth{on unseen data} is statistically indistinguishable from the data-independent random variables that result from measuring $\widehat{K}^{\rm (rand)}_N$.
\end{enumerate}
\end{corollary}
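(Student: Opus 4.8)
The plan is to derive both parts by propagating the characterizations of the estimated Gram matrix from Propositions~\ref{prop-stat-kernel-overlap} and~\ref{prop-stat-kernel-swap} through the closed-form ridge-regression solution $\vec{a}_{\rm opt}=(\widehat{K}-\lambda\mathbb{1})^{-1}\vec{y}$ and the prediction rule $h_{\vec{a}_{\rm opt}}(\vec{x})=\sum_{i} a^{(i)}_{\rm opt}\,\widehat{\kappa}(\vec{x},\vec{x}_i)$. The key point is that both of these are fixed, data-independent classical post-processings of the raw measurement outcomes: a deterministic statement ``$\widehat{K}=\mathbb{1}$ with high probability'' therefore transfers to a deterministic value of $\vec{a}_{\rm opt}$ with high probability, and a ``statistical indistinguishability'' statement for $\widehat{K}$ transfers, by the data-processing property of statistical indistinguishability (Appendix~\ref{appendix:fidelity-swap}), to indistinguishability of $\vec{a}_{\rm opt}$ and of the resulting test predictions.

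For the Loschmidt Echo test, Proposition~\ref{prop-stat-kernel-overlap} gives $\widehat{K}=\mathbb{1}$ with probability $1-\delta'$, $\delta'\in\OC(c^{-n})$. Substituting into $\vec{a}_{\rm opt}=(\widehat{K}-\lambda\mathbb{1})^{-1}\vec{y}$ and using $\lambda\neq 1$ yields $\vec{a}_{\rm opt}=\vec{y}/(1-\lambda)=\vec{a}_{0}(\vec{y},\lambda)$ on that event. For the prediction on a point $\vec{x}\notin\SC$, we use that the exponential-concentration hypothesis of Definition~\ref{def:exp-concentration} is assumed over \emph{all} input pairs and hence also covers the cross pairs $\{\vec{x},\vec{x}_i\}$; the argument behind Proposition~\ref{prop-stat-kernel-overlap} then gives $\widehat{\kappa}(\vec{x},\vec{x}_i)=0$ except with probability $\OC(c^{-n})$ for each fixed $i$. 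A union bound over the $N_s\in\OC(\poly(n))$ training indices, together with the event $\widehat{K}=\mathbb{1}$, leaves a failure probability in $\OC(b'^{-n})$, and on the complementary event $h_{\vec{a}_{\rm opt}}(\vec{x})=\sum_i \tfrac{y_i}{1-\lambda}\cdot 0=0$.

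For the SWAP test, Proposition~\ref{prop-stat-kernel-swap} states that $\widehat{K}$ is statistically indistinguishable from $\widehat{K}^{\rm (rand)}_N$ except with probability $\widetilde{\delta}\in\OC(\widetilde{b}^{-n})$. The map $M\mapsto (M-\lambda\mathbb{1})^{-1}\vec{y}$ is a fixed function of the matrix alone (we take $\lambda$ outside the relevant spectra so the inverses exist), so by data processing $\vec{a}_{\rm opt}$ is statistically indistinguishable from $\vec{a}_{\rm rand}(\vec{y},\lambda)$. For unseen $\vec{x}$, each cross-kernel estimate $\widehat{\kappa}(\vec{x},\vec{x}_i)$ is, by the same reasoning as Proposition~\ref{prop-stat-kernel-swap}, statistically indistinguishable from a fresh instance of $\widehat{\kappa}^{(\rm rand)}_N$, and these measurements are independent of those defining $\vec{a}_{\rm opt}$; post-processing the pair $(\vec{a}_{\rm opt},\{\widehat{\kappa}(\vec{x},\vec{x}_i)\}_i)$ by $h_{\vec{a}_{\rm opt}}(\vec{x})=\sum_i a^{(i)}_{\rm opt}\widehat{\kappa}(\vec{x},\vec{x}_i)$ then shows the prediction is statistically indistinguishable from the data-independent random variable built out of $\widehat{K}^{\rm (rand)}_N$ and fresh unbiased coin flips.

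The main obstacle is the careful accounting of statistical indistinguishability under composition: one must confirm that the chained post-processings (matrix inversion, then the linear-combination prediction) are fixed maps not depending on the training or test data, and that combining the indistinguishability statements --- first for the training Gram matrix, then for the test-train cross kernels --- via union bounds keeps the total failure probability exponentially small. A secondary point is invertibility of $\widehat{K}-\lambda\mathbb{1}$ and $\widehat{K}^{\rm (rand)}_N-\lambda\mathbb{1}$, which is immediate for the Loschmidt case ($\lambda\neq 1$) but requires a mild genericity assumption on $\lambda$ for the SWAP case and can be folded into the hypotheses.
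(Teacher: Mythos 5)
Your proposal follows essentially the same route as the paper's proofs (Supplemental Corollaries~\ref{sup-cor:train-fqk-overlap} and~\ref{sup-coro:fidelity-swap}): propagate Propositions~\ref{prop-stat-kernel-overlap} and~\ref{prop-stat-kernel-swap} through the closed-form ridge solution and the Representer-theorem prediction, treating both as fixed post-processings, and control the polynomially many kernel estimates by a union bound (the paper uses independence plus Bernoulli's inequality for the Loschmidt case, which gives the same $1-N_s\delta$ accounting). Your added remarks on invertibility and on the chained post-processing being data-independent are correct and consistent with what the paper implicitly assumes.
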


Corollary~\ref{coro:opt-params} shows that, regardless of the measurement strategy to estimate the kernel value, exponential concentration leads to a trained model where the predictions on unseen inputs are independent of the training data. A visual illustration of the effect of exponential concentration in the presence of shot noise on model predictions is provided in Fig.~\ref{fig:effect-exp-con}. We note that these propositions and corollary presented here are simplified versions and refer the readers to Appendix~\ref{appendix:fidelity} for the full statements and proofs.

It is natural to ask whether the problems caused by exponential concentration should be viewed as a barrier to training or generalization. Since the estimated Gram matrix is still positive semi-definite, the loss landscape remains convex when the model is trained and the optimal model with respect to this estimated Gram matrix is guaranteed to be obtained. Although this trained model is independent of input data (as explained above), the model can still perform well on the training phase and achieve small training errors in the limit of small regularization. This is because the training output data is trivially `cooked' into the model via the optimization process (independently of the kernel values). 

On the other hand, the data independence of the kernel values means that the predictions of the trained model are completely independent of the training data and so the trained model in general performs trivially on unseen data. That is, the model generalizes terribly. By incorporating the effect of shot noise, this has a different flavor to typical barriers to generalization in that crucially it arises from using not enough shots (rather than not enough training data). Moreover, in our numerics below we concretely see that this barrier cannot be resolved by training on more input data points.

\begin{figure}[t]
\includegraphics[width=.99\columnwidth]{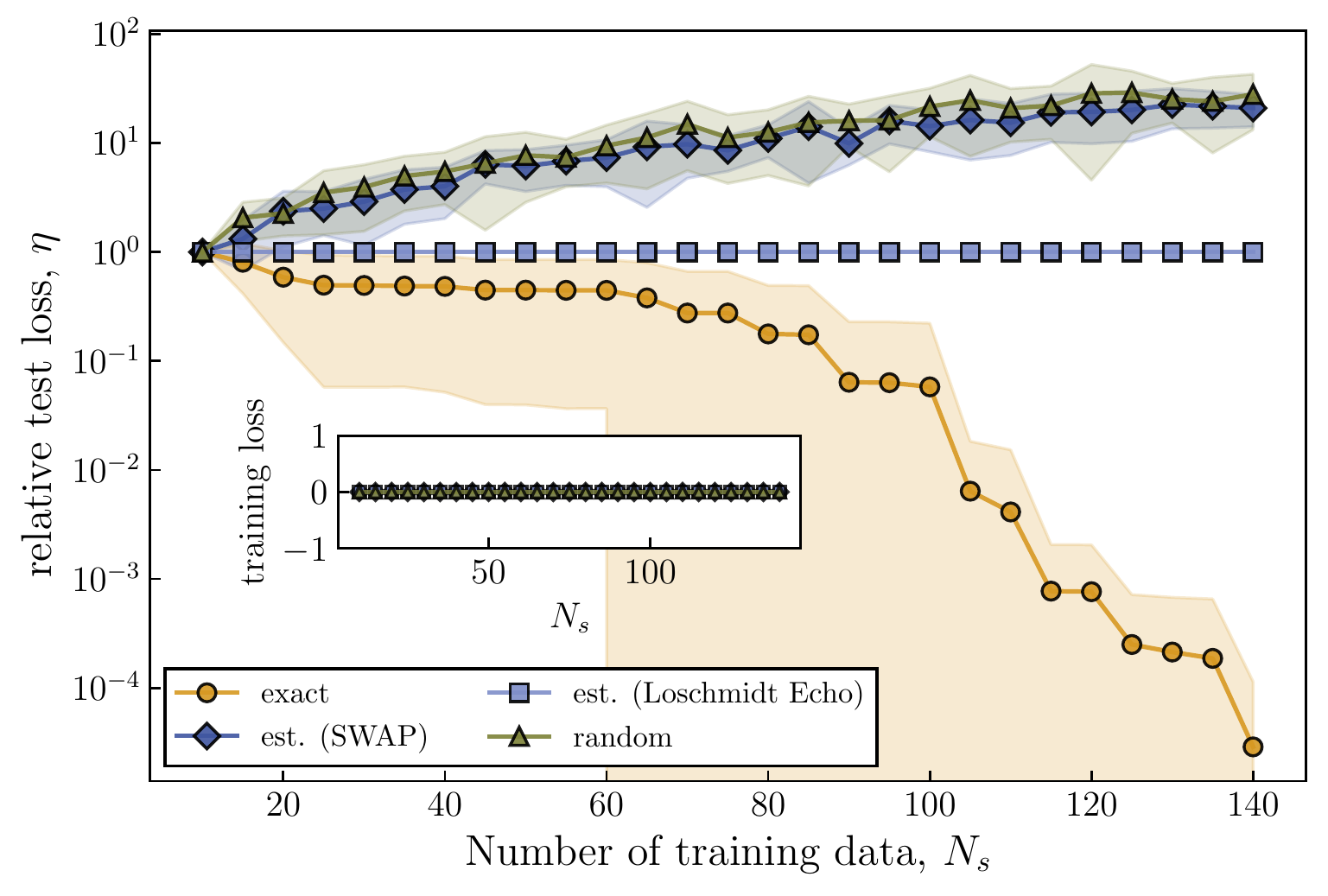}
\caption{\textbf{Effect of exponential concentration on training and generalization performance.} 
We consider a tensor product encoding for an engineered data set where each component is uniformly drawn from $[0, 2\pi]$ and the true label is $y_{\rm true} (\vec{x}) = \sum_{i=1}^{N_s} w_{i} \kappa^{\rm FQ}(\vec{x_i}, \vec{x})$ where $w_i$ is randomly chosen from $[0,1]$. We train on $N_s = 150$ data points. In the main plot, the loss on a test dataset $\SC_{\rm test}$ relative to its initial value (without training) is plotted as a function of increasing training data. In the inset, an \textit{absolute} training error is plotted as a function of the increasing data. We note that each kernel value is estimated with $N=1000$ and the number testing data points is $20$. The training is done with no regularization $\lambda = 0$. We repeat this experiment $10$ times. The solid curves represent averages of respective losses and the shaded areas represent standard deviations. 
}\label{fig:effect-exp-con-gen}
\end{figure}

Fig.~\ref{fig:effect-exp-con-gen} numerically demonstrates this effect on an engineered dataset for a $40$ qubit simulation.
In the main plot, the generalization is studied as a function of increasing training data $\SC_{N_s}$ and whether the training is performed with exact or estimated kernel values. Particularly, to observe the improvement due to increasing data, we plot a \textit{relative} loss on a test dataset $\SC_{\rm test}$ with respect to its initial value ($N_S = 10$) i.e., $\eta(N_s) = \frac{\LC_{\vec{a}}(\SC_{\rm test} | \SC_{N_s})}{\LC_{\vec{a}}(\SC_{\rm test} | \SC_{N_s=10})}$. That is, $\eta(N_s) < 1$ for $N_s > 10$ indicates better generalization with increasing training data. This is observed to be the case for the training on the exact kernel value where the model gradually generalizes better. In fact, this learning task is synthesized such that when training on the whole dataset, with an access to the exact kernel values, the trained model generalizes perfectly. Even with this dataset which is heavily favourable for the fidelity kernel, the performance on unseen data with the estimated kernel values shows no improvement with the increasing training data. Specifically, when the Loschmidt Echo test is used to evaluate kernel values, the statistical estimates accumulate at exactly zero, leading to $\eta(N_s) = 1$ for all $N_s$. In addition, for the SWAP measurement strategy, there is no improvement with increasing data and the behavior of $\eta(N_s)$ aligns with the one where the model is trained on a random matrix where each off-diagonal element is a data-independent random variable in Eq.~\eqref{eq:k0-no-input-mt}. On the other hand, as demonstrated in the inset of Fig~\ref{fig:effect-exp-con-gen}, the trained model performs perfectly on the training set $\SC_{N_s}$ and achieves zero training errors in all cases. This is again because the training label information is hard-coded in the optimization process. These empirical observations are all in good agreement with our theoretical predictions.

Finally, the analysis in the case of the projected quantum kernel is slightly more complicated as estimating the kernel requires us to first obtain the statistical estimates of the 2-norms between the reduced data encoding states on all individual qubits from quantum computers. Two common strategies to to do so include (i) the full tomography of the single qubit reduced density matrices and (ii) the local SWAP tests. In Appendix~\ref{appendix:projected}, we again use a hypothesis testing framework to analyze the effect of exponential concentration on the projected kernel for these strategies. Similarly to the fidelity kernel we find that the final trained model is in effect independent of the training data.

\subsection{Sources of exponential concentration}\label{sec:thm-source}

Given that exponential concentration leads to trivial data-independent models, it is important to determine when kernel values will, or will not, concentrate. In this section, we investigate the causes of exponential concentration for quantum kernels.

In broad terms, the exponential concentration of quantum kernels may be viewed as stemming from the fact in certain situations it can be difficult to extract information from quantum states. In particular, we identify four key features that can severely hinder the information extraction process via kernels. These include: i. \hyperref[sec:expressivity]{the expressivity of the data embedding}, ii.  \hyperref[sec:entanglement]{entanglement}, iii. \hyperref[sec:global]{global measurements} and iv. \hyperref[sec:noise]{noise} (see Fig.~\ref{fig:summary}). For each source, we derive an associated concentration bound. 
As summarised in Table~\ref{table:KernelsVsQML} each of these theorems has an analogue for QNN. All the proofs of our main results are presented in the Appendix.

\subsubsection{Expressivity-induced concentration}\label{sec:expressivity}

In broad terms, the expressivity of an ensemble of unitaries is defined as how close the ensemble uniformly covers the unitary group. To introduce the concept of the expressivity of the data embedding $U(\vec{x})$, we first consider the unitary ensemble generated via the data embedding $U(\vec{x})$ over all possible input data vectors $\vec{x}\in\mathcal{X}$. That is, the data embedding defines a map $U: \mathcal{X} \rightarrow \mathbb{U}_{\vec{x}} \subset \mathcal{U}(d)$, where $\mathcal{U}(d)$ is the total space of unitaries of dimension $d = 2^n$, and
\begin{align}\label{eq:unitaryensemble}
    \mathbb{U}_{\vec{x}} = \{ U(\vec{x})| \vec{x} \in \mathcal{X}\}\;.
\end{align}
In addition, for some initial state $\rho_0$, we can define an ensemble of the data-embedded quantum states $\mathbb{S}_{\vec{x}} = \{ U(\vec{x})\rho_0 U^{\dagger}(\vec{x}) \}$ for all $U(\vec{x}) \in \mathbb{U}_{\vec{x}}$.
Consequently, performing an average over all the input data is equivalent to an average over the ensemble of data-encoded unitaries $\mathbb{U}_{\vec{x}}$, or the data encoded states $\mathbb{S}_{\vec{x}}$.

More concretely, we can measure the expressivity of a given ensemble $\Ubb$ by how close it is from a 2-design (a pseudo-random distribution that agrees with the random distribution up to the second moment). Adopting the definition of expressivity used in Ref.~\cite{sim2019expressibility, nakaji2020expressibility, holmes2021connecting}, the following superoperator formally quantifies
the distance between $\Ubb$ and an ensemble that forms a 2-design,
\begin{align}
    \mathcal{A}_{\Ubb}(\cdot) := \mathcal{V}_{{\rm Haar}}(\cdot) - \int_{\Ubb} dU U^{\otimes 2} (\cdot)^{\otimes 2}(U^\dagger)^{\otimes 2} \; . \label{eq:expressivity-measure-mt}
\end{align}
Here $\mathcal{V}_{{\rm Haar}}(\cdot) = \int_{\mathcal{U}(d)} d\mu (V) V^{\otimes 2}(\cdot)^{\otimes 2} (V^\dagger)^{\otimes 2}$ is an integral over Haar ensemble and the second term is an integral over $\Ubb$. 
In our case, we have the data-encoded ensemble as our ensemble of interest i.e., $\Ubb = \Ubb_{\vec{x}}$.
Given an input state $\rho_0$, the trace norm
\begin{align}
    \varepsilon_{\mathbb{U}_{\vec{x}}} := \|\mathcal{A}_{\mathbb{U}_{\vec{x}}}(\rho_0) \|_1 \, ,\label{eq:expressivity-measure-epsilon}
\end{align}
can be chosen as a data-dependent expressivity measure. 
The data-dependence of $\varepsilon_{\mathbb{U}_{\vec{x}}}$ stems from the dependence of $\mathbb{U}_{\vec{x}}$, Eq.~\eqref{eq:unitaryensemble}, on the input data $\mathcal{X}$. Thus $\varepsilon_{\mathbb{U}_{\vec{x}}}$ takes into account not just the expressivity of the embedding but also the randomness of the input dataset. The measure equals zero, $\varepsilon_{\mathbb{U}_{\vec{x}}} = 0$, only when $\mathbb{U}_{\vec{x}}$ is maximally expressive (i.e., when it agrees with the uniform distribution up to at least the second moment).

To understand why expressivity can be an issue for kernel-based methods, let us consider the fidelity quantum kernel of Eq.~\eqref{eq:fidelity-kernel-mt}. This kernel requires computing the inner product between two vectors in an exponentially large Hilbert space. As such, for highly expressive embeddings we are essentially evaluating the inner product between two approximately random (and hence orthogonal) vectors, thus leading to typical kernel values being  exponentially small. That is, kernel values tend to concentrate with increased expressivity. 
The following theorem establishes the formal relationship between the expressivity of the unitary embedding and the concentration of quantum kernels.

\begin{theorem}[Expressivity-induced concentration]\label{thm:expressivity-kernel}
Consider the fidelity quantum kernel as defined in Eq.~\eqref{eq:fidelity-kernel-mt} and the projected quantum kernel as defined in Eq.~\eqref{eq:projected-gaussian-kernel-mt}. Assume that input data $\vec{x}$ and $\vec{x'}$ are drawn from the same distribution, leading to an ensemble of unitaries $\mathbb{U}_{\vec{x}}$ as defined in Eq.~\eqref{eq:unitaryensemble}. We have
\begin{align}
    {\rm Pr}_{\vec{x},\vec{x'}}[|\kappa(\vec{x},\vec{x'}) -  \mathbb{E}_{\vec{x},\vec{x'}} [\kappa(\vec{x},\vec{x'})]| \geq \delta] \leq \frac{G_n(\varepsilon_{\mathbb{U}_{\vec{x}}})}{\delta^2} \;, 
\end{align}
where $ \varepsilon_{\mathbb{U}_{\vec{x}}} = \|\mathcal{A}_{\mathbb{U}_{\vec{x}}}(\rho_0) \|_1$ is the data-dependent expressivity measure over $\mathbb{U}_{\vec{x}}$ defined in Eq.~\eqref{eq:expressivity-measure-epsilon}, and  $G_n(\varepsilon_{\mathbb{U}_{\vec{x}}})$ is a function of $\varepsilon_{\mathbb{U}_{\vec{x}}}$ defined as below.
\begin{enumerate}
    \item For the fidelity quantum kernel $\kappa(\vec{x},\vec{x'}) = \kappa^{FQ}(\vec{x},\vec{x'})$, we have
\begin{align}
    {G_n(\varepsilon_{\mathbb{U}_{\vec{x}}})} = \beta_{\rm Haar} +  \varepsilon_{\mathbb{U}_{\vec{x}}} ( \varepsilon_{\mathbb{U}_{\vec{x}}} + 2\sqrt{\beta_{\rm Haar}}) \; ,
\end{align}
where $\beta_{\rm Haar} = \frac{1}{2^{n-1}(2^n+1)}$
    \item For the projected quantum kernel $\kappa(\vec{x},\vec{x'}) = \kappa^{PQ}(\vec{x},\vec{x'})$, we have 
\begin{align}
    {G_n(\varepsilon_{\mathbb{U}_{\vec{x}}})} = 4 \gamma n ( \tilde{\beta}_{\rm Haar} + \varepsilon_{\mathbb{U}_{\vec{x}}}) \;,
\end{align}
where $\tilde{\beta}_{\rm Haar} = \frac{3}{2^{n+1}+2}$. 
\end{enumerate}
\end{theorem}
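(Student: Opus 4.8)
\emph{Proof proposal.} The plan is to deduce both parts from Chebyshev's inequality, so that the entire theorem reduces to an estimate of the variance of the kernel over the input distribution. Since ${\rm Pr}_{\vec{x},\vec{x'}}[|\kappa - \mathbb{E}_{\vec{x},\vec{x'}}[\kappa]| \geq \delta] \leq \Var_{\vec{x},\vec{x'}}[\kappa]/\delta^2$ and $\Var[\kappa] \leq \mathbb{E}[\kappa^2]$, it is enough to upper bound $\mathbb{E}_{\vec{x},\vec{x'}}[(\kappa^{FQ})^2]$ for the fidelity kernel and $\Var_{\vec{x},\vec{x'}}[\kappa^{PQ}]$ for the projected kernel. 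Throughout I would take $\rho_0$ to be pure (the mixed case is analogous and only improves the Haar constants) and treat $\vec{x},\vec{x'}$ as independent draws from the common distribution. The single structural fact driving everything is that an average over input data is an average over the ensemble $\mathbb{U}_{\vec{x}}$ of Eq.~\eqref{eq:unitaryensemble}, so that the second-moment operator $T := \mathbb{E}_{\vec{x}}[\rho(\vec{x})^{\otimes 2}] = \int_{\mathbb{U}_{\vec{x}}} dU\, U^{\otimes 2}\rho_0^{\otimes 2}(U^\dagger)^{\otimes 2}$ equals, by the very definition of the expressivity superoperator in Eq.~\eqref{eq:expressivity-measure-mt}, $\mathcal{V}_{\rm Haar}(\rho_0) - \mathcal{A}_{\mathbb{U}_{\vec{x}}}(\rho_0)$, whose deviation from the Haar term is controlled in trace norm by $\varepsilon_{\mathbb{U}_{\vec{x}}} = \|\mathcal{A}_{\mathbb{U}_{\vec{x}}}(\rho_0)\|_1$.

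For the fidelity kernel I would use the identity $(\Tr[AB])^2 = \Tr[(A\otimes A)(B\otimes B)]$ to write $(\kappa^{FQ}(\vec{x},\vec{x'}))^2 = \Tr[\rho(\vec{x})^{\otimes 2}\rho(\vec{x'})^{\otimes 2}]$, and then, using independence, $\mathbb{E}_{\vec{x},\vec{x'}}[(\kappa^{FQ})^2] = \Tr[T^2] = \|T\|_2^2$ ($T$ being positive semidefinite). The triangle inequality in the Schatten-$2$ norm, the monotonicity $\|\cdot\|_2 \leq \|\cdot\|_1$, and a standard Haar-integral computation giving $\|\mathcal{V}_{\rm Haar}(\rho_0)\|_2 = \sqrt{\beta_{\rm Haar}}$ then yield $\|T\|_2 \leq \sqrt{\beta_{\rm Haar}} + \varepsilon_{\mathbb{U}_{\vec{x}}}$, hence $\mathbb{E}[(\kappa^{FQ})^2] \leq (\sqrt{\beta_{\rm Haar}} + \varepsilon_{\mathbb{U}_{\vec{x}}})^2 = \beta_{\rm Haar} + \varepsilon_{\mathbb{U}_{\vec{x}}}(\varepsilon_{\mathbb{U}_{\vec{x}}} + 2\sqrt{\beta_{\rm Haar}}) = G_n(\varepsilon_{\mathbb{U}_{\vec{x}}})$, and Chebyshev closes part $1$.

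For the projected kernel, write $\kappa^{PQ} = e^{-\gamma D}$ with $D = \sum_{k=1}^n \|\rho_k(\vec{x}) - \rho_k(\vec{x'})\|_2^2 \geq 0$. I would first linearise the exponential: from $1 - \gamma D \leq e^{-\gamma D} \leq 1$ one gets $\mathbb{E}[(\kappa^{PQ})^2] \leq 1$ and $\mathbb{E}[\kappa^{PQ}]^2 \geq 1 - 2\gamma\mathbb{E}[D]$, so $\Var[\kappa^{PQ}] \leq 2\gamma\,\mathbb{E}_{\vec{x},\vec{x'}}[D]$. Expanding the Schatten-$2$ norms and using independence gives $\mathbb{E}_{\vec{x},\vec{x'}}[D] = 2\sum_{k=1}^n\big(\mathbb{E}_{\vec{x}}[\Tr\rho_k(\vec{x})^2] - \Tr[\bar{\rho}_k^2]\big)$ with $\bar{\rho}_k = \mathbb{E}_{\vec{x}}[\rho_k(\vec{x})]$. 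I would then bound each summand: the swap identity gives $\mathbb{E}_{\vec{x}}[\Tr\rho_k^2] = \Tr[T(\mathbb{S}_k\otimes\mathbb{1})]$ where $\mathbb{S}_k$ swaps the two copies of qubit $k$; substituting $T = \mathcal{V}_{\rm Haar}(\rho_0) - \mathcal{A}_{\mathbb{U}_{\vec{x}}}(\rho_0)$ and applying H\"older ($|\Tr[\mathcal{A}_{\mathbb{U}_{\vec{x}}}(\rho_0)(\mathbb{S}_k\otimes\mathbb{1})]| \leq \|\mathcal{A}_{\mathbb{U}_{\vec{x}}}(\rho_0)\|_1\|\mathbb{S}_k\otimes\mathbb{1}\|_\infty = \varepsilon_{\mathbb{U}_{\vec{x}}}$) reduces this to the Haar single-qubit purity $\mathbb{E}_{\rm Haar}[\Tr\rho_k^2] = \frac{2^{n-1}+2}{2^n+1}$, up to an additive $\varepsilon_{\mathbb{U}_{\vec{x}}}$; crucially one retains the subtracted term and lower-bounds it by the minimal single-qubit purity $\Tr[\bar{\rho}_k^2] \geq \frac12$. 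Then $\mathbb{E}_{\vec{x}}[\Tr\rho_k^2] - \Tr[\bar{\rho}_k^2] \leq \frac{2^{n-1}+2}{2^n+1} - \frac12 + \varepsilon_{\mathbb{U}_{\vec{x}}} = \tilde{\beta}_{\rm Haar} + \varepsilon_{\mathbb{U}_{\vec{x}}}$, so $\Var[\kappa^{PQ}] \leq 4\gamma\sum_{k=1}^n(\tilde{\beta}_{\rm Haar} + \varepsilon_{\mathbb{U}_{\vec{x}}}) = 4\gamma n(\tilde{\beta}_{\rm Haar} + \varepsilon_{\mathbb{U}_{\vec{x}}})$, and Chebyshev finishes part $2$.

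The main obstacle is not any individual inequality but the bookkeeping needed to make the Haar constants come out as exactly $\beta_{\rm Haar}$ and $\tilde{\beta}_{\rm Haar}$: this requires care with the second-moment ($2$-design) integrals, in particular $\|\mathcal{V}_{\rm Haar}(\rho_0)\|_2$ for the fidelity case and $\Tr[\mathcal{V}_{\rm Haar}(\rho_0)(\mathbb{S}_k\otimes\mathbb{1})]$ for the projected case, and — for the projected kernel — the realization that the subtracted term $\Tr[\bar{\rho}_k^2]$ is $\Theta(1)$ and must not be discarded but instead have its leading $\frac12$ cancel the leading part of the Haar purity, leaving only the $\Theta(2^{-n})$ residue $\tilde{\beta}_{\rm Haar}$. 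A secondary point is to check that a general mixed $\rho_0$ only tightens the bounds, so the stated pure-state constants are indeed the relevant ones.
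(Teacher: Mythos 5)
Your proposal is correct and reproduces both constants exactly; it rests on the same pillars as the paper's proof (Chebyshev's inequality on the variance, the identification of the data average with the ensemble average so that $\mathbb{E}_{\vec{x}}[\rho(\vec{x})^{\otimes 2}] = \mathcal{V}_{\rm Haar}(\rho_0) - \mathcal{A}_{\mathbb{U}_{\vec{x}}}(\rho_0)$, H\"older's inequality against the swap operator, and the same Haar second-moment integrals), so the differences are purely in the bookkeeping. For the fidelity kernel the paper expands $\Tr[(\mathcal{V}_{\rm Haar}(\rho_0)-\mathcal{A}_{\mathbb{U}_{\vec{x}}}(\rho_0))^2]$ and applies H\"older to the cross terms, whereas you apply the triangle inequality directly to $\|T\|_2$ and square; these are algebraically equivalent and give the identical $G_n$. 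For the projected kernel the routes diverge mid-proof: the paper uses the one-sided bound $\Var[\kappa^{PQ}]\leq \mathbb{E}[1-\kappa^{PQ}]\leq\gamma\,\mathbb{E}[D]$ and then acquires a factor of $4$ by routing through $\mathbb{1}_k/2$ via the triangle inequality and $(a+b)^2\leq 2a^2+2b^2$, evaluating $\mathbb{E}_{\rm Haar}\|\rho_k-\mathbb{1}_k/2\|_2^2=\tilde{\beta}_{\rm Haar}$ directly; you instead take a two-sided linearisation of the exponential to get $\Var[\kappa^{PQ}]\leq 2\gamma\,\mathbb{E}[D]$ and keep the exact expansion $\mathbb{E}[D]=2\sum_k\big(\mathbb{E}_{\vec{x}}[\Tr[\rho_k(\vec{x})^2]]-\Tr[\bar{\rho}_k^{\,2}]\big)$ with $\bar{\rho}_k=\mathbb{E}_{\vec{x}}[\rho_k(\vec{x})]$, lower-bounding the purity of the mean reduced state by $1/2$. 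The $2\times 2$ versus $1\times 4$ factorisations land on the same bound $4\gamma n(\tilde{\beta}_{\rm Haar}+\varepsilon_{\mathbb{U}_{\vec{x}}})$, so neither buys anything quantitatively; your version is arguably more transparent about where the cross-overlap term $\Tr[\rho_k(\vec{x})\rho_k(\vec{x'})]$ goes (it is retained and cancelled against the leading $1/2$ of the Haar purity rather than absorbed by a triangle inequality), while the paper's centering around $\|\rho_k-\mathbb{1}_k/2\|_2$ is the form that carries over verbatim to the entanglement- and noise-induced bounds elsewhere in the appendix. Your closing remarks are also sound: the stated $\beta_{\rm Haar}$ and $\tilde{\beta}_{\rm Haar}$ assume a pure $\rho_0$, consistent with the paper's proof.
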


Theorem~\ref{thm:expressivity-kernel} establishes that higher embedding expressivity leads to greater quantum kernel concentration. That is, the upper bound on the kernel concentration becomes smaller when $U(\vec{x})$ is more expressive. In the limit where $\mathbb{U}_{\vec{x}}$ forms an ensemble that is exponentially close to a 2-design (corresponding to $\varepsilon_{\mathbb{U}_{\vec{x}}} \in \mathcal{O}(1/b^n)$ for $b>1$), the kernel exponentially concentrates, and so exponentially many measurement shots are required to evaluate the kernel on a quantum device. \sth{Note that the fidelity kernel exponentially concentrates to some exponentially small value i.e., $\mu = \Ebb_{\vec{x},\vec{x'}\sim {\rm Haar}}[ \kappa^{\rm FQ}(\vec{x},\vec{x'})] = 1/2^n$.}

We stress that the proof of Theorem~\ref{thm:expressivity-kernel} makes no assumptions on the form of $U(\vec{x})$. This means the theorem holds for a wide range of embedding architectures, including both problem-agnostic~\cite{peters2021machine,hubregtsen2021training,wang2021towards,thanasilp2021subtleties,schuld2021supervised,havlivcek2019supervised} and problem-inspired embeddings~\cite{huang2021power,kubler2021inductive,liu2021rigorous}. 
In Appendix~\ref{appendix:extension-expressivity}, we generalize Theorem~\ref{thm:expressivity-kernel} by relaxing the assumption that $\vec{x}$ and $\vec{x'}$ are drawn from the same distribution. This is relevant, for example, in binary classification tasks where one might want to analyze the behavior of the kernel when a pair of inputs are drawn from different training ensembles. Here we find a similar conclusion, where higher expressivity lead to more concentrated kernel values. 

Although Theorem~\ref{thm:expressivity-kernel} is stated in terms of the unitary embedding of classical data, the theorem is also applicable to quantum data. That is, it can also be applied when the input data is a collection of pure quantum states generated directly from some quantum process of interest. This follows from the fact that given a set of quantum data states, there is a (potentially unknown) underlying ensemble of unitaries associated with preparing this set of states. Since we can associate each of these unitaries with a classical label, we can associate the quantum data with an encoding ensemble $\mathbb{U}_{\vec{x}}$ and apply Theorem~\ref{thm:expressivity-kernel} as before. 
For example, consider a Hamiltonian $H(\vec{x})$ where $\vec{x}$ are parameters describing the Hamiltonian (e.g. perhaps on-site energies or interaction strengths). The quantum data generated by an evolution under $H(\vec{x})$ for time $T$ can be expressed as $\{U(\vec{x_i})\ket{0} = e^{-iH(\vec{x_i})T}\ket{0} \}_i$.

\begin{figure}[t]
\includegraphics[width=.5\columnwidth]{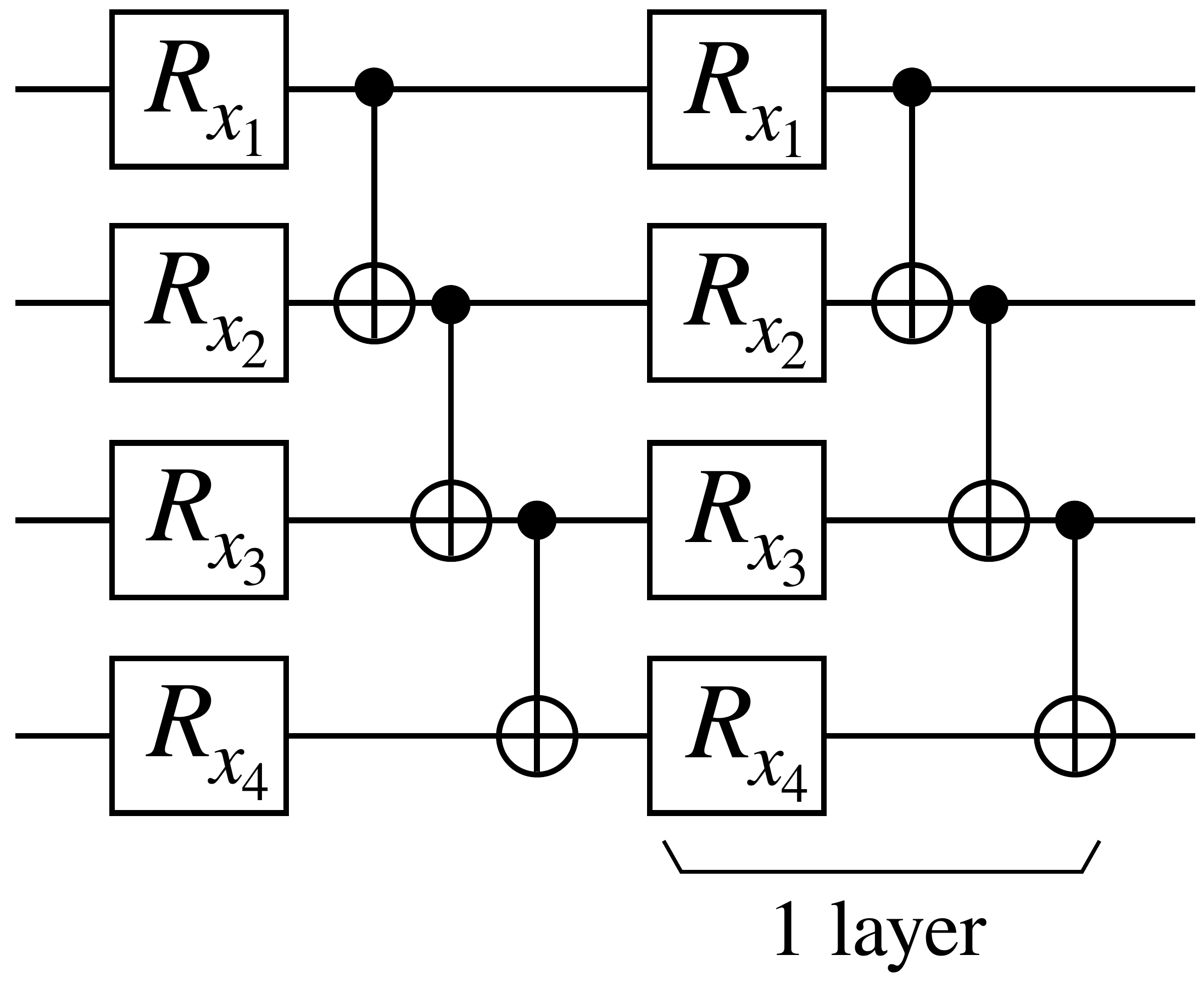}
\caption{\textbf{Hardware Efficient Embedding (HEE).} A layer is composed of single qubit x-rotations where the rotation angle on qubit $k$ is given by the $k_{\rm th}$ component of the input data point $\vec{x}$. After each layer of rotations, one applies entangling gates acting on adjacent pairs of qubits.}\label{fig:HEE}
\end{figure}

\medskip 

We numerically probe the dependence of the concentration of quantum kernel values on the expressivity of the data embedding.
To do so, we consider a Hardware Efficient Embedding (HEE)~\cite{thanasilp2021subtleties}, comprised of $L$ layers of data-dependent single-qubit rotations around the $x$-axis followed by entangling gates (see Fig.~\ref{fig:HEE}). We further consider a data re-uploading strategy where an input data point is repeatedly uploaded into the data embedding~\cite{schuld2021supervised,perez2020data,thanasilp2021subtleties,gan2022fock}. In particular, the $i^{\rm th}$-component of a data point $\vec{x}$ is encoded as the rotation angle of qubit $i$ in every HEE layer.

We choose to focus on the binary classification task of distinguishing handwritten `0' and `1' digits from the MNIST dataset~\cite{lecun1998mnist}. As sketched in Fig.~\ref{fig:dataset}(a), each individual image (i.e., an input data point) is dimensionally reduced to a real-valued vector of length $n$ using principle component analysis. We refer the reader to Appendix~F of Ref.~\cite{thanasilp2021subtleties} for more details.

\begin{figure}[t]
\includegraphics[width=.9\columnwidth]{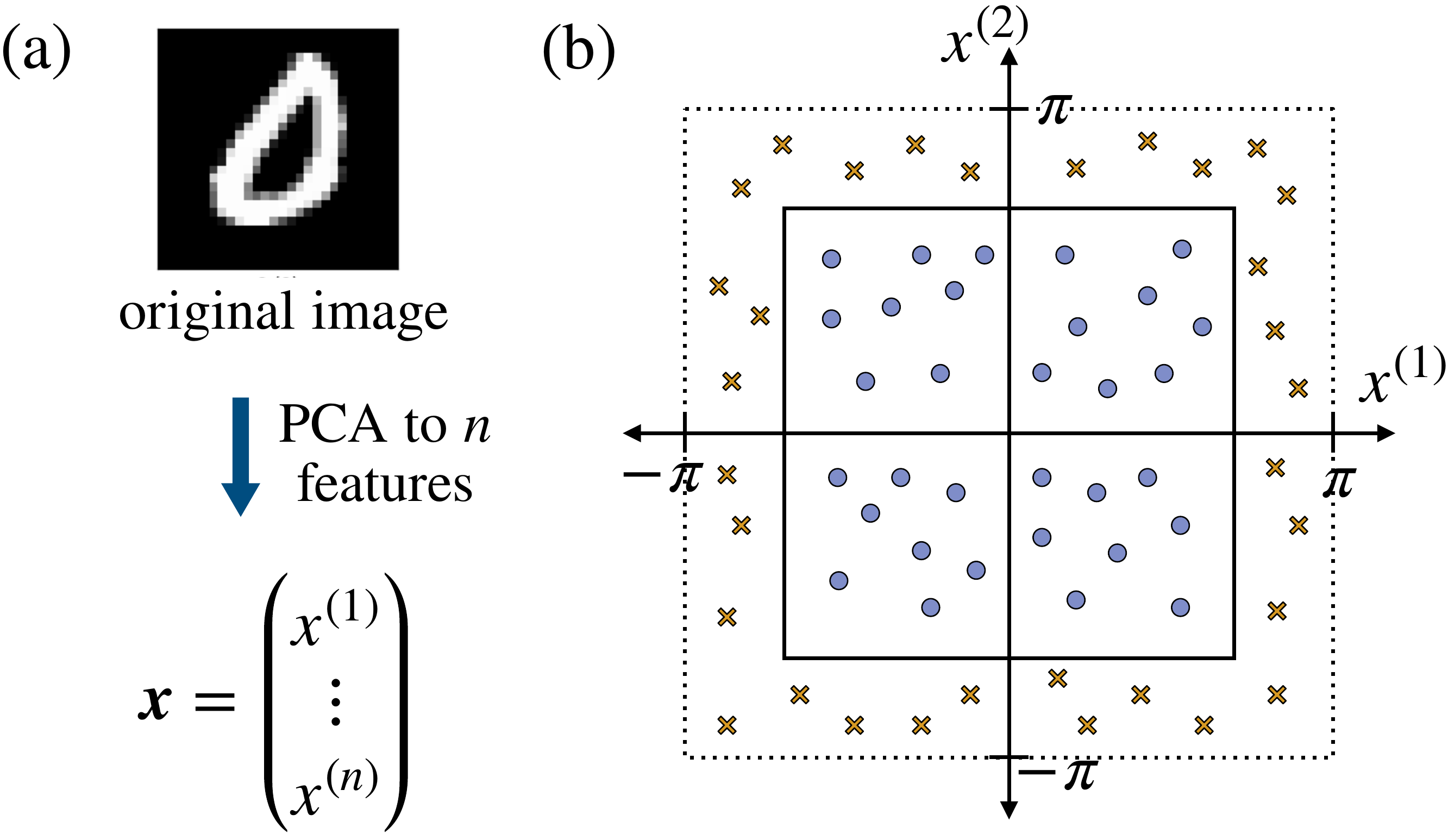}
\caption{\textbf{Datasets.} (a) An input data point $\vec{x}$ is obtained from dimensionally reducing an original MNIST image to $n$ features using principal component analysis. We assign label $-1$ if the original image is digit `0' and $1$ if the original image is digit `1'.
(b) A hypercube of width $2\pi/2^{1/n}$ is centred at the origin. An input data point $\vec{x}$ with each of its component bounded between $-\pi$ and $\pi$ has an associated label $y=1$ if the point is inside the hypercube (represented by a circle) and $y=-1$, otherwise (represented by a cross).}\label{fig:dataset}
\end{figure}

For a dataset $\mathcal{S} = \{ \vec{x}_i, y_i \}_{i=1}^{N_s}$ of size $N_s$, we evaluate the kernel values over all possible different pairs of inputs in $\mathcal{S}$. Thus we consider the set of values: $\mathcal{K}_{\mathcal{S}} = \{ \kappa(\vec{x}_1,\vec{x}_2), \kappa(\vec{x}_1,\vec{x}_3), ... , \kappa(\vec{x}_{N_s - 1},\vec{x}_{N_s})  \}$. We note that kernel values for pairs of identical inputs are always $1$ and are so excluded from $\mathcal{K}_{\mathcal{S}}$. To study the degree to which the quantum kernels probabilistically concentrate, we compute the variance $\Var_{\vec{x},\vec{x}'}[\kappa(\vec{x},\vec{x}')]$ over $\mathcal{K}_{\mathcal{S}}$.

Fig.~\ref{fig:expressivity} shows results for the scaling of the kernel variance as a function of the number of qubits $n$ and HEE layers $L$. As $L$ increases, the expressivity of the ansatz increases, and for sufficiently large $L$ we observe exponential concentration of both the fidelity and projected quantum kernels. We note that while the projected quantum kernel reaches the exponential decay regime at shorter depths (i.e., roughly $L \geq 16$ for the projected kernel, compared to $L \geq 75$ for the fidelity kernel), we generally observe smaller variances (and so stronger concentration) for the fidelity kernel than the projected kernel.

\begin{figure}[t]
\includegraphics[width=.99\columnwidth]{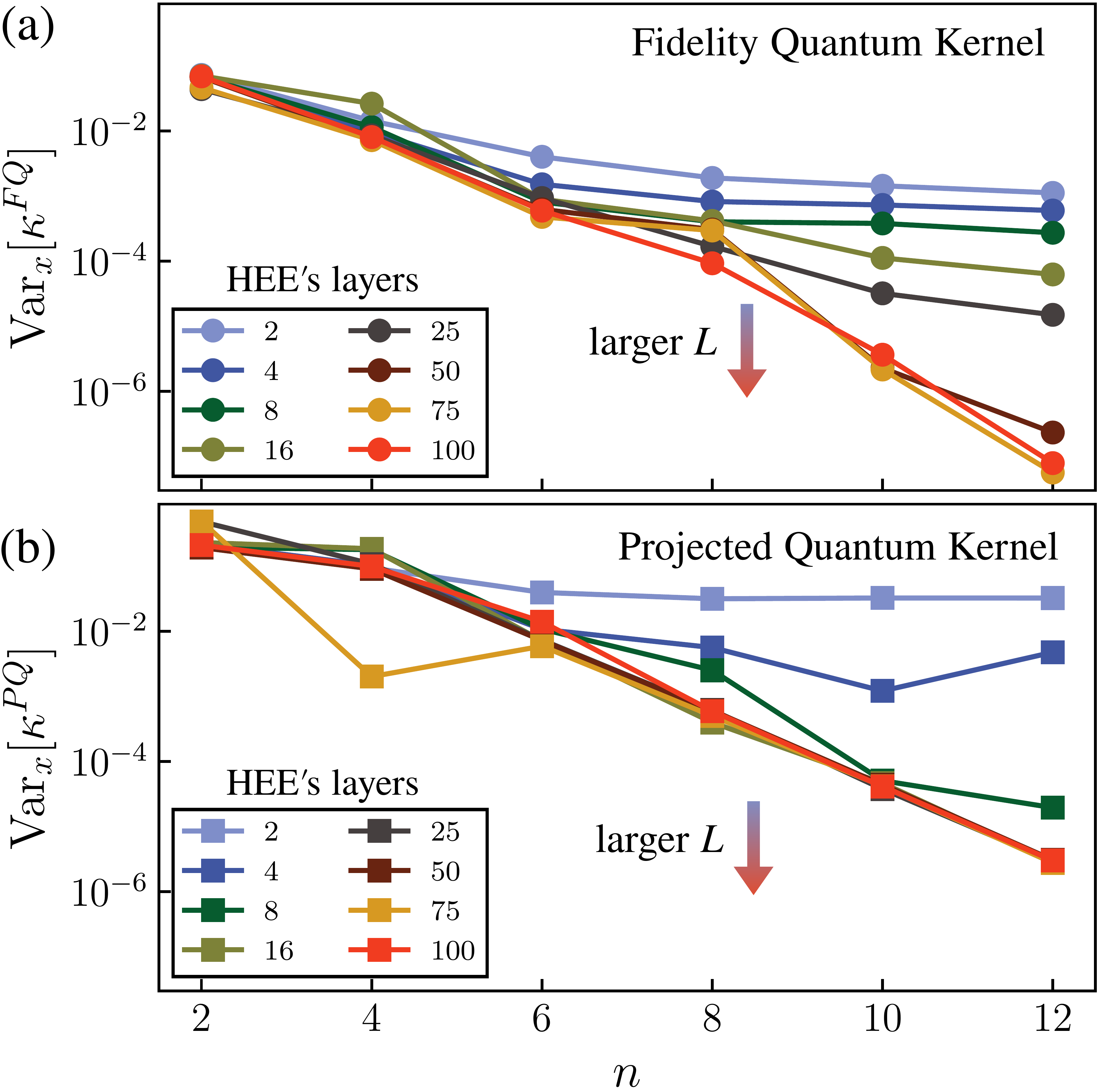}
\caption{\textbf{Effect of expressivity on quantum kernels.} We plot variances of the (a) fidelity  and (b) projected quantum kernels, as a function of $n$ and $L$. The classical data from the MNIST dataset ($N_s = 40$) is encoded via an $L$-layer HEE.}\label{fig:expressivity}
\end{figure}

\medskip

Theorem~\ref{thm:expressivity-kernel} and the numerics presented in Fig.~\ref{fig:expressivity} highlight the importance of the expressivity of quantum kernels. Namely, highly expressive encodings (whether using fidelity or projected kernels) should be avoided.
Or, more concretely, unstructured data-embeddings~\cite{peters2021machine,hubregtsen2021training,wang2021towards,schuld2021supervised} should generally be avoided and the data structure should be taken into account when designing a data-embedding (for instance by constructing geometrically inspired embedding schemes~\cite{larocca2022group,meyer2022exploiting,skolik2022equivariant,sauvage2022building}). 

\subsubsection{Entanglement-induced concentration}\label{sec:entanglement}

In the previous section we saw that high expressivity can be an issue due to the fact that kernels (such as the fidelity kernel) compare inner products of objects in exponentially large spaces. This issue can be mitigated using projected kernels, which reduce the dimension of the feature space. However, a different issue arises here due to the non-local correlations between the qubits. Namely, the entanglement of the encoded state is another potential source of concentration. Intuitively, this follows from the fact that tracing out qubits in very entangled encoded states, leads to local states that are close to maximally mixed. 
\begin{theorem}[Entanglement-induced concentration]\label{thm:entanglement-kernel}
Consider the projected quantum kernel as defined in Eq.~\eqref{eq:projected-gaussian-kernel-mt}. For a given pair of data-encoded states associated with $\vec{x}$ and $\vec{x'}$, we have
\begin{align}
    \left| 1 - \kappa^{PQ}(\vec{x},\vec{x'})\right|   \leq (2\ln2) \gamma \Gamma_s(\vec{x},\vec{x'}) \; ,
\end{align}
where
\begin{align}
    \Gamma_s(\vec{x},\vec{x'}) =  \sum_{k=1}^n \left[ \sqrt{S\left(\rho_k(\vec{x})\Big\|  \frac{\mathbb{1}_k}{2}\right)} +  \sqrt{S\left(\rho_k(\vec{x'})\Big\|  \frac{\mathbb{1}_k}{2}\right)} \right]^2  \;,
\end{align}
where we denote $S\left(\cdot\|  \cdot \right)$ as the quantum relative entropy, $\rho_k$ as a reduced state on qubit $k$, and $\mathbb{1}_k$ as the maximally mixed state on qubit $k$.
\end{theorem}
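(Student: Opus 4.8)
The plan is to strip off the exponential and then control each single-qubit Hilbert--Schmidt distance by a relative entropy using two textbook inequalities. First I would write $\kappa^{PQ}(\vec{x},\vec{x'}) = e^{-z}$ with $z := \gamma \sum_{k=1}^n \|\rho_k(\vec{x}) - \rho_k(\vec{x'})\|_2^2 \geq 0$, so that $0 < \kappa^{PQ}(\vec{x},\vec{x'}) \leq 1$ and hence $|1 - \kappa^{PQ}(\vec{x},\vec{x'})| = 1 - e^{-z}$. Applying the elementary bound $1 - e^{-z} \leq z$ for $z \geq 0$ gives $|1 - \kappa^{PQ}(\vec{x},\vec{x'})| \leq \gamma \sum_{k=1}^n \|\rho_k(\vec{x}) - \rho_k(\vec{x'})\|_2^2$, reducing the theorem to the per-qubit estimate $\|\rho_k(\vec{x}) - \rho_k(\vec{x'})\|_2^2 \leq (2\ln 2)\big[\sqrt{S(\rho_k(\vec{x})\|\mathbb{1}_k/2)} + \sqrt{S(\rho_k(\vec{x'})\|\mathbb{1}_k/2)}\big]^2$.

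For this per-qubit bound I would insert the single-qubit maximally mixed state and use the triangle inequality for the Schatten $2$-norm, $\|\rho_k(\vec{x}) - \rho_k(\vec{x'})\|_2 \leq \|\rho_k(\vec{x}) - \mathbb{1}_k/2\|_2 + \|\rho_k(\vec{x'}) - \mathbb{1}_k/2\|_2$. Each term is then controlled by first using monotonicity of Schatten norms, $\|A\|_2 \leq \|A\|_1$ (no dimension factor, since the operators are fixed), and then the quantum Pinsker inequality $\tfrac{1}{2}\|\rho-\sigma\|_1^2 \leq \ln 2 \cdot S(\rho\|\sigma)$ (with the relative entropy in base $2$, matching the statement) applied with $\sigma=\mathbb{1}_k/2$. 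This yields $\|\rho_k(\vec{x}) - \mathbb{1}_k/2\|_2 \leq \sqrt{2\ln 2}\,\sqrt{S(\rho_k(\vec{x})\|\mathbb{1}_k/2)}$ and similarly for $\vec{x'}$. Substituting into the triangle inequality and squaring pulls out the common factor $2\ln 2$, giving exactly the claimed per-qubit estimate. Summing over $k$, multiplying by $\gamma$, and recognising $\Gamma_s(\vec{x},\vec{x'})$ then finishes the proof.

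There is no genuinely hard step here; the whole argument is a short chain of standard inequalities, so the ``obstacle'' is really just bookkeeping. The two points that need care are (i) fixing the logarithm convention so the constant $2\ln 2$ comes out correctly — using nat-valued relative entropy would instead produce the equivalent bound $|1-\kappa^{PQ}(\vec{x},\vec{x'})| \leq 2\gamma\,\Gamma_s(\vec{x},\vec{x'})$ with $\Gamma_s$ defined through the nat-valued entropy — and (ii) the direction of the Schatten-norm inequality, since one needs $\|\cdot\|_2 \leq \|\cdot\|_1$ rather than the reverse. I note in passing that the slightly tighter constant $\ln 2$ in place of $2\ln 2$ is attainable by evaluating $\|\rho_k - \mathbb{1}_k/2\|_2^2$ and $S(\rho_k\|\mathbb{1}_k/2)$ directly in terms of the single-qubit Bloch vector $r$ and using $1 - h_2(\tfrac{1+r}{2}) \geq r^2/(2\ln 2)$, but I would keep the Pinsker route since the stated bound only requires $2\ln 2$ and it avoids the qubit-specific entropy estimate.
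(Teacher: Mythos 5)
Your proposal is correct and follows essentially the same route as the paper's proof: both use $1-e^{-t}\leq t$, insert the maximally mixed state via the triangle inequality, pass from the Schatten $2$-norm to the $1$-norm by monotonicity, and invoke Pinsker's inequality; the only (immaterial) difference is that you apply the triangle inequality before the norm monotonicity rather than after. Your side remark that the constant can be tightened to $\ln 2$ by a direct single-qubit Bloch-vector computation is also correct, though not needed for the stated bound.
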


Theorem~\ref{thm:entanglement-kernel} upper bounds the deviation of kernel values from a fixed value of 1 with the relative entropy between the reduced states of the encoded data and a maximally mixed state of a single qubit. In addition, unlike the results in the previous sections, the exponential concentration bounds here are deterministic. In the case where the entanglement of the encoded states obeys a volume law, that is $S\left(\rho_k(\vec{x})\| \frac{\mathbb{1}_k}{2}\right), S\left(\rho_k(\vec{x'})\| \frac{\mathbb{1}_k}{2}\right) \in \mathcal{O}(1/2^{n-1})$ for all subsystems, the kernel values deterministically exponentially concentrate to 1. 
For encoded states that obey an area-law scaling, i.e. $S\left(\rho_k(\vec{x})\| \frac{\mathbb{1}_k}{2}\right), S\left(\rho_k(\vec{x'})\| \frac{\mathbb{1}_k}{2}\right) \in \mathcal{O}(1)$ for all subsystems, the story is more complex. Theorem~\ref{thm:entanglement-kernel}, as an upper bound, allows (but does not guarantee) that such data states do not concentrate exponentially. 

It is worth highlighting that the entanglement-induced bound in Theorem~\ref{thm:entanglement-kernel} is stated for a given pair of data-encoded states, and not as an average over all possible data pairs. Hence, it is thus natural to determine classes of data and embeddings where concentration will arise  with high probability, e.g., cases when the encoded states obey a volume law of entanglement. First, we note that if the ensemble of encoded data states forms at least a 4-design, then most of the encoded states to obey a volume-law scaling~\cite{low2009large,cotler2022fluctuations}. However, in this case, our bound on expressivity already implies that the kernel's exponentially concentrate so the entanglement-induced result is redundant.

Entanglement-induced concentration can also occur in cases where the embedding is not highly expressive but still leads to states satisfying a volume-law. Here, Theorem~\ref{thm:entanglement-kernel} implies that the kernel values of the projected quantum kernels will exponentially concentrate. 
In this case performing any supervised learning task with the projected quantum kernels will fail with a polynomial number of measurement shots.
As an example, consider binary classification and assume that
one manages to construct a $U(\vec{x})$ that maps the input data into one of the two orthogonal sets of volume-law entangled states depending on the true label of the input. \sth{In this setting, the trained model with the fidelity kernel should not face issues associated with exponential concentration.} However, if we use the projected quantum kernel, we cannot perform the task better than random guessing without spending an exponential number of shots. 
This statement is formalized in the following corollary.

\begin{corollary}\label{coro:entanglement-induced}
Consider the projected quantum kernel as defined in Eq.~\eqref{eq:projected-gaussian-kernel-mt}. If  all the states in the ensemble $\mathbb{S}_{\rm train} $ generated from the training dataset obey volume law scaling, we have
\begin{align}
    \left| 1 - \kappa^{PQ}(\vec{x},\vec{x'})\right|   \in \mathcal{O}(n 2^{-n }) \; ,
\end{align}
for all $\vec{x}$ and $\vec{x'}$ in the training data.
\end{corollary}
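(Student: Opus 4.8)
The plan is to derive the corollary as a one-step consequence of Theorem~\ref{thm:entanglement-kernel}, by inserting the quantitative content of the volume-law hypothesis. First I would make the hypothesis precise at the level of the single-qubit reduced states appearing in $\Gamma_s$: a volume-law (scrambled) encoded state has every single-qubit marginal $\rho_k(\vec{x})$ exponentially close to $\mathbb{1}_k/2$, which is the statement $S(\rho_k(\vec{x})\|\mathbb{1}_k/2) \in \mathcal{O}(1/2^{n-1})$ used in the discussion following Theorem~\ref{thm:entanglement-kernel} (and which is what Page's formula gives for Haar-typical states). Concretely I would fix a constant $c>0$ such that $S(\rho_k(\vec{x})\|\mathbb{1}_k/2) \leq c/2^{n-1}$ uniformly over all qubits $k$ and all encoded states in $\mathbb{S}_{\rm train}$.

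Next I would bound $\Gamma_s(\vec{x},\vec{x'})$ term by term. Using $\sqrt{S(\rho_k(\vec{x})\|\mathbb{1}_k/2)} \leq \sqrt{c}\,2^{-(n-1)/2}$ and the same bound for $\vec{x'}$, each summand obeys
\begin{align}
    \left[ \sqrt{S\left(\rho_k(\vec{x})\Big\|  \frac{\mathbb{1}_k}{2}\right)} + \sqrt{S\left(\rho_k(\vec{x'})\Big\|  \frac{\mathbb{1}_k}{2}\right)} \right]^2 \leq \frac{4c}{2^{n-1}} \; ,
\end{align}
and summing over the $n$ qubits gives $\Gamma_s(\vec{x},\vec{x'}) \leq 4cn/2^{n-1} \in \mathcal{O}(n 2^{-n})$, uniformly over all pairs $\vec{x},\vec{x'}$ drawn from the training data.

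Finally I would plug this into Theorem~\ref{thm:entanglement-kernel}, which states $|1-\kappa^{PQ}(\vec{x},\vec{x'})| \leq (2\ln 2)\gamma\,\Gamma_s(\vec{x},\vec{x'})$. Since $\gamma$ is a fixed hyperparameter that does not scale with $n$, the right-hand side is $\mathcal{O}(n 2^{-n})$, which is exactly the claimed deterministic bound.

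The computation is routine; the only real point of care is the first step --- stating the volume-law assumption in the sharp quantitative form $S(\rho_k\|\mathbb{1}_k/2)\in\mathcal{O}(2^{-n})$ rather than a weaker ``reduced states are nearly maximally mixed'' statement, and checking that the bound holds uniformly over every qubit and every training pair so that the conclusion is deterministic rather than only on average. If instead one wished to drop the hypothesis and argue that volume-law behaviour is generic, one would invoke typicality (e.g.\ concentration of measure for ensembles of encoded states forming at least a $4$-design, as referenced in the main text) to establish the uniform bound with high probability; but under the stated assumption no such input is needed, and the entanglement-induced bound of Theorem~\ref{thm:entanglement-kernel} delivers the result directly.
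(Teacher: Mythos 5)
Your proposal is correct and follows exactly the route the paper intends: the quantitative volume-law hypothesis $S(\rho_k\|\mathbb{1}_k/2)\in\mathcal{O}(2^{-(n-1)})$ for every qubit and every training state, a term-by-term bound on $\Gamma_s(\vec{x},\vec{x'})$ giving $\mathcal{O}(n2^{-n})$, and direct substitution into Theorem~\ref{thm:entanglement-kernel} with $\gamma$ treated as a constant. The paper gives no separate proof of the corollary beyond this same one-line deduction in the discussion following the theorem, so there is nothing to add.
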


Thus, when using projected kernels, highly entangling encodings should be avoided to ensure \sth{predictability on unseen data}. We note that fidelity kernels (with pure input states) are not affected by entanglement in this manner as they do not require tracing qubit out. Lastly, we stress that Theorem~\ref{thm:entanglement-kernel} and Corollary~\ref{coro:entanglement-induced} are readily applied to quantum data. Indeed, entanglement-induced concentration may well be more problematic in this case since if the quantum data is already highly entangled then there is little that one can do. (In contrast, for classical data one may simply avoid highly entangled embeddings.) 
As an example, consider a quantum dataset generated by evolving different initial states with either $U_1$ or $U_2$ where $U_1$ and $U_2$ are unitaries drawn from the Haar measure over the unitary group. Since Haar random evolution leads to a volume-law scaling, classifying whether a given state is evolved by $U_1$ or $U_2$ cannot be done efficiently using projected quantum kernels.

\subsubsection{Global-measurement-induced concentration}\label{sec:global}
Global measurements can be another source of exponential concentration. A global measurement is a measurement that acts non-trivially on all $n$ qubits. Such global measurements are required by design to compute fidelity kernels but not projected kernels. In broad terms global measurements can lead to concentration because we are attempting to extract global information about a state that lives in an exponentially large Hilbert space. While projected quantum kernels do not face these difficulties due to their local construction, we argue that global measurements can lead to problems for the fidelity kernel.

To illustrate this problem, we provide an example where the data embedding has low expressivity and contains no entanglement and yet it is still possible to have exponential concentration. Consider the tensor product unitary data embedding $U(\vec{x}) = \bigotimes_{k=1}^n U_k(x_k)$ with $x_k$ being a $k$-th component  of $\vec{x}$, and $U_k$ being a single-qubit rotation about the $y$-axis on the $k$-th qubit. The following proposition holds.

\begin{proposition}[Global-measurement-induced concentration ]\label{prop:global-measurement}
Consider the fidelity quantum kernel as defined in Eq.~\eqref{eq:fidelity-kernel-mt} where the data embedding is of the form $U(\vec{x}) = \bigotimes_{k=1}^n U_k(x_k)$ with $x_k$ being an input component encoded in the qubit $k$, and $U_k$ being a single-qubit rotation about the $y$-axis on the $k$-th qubit. For an input data point $\vec{x}$, assume that all components of $\vec{x}$ are independent and uniformly sampled in $[-\pi,\pi]$. Given a product initial state  $\rho_0 = \bigotimes_{k=1}^n \dya{0}$, we have,
\begin{align}
    {\rm Pr}_{\vec{x},\vec{x'}}[|\kappa^{FQ}(\vec{x},\vec{x'}) - \sth{1/2^n}  | \geq \delta] \leq \left(\frac{3}{8}\right)^n \cdot \frac{1}{\delta^2}\;.\label{eq:prop1} 
\end{align}
\end{proposition}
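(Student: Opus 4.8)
The plan is to exploit the fully factorized structure of the tensor-product embedding: write the fidelity kernel as a product of $n$ i.i.d.\ single-qubit overlaps, compute the first two moments of each factor by elementary integration, and then conclude via Chebyshev's inequality in the form of Eq.~\eqref{eq:def-var-concentration}.

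First I would write down the data-encoded state. Since $U(\vec{x}) = \bigotimes_{k=1}^n U_k(x_k)$ with each $U_k$ a single-qubit $y$-rotation and $\rho_0 = \bigotimes_{k=1}^n \dya{0}$, the encoded state is a product state $\rho(\vec{x}) = \bigotimes_{k=1}^n \dya{\psi_k(x_k)}$ with $\ket{\psi_k(x_k)} = \cos(x_k/2)\ket{0} + \sin(x_k/2)\ket{1}$. Hence the fidelity kernel factorizes as
\begin{align}
    \kappa^{FQ}(\vec{x},\vec{x'}) = \prod_{k=1}^n |\langle\psi_k(x_k)|\psi_k(x'_k)\rangle|^2 = \prod_{k=1}^n \cos^2\!\left(\frac{x_k - x'_k}{2}\right)\;,
\end{align}
and, since the pairs $(x_k,x'_k)$ are independent across $k$, the factors $Z_k := \cos^2((x_k-x'_k)/2)$ are i.i.d.\ random variables.

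Next I would compute $\Ebb[Z_k]$ and $\Ebb[Z_k^2]$. Writing $\cos^2\phi = (1+\cos 2\phi)/2$ and using that, for independent uniform $x_k,x'_k \in [-\pi,\pi]$, one has $\Ebb[\cos(m(x_k-x'_k))] = \Ebb[\cos mx_k]\,\Ebb[\cos mx'_k] + \Ebb[\sin mx_k]\,\Ebb[\sin mx'_k] = 0$ for every nonzero integer $m$ (each one-dimensional integral runs over a full period), one gets $\Ebb[Z_k] = 1/2$ and, expanding the square and again dropping the vanishing Fourier modes, $\Ebb[Z_k^2] = \tfrac14\big(1 + \Ebb[\cos^2(x_k-x'_k)]\big) = \tfrac14\big(1+\tfrac12\big) = 3/8$. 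By independence this gives $\mu := \Ebb_{\vec{x},\vec{x'}}[\kappa^{FQ}] = (1/2)^n = 1/2^n$ and $\Ebb_{\vec{x},\vec{x'}}[(\kappa^{FQ})^2] = (3/8)^n$, so that $\Var_{\vec{x},\vec{x'}}[\kappa^{FQ}] = (3/8)^n - (1/4)^n \le (3/8)^n$. Chebyshev's inequality then yields ${\rm Pr}_{\vec{x},\vec{x'}}[|\kappa^{FQ}(\vec{x},\vec{x'}) - 1/2^n| \ge \delta] \le \Var_{\vec{x},\vec{x'}}[\kappa^{FQ}]/\delta^2 \le (3/8)^n/\delta^2$, which is exactly Eq.~\eqref{eq:prop1}.

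I do not expect a genuine obstacle: the argument is elementary once the factorization is in place. The only points needing a little care are (i) verifying that the single-qubit overlap is $\cos^2((x_k-x'_k)/2)$ — the precise rotation-angle convention is irrelevant, since a spurious overall factor of two in the angle still leaves all the relevant expectations vanishing over a full period — and (ii) tracking the cross-terms when squaring $\kappa^{FQ}$, which factor cleanly only because the qubits are independent. It is worth noting that the same computation shows the result is essentially tight: $\kappa^{FQ}$ concentrates towards the \emph{exponentially small} value $1/2^n$ rather than an $O(1)$ constant, which is precisely what makes the global fidelity measurement problematic even though the embedding is merely a product of single-qubit rotations.
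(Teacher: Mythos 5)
Your proof is correct and follows essentially the same route as the paper's: factorize the kernel over the independent qubits, bound the variance by the second moment $(3/8)^n$ per-qubit, verify the mean is $1/2^n$, and apply Chebyshev. The only difference is cosmetic — the paper states the per-qubit value $3/8$ "via a direct computation" on the two-copy trace, whereas you carry out that computation explicitly with the trigonometric expansion (and correctly observe that the rotation-angle convention is immaterial since all nonzero Fourier modes average to zero over a full period).
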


Intuitively, the result in Proposition~\ref{prop:global-measurement} can be understood as following from the fact that the fidelity between two product states is usually exponentially small. In the Appendix we further generalize this proposition to the case when $U_k$ is a general unitary, which also leads to a concentration result.

We remark that the assumptions underlying Proposition~\ref{prop:global-measurement} can be relevant in practice. For example, consider classifying whether or not a given point $\vec{x}$ in $n-$dimensional space (with each component bounded between $[-\pi,\pi]$) stays inside a hypercube centred at the origin with the width of $\frac{2\pi}{2^{1/n}}$ (see Fig.~\ref{fig:dataset}(b))~\footnote{The width of the hypercube is chosen so there is a $0.5$ probability of a randomly chosen point being in or out of the hypercube.}. 
For this task, an individual data point in the training dataset is generated by uniformly drawing each vector component from the range $[-\pi, \pi]$. Since here data points are obtained via uniformly sampling each component independently, the above assumptions are satisfied.

\medskip

We numerically study the concentration of the kernels for this classification task in Fig.~\ref{fig:global}. To reduce the effects of expressivity and entanglement, we first select the data embedding to be a single layer of one qubit rotations ($R_x$, $R_y$, Hadamard followed by $R_z$). Similar to the HEE, each component of an individual data point is embedded as a rotation angle. 
We observe an exponential decay in the variance of the fidelity kernel in a good agreement with our theoretical predictions. 

While Proposition 1 is derived with a tensor product embedding, similar results are expected when dealing with more general unstructured embeddings such as hardware efficient embeddings. This is because the additional complexity from using an unstructured embedding can only increase the kernel concentration (due to increased expressivity and entanglement). This is highlighted in Fig.~\ref{fig:global} where we additionally consider an $L$-layered HEE and see that increasing expressivity can accelerate the exponential decay.

\begin{figure}[t]
\includegraphics[width=.99\columnwidth]{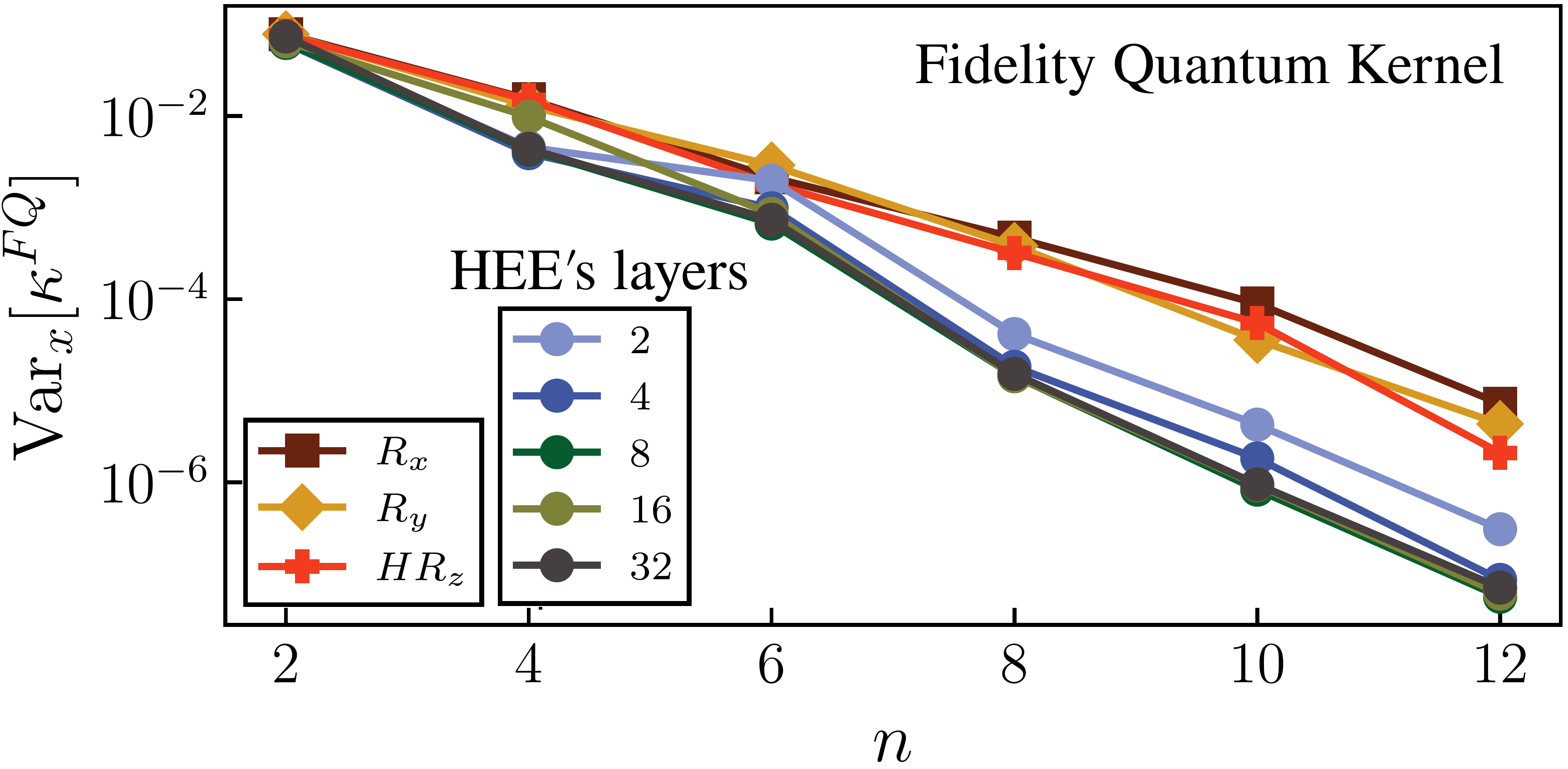}
\caption{\textbf{Global-measurement concentration of quantum kernels.} We plot the variance of the fidelity kernel as a function of $n$ using different data-embeddings, namely a single layer of one qubit rotations ($R_x$, $R_y$, Hadamard followed by $R_z$) and HEE with $L$ layers. The components of $N_s = 40$ input data points are independent and uniformly drawn from $[-\pi,\pi]$). } \label{fig:global}
\end{figure}

Nonetheless, it is important to stress that global measurements do not always lead to exponential concentration. For example, if the encoded quantum states are not ``too far away'' in Hilbert space, such that the fidelity kernel values concentrate no worse than polynomially in $n$, their overlap can be efficiently resolved. 
For example, the MNIST classification task does not satisfy the assumptions of Proposition~\ref{prop:global-measurement}. As a result, as shown in Fig.~\ref{fig:expressivity}, global measurements do not lead to the exponential concentration of the fidelity kernel for low depth ansatze. This demonstrates that the structure of the training data matters and global measurements do not always lead to exponential concentration.

Thus, the key message here is that when using global measurements to evaluate the kernel, the embedding must be chosen particularly carefully such that the fidelity between any pair of encoded quantum states is at least in $\Omega(1/{\poly(n)})$. To achieve this, one can either take the problem's structure into consideration when building the embedding~\cite{liu2021rigorous,glick2021covariant,larocca2022group,meyer2022exploiting} or further reduce the expressivity of problem-agnostic embeddings~\cite{shaydulin2021importance}.

\subsubsection{Noise-induced concentration}\label{sec:noise}

Hardware noise may disrupt and destroy information in the encoded quantum states, providing another source of concentration. 
To analyze the effect of noise, we here further suppose the data-embedding can be decomposed into $L$ layers of data-encoding unitaries
\begin{align} \label{eq:noise-embedding-mt}
    U(\vec{x}) = \prod_{l=1}^L U_l (\vec{x}_l)
\end{align}
where $\vec{x}_l$ is an input associated with $\vec{x}$ that is encoded in the layer $l$. We remark that from our construction, $\vec{x}_l$ can be either the $l^{\rm th}$ component of the input data $\vec{x}$ or a fixed vector $\vec{x}$ that is encoded repeatedly. 
Although the form of the data embedding is slightly less general than the one described in the noiseless sections, it still covers a large class of data embedding ansatze including the Hardware Efficient Embedding (HEE)~\cite{peters2021machine,hubregtsen2021training,wang2021towards,thanasilp2021subtleties,schuld2021supervised,havlivcek2019supervised}, the Quantum Alternative Operator Ansatz (QAOA)~\cite{lloyd2020quantum}, the Hamiltonian Variational Embedding (HVE)~\cite{huang2021power,shaydulin2021importance} and Instantaneous Quantum Polynomial (IQP) embedding~\cite{havlivcek2019supervised,thanasilp2021subtleties,shaydulin2021importance}.

We model the hardware noise as a Pauli noise channel applied before and after every layer of the embedding, similar to the model considered in Ref.~\cite{wang2020noise}. The output state of the noisy embedding circuit is given by
\begin{align}
    \tilde{\rho}(\vec{x}) = \mathcal{N}\circ \UC_L(\vec{x}_L) \circ \mathcal{N} \circ ... \circ \mathcal{N} \circ \UC_1(\vec{x}_1) \circ \mathcal{N} (\rho_0) \label{eq:noise-noise-evolution-mt}
\end{align}
where $\UC_l(\vec{x}_l)$ is the channel corresponding to the unitary $U_l(\vec{x}_l)$ and $\mathcal{N} = \mathcal{N}_1 \otimes ... \otimes \mathcal{N}_n $ is a local Pauli noise channel. Specifically, in this work we consider unital channels such that the effect of $\mathcal{N}_j$ on each local Pauli operator $\sigma \in \{ X,Y,Z \}$ is given by 
\begin{align}
    \mathcal{N}_j (\sigma) = q_\sigma \sigma \;, \label{eq:noise-noise-model}
\end{align}
where $-1 < q_{\sigma} < 1$. We remark that the noiseless regime corresponds to $q_{\sigma} = 1$ for all qubits. The strength of the noise can be quantified by a characteristic noise parameter $\gamma$ which is defined as
\begin{align}
    q = {\rm max}\{ |q_X|, |q_Y|, |q_Z|\} \;. \label{eq:noise-charac}
\end{align}

The following theorem summarizes the impact of noise on quantum kernels.
\begin{theorem}[Noise-induced concentration]\label{thm:noise-kernel}
Consider the $L$-layered data embedding circuit defined in Eq.~\eqref{eq:noise-embedding-mt} with input state $\rho_0$ and the layerwise Pauli noise model defined in Eq.~\eqref{eq:noise-noise-evolution-mt} with characteristic noise parameter $q< 1$. The concentration of quantum kernel values may be bounded as follows
\begin{align}
    \left| \tilde{\kappa}(\vec{x},\vec{x'}) - \mu \right| \leq F(q,L) \; . \label{eq:thm-noise}
\end{align}
\begin{enumerate}
    \item For the fidelity quantum kernel $\tilde{\kappa}(\vec{x},\vec{x'}) =\tilde{\kappa}^{FQ}(\vec{x},\vec{x'})$, we have $ \mu = 1/2^n$, and
    \begin{align}
        F(q,L) = q^{2L+1}  \left\| \rho_0 - \frac{\mathbb{1}}{2^n} \right\|_2\;.
    \end{align}
    \item For the projected quantum kernel $\tilde{\kappa}(\vec{x},\vec{x'}) =\tilde{\kappa}^{PQ}(\vec{x},\vec{x'})$, we have $ \mu = 1$, and
    \begin{align}
        F(q,L) = (8 \ln 2) \gamma n  q^{b(L+1)}S_2\left(\rho_0 \Big\|  \frac{\mathbb{1}}{2^{n}}\right) \;,
    \end{align}
    where $S_2( \cdot \| \cdot)$ denotes the sandwiched 2-R\'enyi relative entropy and $b = 1/(2\ln(2)) \approx 0.72 $.
\end{enumerate}
Additionally, the noisy data-encoded quantum state $\tilde{\rho}(\vec{x})$ concentrates towards the maximally mixed state as
\begin{align} \label{eq:noise-state-con-mt}
     \left\| \tilde{\rho}(\vec{x}) - \frac{\mathbb{1}}{2^n} \right\|_2 \leq q^{L+1} \left\| \rho_0 - \frac{\mathbb{1}}{2^n} \right\|_2 \; .
\end{align}
\end{theorem}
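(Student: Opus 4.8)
\emph{Proof proposal.} The plan is to prove the state-concentration bound \eqref{eq:noise-state-con-mt} first, using the Hilbert--Schmidt (Pauli) decomposition, and then derive both kernel bounds from it. Write $\rho_0 = \mathbb{1}/2^n + \Delta_0$ with $\Delta_0 := \rho_0 - \mathbb{1}/2^n$ traceless, and track the traceless part through the circuit. Two elementary facts do the work: (i) the layerwise noise $\mathcal{N} = \mathcal{N}_1 \otimes \cdots \otimes \mathcal{N}_n$ is unital, so it fixes $\mathbb{1}/2^n$, and it contracts every traceless operator in the $2$-norm by at least a factor $q$ --- because it rescales a Pauli string of weight $w$ by $\prod_{j \in \mathrm{supp}} q_{\sigma_j}$, whose modulus is at most $q^{w} \le q$ when $w \ge 1$; (ii) each unitary layer $\mathcal{U}_l$ fixes $\mathbb{1}/2^n$ and preserves tracelessness and the $2$-norm. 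Composing the $L$ unitary layers with the $L+1$ noise layers then gives $\| \tilde\rho(\vec{x}) - \mathbb{1}/2^n \|_2 \le q^{L+1} \| \Delta_0 \|_2$, which is \eqref{eq:noise-state-con-mt}.

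\emph{Fidelity kernel.} Writing $\tilde\rho(\vec{x}) = \mathbb{1}/2^n + \Delta$ and $\tilde\rho(\vec{x'}) = \mathbb{1}/2^n + \Delta'$ and expanding $\Tr[\tilde\rho(\vec{x})\tilde\rho(\vec{x'})]$, the cross terms with $\mathbb{1}$ vanish because $\Tr\Delta = \Tr\Delta' = 0$, leaving $\tilde\kappa^{FQ}(\vec{x},\vec{x'}) - 1/2^n = \Tr[\Delta\Delta']$. Cauchy--Schwarz for the Hilbert--Schmidt inner product together with the state-concentration bound gives $|\tilde\kappa^{FQ}(\vec{x},\vec{x'}) - 1/2^n| \le \|\Delta\|_2 \|\Delta'\|_2 \le q^{2(L+1)} \|\rho_0 - \mathbb{1}/2^n\|_2^2$, and since $\|\rho_0 - \mathbb{1}/2^n\|_2 = \sqrt{\Tr[\rho_0^2] - 2^{-n}} \le 1$ and $q < 1$ this is at most $q^{2L+1}\|\rho_0 - \mathbb{1}/2^n\|_2$, the stated $F(q,L)$.

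\emph{Projected kernel.} Here I would start from $1 - \tilde\kappa^{PQ}(\vec{x},\vec{x'}) = 1 - e^{-\gamma \Sigma} \le \gamma \Sigma$ with $\Sigma = \sum_k \| \tilde\rho_k(\vec{x}) - \tilde\rho_k(\vec{x'}) \|_2^2$, and then route through Theorem~\ref{thm:entanglement-kernel} applied to the noisy states: $|1 - \tilde\kappa^{PQ}| \le (2\ln 2)\gamma \Gamma_s(\vec{x},\vec{x'})$, where $(\sqrt{a}+\sqrt{b})^2 \le 2(a+b)$ turns $\Gamma_s$ into $2\sum_k [ S(\tilde\rho_k(\vec{x})\| \mathbb{1}_k/2) + S(\tilde\rho_k(\vec{x'})\| \mathbb{1}_k/2) ]$. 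Since the partial trace onto qubit $k$ is a channel sending $\mathbb{1}/2^n$ to $\mathbb{1}_k/2$, data processing gives $S(\tilde\rho_k(\vec{x})\|\mathbb{1}_k/2) \le S(\tilde\rho(\vec{x})\| \mathbb{1}/2^n)$ for every $k$, contributing the overall factor $n$. The remaining task is to show $S(\tilde\rho(\vec{x})\|\mathbb{1}/2^n) \le q^{b(L+1)} S(\rho_0\|\mathbb{1}/2^n)$ with $b = 1/(2\ln 2)$; combining this with $S(\rho_0\|\mathbb{1}/2^n) \le S_2(\rho_0\|\mathbb{1}/2^n)$ (monotonicity of sandwiched R\'enyi divergences in the order) and collecting the constants ($2\ln 2 \times 2 \times 2 = 8\ln 2$) gives $F(q,L) = (8\ln 2)\gamma n\, q^{b(L+1)} S_2(\rho_0\|\mathbb{1}/2^n)$.

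The main obstacle is exactly this last inequality --- a \emph{strong} data-processing inequality for the relative entropy under the layerwise Pauli noise, with contraction coefficient $q^{1/(2\ln 2)}$ per noise layer. Because conjugation by the $\mathcal{U}_l$ leaves both $S(\cdot\|\mathbb{1}/2^n)$ and its fixed point invariant, it suffices to establish the single-layer estimate $S(\mathcal{N}(\rho)\|\mathbb{1}/2^n) \le q^{1/(2\ln 2)} S(\rho\|\mathbb{1}/2^n)$ for all $n$-qubit $\rho$ and iterate over the $L+1$ layers. For a single qubit this reduces to comparing $\phi(r) := \tfrac{1+r}{2}\ln(1+r) + \tfrac{1-r}{2}\ln(1-r)$ (the relative entropy of a Bloch-radius-$r$ state to the maximally mixed one) at radius $r$ versus at radius at most $qr$: the worst ratio occurs as $r \to 1$ (a pure input, where $\phi \to \ln 2$), and the exponent $1/(2\ln 2)$ is precisely what makes $\phi(q) \le (\ln 2)\, q^{1/(2\ln 2)}$ hold on $[0,1]$, which is where the constant $b$ enters the theorem. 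Lifting this single-qubit bound to the $n$-qubit channel $\mathcal{N} = \bigotimes_j \mathcal{N}_j$ --- i.e.\ tensorizing the contraction coefficient, $\eta_{\mathrm{RE}}(\mathcal{N}) = \max_j \eta_{\mathrm{RE}}(\mathcal{N}_j) \le q^{1/(2\ln 2)}$ --- is the other delicate point; alternatively one argues directly on the $n$-qubit relative entropy using the product structure of $\mathcal{N}$ and of its fixed point. Everything else is bookkeeping.
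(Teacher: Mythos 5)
Your treatment of the state-concentration bound and of the fidelity kernel is sound. For the fidelity kernel you take a genuinely different (and slightly tighter) route than the paper: you expand both noisy states as $\mathbb{1}/2^n+\Delta$ and apply Cauchy--Schwarz to get $|\Tr[\Delta\Delta']|\le q^{2(L+1)}\|\rho_0-\mathbb{1}/2^n\|_2^2$, then relax to the stated $q^{2L+1}\|\rho_0-\mathbb{1}/2^n\|_2$ using $\|\rho_0-\mathbb{1}/2^n\|_2\le 1$ and $q<1$. The paper instead composes the two embeddings into a single $2L$-layer channel $\mathcal{W}^{\dagger}_{\vec{x}}\circ\mathcal{W}_{\vec{x'}}$ with $2L+1$ interleaved noise applications and applies H\"older once; both arguments are valid, and yours yields a strictly stronger intermediate bound.

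The projected-kernel part has a genuine gap. Your chain is: Pinsker-type bound $\to$ data processing down to $S(\tilde\rho(\vec{x})\|\mathbb{1}/2^n)$ $\to$ a strong data-processing inequality $S(\mathcal{N}(\rho)\|\mathbb{1}/2^n)\le q^{1/(2\ln 2)}\,S(\rho\|\mathbb{1}/2^n)$ for the \emph{von Neumann} relative entropy $\to$ convert to $S_2$ only at the end, on $\rho_0$. That SDPI for $S$ is exactly the step you cannot currently supply. The single-qubit computation you sketch is incomplete (you assert without proof that $r\mapsto\phi(qr)/\phi(r)$ is maximized at $r=1$), and, more importantly, it does not lift to $n$ qubits: contraction coefficients of the relative entropy with respect to a fixed reference state do \emph{not} tensorize via $\eta_{\mathrm{RE}}(\bigotimes_j\NC_j)=\max_j\eta_{\mathrm{RE}}(\NC_j)$ in general -- this is a known, delicate problem, not bookkeeping. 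The paper sidesteps it entirely by reordering your last two steps: it applies $S(\tilde\rho\|\mathbb{1}/2^n)\le S_2(\tilde\rho\|\mathbb{1}/2^n)$ \emph{first} and then contracts the sandwiched 2-R\'enyi relative entropy layer by layer (Supplemental Lemma~\ref{lemma:noise-pauli-entropy}, via Corollary~5.6 of Ref.~\cite{hirche2020contraction}), whose hypothesis is a $\|\cdot\|_{2\to2}$ condition that \emph{does} multiply under tensor products; that is also where $b=1/(2\ln 2)$ actually comes from, namely $\alpha(p,\mathbb{1}/2)=p^{1/(2\ln 2)}$, not from any property of the binary entropy function. Your proof is repaired by this one-line reordering; as written, it rests on an unproven and substantially harder lemma. (Your routing through Theorem~\ref{thm:entanglement-kernel} and the constant-counting $2\ln 2\times 2\times 2=8\ln 2$ are otherwise consistent with the paper's direct expansion.)
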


Theorem~\ref{thm:noise-kernel} shows that the concentration of quantum kernels due to noise is exponential in the number of layers $L$ for both the fidelity and projected quantum kernels. This is a consequence of the encoded state concentrating towards the maximally mixed state, as captured in Eq.~\eqref{eq:noise-state-con-mt}. 
In addition, we note that the noise-induced concentration bounds here are deterministic due to the noise acting independently of the input data. 

If quantum kernel-based methods are to provide any quantum advantage, the data embedding part must be hard to classically simulate.
For example, when using  embeddings with local connectivity, we are largely interested in the regime of moderately deep circuits where $L$ scales at least linearly in $n$~\cite{havlivcek2019supervised}. 
However, it is precisely this regime in which our bounds suggest kernels will exponentially concentrate due to an effect of noise. In particular, when the number of layers $L$ scales polynomially with the number of qubits $n$, $F(q,L)$ decays exponentially in the number of qubits.
We stress that the exponential decay nature of the concentration bounds persists for all $q<1$ and different values of noise characteristics only lead to different exponential decay rates. 

\medskip

The impact of noise on the concentration of kernels is studied in Fig~\ref{fig:noise} where we plot the average of $\left| \kappa^{FQ}(\vec{x},\vec{x'}) - 1/2^n\right|$ and $\left| \kappa^{PQ}(\vec{x},\vec{x'}) - 1\right|$ for the MNIST dataset as a function of the depth $L$ of the HEE embedding and the noise characteristic $q$. 
We observe exponential concentration with $L$, with the concentration stronger for higher noise levels $q$, in agreement with Theorem~\ref{thm:noise-kernel}.
We note that in our numerical simulations, noise only acts before and after single-qubit gates and we assume noiseless implementations of entangling gates. Therefore, in real experiments, where gate fidelity of entangling gates is generally worse than single-qubit gates, we expect the noise to dominate at a faster pace.   

\begin{figure}[t]
\includegraphics[width=.99\columnwidth]{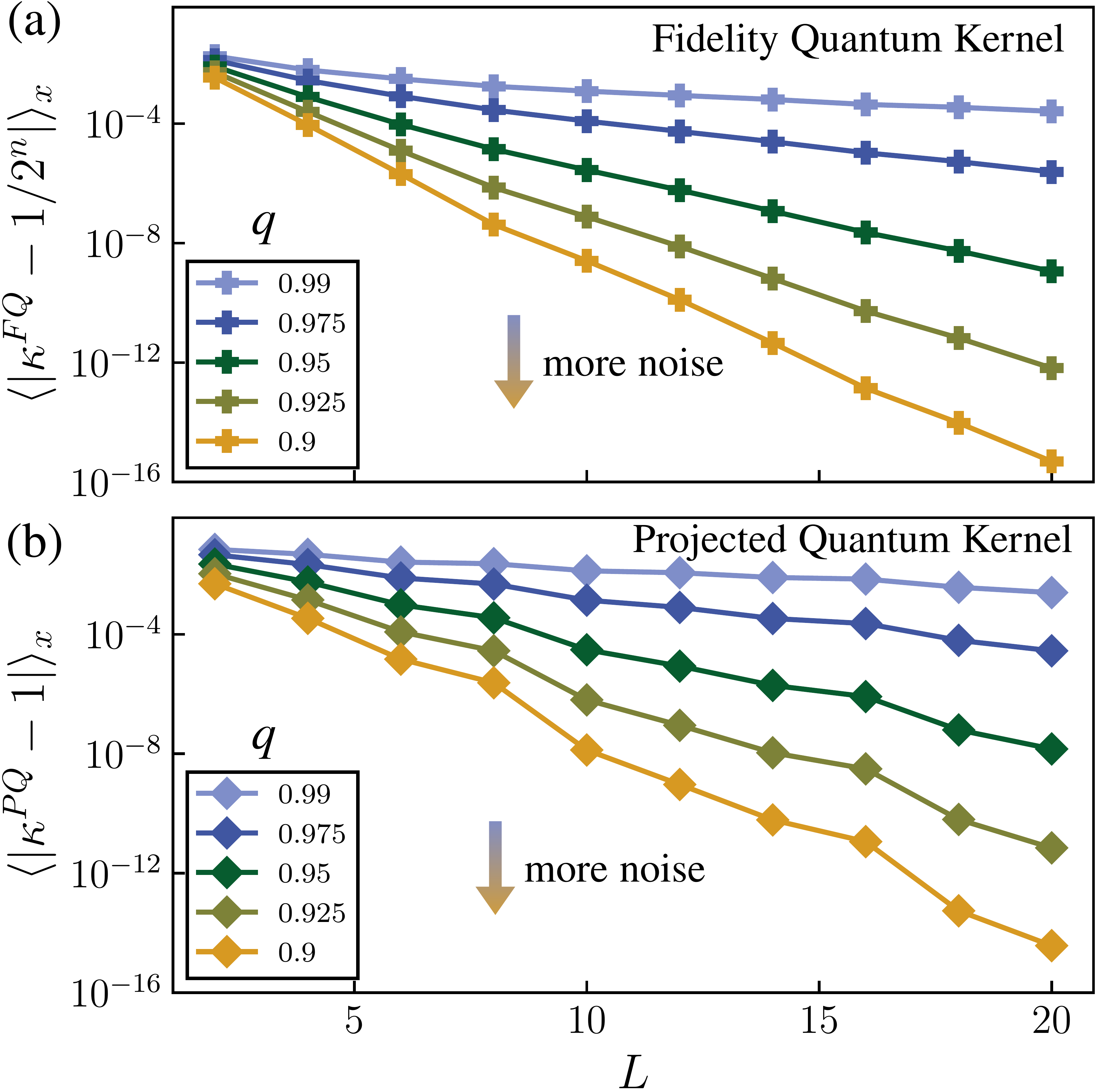}
\caption{\textbf{Effect of noise.} We plot the average of the difference between the quantum kernels  and their respective fixed point $\mu$ over different input data points and for different number of layers $L$ and noise parameter $q$. We consider the fidelity quantum kernel in (a) with $\mu = 1/2^n$ and the projected quantum kernel in (b) with $\mu=1$.
We use the MNIST dataset with $N_s = 40$ and $n = 8$.}\label{fig:noise}
\end{figure}

\medskip

In Appendix~\ref{appendix:em-fail}, we argue that, similar to noise-induced BPs in VQAs~\cite{wang2021can, takagi2021fundamental, quek2022exponentially}, the exponential concentration cannot be resolved with current common error mitigation techniques including Zero-Noise Extrapolation~\cite{li2017efficient,temme2017error,endo2018practical,kandala2018error}, Clifford Data Regression~\cite{czarnik2020error}, Virtual Distillation~\cite{huggins2020virtual,koczor2020exponential} and Probabilistic Error Cancellation~\cite{temme2017error,endo2018practical}. Hence, noise-induced concentration results poses a significant barrier to the successful implementation of quantum kernel methods on near term hardware.

\subsection{Training parameterized quantum kernels}\label{sec:thm-kta}

Given the problems associated with expressivity-induced concentration, it is generally advisable to avoid problem-agnostic embeddings and instead try and take advantage of the data structure of the problem. However, in many cases, constructing such problem-inspired embeddings is highly non-trivial. An alternative is to allow the data embedding itself to be parametrized and then train the embedding. Such strategies have been shown to improve generalization of the kernel-based quantum model~\cite{lloyd2020quantum,hubregtsen2021training}. \sth{We note that this is an additional process to train and select an appropriate embedding before implementing the standard quantum kernel algorithm (with this selected embedding).}

Here we consider a parametrized data embedding  $U(\vec{x},\vec{\theta})$, where $\vec{\theta}$ is a vector of trainable parameters (typically corresponding to single qubit rotation angles).  
For a given input data vector $\vec{x}$, an ensemble of data embedding unitaries can be generated by varying the parameters $\vec{\theta}$. This in turn generates a family of parametrized quantum kernels $\kappa_{\vec{\theta}}(\vec{x},\vec{x'})$.
Let $\vec{\theta} = \vec{\theta^*}$ be the optimal parameters found by training the embedding. The optimally embedded kernel now corresponds to $\kappa(\vec{x},\vec{x'}) = \kappa_{\vec{\theta^*}}(\vec{x},\vec{x'})$ and the remaining process to obtain the optimal model is the same as that described in Section~\ref{sec:framework-kernels-qml}.

The standard approach to obtain the optimal kernel is to train the parameters $\vec{\theta^*}$ via standard optimization techniques~\cite{lloyd2020quantum,hubregtsen2021training,cerezo2020variationalreview}, which in turn requires defining a loss function one needs to minimize. For instance, in a binary classification task where the true labels are either $+1$ or $-1$, the \textit{ideal} kernel is $+1$ if the input data are in the same class and is  $-1$ otherwise. In practice, however, one can only approximate the \textit{ideal} kernel as
\begin{align}
    \kappa_{\rm ideal}(\vec{x}_i,\vec{x}_j) = y_i y_j \;,
\end{align}
using the given training data $\mathcal{S}$. 
The kernel target alignment measures the similarity between the parameterized kernel and the approximated ideal kernel~\cite{hubregtsen2021training,cristianini2006on} 
\begin{align}\label{eq:kernel-ta-mt}
    \text{TA}(\vec{\theta}) = \frac{\sum_{i,j}y_i y_j \kappa_{\vec{\theta}}(\vec{x}_i,\vec{x}_j)}{\sqrt{\left( \sum_{i,j} (\kappa_{\vec{\theta}}(\vec{x}_i,\vec{x}_j))^2\right)\left( \sum_{i,j} (y_i y_j)^2\right)}} \;.
\end{align}
As minimizing the target alignment corresponds to aligning the parametrized kernel to the \textit{ideal} kernel, we can use $\text{TA}(\vec{\theta})$ as a loss function. \sth{Crucially, unlike the training of the model itself, the associated loss function for training the embedding is generally non-convex.}

Training the parameterized data-embedding $U(\vec{x},\vec{\theta})$ has been recently proposed as an approach to improve generalization   quantum kernel-based methods~\cite{hubregtsen2021training,lloyd2020quantum}. In particular,  Ref.~\cite{hubregtsen2021training} showed that optimizing the kernel target alignment ${\rm TA}(\vec{\theta})$ of Eq.~\eqref{eq:kernel-ta-mt} leads to data-embedding schemes with better performance than unstructured embeddings for various MNIST-based binary classification tasks.
However, this assumes that one can successfully train the target alignment. 

Here we study the trainability of ${\rm TA}(\vec{\theta})$. 
Namely, we discuss what features of the parameterized embedding $U(\vec{x},\vec{\theta})$ can lead to exponential concentration and therefore to exponentially flat parameter landscapes (i.e., a BP). 
First, we show that the variance of ${\rm TA}(\vec{\theta})$ with respect to the variational parameters $\vec{\theta}$ is upper bounded by the variances of the parameterized quantum kernels $\kappa_{\vec{\theta}}(\vec{x},\vec{x'})$

\begin{proposition}[Concentration of kernel target alignment]\label{prop:target-kernel}
Consider an arbitrary parameterized kernel $\kappa_{\vec{\theta}}(\vec{x},\vec{x'})$ and a training dataset $\{ \vec{x}_i, y_i \}_{i=1}^{N_s}$ for binary classification with $y_i = \pm 1$. The probability that the kernel target alignment ${\rm TA}(\vec{\theta})$ (defined in Eq.~\eqref{eq:kernel-ta-mt}) deviates from its mean value is approximately bounded as 
\begin{align}
    {\rm Pr}_{\vec{\theta}}[|{\rm TA}(\vec{\theta}) - \mathbb{E}_{\vec{\theta}}[{\rm TA}(\vec{\theta})]| \geq \delta]  \leq \frac{ M \sum_{i,j}  \Var_{\vec{\theta}}[\kappa_{\vec{\theta}}(\vec{x}_i,\vec{x}_j)]}{\delta^2} \;,
\end{align}
with $M =\frac{8+N_s^3\left(9(N_s-1)^2+16\right)}{4N_s}$. 
\end{proposition}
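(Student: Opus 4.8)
\emph{Proof strategy.} The plan is to invoke Chebyshev's inequality, which reduces the claim to the variance estimate $\Var_{\vec{\theta}}[\text{TA}(\vec{\theta})] \leq M\sum_{i,j}\Var_{\vec{\theta}}[\kappa_{\vec{\theta}}(\vec{x}_i,\vec{x}_j)]$. Abbreviate $\kappa_{ij}:=\kappa_{\vec{\theta}}(\vec{x}_i,\vec{x}_j)$ (the $\vec{\theta}$-dependence kept implicit) and write $A:=\sum_{i,j}y_iy_j\kappa_{ij}$, $B:=\sum_{i,j}\kappa_{ij}^2$. Since $y_i=\pm 1$ one has $\sum_{i,j}(y_iy_j)^2=N_s^2$, so the target alignment collapses to the scalar ratio $\text{TA}(\vec{\theta})=A/(N_s\sqrt{B})$. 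I would use throughout two elementary properties of quantum kernels: $0\leq\kappa_{ij}\leq 1$ (so $|A|\leq N_s^2$ and $B\leq N_s^2$), and the fact that the diagonal entries $\kappa_{ii}=1$ are deterministic, which gives $B\geq N_s$ --- so $\sqrt{B}$ stays bounded away from $0$ --- and ensures that only the $N_s(N_s-1)$ off-diagonal entries contribute to any variance.

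The key step is to linearize the ratio $\text{TA}$ about the mean kernel matrix $\bar K$ with entries $\bar\kappa_{ij}:=\mathbb{E}_{\vec{\theta}}[\kappa_{ij}]$, i.e.\ a first-order Taylor (``delta method'') expansion $\text{TA}(\vec{\theta})\approx \text{TA}(\bar K)+\sum_{i,j}c_{ij}(\kappa_{ij}-\bar\kappa_{ij})$ with $c_{ij}:=\partial\text{TA}/\partial\kappa_{ij}\big|_{\bar K}=\frac{1}{N_s}\bigl(\bar B^{-1/2}y_iy_j-\bar A\,\bar\kappa_{ij}\,\bar B^{-3/2}\bigr)$, where $\bar A,\bar B$ denote $A,B$ evaluated at $\bar K$. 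This is the step that makes the bound ``approximate.'' It gives $\Var_{\vec{\theta}}[\text{TA}]\approx\sum_{i,j,k,l}c_{ij}c_{kl}\Cov[\kappa_{ij},\kappa_{kl}]$, and applying $|\Cov[\kappa_{ij},\kappa_{kl}]|\leq\sqrt{\Var[\kappa_{ij}]\Var[\kappa_{kl}]}$ followed by Cauchy--Schwarz --- noting the diagonal terms have zero variance --- yields $\Var_{\vec{\theta}}[\text{TA}]\lesssim\bigl(\sum_{i\neq j}c_{ij}^2\bigr)\sum_{i,j}\Var_{\vec{\theta}}[\kappa_{ij}]$. Thus $M$ should come out as an upper bound on $\sum_{i\neq j}c_{ij}^2$.

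It then remains to bound $\sum_{i\neq j}c_{ij}^2$. Using $(u-v)^2\leq 2u^2+2v^2$ gives $c_{ij}^2\leq \frac{2}{N_s^2}\bigl(\bar B^{-1}+\bar A^2\bar\kappa_{ij}^2\bar B^{-3}\bigr)$, and summing over $i\neq j$ while using $\sum_{i\neq j}\bar\kappa_{ij}^2\leq \min\{\bar B,\,N_s(N_s-1)\}$, $|\bar A|\leq N_s^2$ (or the sharper $|\bar A|\leq N_s+N_s(N_s-1)$), and $\bar B\geq N_s$ collects everything into an explicit $M=M(N_s)$ polynomial in $N_s$; the precise numerical constants ($8$, $9$, $16$) and the $(N_s-1)$ factors are purely the bookkeeping of which of these crude estimates one applies and in what order. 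Feeding the resulting variance bound into Chebyshev's inequality gives the stated tail bound.

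I expect the main obstacle to be the ratio/square-root structure of $\text{TA}$: because it is not a polynomial in the kernel entries there is no exact variance decomposition, so one must either accept the first-order estimate (as the statement's ``approximately'' signals) or carry a controlled second-order remainder, which in turn needs a quantitative lower bound on $B$ --- exactly where the deterministic unit diagonal enters. The only other point needing care is keeping the deterministic diagonal out of every variance, which is what turns a naive count of $N_s^2$ pairs into $N_s(N_s-1)$. I note in passing that a fully rigorous and tighter version is available by recognizing $\text{TA}(\vec{\theta})$ as the Frobenius-inner-product cosine similarity between the Gram matrix $K(\vec{\theta})$ and the ideal label matrix $\vec{y}\vec{y}^{\top}$: its gradient has Frobenius norm at most $1/\lVert K(\vec{\theta})\rVert_F\leq 1/\sqrt{N_s}$ on the set of unit-diagonal matrices, so $\text{TA}$ is $1/\sqrt{N_s}$-Lipschitz there, and combining this with the identity $\Var[f]=\tfrac{1}{2}\mathbb{E}_{\vec{\theta},\vec{\theta}'}[(f(\vec{\theta})-f(\vec{\theta}'))^2]$ for i.i.d.\ copies $\vec{\theta},\vec{\theta}'$ gives $\Var_{\vec{\theta}}[\text{TA}]\leq \tfrac{1}{N_s}\sum_{i,j}\Var_{\vec{\theta}}[\kappa_{\vec{\theta}}(\vec{x}_i,\vec{x}_j)]$.
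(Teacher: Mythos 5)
Your proposal is correct in substance but takes a genuinely different route from the paper, so let me compare. The paper never linearizes ${\rm TA}$ jointly in all kernel entries: it first uses $\Var[\sum_i R_i]\leq N_s\sum_i\Var[R_i]$ to reduce to $\sum_{i,j}\Var_{\vec{\theta}}[\kappa_{ij}/\sqrt{D_A}]$ with $D_A=\sum_{i,j}\kappa_{ij}^2$, then a variance-of-product bound $\Var[XY]\leq 2\Var[X]|Y^2|_{\max}+2(\mathbb{E}[X])^2\Var[Y]$, and finally a \emph{scalar} Taylor expansion of $1/\sqrt{D_A}$ about $\mathbb{E}[D_A]$ whose Lagrange remainder is explicitly controlled using $N_s\leq D_A\leq N_s^2$ (exactly the unit-diagonal lower bound you identified). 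The constants $8,9,16$ and the $(N_s-1)$ factors are not generic bookkeeping: they come specifically from the $9/32$ in that remainder bound and the repeated factors of $2$ and $N_s$ from the two variance lemmas, so your route would not reproduce the stated $M$ — your $\sum_{i\neq j}c_{ij}^2$ works out to $O(1)$ rather than the paper's $M\in\Theta(N_s^4)$. The other difference cuts the opposite way: your main route's first-order delta method has no remainder control at all, whereas the paper's Taylor step is actually rigorous (the word ``approximately'' in the statement notwithstanding); so as a proof your primary argument is strictly weaker at that one step, even though you flag it honestly. That said, your closing observation — that ${\rm TA}$ is the Frobenius cosine similarity between $K$ and $\vec{y}\vec{y}^{\top}$, hence $1/\sqrt{N_s}$-Lipschitz on the convex set of unit-diagonal Gram matrices (the gradient of $u\mapsto\langle u,v\rangle/(\|u\|\|v\|)$ has norm $\sqrt{1-\cos^2}/\|u\|\leq 1/\|u\|$), combined with $\Var[f]=\tfrac{1}{2}\mathbb{E}_{\vec{\theta},\vec{\theta}'}[(f(\vec{\theta})-f(\vec{\theta}'))^2]$ — is both fully rigorous and strictly stronger than the paper's bound, giving $\Var_{\vec{\theta}}[{\rm TA}]\leq\frac{1}{N_s}\sum_{i,j}\Var_{\vec{\theta}}[\kappa_{ij}]$. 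That two-line argument is the better proof of the proposition's actual content; I would promote it from a passing remark to the main argument.
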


If $ \Var_{\vec{\theta}}[\kappa_{\vec{\theta}}(\vec{x}_i,\vec{x}_j)]$ vanishes exponentially in the number of qubits for all pairs in the training data, the probability that $\text{TA}(\vec{\theta})$ deviates by an amount $\delta$ from its mean vanishes exponentially with the size of the problem.  
In this case, the parameter landscape of $\text{TA}(\vec{\theta})$ becomes exponentially flat and hence $\text{TA}(\vec{\theta})$ is untrainable with a polynomial number of measurement shots. 

In Appendix~\ref{appendix:features-ta}, we analyze features leading to exponentially vanishing variances $ \Var_{\vec{\theta}} [\kappa_{\vec{\theta}}(\vec{x}_i,\vec{x}_j)]$ and find that the same ones that lead to BPs for QNNs lead to BPs here. Namely, features that are deemed detrimental for trainability in QNNs such as deep unstructured circuits~\cite{mcclean2018barren,holmes2021connecting} and global measurements~\cite{cerezo2020cost} also lead to BPs here. Thus these features should be avoided when designing parameterized data embeddings for quantum kernels. 

We numerically demonstrate the effect of global measurements on training an embedding in Fig~\ref{fig:ta-kernel}. The data embedding consists of a single layer of parameterized single-qubit rotations around $y$-axis followed by a single layer of HEE. We study the variance of the kernel target alignment ${\rm TA}(\vec{\theta})$ (which determines the flatness of the training landscape~\cite{mcclean2018barren,cerezo2020cost}) for 500 random initialization of the parameters $\vec{\theta}$. As expected, since the parametrized block  acts globally on all qubits, ${\rm TA}(\vec{\theta})$ exponentially concentrates when one averages over the trainable parameters $\vec{\theta}$. 

\begin{figure}[t]
\includegraphics[width=.99\columnwidth]{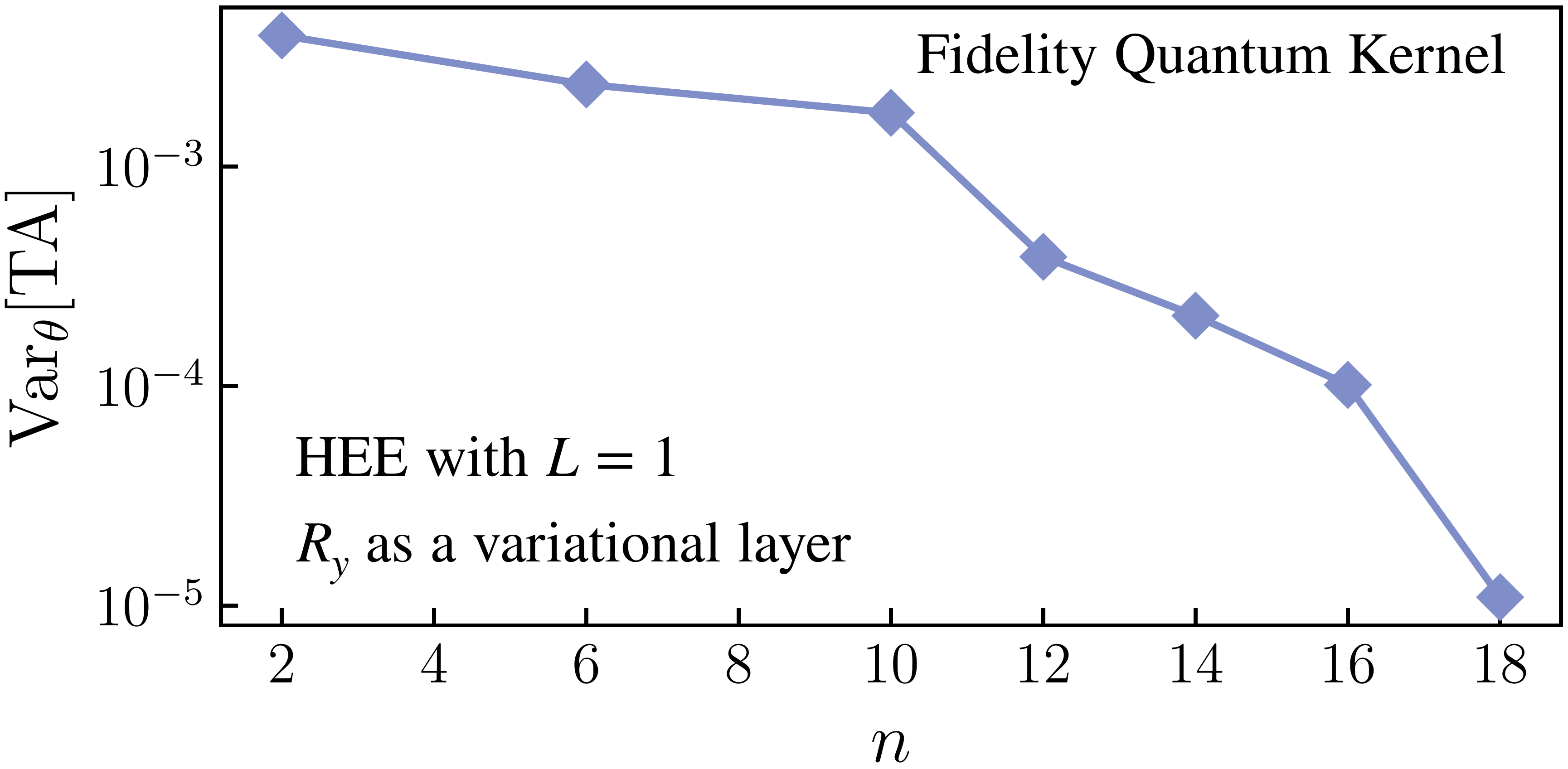}
\caption{\textbf{Kernel target alignment.} The variance of ${\rm TA}(\vec{\theta})$ with respect to variational parameters is plotted as a function of $n$. Here we use the hypercube dataset with $N_s =10$.}
\label{fig:ta-kernel}
\end{figure}

\section{Discussion}
Quantum kernels stand out as a promising candidate for achieving a practical quantum advantage in data analysis. 
This is in part due to the common belief that the optimal quantum kernel-based model can always be obtained~\cite{schuld2022is,schuld2021supervised,gentinetta2022complexity,hofmann2008kernel} due to the convexity of the problem. Although this is true, provided that the kernel values can be efficiently obtained to a sufficiently high precision, here we show that there exist scenarios where quantum kernels are exponentially concentrated towards some fixed value and so exponential resources are required to accurately estimate the kernel values. With only a polynomial number of shots, the predictions of the trained model become insensitive to input data and the model performs trivially on unseen data, that is, generalizes poorly. Crucially, in this context generalization cannot be improved by training on more input data points but rather by increasing the number of measurement shots (or using a more appropriate embedding).
It is worth stressing that as we assume very little on the form of the data embedding $U(\vec{x})$, our analytical bounds hold for a wide range of embedding architectures and schemes, including both problem-agnostic and problem-inspired embeddings.  

Our results highlight four aspects to carefully consider when choosing a data embedding for quantum kernels.
While much of the literature currently focuses on using problem-agnostic quantum embeddings for quantum kernels~\cite{peters2021machine,hubregtsen2021training,wang2021towards,schuld2021supervised}; these are typically highly expressive and as such should generally be avoided. 
Entanglement can also be detrimental when combined with local quantum kernels such as the projected quantum kernels, and suggests that one should be mindful about using embeddings leading to states satisfying volume-laws of entanglement. 
Our results on global measurements demonstrate that the fidelity kernel can exponentially concentrate even with a simple embedding that has low expressivity and no entanglement. 
Consequently, the fidelity kernel should only be used for datasets where the data-embedded states are  `not too distant' in the Hilbert space. 
Finally, our study of noise suggests that polynomial-depth data embeddings in noisy hardware suffer from exponential concentration, thus presenting a serious barrier to achieve a meaningful quantum advantage in the near-term. 

In addition, we show that training parametrized quantum kernels using kernel target alignment suffers from an exponentially flat training landscape under similar conditions to those leading to barren plateaus in QNNs. That is, when constructing the parametrized part of the data embeddings, one should avoid features that induce BPs as QNNs such as global measurements and deep unstructured circuits.

Our work provides a systematic study of the barriers to the successful scaling up of quantum kernel methods posed by exponential concentration. Prior work on BPs motivated the community to search for ways to avoid or mitigate BPs such as employing correlated parameters~\cite{volkoff2021large} using tools from quantum optimal control~\cite{larocca2021diagnosing,larocca2021theory}, or developing the field of geometrical quantum machine learning~\cite{larocca2022group,meyer2022exploiting,skolik2022equivariant,sauvage2022building}.
In a similar manner, we stress our results should not be understood as condemning quantum kernel methods, but rather a prompt to develop exponential-concentration-free embeddings for quantum kernels. Crucially, incorporating quantum aspects to machine learning does not always lead to better performance. Indeed, often it will only worsen the performance of the learning models. In particular, if one remains restricted to mimicking the classical techniques without carefully taking into account quantum phenomena, it is unlikely that one will achieve a quantum advantage. Hence distinctly quantum approaches, using specialized quantum structures/symmetries, may prove to be the way forward~\cite{havlivcek2019supervised, glick2021covariant}.

\section{Data Availability}
Data generated and analyzed during the current study are available from the corresponding author upon reasonable request.

\section{Code Availability}
Code used for the current study is available from the corresponding author upon reasonable request.

\bigskip

\bibliography{quantum.bib}

\section{Acknowledgements}
We thank the reviewers at Nature Communications and QIP for their valuable feedback and Jonas M. K\"{u}bler for his comments on Appendix A. ST is supported by the National Research Foundation, Prime Minister's Office, Singapore and the Ministry of Education, Singapore under the Research Centres of Excellence programme and subsequent support from the Sandoz Family Foundation-Monique de Meuron
program for Academic Promotion. SW is supported by the Samsung GRP grant. 
M.C. was initially supported by ASC Beyond Moore’s Law project at Los Alamos National Laboratory (LANL). This work was also supported by the Quantum Science Center (QSC), a National Quantum Information Science Research Center of the U.S. Department of Energy (DOE). ZH acknowledges initial support from the LANL Mark Kac Fellowship and subsequent support from the Sandoz Family Foundation-Monique de Meuron program for Academic Promotion.


\section{Competing interests}
The authors declare no competing interests.

\clearpage
\newpage
\onecolumngrid
\setcounter{theorem}{0}
\setcounter{proposition}{2}
\setcounter{corollary}{0}
\setcounter{definition}{0}

\appendix
\vspace{0.5in}
\begin{center}
	{\Large \bf Appendix} 
\end{center}
\section{Related work}\label{ap:priorwork}

\subsection{Exponential concentration in fidelity quantum kernels}

The observation that using fidelity quantum kernel could lead to poor generalization was first made in Ref.~\cite{huang2021power}. In particular, this paper provided a rigorous generalization bound that could be used to compare the predictive power of quantum and classical kernel-based models. Based on this bound, the authors argued that for high dimensional problems the embedded states are likely to be `far from each other' and so have either comparable or inferior performance compared with its classical counterparts. This work also provided numerical evidence of this observation for up to 30 qubits. Similar numerical evidence was provided more recently in Ref.~\cite{slattery2023numerical}. 

Subsequently, Ref.~\cite{kubler2021inductive} argued that not only are quantum models unable to outperform classical models but, more generally, embeddings that lack an inductive bias lead to models that generalize poorly. To demonstrate this, the authors analyzed spectral properties of the quantum fidelity kernel integral operator. Specifically, they lower bounded a model's generalization error in terms of the largest eigenvalue of the kernel integral operator (Theorem 3 in Appendix~D). This result is then used to show that fidelity kernels with an unstructured product embedding will lead to large risks and so poor generalization (Theorem~1). 
Since this result holds for a product embedding, in our language this could be viewed as globality induced concentration. While not shown explicitly in  Ref.~\cite{kubler2021inductive} it is plausible that Theorem~3 could also be used as an alternative approach to proving that highly expressive or entangling embeddings lead to poor generalization.

We note that while both these important works highlight problems with the fidelity kernel related to exponential concentration, the exact causes of exponential concentration were not analysed in details. Furthermore, in both cases, the detrimental effect of exponential concentration was studied assuming direct access to quantum states without shot noise. 

\subsection{Exponential concentration in projected quantum kernels}

As a potential solution to the problems with the fidelity quantum kernel, the authors in Ref.~\cite{huang2021power} introduced the projected quantum kernel where the data encoded quantum states are projected back onto local subspaces with the similarity between quantum states being collectively compared at the local level as defined in Eq.~\eqref{eq:projected-gaussian-kernel-mt}. The projected quantum kernel can be challenging to evaluate (having gone through the exponentially large Hilbert space before projection) and yet remain reasonably expressive. The authors showed that for a synthesized dataset a model based on the projected quantum kernel can outperform a wide range of classical machine learning models and numerically verified this up to 30 qubits. 

Nevertheless, there remain some open questions including how exactly expressivity of the data embedding can affect the performance of the projected quantum kernel and whether too much expressivity can lead to exponential concentration or not. In Ref.~\cite{kubler2021inductive} the authors considered a simplified version of the original projected quantum kernel defined as the overlap between two reduced data encoded states onto the first qubit. The authors then argued that the projected quantum kernel with an embedding consisting of a layer of data-dependent single qubit rotations followed by a fixed data-independent Haar random unitary could have an inductive bias that is hard to simulate classically. However, the authors then prove that the embedded quantum states exponentially concentrate towards the maximally mixed state. Thus they suggest (but do not explicitly prove) that exponentially many measurement shots will be needed for such embeddings (similarly to the barren plateaus phenomena in QNNs). Crucially, exponential concentration here does not directly come from the randomness in the input data distribution but rather from a data-independent part of the embedding. In contrast to this, our work concerns the expressivity induced by the interplay between the input data distribution and the embedding. Additionally, we consider an arbitrary data-dependent embedding and the original form of the projected quantum kernel (but our results can be easily extended to other forms of the projected kernel including the one in Ref.~\cite{kubler2021inductive}). More generally, we identify noise and entanglement as additional sources of exponential concentration for projected kernels and analyse the consequences of exponential concentration in the presence of shot noise for model predictions.

\subsection{Attempts to mitigate the exponential concentration}

One proposal to mitigate exponential concentration for the fidelity quantum kernel is to re-scale the input data with some hyperparameter~\cite{shaydulin2021importance, canatar2022bandwidth}. Consequently, the data encoded quantum states become clustered closer together and hence result in a lower expressivity. This idea was first numerically demonstrated in Ref.~\cite{shaydulin2021importance} and analytical treatment on how this hyperparameter affects the kernel spectrum was done in the follow-up work in Ref.~\cite{canatar2022bandwidth}. While this could ensure that the fidelity kernel does not suffer from expressivity induced concentration, it was shown later in Ref.~\cite{slattery2023numerical} with numerical simulations of up to 20 qubits that this approach is unlikely to provide any quantum advantage over classical models. 

Ref.~\cite{suzuki2022quantum} proposed a new type of quantum kernel known as the quantum Fisher kernel. The kernel encodes geometric information of the input data. The author considered the alternating layered ansatz where the embedding consists of layers of unitary blocks that act on local qubits and analytically showed the absence of exponential concentration with log depth layers. We note that extending the layers to linear depth still leads to a highly expressive embedding and exponential concentration.  

Another line of work aims to optimize a parameterized embedding for a quantum kernel when there is no prior knowledge of the data structure. Ref.~\cite{hubregtsen2021training} used kernel target alignment (as defined in Eq.~\eqref{eq:kernel-ta-mt}) to align a parameterized embedding with an ideal embedding approximated with the given training data. In Ref.~\cite{incudini2022structure}, the authors rely on a more heuristic approach to slowly build the embedding from a set of unitary blocks, which is similar to a layer-wise training strategy in variational quantum algorithms. We note that, as shown in Sec.~\ref{sec:thm-kta}, that training the embedding can potentially lead to barren plateaus when the trainable part of the embedding is not constructed properly. 

\subsection{Effect of shot noise in the absence of exponential concentration}
Ref.~\cite{liu2021rigorous} proved a quantum advantage for a quantum support vector machine using a fidelity quantum kernel for solving a particular classification task where the dataset is engineered based on the discrete log problem. As the discrete log problem is strongly believed to be inefficient for classical computers and efficient for quantum computers, the authors proved that this classical hardness carries over to the learning task. Thus using a carefully constructed embedding with a strong inductive bias aligned to the problem structure allows for a quantum advantage. Importantly, the advantage remains in the presence of shot noise. In particular, the kernel values and model predictions can be efficiently evaluated with a polynomial number of measurement shots. 

In Ref.~\cite{gentinetta2022complexity}, the effect of shot noise in quantum support vector machines was studied for an arbitrary classification task. It was analytically shown that in the absence of exponential concentration and under the assumption that a separation between two classes is polynomially large the performance of a kernel model is robust against shot noise.

\section{Preliminaries for statistical indistinguishability} 
\label{appendix:stat-indis-basic}
Here we quote some key technical tools on the distinguishability of probability distributions from binary hypothesis testing, which we will use to establish how the exponential concentration affects the kernel methods in Appendix~\ref{appendix:fidelity-swap} and~\ref{appendix:projected}. For a more extensive exposition we refer the reader to Ref.~\cite{tsybakov2009introduction}. 

\subsection{One sample}
\sth{
\begin{lemma}\label{lemma:guess-distribution}
Consider two probability distributions $\PC$ and $\QC$ over some finite set of outcomes $\IC$. Suppose we are given a single sample $S$ drawn from either $\PC$ or $\QC$ with equal probability, and we have the following two hypotheses:
\begin{itemize}
    \item Null hypothesis $\mathcal{H}_0$: $S$ is drawn from $\mathcal{P}$\,,
    \item Alternative hypothesis $\mathcal{H}_1:$ $S$ is drawn from $\mathcal{Q}$\,.
\end{itemize}
The probability of correctly deciding the true hypothesis is upper bounded as
\begin{align}
    {\rm Pr}[``{\rm right \; decision \; between \, } \HC_0 \, {\rm and} \, \HC_1"] \leq \frac{1}{2} + \frac{\| \PC - \QC \|_1}{4} \;, 
\end{align}
where we denote $\| \PC - \QC \|_1 = \sum_{s \in \IC} |p(s) - q(s)|$ as the 1-norm between the probability vectors (2 $\times$ the total variation distance).
\end{lemma}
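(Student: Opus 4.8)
The plan is to identify the optimal decision rule, compute its success probability exactly, and then observe that this value upper bounds the success probability of \emph{any} rule. First I would note that, since only a single sample $S$ is observed, an arbitrary (possibly randomized) decision strategy is fully specified by a function $\phi:\IC\to[0,1]$, where $\phi(s)$ is the probability of declaring $\HC_1$ given the outcome $S=s$. Because the two hypotheses are equally likely a priori, the probability of a correct decision is
\begin{align}
    {\rm Pr}[``{\rm right\; decision}"] &= \frac{1}{2}\sum_{s\in\IC}p(s)\big(1-\phi(s)\big) + \frac{1}{2}\sum_{s\in\IC}q(s)\phi(s) \nonumber\\
    &= \frac{1}{2} + \frac{1}{2}\sum_{s\in\IC}\phi(s)\big(q(s)-p(s)\big)\;,
\end{align}
where we used $\sum_{s}p(s)=1$.

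Second, I would maximize the right-hand side over admissible $\phi$. Since each $\phi(s)\in[0,1]$ may be chosen independently, the sum is maximized by the likelihood-ratio (Neyman--Pearson) choice $\phi(s)=1$ when $q(s)>p(s)$ and $\phi(s)=0$ otherwise, giving
\begin{align}
    \sum_{s\in\IC}\phi(s)\big(q(s)-p(s)\big) &\leq \sum_{s:\,q(s)>p(s)}\big(q(s)-p(s)\big) \nonumber\\
    &= \frac{1}{2}\sum_{s\in\IC}|p(s)-q(s)| = \frac{\|\PC-\QC\|_1}{2}\;.
\end{align}
The middle equality uses the standard fact that the positive and negative parts of $p-q$ carry equal total mass, since $\sum_s\big(p(s)-q(s)\big)=0$. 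Substituting this into the previous display yields ${\rm Pr}[``{\rm right\; decision}"]\leq \tfrac12 + \tfrac14\|\PC-\QC\|_1$, as claimed.

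This argument is essentially elementary, so I do not anticipate a genuine obstacle; the step requiring the most care is simply justifying that a single-sample decision procedure is without loss of generality of the form $\phi:\IC\to[0,1]$ and that it is the $\tfrac12$/$\tfrac12$ prior that makes the cross terms combine as above. As an alternative packaging of the same computation, one may instead use the identity $\sum_s\max\{p(s),q(s)\}=1+\tfrac12\|\PC-\QC\|_1$ together with the fact that the optimal rule succeeds with probability $\tfrac12\sum_s\max\{p(s),q(s)\}$; I would mention this equivalence but present the $\phi$-based version as the main line, since it makes the ``no strategy does better'' statement transparent.
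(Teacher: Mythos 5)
Your proof is correct and follows essentially the same route as the paper: identify the region where one distribution dominates, compute the success probability of the corresponding test, and convert it to the 1-norm using the fact that the positive and negative parts of $p-q$ carry equal total mass. The one substantive difference is that you explicitly maximize over all randomized decision rules $\phi:\IC\to[0,1]$, which actually \emph{proves} the optimality of the likelihood-ratio test, whereas the paper simply asserts that this test is optimal; your version is therefore marginally more complete on the ``no strategy does better'' point.
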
}
\begin{proof}
There exists a region $\AC$ such that $p(s) > q(s)$ for all $s \in \AC$. The optimal decision making strategy is to choose that the given sample $S$ is drawn from $\PC$ if it falls in the region i.e., $S \in \AC$ and choose $\QC$, otherwise. The probability of \sth{making the right decision} can be expressed as
\begin{align}
    {\rm Pr}[\sth{``{\rm right \; decision \; between \, } \HC_0 \, {\rm and} \, \HC_1"}] =&   {\rm Pr}( S \in \AC | S \sim \PC) {\rm Pr}(S \in \PC) + {\rm Pr}( S \notin \AC | S \sim \QC){\rm Pr}(S \in \QC) \\
    =& \frac{1}{2}\left[ {\rm Pr}( S \in \AC | S \sim \PC) + {\rm Pr}( S \notin \AC | S \sim \QC)  \right] \\
    =& \frac{1}{2}\left[ \sum_{s \in \AC} p(s) + \sum_{s \notin \AC} q(s) \right] \;, \label{eq:appx-swap1}
\end{align}
where the second equality is due to the sample being equally likely to be drawn from either $\PC$ or $\QC$. In the last equality, we use the fact that given that the sample is from $\PC$, the probability that this sample takes any value within the region $\AC$ is simply $ \sum_{s \in \AC} p(s)$, and similarly for $s \notin \AC$.

The 1-norm between probability vectors can be written as 
\begin{align}
    \| \PC - \QC \|_1 = & \sum_{s \in \IC} |p(s) - q(s)| \\
    = & \sum_{s \in \AC} (p(s) - q(s)) + \sum_{s \notin \AC} (q(s) - p(s)) \label{eq:tvd}\;,
\end{align}
where we have separated terms in the sum based on the region $\AC$.
Lastly, we notice that
\begin{align}
    \frac{2 + \| \PC - \QC \|_1}{2} & = \frac{1}{2}\left(\sum_{s\in\IC}p(s) + \sum_{s\in\IC}q(s)+ \| \PC - \QC \|_1 \right)\\
    & = \sum_{s \in \AC} p(s) + \sum_{s \notin \AC} q(s) \;,
\end{align}
where in the second line we have used Eq.~\eqref{eq:tvd}. Substituting this back to Eq.~\eqref{eq:appx-swap1}, we obtain the desired result. 
\end{proof}

\subsection{Many samples}
We now consider the scenario where instead of a single sample we are given $N$ samples from either $\PC$ and $\QC$ is given to us and we have to guess which of the two distributions these samples are drawn from. At first glance, it may seem that Lemma~\ref{lemma:guess-distribution} is not applicable to this scenario, since we now have a set of outcomes rather than just one sample. However, we can consider the product distributions $\PC^{\otimes N}$ and $\QC^{\otimes N}$, where a single sample corresponds to $N$ samples from $\PC$ and  $\QC$ respectively.

We first state a generic inequality on product distributions.

\sth{\begin{lemma}\label{lemma:1norm-product}
The 1-norm between discrete product distributions $\PC^{\otimes N}$ and $\QC^{\otimes N}$ can be upper bounded as
\begin{align}
    \| \PC^{\otimes N} - \QC^{\otimes N} \|_1 \leq N \| \PC - \QC \|_1 \;.
\end{align}
\end{lemma}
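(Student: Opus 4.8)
The plan is to prove the inequality by induction on $N$, using the triangle inequality for the $1$-norm together with the tensor structure of the product distributions. The base case $N=1$ is trivial. For the inductive step, I would write $\PC^{\otimes N} = \PC \otimes \PC^{\otimes (N-1)}$ and $\QC^{\otimes N} = \QC \otimes \QC^{\otimes (N-1)}$, and insert a hybrid term $\PC \otimes \QC^{\otimes (N-1)}$:
\begin{align}
    \| \PC^{\otimes N} - \QC^{\otimes N} \|_1 &= \| \PC \otimes \PC^{\otimes (N-1)} - \QC \otimes \QC^{\otimes (N-1)} \|_1 \nonumber \\
    &\leq \| \PC \otimes \PC^{\otimes (N-1)} - \PC \otimes \QC^{\otimes (N-1)} \|_1 + \| \PC \otimes \QC^{\otimes (N-1)} - \QC \otimes \QC^{\otimes (N-1)} \|_1 \;.
\end{align}

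The key step is then to observe that the $1$-norm factorizes on tensor products: for probability distributions, $\| \mathcal{R} \otimes \mathcal{S}_1 - \mathcal{R} \otimes \mathcal{S}_2 \|_1 = \| \mathcal{R} \|_1 \cdot \| \mathcal{S}_1 - \mathcal{S}_2 \|_1 = \| \mathcal{S}_1 - \mathcal{S}_2 \|_1$, since $\| \mathcal{R} \|_1 = 1$, and similarly $\| \mathcal{S}_1 \otimes \mathcal{R} - \mathcal{S}_2 \otimes \mathcal{R} \|_1 = \| \mathcal{S}_1 - \mathcal{S}_2 \|_1$. This is a short explicit computation: expanding $\sum_{s,t} |r(s) s_1(t) - r(s) s_2(t)| = \sum_s r(s) \sum_t |s_1(t) - s_2(t)|$. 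Applying this to both terms above gives
\begin{align}
    \| \PC^{\otimes N} - \QC^{\otimes N} \|_1 \leq \| \PC^{\otimes (N-1)} - \QC^{\otimes (N-1)} \|_1 + \| \PC - \QC \|_1 \;,
\end{align}
and the inductive hypothesis $\| \PC^{\otimes (N-1)} - \QC^{\otimes (N-1)} \|_1 \leq (N-1) \| \PC - \QC \|_1$ closes the argument.

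There is no real obstacle here — the only thing to be careful about is the factorization lemma for the $1$-norm under tensoring with a normalized distribution, which relies on positivity being preserved only in the first factor (the sign of $r(s)$ is always positive so it pulls out of the absolute value cleanly). Alternatively, one could phrase the whole thing in terms of total variation distance and cite the standard subadditivity of TV distance over product measures, but the inductive hybrid argument is self-contained and short, so I would present that.
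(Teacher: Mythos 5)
Your proof is correct and is essentially the same argument as the paper's: both interpolate between $\PC^{\otimes N}$ and $\QC^{\otimes N}$ through hybrid terms that swap one tensor factor at a time, then apply the triangle inequality and the factorization of the $1$-norm together with $\|\PC\|_1=\|\QC\|_1=1$. The paper writes this as a single telescoping sum while you package it as an induction, but unrolling your induction reproduces exactly that telescoping decomposition.
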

\begin{proof}
We have
\begin{align}
     \| \PC^{\otimes N} - \QC^{\otimes N} \|_1 & = \| \PC^{\otimes N} \,-\, \QC^{}\otimes \PC^{\otimes N-1} \,+\, \QC^{}\otimes \PC^{\otimes N-1} \,-\, \QC^{\otimes 2}\otimes \PC^{\otimes N-2} \,+\,  ... \,+\,  \QC^{\otimes N-1} \otimes \PC^{} \,-\, \QC^{\otimes N} \|_1 \\
     & \leq \| \PC^{\otimes N} \,-\, \QC^{}\otimes \PC^{\otimes N-1}\|_1 \,+\, \|\QC^{}\otimes \PC^{\otimes N-1} \,-\, \QC^{\otimes 2} \otimes \PC^{\otimes N-2}\|_1 \,+\,  ... \,+\,  \|\QC^{\otimes N-1} \otimes \PC^{} \,-\, \QC^{\otimes N} \|_1 \\
     & = \| \PC - \QC \|_1\, \|\PC^{\otimes N-1} \|_1 \,+\, \|\QC \|_1\, \|\PC^{}- \QC^{}\|_1\, \|\PC^{\otimes N-2} \|_1 \,+\,  ... \,+\,  \|\QC^{\otimes N-1}\|_1\,\|\PC^{} - \QC \|_1 \\
     & =  N \| \PC - \QC \|_1 \;,
\end{align}
\normalsize
where in the first line we have added and subtracted terms, the inequality is due to the triangle inequality, and the third line is due to the fact that the 1-norm factorizes.
\end{proof}}

Supplemental Lemma \ref{lemma:1norm-product} along with Supplemental Lemma \ref{lemma:guess-distribution} immediately implies an upper bound on a hypothesis testing experiment using $N$ samples. In the following proposition we specify this for binary distributions.

\sth{\begin{supplemental_proposition}\label{sup-prop:indistin-prob}
Consider two binary probability distributions $\PC_0=(p_0,1-p_0)$ and $\PC_\varepsilon=(p_\varepsilon,1-p_\varepsilon)$ where $p_\epsilon = p_0+\varepsilon$. Suppose we are given a $N$ samples (denoted $\MC$) drawn from either $\PC_0$ or $\PC_\varepsilon$ with equal probability, and we have the following two hypotheses:
\begin{itemize}
    \item Null hypothesis $\mathcal{H}_0$: $\mathcal{M}$ is drawn from $\mathcal{P}_0$\,,
    \item Alternative hypothesis $\mathcal{H}_1: \mathcal{M}$ is drawn from $\PC_\varepsilon$\,.
\end{itemize}
The probability of correctly deciding the true hypothesis is upper bounded as
\begin{align}
    {\rm Pr}[``{\rm right \; decision \; between \, } \HC_0 \, {\rm and} \, \HC_1"] \leq \frac{1}{2} + \frac{N|\varepsilon|}{2} \;.
\end{align}
\end{supplemental_proposition}

\begin{proof}
We remark that the combination of Supplemental Lemma \ref{lemma:1norm-product} along with Supplemental Lemma \ref{lemma:guess-distribution} gives success probability 
\begin{align}
     {\rm Pr}[``{\rm right \; decision \; between \, } \HC_0 \, {\rm and} \, \HC_1"] \leq \frac{1}{2} + \frac{N \|\PC_0 - \PC_{\varepsilon}\|_1}{4}
\end{align}
and explicit evaluation shows that $\|\PC_0 - \PC_{\epsilon}\|_1 = 2|\varepsilon|$.
\end{proof}}

In the rest of this work, we mainly focus on a perturbation that is exponentially small in the system size $n$, i.e., $\epsilon \in \OC(1/b^n)$ for some $b>1$. 
\color{Black}

\section{Practical implications of exponential concentration on kernel methods}

In this section we analyse the consequences of exponential concentration on kernel methods. Specifically, we show that when using a polynomial number of measurements the statistical estimate of the Gram matrix is with high probability independent of input data. Consequently, training with the estimated Gram matrix results in a data-independent model. It then follows that the final (trained) output model is independent of the training data and cannot generalize. To make this argument more concrete we consider the example of kernel ridge regression; however the fundamental problem of the data-independence of the output prediction caries over similarly to other learning tasks. 

\medskip

\sth{
We present this argument for both the fidelity quantum kernel and projected quantum kernel, for different corresponding strategies to prepare them. The rest of this section is structured as follows.

In Appendix~\ref{appendix:fidelity} we discuss the fidelity quantum kernel where two measurement strategies to estimate kernel values are considered.
\begin{itemize}
    \item Appendix~\ref{appendix:fideliy-overlap-test} concerns with the Loschmidt Echo test to estimate kernel values. In the presence of exponential concentration, we rigorously show that the statistical estimates of the kernel values concentrate at zero with high probability (see Supplemental Proposition~\ref{sup-prop:fidelity-overlap}). As a consequence, the estimated Gram matrix is likely to simply be the identity matrix and the estimated model predictions also concentrate to zero with high probability (see Supplemental Corollary~\ref{sup-cor:train-fqk-overlap}). 
    \item Appendix~\ref{appendix:fidelity-swap} is concerned with the SWAP test to estimate kernel values. Here, we rely on a reduction to hypothesis testing (see Appendix~\ref{appendix:stat-indis-basic} for preliminaries), and we define notions of statistical indistinguishability (see Definition~\ref{def:StatIndist} for distributions and Definition~\ref{def:StatIndistOutputs} for outputs). When kernel values exponentially concentrate, their estimates become statistically indistinguishable with high probability (see Supplemental Lemma~\ref{sup-lem:fidelity-swap}). Consequently, outputs of the model trained on these kernel estimates are also indistinguishable and insensitive to unseen input data (see Supplemental Corollary~\ref{sup-coro:fidelity-swap}).     
    \item In Appendix~\ref{appendix:fidelity-numerics} we present numerical simulations of the fidelity kernel to support the theoretical results in the previous sub-sections.
\end{itemize}

We then discuss the projected quantum kernel in Appendix~\ref{appendix:projected} and investigate two measurement strategies to obtain kernel estimates.
\begin{itemize}
    \item Appendix~\ref{appendix:projected1} is concerned with the practical consequence of exponential concentration. For both measurement strategies, the effect appears in an identical manner as in the SWAP test for the fidelity kernel. That is, we have statistical indistinguishability of kernel estimates (see Supplemental Proposition~\ref{sup-prop-norm-indis-pqk}) and their model predictions (see Corollary~\ref{sup-coro:indis-model-pqk}) . Since the setting in the projected kernel is more complicated than the fidelity kernel with the SWAP test, we encourage interested readers to first review Appendix~\ref{appendix:fidelity-swap} as the ideas are similar (or Appendix~\ref{appendix:stat-indis-basic} for preliminaries on hypothesis testing). 
    \item In Appendix~\ref{appendix:projected2}, we illustrate numerical results to back up our theoretical findings.
    \item In order to not interrupt the flow when going through the first two sub-sections, we group all the proofs together in  Appendix~\ref{appendix:pqk-proof}
\end{itemize}

In Appendix~\ref{appendix:indis-states}, we discuss the extension of kernel concentration to state concentration. This leads to a stronger concentration result which cannot be resolved even with quantum access to polynomial state copies. Lastly, in Appendix~\ref{appendix:further-discuss-impact}, we provide discussion on a sufficient condition to resolve the exponential concentration issue. We found that the number of measurement shots has to scale exponentially in the number of qubits in order to acquire enough resolution in the kernel estimates. This exponential scaling is impractical for large problem sizes.}

\begin{figure}[t]
\includegraphics[width=.65\columnwidth]{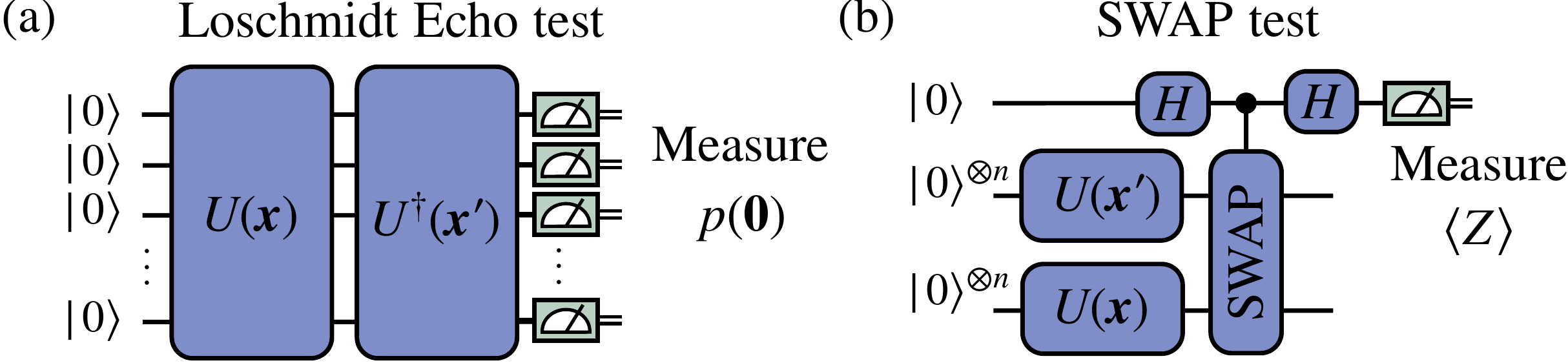}
\caption{\textbf{Schematic diagram of tests.} We illustrate two different strategies to estimate kernel values. In panel (a) we show the Loschmidt Echo test where the estimated kernel value is equivalent to the empirical probability of measuring the all-zero bitstring. In panel (b) we show the SWAP test where the kernel value is estimated with the expectation value of Pauli Z operator on an ancilla qubit.}\label{fig-sup:schematic-swap-overalp}
\end{figure}

\subsection{Fidelity quantum kernel}\label{appendix:fidelity}
Let us start by recalling that an input $\vec{x}$ is encoded into a data-encoding quantum state $\rho(\vec{x})$ through an embedding unitary $U(\vec{x})$ and the fidelity quantum kernel is of the form
\begin{align}
    \kappa^{\rm FQ}(\vec{x},\vec{x'}) = \Tr[\rho(\vec{x})\rho(\vec{x'})] \;.
\end{align}
The exact value of the kernel is inaccessible and instead we obtain a statistical estimate using measurement outcomes/shots from quantum computers. There are two common measurement strategies to estimate the fidelity quantum kernel: (i) the Loschmidt Echo test or (ii) the SWAP test, as shown in Fig.~\ref{fig-sup:schematic-swap-overalp}. In either case, the fidelity quantum kernel is equivalent to the expectation value of an observable $O$ with some quantum state $\rho$ with the exact expression for $O$ and $\rho$ depending on the strategy used. If we write the eigendecomposition of the observable as $O = \sum_i o_i |o_i\rangle\langle o_i |$ where $o_i$ and $ |o_i\rangle$ are eigenvalues and eigenvectors of $O$, then the statistical estimate after $N$ measurements is of the form
\begin{align}\label{eq:estimate-fidelity-kernel}
    \widehat{\kappa}^{\rm FQ}(\vec{x},\vec{x'}) = \frac{1}{N}\sum_{m=1}^{N} \lambda_m \;, 
\end{align}
where $\lambda_m$ is the outcome of the $m^{\rm th}$ measurement and can be treated as a random variable which takes the value $o_i$ with probability $p_i = \Tr[|o_i\rangle\langle o_i | \rho]$. \\

\sth{We now restate the definition of exponential concentration discussed in the main text.
\begin{definition} [Exponential concentration]\label{def:appx-exp-concentration}
Consider a quantity $X(\vec{\alpha})$ that depends on a set of variables $\vec{\alpha}$ and can be measured from a quantum computer as the expectation of some observable. $X(\vec{\alpha})$ is said to be deterministically  exponentially concentrated in the number of qubits $n$ towards a certain $\vec{\alpha}$-independent value $\mu$ if
\begin{align}
    |X(\vec{\alpha}) - \mu |\leq \beta \in O(1/b^n) \;,
\end{align}
for some $b>1$ and all $\vec{\alpha}$. Analogously, $X(\vec{\alpha})$ is probabilistically exponentially concentrated if
\begin{align} \label{eq:def-prob-concentration}
    {\rm Pr}_{\vec{\alpha}}[|X(\vec{\alpha}) - \mu| \geq \delta] \leq \frac{\beta}{\delta^2} \;\; , \; \beta \in O(1/b^n) \;,
\end{align}
for $b> 1$. That is, the probability that $X(\vec{\alpha})$ deviates from $\mu$ by a small amount $\delta$ is exponentially small for all $\vec{\alpha}$.

In addition, if $\mu$ exponentially vanishes in the number of qubits i.e., $\mu \in \OC(1/b'^n)$ for some $b' >1$, we say that $X(\vec{\alpha})$ exponentially concentrates towards an exponentially small value.

\end{definition}
\noindent In the context of quantum kernels, $X(\vec{\alpha}) = \kappa^{\rm FQ}(\vec{x},\vec{x'})$ with the set of variables corresponding to an input data pair $\vec{\alpha} = \{ \vec{x}, \vec{x'}\}$. When $\mu$ vanishes exponentially, we remark that the probability of deviating from zero by an arbitrary constant amount is exponentially small.}

We note that Supplemental Proposition~\ref{sup-prop:fidelity-overlap} 
(and the first part of Supplemental Corollary~\ref{sup-coro:fidelity-swap}) is a full version of Proposition~\ref{prop-stat-kernel-overlap} (and Proposition~\ref{prop-stat-kernel-swap}) in the main text, which concern about the practical implications on estimated kernel values and the Gram matrix. Additionally, a full statement of Corollary~\ref{coro:opt-params} which considers the impact of kernel concentration on model predictions is presented in Supplemental Corollary~\ref{sup-cor:train-fqk-overlap} and the latter half of Supplemental Corollary~\ref{sup-coro:fidelity-swap}.

\subsubsection{Loschmidt Echo test}\label{appendix:fideliy-overlap-test}
For the Loschmidt Echo test, the quantum fidelity kernel is the probability of measuring the all-zero bitstring. That is, the observable is the global projector for the all-zero state $O = |\vec{0} \rangle\langle \vec{0} |$ and $\rho = U^{\dagger}(\vec{x'})U(\vec{x}) |\vec{0} \rangle\langle \vec{0} | U^{\dagger}(\vec{x})U(\vec{x'})$. 
The measurement outcome is $+1$ when the all-zero bitstring is observed and is $0$ for any other bitstrings. Thus, the statistical estimate is simply the ratio of the number of observed all-zero bitstrings to the total number of measurements. When the kernel value exponentially concentrates \sth{an exponentially small quantity}, the statistical estimate of the kernel is $0$ with a probability exponentially close to $1$. This is shown in the following proposition. 

\bigskip

\begin{supplemental_proposition}[A full version of Proposition~\ref{prop-stat-kernel-overlap}] \label{sup-prop:fidelity-overlap}
Consider the fidelity quantum kernel as defined in Eq.~\eqref{eq:fidelity-kernel-mt}. Assume that the kernel values $\kappa^{\rm FQ}(\vec{x},\vec{x'})$ exponentially concentrate towards some \sth{exponentially small value} $\mu$. 
Given that the Loschmidt Echo test is used to estimate the kernel value between an input data pair $\vec{x}$ and $\vec{x'}$ with a polynomial number of measurement shots \sth{$N$}, the probability that the statistical estimate of the kernel value $\widehat{\kappa}^{\rm FQ}(\vec{x},\vec{x'})$ \sth{is zero} is exponentially close to $1$. That is,
\begin{align}
    {\rm Pr}\left[ \widehat{\kappa}^{\rm FQ}(\vec{x},\vec{x'}) = 0 \right] \geq 1 - \delta \; \; , \; \delta \in \OC(c^{-n}) \; 
\end{align}
for some constant $c > 1$. In addition, for any training dataset $\SC = \{\vec{x}_i , y_i\}$ of size $N_s \in \OC(\poly(n))$, with a probability exponentially close to $1$, the statistical estimate of the Gram matrix $\widehat{K}$ is equal to the identity matrix. That is,
\begin{align}
    {\rm Pr}[ \widehat{K} =  \mathbb{1} ] \geq 1 - \delta' \; \; , \; \delta' \in \OC(c'^{-n}) \; 
\end{align}
for some constant $c' > 1$.
\end{supplemental_proposition}
\begin{proof} \sth{First, we recall that if the fidelity kernel concentrates to some exponentially small value over possible input data pairs as per Definition~\ref{def:exp-concentration}, we have
\begin{align}
    {\rm Pr}_{\vec{x},\vec{x'}}\left[ \left| \kappa^{\rm FQ}(\vec{x},\vec{x'}) - \mu\right| \geq \delta_c \right] \leq \frac{\beta}{\delta_c^2} \;,
\end{align}
such that
\begin{align}\label{eq:appx-mean-exp-small}
    \mu \in \OC(1/b'^n) \;,
\end{align}
for some $b'>1$, and
\begin{align}\label{eq:appx-var-exp-small}
    \beta \in \OC(1/b^n) \;,
\end{align}
for some $b>1$. By specifying $\delta_c = \beta^{1/4}$ and inverting the inequality, we have
\begin{align}\label{eq:proof-fidel-overlap101}
    {\rm Pr}_{\vec{x},\vec{x'}}\left[ \left| \kappa^{\rm FQ}(\vec{x},\vec{x'}) - \mu\right| \leq \beta^{1/4} \right] \geq 1- \sqrt{\beta} \;.
\end{align}
This implies that the probability of $\kappa^{\rm FQ}(\vec{x},\vec{x'})$ to be between $\mu - \beta^{1/4}$ and $\mu + \beta^{1/4}$ is at least $1-\sqrt{\beta}$.
}

We now show that for any given pair of $\vec{x}$ and $\vec{x'}$, it is exponentially likely that the statistical estimate of the kernel is zero. This is equivalent to proving that none of obtained bitstrings is all-zero bitstring. After $N$ measurements, the probability of this event happening can be expressed as
\begin{align}
    {\rm Pr}[\widehat{\kappa}^{\rm FQ}(\vec{x},\vec{x'}) = 0] = &\int_{0}^1 {\rm Pr}\left[\widehat{\kappa}^{\rm FQ}(\vec{x},\vec{x'}) = 0 \big|\kappa^{\rm FQ}(\vec{x},\vec{x'}) = s \right] {\rm Pr}\left[\kappa^{\rm FQ}(\vec{x},\vec{x'}) = s \right] ds \\
    = & \int_{0}^1 (1-s)^{N} {\rm Pr} \left[\kappa^{\rm FQ}(\vec{x},\vec{x'}) = s \right] ds \\
    \geq & \int_{\mu - \beta^{1/4}}^{\mu + \beta^{1/4}} (1-s)^{N} {\rm Pr} \left[\kappa^{\rm FQ}(\vec{x},\vec{x'}) = s \right] ds \\
    \geq & (1 - (\mu + \beta^{1/4}))^{N} \int_{\mu - \beta^{1/4}}^{\mu + \beta^{1/4}} {\rm Pr} \left[\kappa^{\rm FQ}(\vec{x},\vec{x'}) = s \right] ds \label{eq:appx-a-proof-fqk0} \\
    \geq & (1 - (\mu + \beta^{1/4}))^{N} (1 - \sqrt{\beta})  \label{eq:appx-a-proof-fqk} \\
    \geq &  (1 - N(\mu +  \beta^{1/4}))(1 - \sqrt{\beta}) \; , \label{eq:appx-a-proof-fqk2}
\end{align}
where in the first equality Bayes' theorem is used to introduce the conditional probability of measuring none all-zero bitstring for given $s= \kappa^{\rm FQ}(\vec{x},\vec{x'})$ and the marginal probability is acquired by integrating all possible values of $\kappa^{\rm FQ}(\vec{x},\vec{x'})$. The second equality is due to the fact that measurement outcomes are independent. In the first inequality, we limit the range of integration to $\mu \pm \beta^{1/4}$. The second inequality is from taking the minimum value of $(1-s)$ within the integration range. \sth{The next inequality is due to Eq.~\eqref{eq:proof-fidel-overlap101}.} To reach the last line in Eq.~\eqref{eq:appx-a-proof-fqk2}, we apply Bernoulli's inequality.

Now, it remains to show that the lower bound in Eq.~\eqref{eq:appx-a-proof-fqk2} is exponentially close to $1$. We recall that if the kernel values exponentially concentrate towards some exponentially small value, we have that $\mu$ and $\beta$ follow Eq.~\eqref{eq:appx-mean-exp-small} and Eq.~\eqref{eq:appx-var-exp-small}. Lastly, for a polynomial number of measurement shots, i.e., $N \in \OC(\poly(n))$, we have that $N(\mu + \beta^{1/4})$ vanishes exponentially, leading to
\begin{align}
    {\rm Pr}\left[ \widehat{\kappa}^{\rm FQ}(\vec{x},\vec{x'}) = 0 \right] \geq 1 - \delta \; \; , \; \delta \in \OC(c^{-n}) \label{eq:appx-a-proof-fqk22}\; ,
\end{align}
with $\delta = N(\mu + \beta^{1/4}) + \sqrt{\beta} - N(\mu + \beta^{1/4})  \sqrt{\beta} \in \OC(c^{-n})$ for some $c>1$. For large $n$, $\delta$ becomes exponentially smaller than $1$. This completes the first half of the proof. 

We now proceed to the second half of the proof. Consider a training dataset $\SC = \{\vec{x}_i, y_i \}$ with $N_s \in \OC(\poly(n))$. The event that the statistical estimate of the Gram matrix $\widehat{K}$ is equal to identity is equivalent to the event that all statistical estimates of kernel values for all pairs $\vec{x}_i$ and $\vec{x}_j$ (such that $i\neq j$) are zeros. Since each data point in the training dataset is drawn independently, estimating kernel values from different input data pairs in $\SC$ are independent events
\begin{align}
    {\rm Pr}\left[ \widehat{K} =  \mathbb{1} \right] & = {\rm Pr} \left[ \widehat{\kappa}^{\rm FQ}(\vec{x}_i, \vec{x}_j) = 0 \; ; \forall i, j \; , \;i \neq j  \right] \\
    & = \prod_{i < j} {\rm Pr}\left[ \widehat{\kappa}^{\rm FQ}(\vec{x}_i,\vec{x}_j) = 0 \right] \\
    & \geq (1 - \delta)^{N_s (N_s - 1)/2} \\
    & \geq 1 - N_s (N_s - 1)\delta/2 \;,
\end{align}
where the second equality uses the fact that the individual kernel values correspond to independent events (note that since the Gram matrix is symmetric we only have to estimate $N_s(N_s-1)/2$ kernel values). The first inequality is from applying the result in Eq.~\eqref{eq:appx-a-proof-fqk22} (as the kernel values are concentrated) and the last inequality is from Bernoulli's inequality. Since $N_s \in \OC(\poly(n))$, we have that $\delta' = N_s (N_s - 1)\delta/2 \in \OC(c^{-n})$ for some $c>1$.
\end{proof}

\medskip

Supplemental Proposition~\ref{sup-prop:fidelity-overlap} rigorously shows that the estimated Gram matrix is, \textit{for any choice in input data}, is likely to be an identity matrix. 
It follows that the trained model will also, with high probability, be independent of the training data, and thus in all likelihood not very useful. 
To demonstrate this, we consider the example of kernel ridge regression and show that the predictions of the output model concentrate at zero with high probability.

\begin{supplemental_corollary}[A full version of the Loschmidt Echo part of Corollary~\ref{coro:opt-params}]\label{sup-cor:train-fqk-overlap}
Consider implementing kernel ridge regression with the fidelity quantum kernel, a squared loss function and a training dataset $\SC = \{\vec{x}, y_i \}_i$ with $N_s \in \OC(\poly(n))$. Denote an input data independent fixed point $\vec{a}_0(\vec{y},\lambda) =  \vec{y}/(1 - \lambda)$ where $\vec{y}$ is a vector of output data points with its $i^{\rm th}$ element equal to $y_i$ and $\lambda$ is regularization in the loss function. Under the same assumptions as in Supplemental Proposition~\ref{sup-prop:fidelity-overlap}, if the model is trained with a statistical estimate of the Gram matrix $\widehat{K}$ where each element is evaluated with a polynomial number of measurement shots $N \in \OC(\poly(n))$ then, with probability exponentially close to $1$, the optimal parameters $\vec{a}_{\rm opt}$ are identical to the fixed point $\vec{a}_0(\vec{y},\lambda)$. That is, we have
\begin{align}
    {\rm Pr}\left[\vec{a}_{\rm opt} =\vec{a}_0(\vec{y},\lambda) \right] \geq 1 - \delta \; \; , \; \delta \in \OC(c^{-n}) \;,
\end{align}
for some constant $c > 1$.
In addition, the \sth{estimated} model prediction on an unseen input $\widehat{f}(\vec{x}) = \sum_{i} a^{(i)}_{\rm opt} \widehat{\kappa}^{\rm FQ}(\vec{x}_i, \vec{x})$ will, with a probability exponentially close to $1$, be $0$,
\begin{align}
    {\rm Pr}\left[ \widehat{f}(\vec{x}) = 0 | \vec{x} \notin \SC \right] \geq 1 - \delta' \; \; , \; \delta' \in \OC(c'^{-n}) \;,
\end{align}
for some constant $c' >1$.
\end{supplemental_corollary}
\begin{proof} According to Supplemental Proposition~\ref{sup-prop:fidelity-overlap}, we obtain the statistical estimate of the Gram matrix to be an identity $\widehat{K} = \mathbb{1}$ with probability exponentially close to $1$. The optimal parameters for a kernel ridge regression with a squared loss function are given by
\begin{align}\label{eq:optparams}
    \vec{a}_{\rm opt} = & (\widehat{K} \sth{-} \lambda \mathbb{1})^{-1} \vec{y} \\
    = & \frac{\vec{y}}{1 \sth{-} \lambda} \; .
\end{align}
For Supplemental Proposition~\ref{sup-prop:fidelity-overlap}, this is obtained with probability at least $1-\delta$ with $\delta \in \OC(\poly(n))$. This proves the first part of the corollary. 

Secondly, it follows from the Representer Theorem that the model prediction is of the form
\begin{align}
    f(\vec{x}) = \sum_{i=1}^{N_s} a_{\rm opt}^{(i)} \kappa^{\rm FQ}(\vec{x}_i,\vec{x}) \;.
\end{align}
Computing the model prediction for an unseen input data $\vec{x} \notin \SC$ requires computing a statistical estimate of the kernel values between the new data point and the training data points. When computed with a polynomial number of shots these estimates $\widehat{\kappa}^{\rm FQ}(\vec{x_i},\vec{x})$ will be $0$ with high probability. Specifically, we can bound this probability as 
\begin{align}
    {\rm Pr}\left[ \widehat{f}(\vec{x}) = 0 | \vec{x} \notin \SC \right] = & {\rm Pr}\left[\widehat{\kappa}^{\rm FQ}(\vec{x},\vec{x}_i) = 0 ; \forall i \right] \\
    = & \prod_{i = 1}^{N_s} {\rm Pr}\left[\widehat{\kappa}^{\rm FQ}(\vec{x},\vec{x}_i) = 0 \right] \\ 
    \geq & (1 - \delta)^{N_s} \\
    \geq & 1 - N_s \delta \; ,
\end{align}
where the second equality is due to the statistical independence of the kernel value estimates, the first inequality is from applying Supplemental Proposition~\ref{sup-prop:fidelity-overlap} and the final inequality is from Bernoulli's inequality. Since $N_s \in \OC(\poly(n))$, we have that $\delta' = N_s \delta \in \OC(c^{-n})$ for some $c>1$.
\end{proof}

Importantly, since all information concerning the training output data is in effect hard-coded in the formula for the optimal parameters, Eq.~\eqref{eq:optparams}, a low training error can be obtained. On the other hand, the model prediction is entirely insensitive to the input data and hence the model generalizes poorly. As supported by the numerics in the main text (see Fig.~\ref{fig:effect-exp-con-gen}), this poor generalization has a different flavor to the type of generalization usually quantified by generalization bounds in that it cannot be resolved simply by training on more data points. Instead, one must supply at least an exponential number of shots for hope of good generalization.

\sth{\subsubsection{SWAP test}\label{appendix:fidelity-swap}

For the SWAP test, the kernel value is the expectation value of the Pauli Z operator on an ancilla qubit. For each measurement, the outcome is drawn from a certain distribution $\PC_{\kappa^{\rm FQ}(\vec{x},\vec{x'})}$ that encodes the kernel information. More precisely, the outcome is either $+1$ with probability $p_+ = 1/2 + \kappa^{\rm FQ}(\vec{x},\vec{x'})/2$ or $-1$ with probability $p_- = 1 - p_+$ i.e.,
\begin{align} \label{eq:dist-kernel-swap}
    \PC_{\kappa^{\rm FQ}(\vec{x},\vec{x'})} := \left\{ \frac{1+ \kappa^{\rm FQ}(\vec{x},\vec{x'})}{2}, \frac{1- \kappa^{\rm FQ}(\vec{x},\vec{x'})}{2} \right\} \;.
\end{align}
In other words, the kernel value can be thought of as encoding a perturbation to the uniform distribution
\begin{align}\label{eq:dist-no-input}
    \PC_0 := \left\{\frac{1}{2}, \frac{1}{2}\right\} \;.
\end{align}

\begin{figure}[t]
\includegraphics[width=.75\columnwidth]{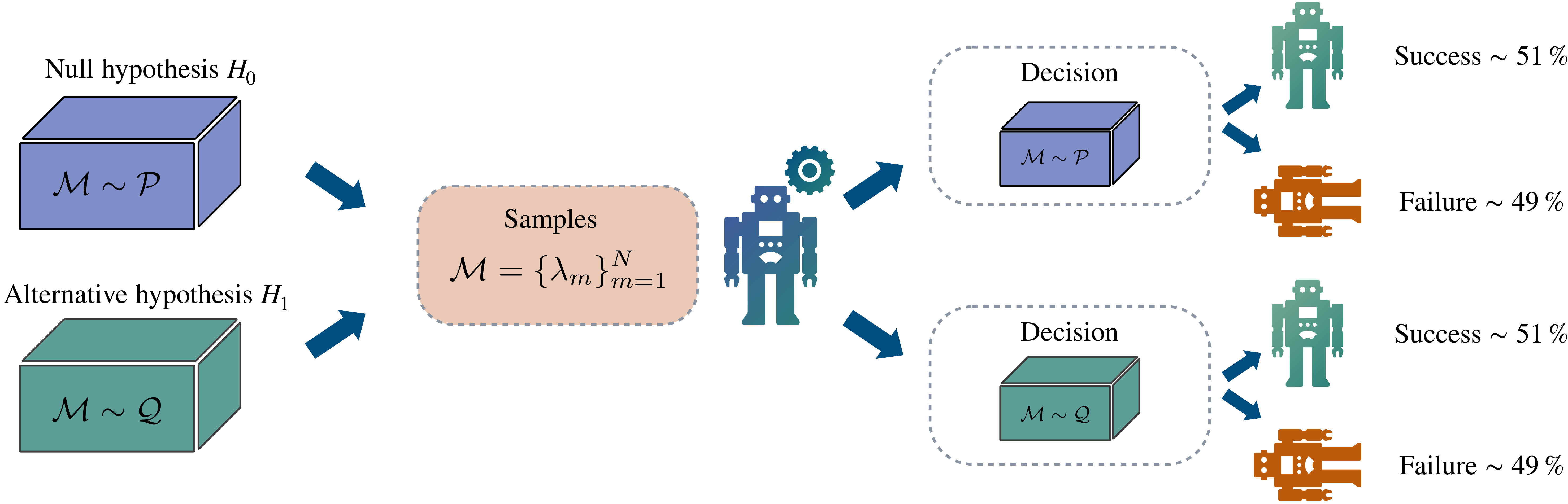}
\caption{\sth{\textbf{Statistical indistinguishability.} Suppose we are given a set of $N$ samples $\MC$ that are either drawn from $\PC$ (Null hypothesis) or from $\QC$ (Alternative hypothesis), with the two possibilities equally probable. Two distributions $\PC$ and $\QC$ are said to be statistically indistinguishable with $N$ samples if there exists no algorithm to reliably pass this binary hypothesis test.}}\label{fig-sup:stat-indis}
\end{figure}

We will consider the following notion of statistical indistinguishability (also illustrated in Fig.~\ref{fig-sup:stat-indis}).
\begin{definition}\label{def:StatIndist} [Statistical indistinguishability (of distributions)]
    Two probability distributions $\PC$ and $\QC$ are statistically indistinguishable with $N$ samples if a binary hypothesis test cannot be passed with  probability at least $0.51$. That is, given a set of $N$ samples $\MC$ drawn from either $\PC$ or $\QC$ (with an equal probability), consider the following hypotheses
    \begin{itemize}
        \item Null hypothesis $\HC_0$: $\MC$ is drawn from $\PC$\,,
        \item Alternative hypothesis $\HC_1$: $\MC$ is drawn from $\QC$ \,,
    \end{itemize}
    where $\PC$ and $\QC$ are statistically indistinguishable (with $N$ samples) if for any algorithm the probability of correctly identifying the correct hypothesis, ${\rm Pr}[``{\rm right \; decision \; between \, } \HC_0 \, {\rm and} \, \HC_1"]$, satisfies:
    \begin{align}
         {\rm Pr}[``{\rm right \; decision \; between \, } \HC_0 \, {\rm and} \, \HC_1"] \leq 0.51 \;.
    \end{align}
\end{definition}
Note that the threshold $0.51$ in the definition is arbitrary chosen to be close to that of random guessing. We refer the reader to Appendix~\ref{appendix:stat-indis-basic} for a recap of basic results on hypothesis testing that we will use in the following.

We now recall that the exact kernel value between two given input vectors $\vec{x}$ and $\vec{x'}$ is fixed and not random. However, when $\kappa^{\rm FQ}(\vec{x},\vec{x'})$ concentrates towards an exponentially small value $\mu$, over the uniform distribution of input data pairs this kernel value is exponentially likely to be close to $\mu$. That is, the exact kernel value is exponentially likely to be exponentially small. 

In practice (for moderate to large-scale problems, i.e., for the problems we are ultimately interested in using quantum kernels for) we are limited to $N$ samples where $N$ scales polynomially with problem size. In this case, the distribution associated with the true kernel value, $\PC_{\kappa^{\rm FQ}(\vec{x},\vec{x'})}$,  and the uniform binary distribution, $\PC_0$, are statistically indistinguishable. This argument is illustrated in Fig.~\ref{fig-sup:pipeline-stat} and formalized in the following Supplemental Lemma.

\begin{lemma}\label{sup-lem:fidelity-swap}
Suppose that the fidelity quantum kernel $\kappa^{\rm FQ}(\vec{x},\vec{x'})$ is exponentially concentrated over input data $\vec{x}$ and $\vec{x'}$ to some exponentially small value $\mu$ according to Definition~\ref{def:exp-concentration}. For any given $\vec{x}$ and $\vec{x'}$, 
we consider measuring  $\kappa^{\rm FQ}(\vec{x},\vec{x'})$ using a SWAP test, that is, samples are drawn from the distribution $\PC_{\kappa^{\rm FQ}(\vec{x},\vec{x'})}$ in Eq.~\eqref{eq:dist-kernel-swap}. 
Let $\MC_s$ denote a set of $N \in \text{poly}(n)$ samples drawn either from $\PC_{\kappa^{\rm FQ}(\vec{x},\vec{x'})}$ or $\PC_0$ (with equal probability). 
We then perform a hypothesis test with:
    \begin{itemize}
        \item Null hypothesis $\HC_0$: $\MC_s$ is drawn from the uniform distribution $\PC_0$\,,
        \item Alternative hypothesis $\HC_s$: $\MC_s$ is drawn from $\PC_{\kappa^{\rm FQ}(\vec{x},\vec{x'})}$ \,.
    \end{itemize}
With probability at least $1 - \delta_\kappa$ over the pairs of input data $\vec{x}$ and $\vec{x'}$, 
we have that 
\begin{align}\label{eq:swaptestsuccessv2}
 {\rm Pr}\left({\rm ``right \, decision \, between \,} \HC_{s} \, {\rm and} \, \HC_0" \right)  \leq \frac{1}{2} + \epsilon \, ,
\end{align}
with $\delta_\kappa \in \OC(c^{-n})$ for some $c>1$ and $\; \epsilon  \in \OC(c'^{-n})$ for some $c' > 1$. That is, with exponentially high probability over input data pairs $(\vec{x},\vec{x'})$, the distributions $\PC_{0}$ and $\PC_{\kappa^{\rm FQ}(\vec{x},\vec{x'})}$ are statistically indistinguishable for large problem sizes with a polynomial number of samples $N$ (as per Definition~\ref{def:StatIndist}). 
\end{lemma}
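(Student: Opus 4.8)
The plan is to bootstrap from the exponential-concentration hypothesis (Definition~\ref{def:exp-concentration}) to a statement about individual kernel values, and then feed that into the generic hypothesis-testing bound of Supplemental Proposition~\ref{sup-prop:indistin-prob}. First I would convert the probabilistic concentration inequality into a high-probability bound on the \emph{magnitude} of $\kappa^{\rm FQ}(\vec{x},\vec{x'})$, using exactly the trick from the proof of Supplemental Proposition~\ref{sup-prop:fidelity-overlap}: writing $\beta\in\OC(1/b^n)$ for the numerator in Eq.~\eqref{eq:def-prob-concentration} and $\mu\in\OC(1/b'^n)$, and choosing $\delta=\beta^{1/4}$, one obtains
\begin{align}
{\rm Pr}_{\vec{x},\vec{x'}}\!\left[\,\left|\kappa^{\rm FQ}(\vec{x},\vec{x'})\right|\leq \mu+\beta^{1/4}\,\right]\geq 1-\sqrt{\beta}\;.
\end{align}
This isolates the ``good'' event over input pairs, with failure probability $\delta_\kappa:=\sqrt{\beta}\in\OC(c^{-n})$ for some $c>1$, and on this event $|\kappa^{\rm FQ}(\vec{x},\vec{x'})|$ is bounded by the exponentially small quantity $\varepsilon_{\rm con}:=\mu+\beta^{1/4}$.

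Second, I would note that the SWAP-test output distribution $\PC_{\kappa^{\rm FQ}(\vec{x},\vec{x'})}$ in Eq.~\eqref{eq:dist-kernel-swap} is precisely the uniform binary distribution $\PC_0$ of Eq.~\eqref{eq:dist-no-input} perturbed by $\varepsilon = \kappa^{\rm FQ}(\vec{x},\vec{x'})/2$ in the sense of Supplemental Proposition~\ref{sup-prop:indistin-prob}. Conditioning on the good event and invoking that proposition with $N\in\poly(n)$ samples gives
\begin{align}
{\rm Pr}\!\left({\rm ``right \, decision \, between \,} \HC_{s}\,{\rm and}\,\HC_0"\right)\leq \frac{1}{2}+\frac{N\,\left|\kappa^{\rm FQ}(\vec{x},\vec{x'})\right|}{2}\leq \frac{1}{2}+\frac{N\,\varepsilon_{\rm con}}{2}\;.
\end{align}
Since $N$ is polynomial and $\varepsilon_{\rm con}$ is exponentially small, setting $\epsilon:=N\varepsilon_{\rm con}/2\in\OC(c'^{-n})$ for some $c'>1$ yields Eq.~\eqref{eq:swaptestsuccessv2}; for $n$ large enough that $\epsilon<0.01$ this is exactly statistical indistinguishability in the sense of Definition~\ref{def:StatIndist}. (Supplemental Proposition~\ref{sup-prop:indistin-prob} itself already packages Supplemental Lemma~\ref{lemma:1norm-product} and Supplemental Lemma~\ref{lemma:guess-distribution}, so no further work on the hypothesis-testing side is needed.)

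The routine parts are the bookkeeping on the exponential rates — any finite sum of $\OC(1/b^n)$-type terms decays exponentially in the smallest base, so $c$ and $c'$ are determined by $\min(b,b')$ together with the polynomial degree of $N$ — and the elementary check that $\PC_0,\PC_{\kappa^{\rm FQ}(\vec{x},\vec{x'})}$ fit the binary template of Supplemental Proposition~\ref{sup-prop:indistin-prob}. The only genuinely delicate choice is the $\delta=\beta^{1/4}$ trade-off in the first step: it must be large enough that the good event holds with probability $1-o(1)$ yet small enough that $\mu+\delta$ remains exponentially small so that $N(\mu+\delta)\to 0$. Since this is the same balancing already carried out in Supplemental Proposition~\ref{sup-prop:fidelity-overlap}, I do not expect a real obstacle, and the proof is essentially a two-line combination of previously established ingredients.
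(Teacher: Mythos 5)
Your proposal is correct and follows essentially the same route as the paper's proof: the same Chebyshev-style inversion with $\delta_c=\beta^{1/4}$ to isolate the good event over input pairs, followed by the same invocation of Supplemental Proposition~\ref{sup-prop:indistin-prob} on the conditional success probability. The only discrepancy is the constant in the final bound — since the perturbation of $\PC_0$ is $\kappa^{\rm FQ}(\vec{x},\vec{x'})/2$, the proposition yields $\tfrac{1}{2}+N|\kappa^{\rm FQ}(\vec{x},\vec{x'})|/4$ rather than your $\tfrac{1}{2}+N|\kappa^{\rm FQ}(\vec{x},\vec{x'})|/2$, which is a looser but still valid bound and immaterial to the asymptotic conclusion.
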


\begin{figure}[t]
\includegraphics[width=1.0\columnwidth]{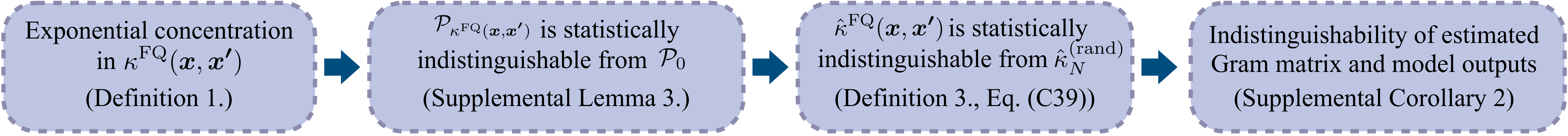}
\caption{\sth{\textbf{Summary of the impact of kernel concentration on the model outputs.} In the presence of kernel concentration, for any given input pair, its kernel value is highly likely to be exponentially close to a exponentially small value. This leads to the indistinguishability (with polynomial samples) between a distribution associated with a kernel $\PC_{\kappa^{\rm FQ}(\vec{x},\vec{x'})}$ and a data-independent uniform distribution $\PC_0$. Since the distributions themselves are indistinguishable, an estimate of kernel value (which is an empirical mean over samples) is also indistinguishable from an empirical mean over samples drawn from $\PC_0$. Ultimately, a model trained on these kernel estimates behaves indistinguishably from an data-independent model.}
}\label{fig-sup:pipeline-stat} 
\end{figure}
\begin{proof}
Our proof strategy is to show that, due to exponential concentration, the exact kernel value is very likely (i.e., with probability exponentially close to 1) to be exponentially small for any given input data pair $\vec{x}$ and $\vec{x'}$. This exponentially small kernel value corresponds to an exponentially small perturbation from the uniform distribution. Combining this observation with Supplemental Proposition~\ref{sup-prop:indistin-prob} we can then establish that it is hard to decide the correct hypothesis with polynomial resources. 

More explicitly, we first note that it follows from Supplemental Proposition~\ref{sup-prop:indistin-prob} that 
\begin{align}
 {\rm Pr}\left({\rm ``right \, decision \, between \,} \HC_{s} \, \text{and} \, \HC_0" \big| \kappa^{\rm FQ}(\vec{x},\vec{x'})= s \right)
    \leq &  \left( \frac{1}{2} + \frac{Ns}{4} \right) \, .
\end{align}
For $s \in \OC(c'^{-n})$ for some $c' > 1$ and $N \in \OC(\poly(n))$ we have $\epsilon  \in \OC(c'^{-n})$  as claimed. It remains to determine with what probability we have $s \in \OC(c'^{-n})$. 
By the assumption that the fidelity kernel concentrates to an exponentially small value over possible input data pairs (as per Definition~\ref{def:exp-concentration}), we have
\begin{align}\label{eq:appx-chebyshev}
    {\rm Pr}_{\vec{x},\vec{x'}}\left[ \left| \kappa^{\rm FQ}(\vec{x},\vec{x'}) - \mu\right| \geq \delta_c \right] \leq \frac{\beta}{\delta_c^2} \;,
\end{align}
with
\begin{align}
    \beta \in \OC(1/b^n) \;\;,\;\; \mu \in \OC(1/b'^n) \;,
\end{align}
for some $b,b'>1$. We then choose $\delta_c = \beta^{1/4}$ and invert the inequality of Eq.~\eqref{eq:appx-chebyshev}, leading to
\begin{align}
    {\rm Pr}_{\vec{x},\vec{x'}}\left[\left|\kappa^{\rm FQ}(\vec{x},\vec{x'}) - \mu \right| \leq \beta^{1/4} \right] \geq 1 - \sqrt{\beta} \; .
\end{align}
It follows that $\kappa^{\rm FQ}(\vec{x},\vec{x'})$ takes value between $\mu - \beta^{1/4}$ and $\mu + \beta^{1/4}$ (which are exponentially small) with probability at least $1 - \sqrt{\beta}$ (which is exponentially close to $1$). Recalling the form  $\PC_{\kappa^{\rm FQ}(\vec{x},\vec{x'})}$ and $\PC_{0}$ take in Eq.~\eqref{eq:dist-kernel-swap} and Eq.~\eqref{eq:dist-no-input}, we see that $\PC_{\kappa^{\rm FQ}(\vec{x},\vec{x'})}$ is an exponentially small perturbation of $\PC_{0}$ and the result follows by invoking Supplemental Proposition \ref{sup-prop:indistin-prob}.

\end{proof}

The central thesis of this section is that when fidelity kernels satisfy the conditions specified in Supplemental Lemma \ref{sup-lem:fidelity-swap}, they lead to useless models that do not generalize well. The argument is structured as follows. Due to the Representer Theorem, model outputs on unseen data are the output of some linear map on the statistical estimates obtained from experimental samples (which can be  thought of as some post-processing). Thus, if we were able to take the model outputs and distinguish them from the outputs constructed from an (essentially useless) model based on the uniform distribution, then we would succeed in the hypothesis test specified in Supplemental Lemma \ref{sup-lem:fidelity-swap}.  Hence, by contradiction, it must not be possible to distinguish the model outputs constructed from such fidelity kernel values from those outputted from a model based on the uniform distribution. These models constructed from such fidelity kernels is then clearly useless.
The last part of the argument is to observe that in an experimental setting, one has a strictly weaker setting than in the  hypothesis test inr Supplemental Lemma \ref{sup-lem:fidelity-swap}, as one does not have access to the exact kernel values. Thus, the conclusion follows by reduction.

\medskip

The above paragraph intuitively summarizes the consequences of exponential concentration on kernel-based quantum models. In what follows we present this argument in more detail. We start by defining a notion of indistinguishability for empirical outcomes sampled from distributions.

\begin{definition}[Statistical indistinguishability (of outputs)]\label{def:StatIndistOutputs}
    Consider a map $\Phi:\mathbb{R}^N\rightarrow \mathbb{R}^M$ \sth{(with $M$ being the dimension of the output)} and two distributions $\PC$ and $\QC$ which are statistically indistinguishable under $N$ samples according to Definition~\ref{def:StatIndist}. Draw $N$ respective samples from $\PC$ and $\QC$, which we respectively denote as $\MC_{\PC}$ and $\MC_{\QC}$.  We say that $\Phi(\mathcal{M}_{\mathcal{P}})$ and $\Phi(\mathcal{M}_{\mathcal{Q}})$ are statistically indistinguishable outputs.
\end{definition}

We introduce Definition~\ref{def:StatIndistOutputs} to describe the outputs and subsequent processing of samples drawn from indistinguishable distributions. Specifically, any distribution which satisfies Definition~\ref{def:StatIndist} automatically has outputs which satisfy Definition~\ref{def:StatIndistOutputs}. In addition, as $\Phi(\mathcal{M}_{\mathcal{P}})$ and $\Phi(\mathcal{M}_{\mathcal{Q}})$ are constructed from samples of distributions, they themselves are random variables. 

As an example, we would say that an experimentally obtained kernel value (an empirical mean) between any given input data $\vec{x}$ and $\vec{x'}$ estimated with a polynomial number of measurement outcomes/samples is statistically indistinguishable (with probability exponentially close to 1) from the empirical mean of the samples from the uniform distribution
\begin{align}\label{eq:k0-no-input-appx}
    \widehat{\kappa}^{(\rm rand)}_N = \frac{1}{N} \sum_{m = 1}^N \Tilde{\lambda}_m \;, 
\end{align}
where each $\Tilde{\lambda}_m$ equally likely takes value $+1$ or $-1$. 

\medskip

Given a training dataset $\SC$ of polynomial size $N_s$, consider the set of kernel values over possible pairs in $\SC$ (excluding the trivial ones where $\kappa^{\rm FQ}(\vec{x},\vec{x}) = 1 $) 
\begin{align}\label{eq:appx-kernel-set}
    \KC = \left\{ \kappa^{\rm FQ}(\vec{x}, \vec{x'}) \; | \; \forall \{\vec{x}, \vec{x'}\} \subseteq \SC \; ; \; \vec{x} \neq \vec{x'} \right\} \;.
\end{align}
Due to exponential concentration, each kernel value in this set is highly likely to be exponentially small and so Supplementary Lemma~\ref{sup-lem:fidelity-swap} will apply to each of these kernel values. 

It then follows that any model computed by post-processing these samples is also, with high probability, statistically indistinguishable (as per Definition~\ref{def:StatIndistOutputs}) from the model produced from the uniform binary distribution for each kernel entry. That is, the model predictions are independent of the input data and for all intents and purposes useless. This holds for any kernel method including both supervised and unsupervised learning tasks. For concreteness, let us again consider kernel ridge regression.

\begin{supplemental_corollary}\label{sup-coro:fidelity-swap}[Full version of Proposition~\ref{prop-stat-kernel-swap} and the SWAP part of Corollary~\ref{coro:opt-params}]
Consider a kernel ridge regression task with the fidelity quantum kernel, a squared loss function and a training dataset $\SC = \{\vec{x}_i, y_i \}_i$ of  size  $N_s \in \OC(\poly(n))$. 
Given that the kernel value is estimated using the SWAP test and under the same assumptions as in Supplemental Lemma~\ref{sup-lem:fidelity-swap},  the following statements hold with probability exponentially close to $1$ (i.e., with the probability $1 - \delta_a$ with $\delta_a \in \OC(\tilde{c}^{-n})$ for $\tilde{c}>1$)
\begin{itemize}
    \item The estimated Gram matrix $\hat{K}$ is statistically indistinguishable (Def.~\ref{def:StatIndistOutputs}) from an input-data-independent random matrix $\widehat{K}_N^{\rm (rand)}$  whose diagonal elements are $1$ and the off-diagonal elements are instances of $\widehat{\kappa}_N^{\rm (rand)}$ in Eq.~\eqref{eq:k0-no-input-appx}.
    \item The estimated optimal parameters are statistically indistinguishable (Def.~\ref{def:StatIndistOutputs}) from the input-data-independent random variables
\begin{align}
    \vec{a}_{\rm rand}(\vec{y}, \lambda) = \left( \widehat{K}_N^{\rm (rand)} - \lambda \mathbb{1} \right)^{-1}\vec{y}\,, 
\end{align}
where $ \widehat{K}_N^{\rm (rand)}$ is a random matrix whose diagonal elements are $1$ and off-diagonal elements are instances of $\widehat{\kappa}_{N}^{\rm (rand)}$ in Eq.~\eqref{eq:k0-no-input-appx}, $\vec{y}$ is a vector of output data points with its $i^{\rm th}$ element equal to $y_i$ and $\lambda$ is the regularization parameter. 
\item The model prediction on an unseen input data $\vec{x}$ is  statistically indistinguishable (Def.~\ref{def:StatIndistOutputs}) from the input-data-independent random variable
\begin{align}
    h_{\rm rand} = \vec{a}_{\rm rand}(\vec{y}, \lambda) ^T\vec{k}_N^{(\rm rand)} \;,
\end{align}
where $\vec{k}_N^{(\rm rand)}$ is a random vector where each of its element is an instance of $\widehat{\kappa}_{N}^{\rm (rand)}$.
\end{itemize}
\end{supplemental_corollary}
\begin{proof}
We use Supplementary Lemma~\ref{sup-lem:fidelity-swap} with a union bound over the individual kernel values. Since $N_s \in \OC(\poly(n))$, there are a polynomial number of kernel values to be estimated and each of the estimated kernel value is, with exponentially high probability, statistically indistinguishable to an instance of $\widehat{\kappa}^{(\rm rand)}_N$.

More concretely, for the Gram matrix, we are required to estimate each kernel value in the kernel set $\KC$ in Eq.~\eqref{eq:appx-kernel-set} i.e., the off-diagonal elements. This amounts to $N_s(N_s-1)/2$ unique kernel values. Denote $\kappa_i$ as an $i^{\rm th}$ element in $\KC$ with $i$ running from $1$ to $N_s(N_s-1)/2$. Let $E_i$ be the event that the estimate of $\kappa_i$ is statistically indistinguishable from $\widehat{\kappa}^{\rm rand}_N$. From Supplementary Lemma~\ref{sup-lem:fidelity-swap}, we have
\begin{align}\label{eq:appx-proof-coro-1}
    {\rm Pr}[E_i] \geq 1 - \delta_\kappa \;\;, \forall \kappa_i \in \KC \;,
\end{align}
with $\delta_\kappa \in \OC(c^{-n})$ for $c > 1$. Now, the probability that all $E_i$ occur can be bounded as
\begin{align}
    {\rm Pr}\left[ \bigcap_{i} E_i \right] & =  1 - {\rm Pr}\left[ \bigcup_i \bar{E}_i \right]  \\
    & \geq  1 - \sum_{i=1}^{|\KC|} {\rm Pr}\left[\bar{E}_i \right] \\
    & \geq 1 - \frac{N_s (N_s-1)\delta_k}{2} \;,
\end{align}
where $\bar{E}_i$ is a conjugate event of $E_i$, we use the union bound in the second line and use ${\rm Pr}[\bar{E}_i] \leq \delta_k$ by reversing the final inequality in Eq.~\eqref{eq:appx-proof-coro-1}. Since $N_s \in \OC(\poly(n))$, we have that the probability that each of the kernel values are statistically indistinguishable (and so the Gram matrix is statistically indistinguishable) is $1 - \delta_K$ with $\delta_K := N_s (N_s-1)\delta_k/2 \in \OC(\tilde{c}^{-n})$ for some $\tilde{c} >1$.

The statistical indistinguishability of the optimal parameters directly follows from the above result. This is because estimating the optimal parameters is simply a post processing of the Gram matrix.

Lastly, to show indistinguishability of the model predictions, we have to take into account the kernel values for the test input. This requires estimating an additional $N_s$ kernel values. On repeating the same argument using the union bound, it follows that the probability that all estimated kernel values (from the Gram matrix and new ones) are indistinguishable from $\widehat{\kappa}^{(\rm rand)}_N$ is exponentially close to $1$.
\end{proof}

In general a model that generalizes well must produce outputs that are data-dependent.  Thus, these outputs must at minimum be distinguishable from a data-independent distribution. Hence, the models trained from these estimated Gram matrix have poor generalization. Lastly, similar to the Loschmidt Echo test, the training error can remain low as the correct output labels are effectively cooked into the model for the input data. Thus the model can "train" well. However, the trained model is insensitive to input data and thus poorly generalizes.}

\begin{figure}[t]
\includegraphics[width=0.6\columnwidth]{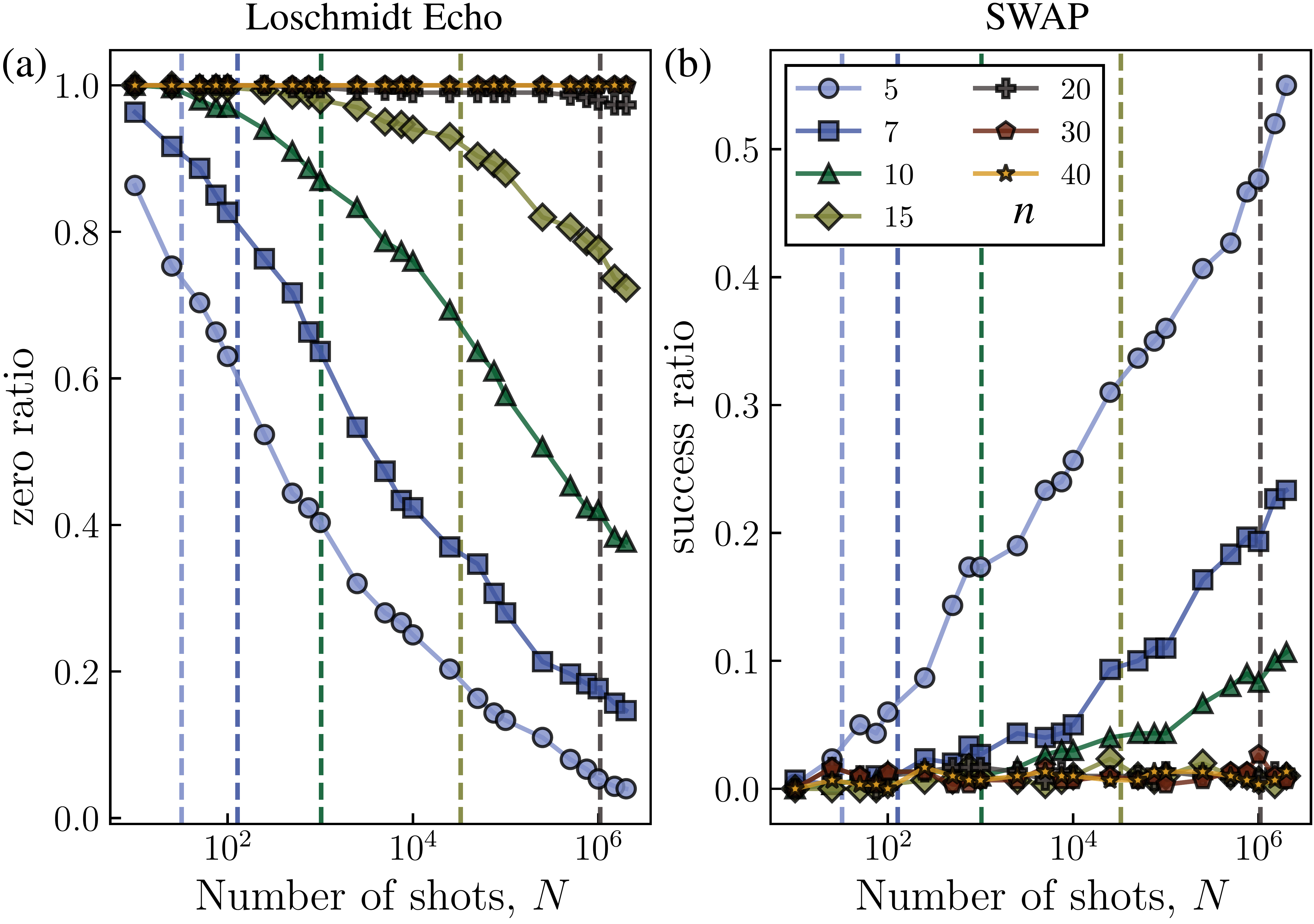}
\caption{\textbf{Effect of exponential concentration on estimated Gram matrix.} In panel (a), where the kernel values are estimated with the Loschmidt Echo test, we plot the zero ratio (i.e. the number of estimates that are zero compared to the total number of kernel values) \sth{with respect to the number of measurement shots used $N$, for different total number of qubits $n$}. In panel (b), where the SWAP test is employed, we plot the success ratio (which is the number of estimates that pass the binomial test from the uniform distribution with p-value$=0.01$ compared to the total number of kernel values). The x-axis indicates the number of shots used per kernel value and vertical lines indicate the dimension of the (exponentially increasing) Hilbert space $2^n$. Here the training data size is $N_s = 25$.
}\label{fig:effect-exp-con-gram}
\end{figure}

\subsubsection{Numerical simulation}\label{appendix:fidelity-numerics}

Fig.~\ref{fig:effect-exp-con-gram} demonstrates how kernel concentration affects the statistical estimates of kernel values via the Loschmidt Echo test (in panel (a)) and the SWAP test (in panel (b)) via a numerical example. We consider a training set where each input data point is a $n-$dimensional vector with each element uniformly drawn from $[0, 2\pi]$, and is encoded via a tensor product embedding which consists of a layer of single-qubit $R_y$ rotation gates. In this setting, kernel values exponentially concentrate to \sth{an exponentially small value $\mu$} (see Sec~\ref{sec:global} in the main text). Each unique off-diagonal element in the Gram matrix is evaluated with an increasing number of measurement shots (as indicated in the x-axis of both panels). Note that in our analysis we fix diagonal elements to be $1$ without evaluation as one may do in realistic setting.

In panel (a) where the Loschmidt Echo test is employed, we plot the ratio of the number of statistical estimates that are zero to the total number of kernel values as a function of qubits and measurement shots. When the ratio is $1$, this indicates that all kernel estimates are zero. In general, this fraction becomes smaller with increasing measurement shots. We also observe that in order to achieve a fixed ratio of non-zero values (e.g. $\sim 0.75$), exponentially many measurement shots are required i.e., $N\in\Omega(2^n)$. 
Particularly, at $30$ and $40$ qubits, all of the estimates are zero even with $2\times 10^6$ shots per kernel value.

In panel (b) where the kernel values are estimated with SWAP tests, for each individual kernel value, we perform a binomial hypothesis test on the measurement outcomes to see whether or not there is sufficient statistical significance to distinguish the shots from those obtained from the uniform distribution. 
We plot the ratio of the estimates that pass the binomial test (with p-value below $0.01$) as a function of qubits and measurement shots. A low ratio indicates that most of the estimates are statistically indistinguishable from the ones estimated with the data-independent uniform distribution. It can be seen from the panel that to maintain a constant success ratio the number of measurement shots needs to scale at least exponentially with the number of qubits.

\subsection{Projected quantum kernel}\label{appendix:projected}
As discussed in our main text, the projected quantum kernel is an alternative approach to comparing data-encoded quantum states. It takes the form
\begin{align}
    \kappa^{PQ}(\vec{x},\vec{x'}) = {\rm exp}\left( - \gamma \sum_{k=1}^n \| \rho_k(\vec{x}) - \rho_k(\vec{x'})\|^2_2\right) \;, \label{eq:projected-gaussian-kernel-appx}
\end{align}
where $\rho_k(\vec{x})$ is the reduced state of $\rho(\vec{x})$ on the $k$-th qubit, $\|\cdot\|_2$ is the Schatten 2-norm and $\gamma$ is a positive hyperparameter. 

Estimating the projected quantum kernel in practice requires us to first obtain statistical estimates of the 2-norms on all individual qubits and then classically post-process them to estimate the kernel value. Here we consider two common strategies to estimate the 2-norms. 

\medskip

\noindent\underline{Tomography strategy:} First, we perform full state tomography on the reduced density matrices. 
As these are single-qubit states the number of required measurements is constant with respect to the number of qubits. In particular, the reduced state to the $k^{\rm th}$ qubit can be expressed in the Pauli basis as
\begin{align} \label{eq:reduced-state-pauli-basis}
    \rho_{k}(\vec{x}) = \frac{1}{2} \left(\mathbb{1}_k + c_{x_k}(\vec{x}) X_k + c_{y_k}(\vec{x}) Y_k + c_{z_k}(\vec{x}) Z_k \right) \;,
\end{align}
where $\{X_k, Y_k, Z_k\}$ are single X, Y and Z Pauli matrices on the qubit $k$ with corresponding coefficients $\{ c_{x_k}(\vec{x}), c_{y_k}(\vec{x}), c_{z_k}(\vec{x})\}$. Each coefficient is simply the expectation value with the respective Pauli observable 
\begin{align}\label{eq:appx-coeff-pqk}
    c_{\sigma_k}(\vec{x}) = \Tr[ \rho_{k}(\vec{x}) \sigma_k ] \;,
\end{align}
with $\sigma_k \in \{X_k,Y_k,Z_k\}$. \sth{To estimate each of the $3n$ coefficients, we can make local measurements in each respective basis. The measurement outcome is either $+1$ with probability $p_+ = 1/2 + c_{\sigma_k}(\vec{x})/2$ and $-1$ with probability $1 - p_+$. That is, we have the distribution
\begin{align}\label{eq:appx-coeff-dist-pqk}
    \PC_{{\sigma_k}, \vec{x}} = \left\{\frac{1 + c_{\sigma_k}(\vec{x})}{2}, \frac{1 - c_{\sigma_k}(\vec{x})}{2} \right\} \;.
\end{align}
After some specified $N$ measurement samples, the statistical estimate of the expectation value is obtained via taking their empirical mean in the usual way. After an estimate of each reduced density matrix  is obtained for all qubits and data values, an estimate of the kernel values in Eq.~\eqref{eq:projected-gaussian-kernel-appx} can be evaluated via matrix algebra.} 

\medskip

\noindent\underline{Local SWAP strategy:} Alternatively, we can employ local SWAP tests to evaluate the 2-norms. In particular, by explicitly expanding the 2-norm, we have
\begin{align}
    \| \rho_k(\boldsymbol{x}) - \rho_k(\boldsymbol{x'})\|_2^2 = \Tr[\rho_k^2(\boldsymbol{x})] + \Tr[\rho_k^2(\boldsymbol{x'})] - 2 \Tr[\rho_k(\boldsymbol{x})\rho_k(\boldsymbol{x'})] \label{eq:appx-2norm-expand}\;.
\end{align}
That is, the 2-norm distance contains the overlap between two reduced states $\Tr[\rho_k(\vec{x}) \rho_k(\vec{x'})]$ and the purity of each individual reduced state $\Tr[\rho_k^2(\vec{x})]$, $\Tr[\rho_k^2(\vec{x'})]$. 
Each term in Eq.~\eqref{eq:appx-2norm-expand} can be estimated using the local SWAP test. 
Similar to the fidelity case previously, each term in the 2-norm is equal to the expectation value of the Pauli-Z operator on an ancilla qubit where each measurement gives either $+1$ or $-1$. \sth{More precisely, denote $m_k(\vec{x}, \vec{x'}) = \Tr[\rho_k(\boldsymbol{x})\rho_k(\boldsymbol{x'})] $. When making measurements, an individual outcome takes $+1$ with probability $p_+ = 1/2 + m_k(\vec{x},\vec{x'})/2$ and $-1$ with $p_+ = 1 - p_-$ i.e. we sample from the distribution
\begin{align}
    \PC_{m_k(\vec{x},\vec{x'})} = \left\{ \frac{1 + m_k(\vec{x},\vec{x'})}{2},\frac{1 - m_k(\vec{x},\vec{x'})}{2}  \right\} \;.
\end{align}
Then, the statistical estimate of $m_k(\vec{x},\vec{x'})$ can be obtained as an empirical mean of these outcomes.}

\subsubsection{Consequence of exponential concentration}\label{appendix:projected1}

To see how concentration affects the projected quantum kernel in practice, we take as our starting point the assumption that 
\begin{align}\label{eq:2norm-vanish}
    \Ebb_{\vec{x} \in \XC} \left\| \rho_{k}(\vec{x}) - \frac{\mathbb{1}_k}{2} \right\|_2 \leq \beta \in \OC\left(\frac{1}{b^n} \right) \;,
\end{align}
for all $k \in \{1,...,n\}$, where the expectation value is taken over some chosen distribution over $\XC$. In the later sections, we show that all sources of the exponential concentration in the projected quantum kernel also lead to this exponential vanishing of the 2-norm distance between the reduced quantum states. The following lemma shows the connection between this reduced state concentration and kernel concentration. 
\begin{lemma}
Given that the Eq.~\eqref{eq:2norm-vanish} is satisfied, it follows that the projected quantum kernel exponentially concentrates 
\begin{align}\label{eq:kernelconc}
    \Pr_{\vec{x},\vec{x'} \in \XC}\left[  \left| \kappa^{PQ}(\vec{x},\vec{x'}) - \mu \right|  \geq \delta \right] \leq \frac{\beta}{\delta^2} \;,
\end{align}
where $\beta \in \OC(1/b^n)$ for some $b>1$. 
\end{lemma}
We defer the proof to Appendix~\ref{appendix:pqk-proof}.

\medskip

We now show that if the distance of the reduced state from the maximally mixed state is exponentially small, i.e. Eq.~\eqref{eq:2norm-vanish} holds, then for both tomography and local SWAP strategies, the (data-dependent) distribution associated with each quantity of interest is statistically indistinguishable from a fixed distribution, as per Definition~\ref{def:StatIndist}.

\sth{\begin{lemma}\label{sup-lem:pqk-tomo-swap}
Assume the 2-norm distance between the reduced data encoding states exponentially vanishes as in Eq.~\eqref{eq:2norm-vanish}. 
Consider the following two scenarios
\begin{enumerate}
    \item For the tomography strategy, suppose we measure any coefficient $c_{\sigma_k(\vec{x})}$ of a reduced state on the qubit $k$ in Eq.~\eqref{eq:appx-coeff-pqk} for a given input data $\vec{x}$ with a polynomial number of measurement shots. The associated distribution $\PC_{\sigma_k, \vec{x}}$ defined in Eq.~\eqref{eq:appx-coeff-dist-pqk} is statistically indistinguishable (as per Definition~\ref{def:StatIndist}) from the data independent uniform distribution $\PC_0 = \{1/2, 1/2\}$ in Eq.~\eqref{eq:dist-no-input} (with the probability exponentially close to $1$). 
    \item For the local SWAP strategy, suppose we measure any one of the terms $m_k(\vec{x},\vec{x'})$ in Eq.~\eqref{eq:appx-2norm-expand} for a given input data pair $\vec{x}$ and $\vec{x'}$ with a polynomial number of measurement shots. The associated distribution $\PC_{m_k(\vec{x},\vec{x'})}$ is statistically indistinguishable from a data-independent fixed distribution $\Tilde{\PC}_0 = \{3/4, 1/4\}$ (with the probability exponentially close to 1).
\end{enumerate}
\end{lemma}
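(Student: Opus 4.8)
The plan is to reduce both scenarios to Supplemental Proposition~\ref{sup-prop:indistin-prob}: in each case I will exhibit the data-dependent binary distribution of interest as an exponentially small perturbation of the stated fixed distribution, with this holding with probability exponentially close to $1$ over the data, and then invoke the hypothesis-testing bound with $N\in\poly(n)$ samples. The step common to both scenarios is to pass from the \emph{average} smallness in Eq.~\eqref{eq:2norm-vanish} to a \emph{pointwise} smallness. Since $\|\rho_k(\vec{x})-\mathbb{1}_k/2\|_2\geq 0$, Markov's inequality applied to Eq.~\eqref{eq:2norm-vanish} gives ${\rm Pr}_{\vec{x}}[\|\rho_k(\vec{x})-\mathbb{1}_k/2\|_2\geq\sqrt{\beta}]\leq\sqrt{\beta}$, so with probability at least $1-\sqrt{\beta}$ (exponentially close to $1$, as $\beta\in\OC(1/b^n)$) the $k$-th reduced state lies within Hilbert--Schmidt distance $\sqrt{\beta}$ of $\mathbb{1}_k/2$. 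I will apply this to $\vec{x}$ in Scenario~1, and to both $\vec{x}$ and $\vec{x'}$ in Scenario~2, absorbing the (here at most two) union events into the final probability.

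For Scenario~1 I would expand the reduced state in the Pauli basis as in Eq.~\eqref{eq:reduced-state-pauli-basis}, so that $\rho_k(\vec{x})-\mathbb{1}_k/2=\tfrac12\sum_\sigma c_{\sigma_k}(\vec{x})\,\sigma_k$, and use $\Tr[\sigma_k\sigma_k']=2\delta_{\sigma\sigma'}$ to obtain $\|\rho_k(\vec{x})-\mathbb{1}_k/2\|_2^2=\tfrac12\sum_\sigma c_{\sigma_k}(\vec{x})^2$. Hence $|c_{\sigma_k}(\vec{x})|\leq\sqrt{2}\,\|\rho_k(\vec{x})-\mathbb{1}_k/2\|_2\leq\sqrt{2\beta}$ with probability at least $1-\sqrt{\beta}$. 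The distribution $\PC_{\sigma_k,\vec{x}}$ of Eq.~\eqref{eq:appx-coeff-dist-pqk} is then exactly $\PC_0$ from Eq.~\eqref{eq:dist-no-input} perturbed by $\varepsilon=c_{\sigma_k}(\vec{x})/2\in\OC(1/b^{n/2})$, and Supplemental Proposition~\ref{sup-prop:indistin-prob} bounds the right-decision probability by $\tfrac12+N|\varepsilon|/2$, which is $<0.51$ for large $n$ when $N\in\poly(n)$ — i.e., statistical indistinguishability from $\PC_0$ as per Definition~\ref{def:StatIndist}.

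For Scenario~2 I would write $\rho_k(\vec{x})=\mathbb{1}_k/2+\Delta_k(\vec{x})$ and $\rho_k(\vec{x'})=\mathbb{1}_k/2+\Delta_k(\vec{x'})$, noting $\Tr[\Delta_k(\vec{x})]=\Tr[\Delta_k(\vec{x'})]=0$. Multiplying out, and using $\Tr[\mathbb{1}_k/4]=\tfrac12$ together with the tracelessness of $\Delta_k$ to kill the $\mathbb{1}_k$ cross-terms, gives $m_k(\vec{x},\vec{x'})=\Tr[\rho_k(\vec{x})\rho_k(\vec{x'})]=\tfrac12+\Tr[\Delta_k(\vec{x})\Delta_k(\vec{x'})]$, and Cauchy--Schwarz for the Hilbert--Schmidt inner product yields $|m_k(\vec{x},\vec{x'})-\tfrac12|\leq\|\Delta_k(\vec{x})\|_2\|\Delta_k(\vec{x'})\|_2\leq\beta$ on the (exponentially likely) event that both reduced states are $\sqrt{\beta}$-close to $\mathbb{1}_k/2$. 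Since $\widetilde{\PC}_0=\{3/4,1/4\}$ is precisely $\PC_{m_k}$ evaluated at $m_k=1/2$, the distribution $\PC_{m_k(\vec{x},\vec{x'})}$ is $\widetilde{\PC}_0$ perturbed by $\varepsilon=(m_k(\vec{x},\vec{x'})-1/2)/2\in\OC(1/b^n)$, and Supplemental Proposition~\ref{sup-prop:indistin-prob} again gives indistinguishability with $N\in\poly(n)$. The purity terms $\Tr[\rho_k^2(\vec{x})]$ appearing in Eq.~\eqref{eq:appx-2norm-expand} are covered by the same computation with $\vec{x'}=\vec{x}$.

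The only genuinely delicate points are the probabilistic bookkeeping in the first step — the hypothesis constrains only an expectation, so the conclusion must be probabilistic (a Markov bound) and one must track the union over the $\vec{x},\vec{x'}$ events — and the observation that the local SWAP outcome concentrates to $1/2$ rather than $0$, which is why the reference distribution is $\{3/4,1/4\}$ rather than the uniform one; everything else follows immediately from Pauli orthogonality, Cauchy--Schwarz, and Supplemental Proposition~\ref{sup-prop:indistin-prob}.
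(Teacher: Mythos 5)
Your proof is correct and reaches the same endpoint as the paper's (an exponentially small perturbation of the fixed distribution, holding with probability exponentially close to $1$ over the data, fed into Supplemental Proposition~\ref{sup-prop:indistin-prob}), but the route differs in a way worth noting. The paper bounds the mean and the variance of the data-dependent quantity itself ($c_{\sigma_k}(\vec{x})$ or $m_k(\vec{x},\vec{x'})$) via H\"older's inequality applied to $\Tr[(\rho_k-\mathbb{1}_k/2)\,\cdot\,]$, and then applies Chebyshev's inequality to that quantity; this is a two-stage probabilistic argument whose deviation scale comes out as $\sqrt{\sigma}$ with $\sigma^2=\Var$. You instead apply Markov's inequality once, directly to the norm $\|\rho_k(\vec{x})-\mathbb{1}_k/2\|_2$, and then convert the resulting pointwise smallness into smallness of the quantity of interest by purely deterministic identities: Pauli orthogonality giving $\|\rho_k-\mathbb{1}_k/2\|_2^2=\tfrac12\sum_\sigma c_{\sigma_k}^2$ in Scenario~1, and the exact decomposition $m_k=\tfrac12+\Tr[\Delta_k(\vec{x})\Delta_k(\vec{x'})]$ plus Cauchy--Schwarz in Scenario~2. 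Your version is slightly cleaner (one probabilistic step instead of two, and it makes transparent why the SWAP reference point is $1/2$ rather than $0$), and it yields a marginally tighter deviation bound ($\OC(\sqrt{\beta})$ versus the paper's $\OC(\beta^{1/4})$-type scale), though both are exponentially small so the conclusion is identical. Your bookkeeping of the union over the $\vec{x}$ and $\vec{x'}$ events, and the remark that the purity terms are the $\vec{x'}=\vec{x}$ case, are both handled correctly.
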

Again, we provide the proof in Appendix~\ref{appendix:pqk-proof}.

Supplemental Lemma~\ref{sup-lem:pqk-tomo-swap} plays the same pivotal role for the projected kernel as Supplemental Lemma~\ref{sup-lem:fidelity-swap} for the fidelity kernel. They both capture the statistical indistinguishability of the distributions obtained with quantum computers when performing an experiment. Therefore, an identical reasoning can be applied here, that is, we argue that any trained model built frompolynomial samples is insensitive to input data and thus poorly generalizes. Similar to before, we note that estimating projected kernel values (via Eq.~\eqref{eq:projected-gaussian-kernel-appx}) and model predictions (via the Representer Theorem) can be seen as forms of post processing measurement outcomes. If these model predictions can be distinguished from the model predictions constructed based on the fixed distribution, then the hypothesis task described in Supplemental Lemma~\ref{sup-lem:pqk-tomo-swap} would be succeeded and hence contradict the conclusion of the lemma. Finally, we note that the setting in Supplemental Lemma~\ref{sup-lem:pqk-tomo-swap} is strictly stronger than what we have in practice where there is no access to exact quantities of interest (i.e., coefficients in the tomography strategy and purity/overlap terms in the local SWAP strategy) and in turn exact kernel values.

\bigskip

In the following, we formalize the above. We will again use our notion of statistical indistinguishability out outputs from Definition~\ref{def:StatIndistOutputs}. Following Supplemental Lemma~\ref{sup-lem:pqk-tomo-swap}, we will observe the following quantities to be indistinguishable:
\begin{itemize}
    \item For the tomography strategy, the coefficient $c_{\sigma_k}(\vec{x})$ in Eq.~\eqref{eq:appx-coeff-pqk} estimated with polynomial samples is statistically indistinguishable from $\widehat{\kappa}^{(\rm rand)}_N$ in Eq.~\eqref{eq:k0-no-input-appx}.
    \item For the local SWAP strategy, the statistical estimate of the purity/overlap term is indistinguishable from another data-independent random variable i.e., 
\begin{align}\label{eq:k0-no-input2}
    \widehat{\kappa}^{(\rm biased \; rand)}_N = \frac{1}{N} \sum_{m = 1}^N \Tilde{\lambda}_m \;, 
\end{align}
where $\Tilde{\lambda}_m$ takes $+1$ value with probability $3/4$ and $-1$ value with probability $1/4$.
\end{itemize} 
To obtain an estimate of a 2-norm distance on qubit $k$, one has to estimate either all coefficients for the reduced states in Eq.~\eqref{eq:reduced-state-pauli-basis} for the tomography strategy, or all purity/overlap terms Eq.~\eqref{eq:appx-coeff-dist-pqk} for the local SWAP strategy. Each term follows Supplemental Lemma~\ref{sup-lem:pqk-tomo-swap}. This leads to the statistical indistinguishability between the estimated 2-norm on the qubit $k$ and some data-independent random variable (with high probability). 

To construct an estimate for a kernel value one requires statistical estimates of the 2-norms associated with all single qubit subsystems and then one sums them as in Eq.~\eqref{eq:projected-gaussian-kernel-appx}. In the following Supplemental Proposition~\ref{sup-prop-norm-indis-pqk} we argue these kernel values are indistinguishable. 

\begin{supplemental_proposition}\label{sup-prop-norm-indis-pqk}
Consider the same assumption as in Supplemental Lemma~\ref{sup-lem:pqk-tomo-swap} and a polynomial number of measurement shots used to estimate the 2-norms. For any input data pair $\vec{x}$ and $\vec{x'}$, we have the following, with the probability exponentially close to $1$ (i.e., with the probability $1 - \delta_a$ with $\delta_a \in \OC(\tilde{c}^{-n})$ for $\tilde{c}>1$)

\begin{itemize}
    \item For the tomography strategy, the statistical estimate of the 2-norm between two reduced quantum states is statistically indistinguishable (as per Defintion \ref{def:StatIndistOutputs}) from an input-data-independent random variable
\begin{align}
    \ell_N^{(\rm rand, T)} = \frac{1}{2}\left[ \left(\widehat{\kappa}_{N,1}^{(\rm rand)} - \widehat{\kappa}_{N,2}^{(\rm rand)} \right)^2 + \left(\widehat{\kappa}_{N,3}^{(\rm rand)} - \widehat{\kappa}_{N,4}^{(\rm rand)} \right)^2 + \left(\widehat{\kappa}_{N,5}^{(\rm rand)} - \widehat{\kappa}_{N,6}^{(\rm rand)} \right)^2 \right] \;,
\end{align}
where $\left\{\widehat{\kappa}_{N,i}^{(\rm rand)} \right\}_{i=1}^6$ are different instances of $\widehat{\kappa}_N^{(\rm rand)}$ defined in Eq.~\eqref{eq:k0-no-input-appx}.
\item For the SWAP strategy, the statistical estimate of the 2-norm between two reduced quantum states is statistically indistinguishable from an input-data-independent random variable
\begin{align}
     \ell_N^{(\rm rand, S)} = \widehat{\kappa}^{(\rm biased \; rand)}_{N,1} + \widehat{\kappa}^{(\rm biased \; rand)}_{N,2} - 2 \widehat{\kappa}^{(\rm biased \; rand)}_{N,3} \;,
\end{align}
where  $\left\{\widehat{\kappa}^{(\rm biased \; rand)}_{N,i} \right\}_{i=1}^3$ are different instances of $\widehat{\kappa}^{(\rm biased \; rand)}_N$ as defined in Eq.~\eqref{eq:k0-no-input2}.
\end{itemize}
In addition, the estimate of the projected quantum kernel is statistically indistinguishable from an input-data-independent random variable
\begin{align} \label{eq:rand-pqk-tomography}
    \widehat{\kappa}^{(\rm rand, PQ)}_N = \exp\left[ \gamma \sum_{k=1}^n \ell_{N,k}^{(\rm rand)}\right] \;, 
\end{align}
where $\left\{\ell_{N,k}^{(\rm rand)}\right\}_{k=1}^n$ are different instances of either $\ell_N^{(\rm rand, T)}$ for the tomography strategy, or $\ell_N^{(\rm rand, S)}$ for the SWAP strategy.
\end{supplemental_proposition}
The proof is detailed in Appendix~\ref{appendix:pqk-proof}.

Identical to the fidelity kernel setting, we now argue that the statistical indistinguishability between estimated projected kernel value and some data-independent random variable leads to estimated model predictions that are insensitive to unseen input data (with high probability). To proceed, consider a training dataset $\SC$ of polynomial size $N_s$ which corresponds to a set that contains projected kernel values over possible training data pairs (excluding the trivial ones  $\kappa^{\rm PQ}(\vec{x},\vec{x}) = 1 $). 
\begin{align}\label{eq:appx-kernel-set-projected}
    \KC_{\rm PQ} = \left\{ \kappa^{\rm PQ}(\vec{x}, \vec{x'}) \; | \; \forall \vec{x}, \vec{x'} \in \SC \; ; \; \vec{x} \neq \vec{x'} \right\} \;.
\end{align}

Supplemental Proposition~\ref{sup-prop-norm-indis-pqk} applies for each kernel value in $\KC_{\rm PQ}$, leading to its estimated kernel being indistinguishable with probability exponentially close to $1$. Since the cardinality of $\KC_{\rm PQ}$ is at most polynomial in the number of qubits, it follows that the probability of all estimated kernel values being indistinguishable remains exponentially close to $1$. This leads to the indistinguishability of the estimated Gram matrix from some data-independent random matrix. This precludes the usefulness of the rest of the kernel methods pipeline. Again we demonstrate this for the task of kernel ridge regression and provide the form of an input-data-independent random variable that the model prediction approximately takes.
\begin{supplemental_corollary} \label{sup-coro:indis-model-pqk}
Consider a kernel ridge regression task with the projected quantum kernel, a squared loss function and a training dataset $\SC = \{\vec{x}, y_i \}_i$ with $N_s \in \OC(\poly(n))$. Given that the kernel value is estimated using either the SWAP test or tomography strategies and under the same assumption as in Supplemental Lemma~\ref{sup-lem:pqk-tomo-swap}, the following statements hold with probability exponentially close to $1$ (i.e., with the probability $1 - \delta_a$ with $\delta_a \in \OC(\tilde{c}^{-n})$ for $\tilde{c}>1$)
\begin{itemize}
    \item The estimated Gram matrix $\widehat{K}$ is statistically indistinguishable (as per Definition~\ref{def:StatIndistOutputs}) from an input-data-independent matrix $\widehat{K}_N^{\rm (rand, PQ)}$ whose diagonal elements are $1$ and off-diagonal elements are instances of $\widehat{\kappa}^{(\rm rand, PQ)}_N$ determined by Supplemental Proposition~\ref{sup-prop-norm-indis-pqk}.
    \item The estimated optimal parameters are statistically indistinguishable from the input-data-independent random variables
\begin{align}
    \vec{a}_{\rm rand}(\vec{y}, \lambda) = \left( \widehat{K}_N^{\rm (rand, PQ)} - \lambda \mathbb{1} \right)^{-1}\vec{y} \;. 
\end{align}
where $\vec{y}$ is a vector of output data points with its $i^{\rm th}$ element equal to $y_i$ and $\lambda$ is the regularization parameter. 
\item The model prediction on an unseen input data $\vec{x}$ is  statistically indistinguishable from the input data independent random variable
\begin{align}
    h_{\rm rand} = \vec{a}_{\rm rand}(\vec{y}, \lambda) ^T\vec{k}_N^{(\rm rand, PQ)} \;,
\end{align}
where $\vec{k}_N^{(\rm rand, PQ)}$ is a random vector where each of its element is an instance of $\widehat{\kappa}^{(\rm rand, PQ)}_N$ in Eq.~\eqref{eq:rand-pqk-tomography}.
\end{itemize}
\end{supplemental_corollary}
We refer the reader to Appendix~\ref{appendix:pqk-proof} for the proof of the corollary.

Again, as in the case of the quantum fidelity kernel, the training process hard encodes the training label which makes it possible to have a small training error. However, crucially the model does not obtain any information about the input data both during the training and the prediction phases which results in very poor generalization.} 

\begin{figure}[t]
\includegraphics[width=.6\columnwidth]{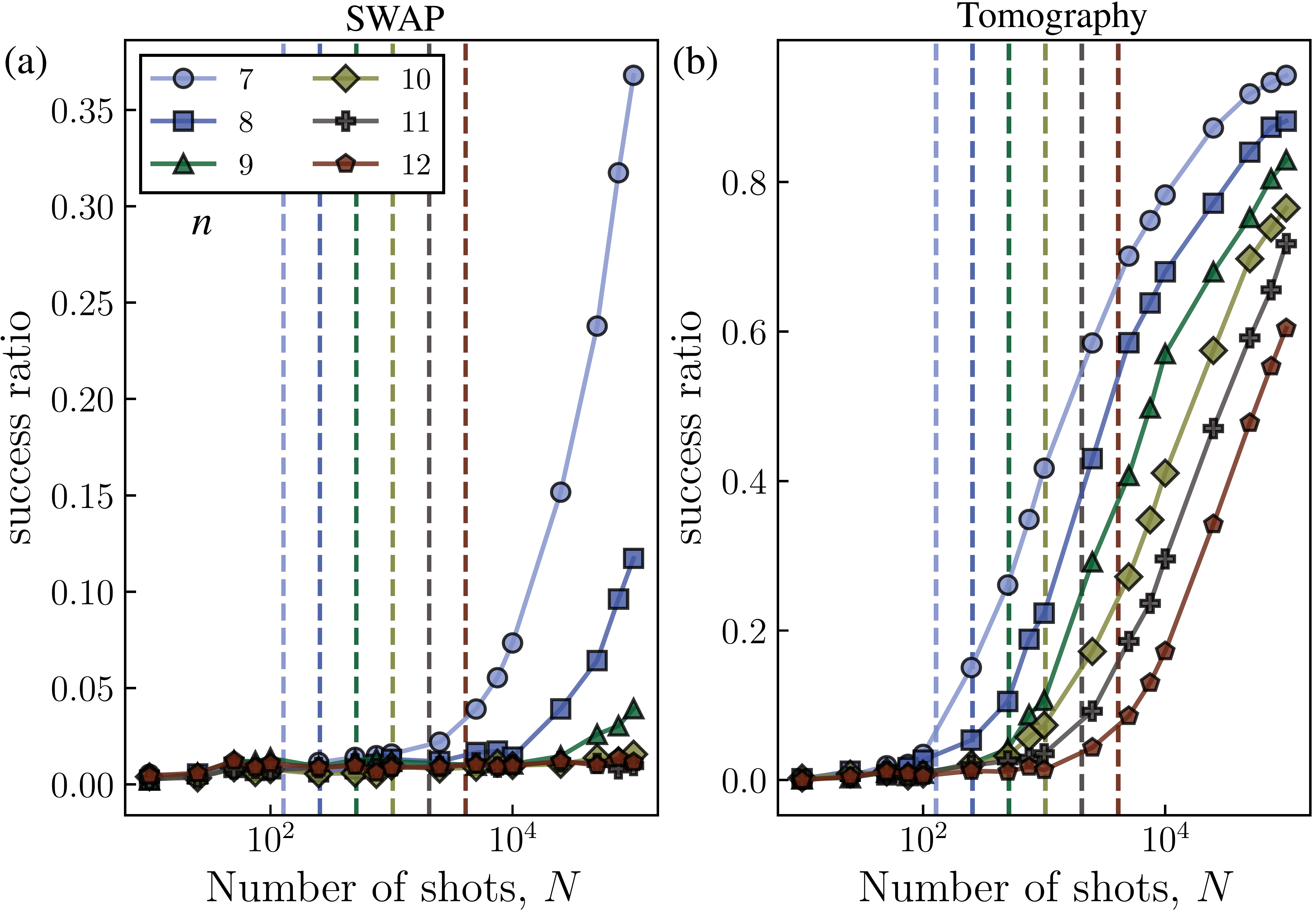}
\caption{\textbf{Effect of exponential concentration on the estimated Gram matrix for the projected kernel.} We plot the success ratio, i.e. the number of estimates of relevant quantities that pass a binomial test from their respective fixed distribution for a p-value$=0.01$ to the total number of estimates as a function of \sth{measurement shots $N$ and qubits $n$}. Two strategies for preparing the projected quantum kernel are considered, as discussed at the start of this section. In panel (a) we consider the SWAP test, where the relevant experimental quantities are the overlaps and purities of reduced data encoded states. In panel (b), where we consider the tomography strategy, the relevant quantities are the Pauli coefficients of reduced data encoded states. The x-axis indicates the number of shots used per kernel value, and vertical lines indicate the (exponentially increasing) dimension of the Hilbert space $2^n$. Here the training data set is of size $N_s = 25$.}\label{fig-sup:effect=expo-proj-gram}
\end{figure}

\begin{figure}[t]
\includegraphics[width=.55\columnwidth]{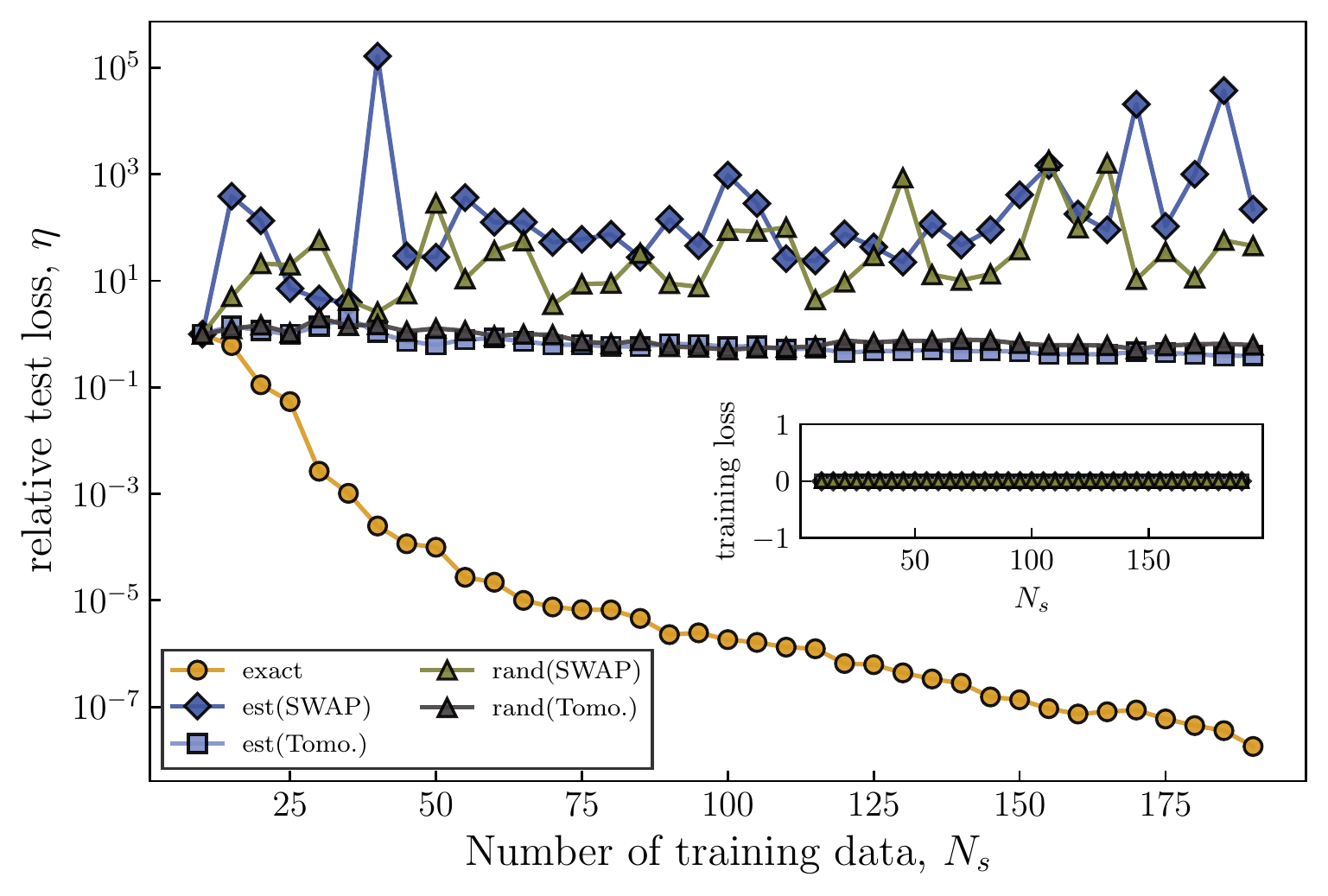}
\caption{\textbf{Effect of exponential concentration on training and generalization performance for projected quantum kernels.} 
In the main plot, a \textit{relative} loss on a test dataset $\SC_{\rm test}$ with respect to its initial value is plotted as a function of increasing training data and how the kernel values are obtained. In the inset, an \textit{absolute} training error is plotted as a function of increasing data. We note that each kernel value is estimated with $N=1000$ by either SWAP or tomography strategies, and the number of testing data points is $20$. The training is carried out with no regularization $\lambda = 0$. 
}\label{fig-sup:effect=expo-proj-gen}
\end{figure}

\subsubsection{Numerical simulation}\label{appendix:projected2}
In Fig.~\ref{fig-sup:effect=expo-proj-gram}, we numerically study the consequences of exponential concentration on the indistinguishability of statistical estimates for projected quantum kernels. Here, we consider a training dataset of size $N_s = 25$ and the data embedding is chosen such that it maps a classical input to a maximally expressive state, which leads to the exponential concentration of quantum states and in turn the projected quantum kernel (see. Sec.~\ref{sec:expressivity} and Theorem~\ref{thm:expressivity-kernel}). We perform binomial hypothesis tests on statistical estimates of the data-dependent quantities as though they were obtained from a quantum experiment to see whether or not they are statistically significant enough to be distinguishable from the associated fixed distributions. 
More specifically, in panel (a), local SWAP tests are employed to measure purities and overlaps of reduced states, we plot the success ratio of the estimates that pass the binomial test (with p-value below $0.01$) to the total number of estimates. Similarly, panel (b) illustrates this success ratio when the tomography is used to estimate coefficients of reduced states. In either case, we can see that to obtain a fixed success ratio an exponential number of measurement shots are required to have a significant increase in the ratio. Consequently, for larger system sizes, where only polynomial number of shots is feasible, statistical estimates of these relevant quantities are indistinguishable from the data-independent random variables. 

More practical matters of trainability and generalizability are numerically studied in Fig.~\ref{fig-sup:effect=expo-proj-gen} for a $12$-qubit simulation. Here, we consider a training dataset of size $N_s = 200$ where each individual input data is mapped to a maximally expressive state leading to exponential concentration. A true label is constructed in a similar manner as in Fig.~\ref{fig:effect-exp-con-gen} which gives perfect generalization when training on the whole dataset with exact kernel values. In the main plot, we study the generalization performance when a fraction of the dataset is used to train the model. With direct access to exact kernel values, the model generalizes better with increasing data, as expected. On the other hand, when kernel values are estimated with either the SWAP or tomography strategy and limited measurement samples ($N=1000$), generalization does not get better with increasing training data. In addition, the behavior of the \textit{relative} error also closely follows a similar trend to when the model is trained on a random matrix as its Gram matrix. Lastly, we observe perfect training errors in all cases. We posit this is again because the optimization process directly encodes the information about training labels.

\setcounter{lemma}{3}
\setcounter{supplemental_proposition}{2}
\setcounter{supplemental_corollary}{2}

\sth{\subsubsection{Proof of analytical results}\label{appendix:pqk-proof}
Here we provide the proofs of the main analytical results regarding the practical consequence of exponential concentration on the projected quantum kernel as stated in Subsection \ref{appendix:projected1}. For the readers' convenience, we restate formal statements before detailing proofs.

\begin{lemma}
Given that the Eq.~\eqref{eq:2norm-vanish} is satisfied, it follows that the projected quantum kernel exponentially concentrates 
\begin{align}\label{eq:kernelconc}
    \Pr_{\vec{x},\vec{x'} \in \XC}\left[  \left| \kappa^{PQ}(\vec{x},\vec{x'}) - \mu \right|  \geq \delta \right] \leq \frac{\beta}{\delta^2} \;,
\end{align}
where $\beta \in \OC(1/b^n)$ for some $b>1$. 
\end{lemma}
\begin{proof} We first show that the variance of the projected kernel is exponentially small due to the state concentration in Eq.~\eqref{eq:2norm-vanish}. The variance of the projected kernel can be bounded
\begin{align}
    \Var_{\vec{x},\vec{x'}}[k^{PQ}(\vec{x},\vec{x'})] & = \Var_{\vec{x},\vec{x'}}[1 - k^{PQ}(\vec{x},\vec{x'})]\label{eq:appx-var-pqk-2norm1} \\
    & \leq \mathbb{E}_{\vec{x},\vec{x'}}[(1 - k^{PQ}(\vec{x},\vec{x'}))^2] \\
    & \leq \mathbb{E}_{\vec{x},\vec{x'}}[1 - k^{PQ}(\vec{x},\vec{x'})] \\
    & =  \mathbb{E}_{\vec{x},\vec{x'}} \left[ 1 -  e^{ - \gamma \sum_{k=1}^n \| \rho_k(\vec{x}) - \rho_k(\vec{x'})\|^2_2} \right] \\
    & \leq \mathbb{E}_{\vec{x},\vec{x'}}  \left[ \gamma \sum_{k=1}^n \| \rho_k(\vec{x}) - \rho_k(\vec{x'})\|^2_2 \right]  \\
    & = \gamma \sum_{k=1}^n \mathbb{E}_{\vec{x},\vec{x'}} \| \rho_k(\vec{x}) - \rho_k(\vec{x'})\|^2_2 \;, \label{eq:proof-thm1-pqk-var} \\
    & \leq \gamma \sum_{k=1}^n \mathbb{E}_{\vec{x},\vec{x'}} \left(\left\|\rho_k(\vec{x}) - \frac{\mathbb{1}_k}{2} \right\|_2 +  \left\|\rho_k(\vec{x'}) - \frac{\mathbb{1}_k}{2} \right\|_2  \right)^2 \\
    & \leq 2\gamma \sum_{k=1}^n  \left(   \mathbb{E}_{\vec{x}} \left\|  \rho_k(\vec{x})  - \frac{\mathbb{1}_k}{2}\right \|_2 +   \mathbb{E}_{\vec{x'}} \left\|  \rho_k(\vec{x'})  - \frac{\mathbb{1}_k}{2}\right \|_2 \right)     \\ 
    & \in \OC\left(\frac{n}{b^n} \right) \;,
\end{align}
where the first equality is due to the fact that $\Var_{\vec{\alpha}}[c_1 A(\vec{\alpha}) + c_2] = c_1^2 \Var_{\vec{\alpha}}[A(\vec{\alpha})]$ for constants $c_1$ and $c_2$, in the second inequality we use the bound $(1 - k^{PQ}(\vec{x},\vec{x'})) \leq 1$ and take the upper bound, the second equality follows from substituting in the kernel definition in Eq.~\eqref{eq:projected-gaussian-kernel-appx} and in the third inequality we use $1 - e^{-t} \leq t $. Then, in the fourth inequality, we denote $\mathbb{1}_k$ as the identity matrix on qubit $k$ and use the triangle inequality. The fifth inequality is from $(t+s)^2 \leq 2t^2 + 2s^2$ and in the last line the state concentration in Eq.~\eqref{eq:2norm-vanish} is used. Finally, the kernel concentration, Eq.~\eqref{eq:kernelconc}, follows directly from Chebyshev's inequality. 
\end{proof}

\bigskip

\begin{lemma}
Assume the 2-norm distance between the reduced data encoding states exponentially vanishes as in Eq.~\eqref{eq:2norm-vanish}. 
Consider the following two scenarios
\begin{enumerate}
    \item For the tomography strategy, suppose we measure any coefficient $c_{\sigma_k(\vec{x})}$ of a reduced state on the qubit $k$ in Eq.~\eqref{eq:appx-coeff-pqk} for a given input data $\vec{x}$ with a polynomial number of measurement shots. The associated distribution $\PC_{\sigma_k, \vec{x}}$ defined in Eq.~\eqref{eq:appx-coeff-dist-pqk} is statistically indistinguishable (as per Definition~\ref{def:StatIndist}) from the data independent uniform distribution $\PC_0 = \{1/2, 1/2\}$ in Eq.~\eqref{eq:dist-no-input} (with the probability exponentially close to $1$).
    \item For the local SWAP strategy, suppose we measure any one of the terms $m_k(\vec{x},\vec{x'})$ in Eq.~\eqref{eq:appx-2norm-expand} for a given input data pair $\vec{x}$ and $\vec{x'}$ with a polynomial number of measurement shots. The associated distribution $\PC_{m_k(\vec{x},\vec{x'})}$ is statistically indistinguishable from a data-independent fixed distribution $\Tilde{\PC}_0 = \{3/4, 1/4\}$ (with probability exponentially close to 1).
\end{enumerate}
\end{lemma}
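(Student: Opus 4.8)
The plan is to handle both strategies in parallel. In each case the data-dependent quantity measured on the quantum computer — a Pauli coefficient $c_{\sigma_k}(\vec x)$ for the tomography strategy, an overlap or purity term for the local SWAP strategy — will be shown to concentrate, with probability exponentially close to $1$ over the input data, towards a fixed value ($0$ and $\tfrac12$ respectively). The associated Bernoulli measurement distribution is then an exponentially small perturbation of a fixed reference distribution ($\PC_0 = \{1/2,1/2\}$ and $\tilde{\PC}_0 = \{3/4,1/4\}$), so that Supplemental Proposition~\ref{sup-prop:indistin-prob} applies directly to give statistical indistinguishability with any polynomial number of samples $N$.

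For the tomography strategy I would first use the Pauli expansion \eqref{eq:reduced-state-pauli-basis} to write $\rho_k(\vec x) - \mathbb{1}_k/2 = \tfrac12(c_{x_k}X_k + c_{y_k}Y_k + c_{z_k}Z_k)$, so that $\|\rho_k(\vec x) - \mathbb{1}_k/2\|_2^2 = \tfrac12(c_{x_k}^2 + c_{y_k}^2 + c_{z_k}^2)$ and hence $|c_{\sigma_k}(\vec x)| \le \sqrt{2}\,\|\rho_k(\vec x) - \mathbb{1}_k/2\|_2$ for each $\sigma$. Taking expectations and applying the reduced-state concentration hypothesis \eqref{eq:2norm-vanish} gives $\mathbb{E}_{\vec x}|c_{\sigma_k}(\vec x)| \le \sqrt{2}\,\beta$. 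Markov's inequality with threshold $\beta^{1/2}$ then yields $\Pr_{\vec x}[\,|c_{\sigma_k}(\vec x)| \ge \beta^{1/2}\,] \le \sqrt{2}\,\beta^{1/2}$, which is exponentially small since $\beta \in \OC(1/b^n)$ implies $\beta^{1/2} \in \OC(1/(\sqrt{b})^n)$ with $\sqrt{b} > 1$. On the complementary (exponentially likely) event, the distribution $\PC_{\sigma_k,\vec x}$ of \eqref{eq:appx-coeff-dist-pqk} is a perturbation of $\PC_0$ with $\varepsilon = c_{\sigma_k}(\vec x)/2$ and $|\varepsilon| \le \tfrac12\beta^{1/2}$, so Supplemental Proposition~\ref{sup-prop:indistin-prob} bounds the success probability of the hypothesis test by $\tfrac12 + \tfrac{N}{4}\beta^{1/2}$, which is below $0.51$ once $n$ is large enough for $N \in \poly(n)$ — exactly statistical indistinguishability as per Definition~\ref{def:StatIndist}.

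For the local SWAP strategy I would set $\Delta_k(\vec x) := \rho_k(\vec x) - \mathbb{1}_k/2$ (traceless Hermitian) and compute $m_k(\vec x,\vec x') = \Tr[\rho_k(\vec x)\rho_k(\vec x')] = \tfrac12 + \Tr[\Delta_k(\vec x)\Delta_k(\vec x')]$, the cross terms vanishing by tracelessness; the purity terms likewise equal $\tfrac12 + \|\Delta_k(\vec x)\|_2^2$, which explains why the reference distribution is $\tilde{\PC}_0=\{3/4,1/4\}$. Cauchy--Schwarz for the Hilbert--Schmidt inner product gives $|\,m_k(\vec x,\vec x') - \tfrac12\,| \le \|\Delta_k(\vec x)\|_2\|\Delta_k(\vec x')\|_2$, while for a purity term the single-qubit bound $\|\Delta_k\|_2 \le 1/\sqrt{2}$ gives $\|\Delta_k(\vec x)\|_2^2 \le \tfrac{1}{\sqrt{2}}\|\Delta_k(\vec x)\|_2$. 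Using independence of $\vec x$ and $\vec x'$ together with \eqref{eq:2norm-vanish}, $\mathbb{E}[\,|m_k(\vec x,\vec x')-\tfrac12|\,] \le \beta^2$ (resp.\ $\le \beta/\sqrt{2}$ for a purity term), so Markov's inequality with threshold $\beta$ (resp.\ $\beta^{1/2}$) shows $|m_k - \tfrac12|$ is exponentially small with probability exponentially close to $1$; on that event $\PC_{m_k(\vec x,\vec x')}$ is an exponentially small perturbation of $\tilde{\PC}_0$ with $\varepsilon = (m_k(\vec x,\vec x') - \tfrac12)/2$, and invoking Supplemental Proposition~\ref{sup-prop:indistin-prob} once more gives the claimed indistinguishability with polynomially many samples.

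The step I expect to require the most care is that Supplemental Proposition~\ref{sup-prop:indistin-prob} is stated for two \emph{fixed} distributions, whereas here the perturbation $\varepsilon$ is itself a random variable depending on the random input data. The argument therefore must be split into a good event on which $\varepsilon$ is exponentially small and its complement; the Markov thresholds are chosen so that both the threshold and the probability of the complement decay exponentially (with a halved exponent, i.e.\ base $\sqrt{b}$), which suffices for ``statistically indistinguishable with probability exponentially close to $1$''. A secondary point is that the purity terms in the local SWAP strategy involve the same input twice, so independence is unavailable there — this is precisely why the crude bound $\|\Delta_k\|_2 \le 1/\sqrt{2}$ is used to reduce control of $\mathbb{E}\|\Delta_k\|_2^2$ to control of $\mathbb{E}\|\Delta_k\|_2$. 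Everything else is routine algebra and second-moment bookkeeping.
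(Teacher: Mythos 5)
Your proposal is correct and follows essentially the same two-step structure as the paper's proof: first establish that the measured quantity concentrates (to $0$ for the Pauli coefficients, to $1/2$ for the overlap/purity terms) with probability exponentially close to $1$ over the input data, then invoke Supplemental Proposition~\ref{sup-prop:indistin-prob} on the resulting exponentially small perturbation of the fixed reference distribution. The only difference is bookkeeping: you use pointwise Cauchy--Schwarz bounds and Markov's inequality on the first absolute moment, whereas the paper bounds the mean and variance separately via H\"older's inequality and applies Chebyshev (and avoids your independence assumption for the overlap term by simply using $\|\rho_k(\vec{x'})\|_2\leq 1$); both routes give the same exponentially small deviation with a halved decay exponent, and your good/bad-event split for the random perturbation mirrors the paper's.
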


\medskip

\begin{proof}
To prove our result, we first consider a fixed distribution $\PC = \{ p, 1-p\}$ and some perturbed distribution $\PC_{\varepsilon} = \{ p + \varepsilon, 1 - (p + \varepsilon)$. Recall from Supplemental Proposition~\ref{sup-prop:indistin-prob} that the perturbation $\varepsilon$ plays a crucial role in the success probability of the hypothesis test (with $N$ samples) 
\begin{align}
 {\rm Pr}\left({\rm ``right \, decision \, between \,} \HC_{0} \, \text{and} \, \HC_1" \right)
    \leq &  \left( \frac{1}{2} + \frac{N \varepsilon}{4} \right) \, .
\end{align}
Then, if the perturbation becomes exponentially small, $\PC$ and $\PC_\epsilon$ are statistically indistinguishable with the polynomial samples $N\in\OC(\poly(n))$ for large $n$ by Definition~\ref{def:StatIndist}. 

Our proof for each of the scenarios mainly consists of using the exponential concentration of the 2-norm in Eq.~\eqref{eq:2norm-vanish} (i) to show that for a given input the quantity that we are interested in measuring is exponentially close to some fixed data-independent value (with high probability) and (ii) to identify the relevant fixed distribution (in case of the tomography strategy, this fixed distribution is $\PC_0 = \{1/2, 1/2\}$, while in the case of the local SWAP test we have $\Tilde{\PC_0} = \{ 3/4, 1/4\}$). Together, this establishes statistical indistinguishability between the distribution associated with the quantity and the corresponding fixed distribution.

\medskip

\noindent\underline{Tomography strategy:} The quantity of interest is the expectation value of a Pauli observable on the qubit $k$ i.e., $c_{\sigma_k, \vec{x}}$. We now show that $c_{\sigma_k, \vec{x}}$ is exponentially concentrated by looking at its variance.
\begin{align}
    \Var_{\vec{x}} \left[c_{\sigma_k, \vec{x}}  \right]  & = \Var_{\vec{x}} \left[ \Tr[\rho_k(\vec{x}) \sigma_k] \right] \\
    & = \Var_{\vec{x}} \left[ \Tr \left[ \left(\rho_k(\vec{x}) - \frac{\mathbb{1}_k}{2} \right) \sigma_k \right] \right] \\
    & \leq \Ebb_{\vec{x}}\left[ \Tr \left[ \left(\rho_k(\vec{x}) - \frac{\mathbb{1}_k}{2} \right) \sigma_k \right] \right] ^2 \\
    & \leq \Ebb_{\vec{x}} \left[ \left\| \rho_k(\vec{x}) - \frac{\mathbb{1}_k}{2} \right\|_2^2 \left\|\sigma_k \right\|_2^2\right] \\
    & \leq  \sqrt{2} \Ebb_{\vec{x}}  \left\| \rho_k(\vec{x}) - \frac{\mathbb{1}_k}{2} \right\|_2 \\
    & \in \OC\left( \frac{1}{b^n}\right) \;,
\end{align}
where the second equality is due to the Pauli operator being traceless, the second inequality is from using H\"{o}lder's inequality and in the third inequality we use the fact that $\| \sigma_k\|_2^2 = 2$ as well as $\|\rho_k(\vec{x}) - \mathbb{1}_k/2 \|_1 \leq 1/\sqrt{2}$. The final line follows from the assumption that the 2-norm distance between the single qubit reduced data-encoded states and the maximally mixed state vanishes as per Eq.~\eqref{eq:2norm-vanish}. This shows that the exponential concentration of the variance towards its mean. The mean can be further shown to exponentially vanish as follows
\begin{align}
    \Ebb_{\vec{x}} \left[c_{\sigma_k, \vec{x}}  \right] & = \Ebb_{\vec{x}}\left[ \Tr \left[ \left(\rho_k(\vec{x}) - \frac{\mathbb{1}_k}{2} \right) \sigma_k \right] \right] \\
    & \leq \Ebb_{\vec{x}}\left[  \left\| \rho_k(\vec{x}) - \frac{\mathbb{1}_k}{2} \right\|_2 \left\|\sigma_k \right\|_2\right] \\
    & \in \OC\left( \frac{1}{b^n}\right) \;,
\end{align}
where the equality is due to the Pauli operator being traceless and the inequality is from using H\"{o}lder's inequality. Together, we have that the coefficient exponentially concentrates towards some exponentially small value. 

By applying the Chebyshev's inequality, it can be shown that for any given $\vec{x}$ the coefficient is exponentially small with high probability. That is, by denoting $\mu = \Ebb_{\vec{x}} \left[c_{\sigma_k, \vec{x}}  \right]$ and $\sigma^2 = \Var_{\vec{x}} \left[c_{\sigma_k, \vec{x}}  \right]$, we have
\begin{align} 
    {\rm Pr}_{\vec{x}}\left[\left| c_{\sigma_k, \vec{x}} - \mu \right| \geq k \sigma \right] \leq \frac{1}{k^2} \;.
\end{align}
Then by choosing $k = 1/\sqrt{\sigma}$ and inverting the inequality of Eq.~\eqref{eq:appx-chebyshev}, this gives us
\begin{align}
   {\rm Pr}_{\vec{x}}\left[\left|c_{\sigma_k, \vec{x}} - \mu \right| \leq \sqrt{ \sigma} \right] \geq 1 - \sigma\; .
\end{align}
Therefore, $c_{\sigma_k, \vec{x}} $ takes a value between $\mu - \sqrt{\sigma}$ and $\mu + \sqrt{\sigma}$ (which are exponentially small) with the probability at least $1 - \sigma$ (which is exponentially close to $1$). Lastly, the fixed distribution can be identified by replacing $c_{\sigma_k, \vec{x}}$ in $\PC_{\sigma_k,\vec{x}}$ with $0$ (i.e., no perturbation) leading to $\PC_0 =\{1/2, 1/2\}$. This finishes the first part of the proof.

\bigskip

\noindent\underline{SWAP strategy:} Here we are interested in estimating $m_k(\vec{x},\vec{x'}) = \Tr[\rho_k(\vec{x})\rho_k(\vec{x'})]$ which corresponds to the purity if $\vec{x} = \vec{x'}$ and to the overlap if  $\vec{x} \neq \vec{x'}$. 
To see the concentration of $m_k(\vec{x},\vec{x'})$, we consider the variance of $m_k(\vec{x},\vec{x'})$.
\begin{align}    \Var_{\vec{x},\vec{x'}}\left[m_k(\vec{x},\vec{x'})\right] & = \Var_{\vec{x},\vec{x'}}\left[m_k(\vec{x},\vec{x'}) - \frac{1}{2}\right] \\
    & \leq \Ebb_{\vec{x},\vec{x'}}\left[ \left( \Tr[\rho_k(\vec{x})\rho_k(\vec{x'})] -\frac{1}{2}\right)^2\right] \\
    & = \Ebb_{\vec{x},\vec{x'}}\left[ \left( \Tr[\left(\rho_k(\vec{x}) - \frac{\mathbb{1}_k}{2}\right)\rho_k(\vec{x'})] \right)^2\right] \\
    & \leq \Ebb_{\vec{x},\vec{x'}} \left\| \rho(\vec{x}) - \frac{\mathbb{1}_k}{2}\right\|_2^2 \left\| \rho(\vec{x}) \right\|_2^2 \\
    & \leq \Ebb_{\vec{x}} \left\| \rho(\vec{x}) - \frac{\mathbb{1}_k}{2}\right\|_2 \\
    & \in  \OC\left( \frac{1}{b^n}\right) \label{eq:overlap-var-range}\;,
\end{align}
where the second inequality is by H\"{o}lder's inequality and the third inequality is due to $\| \rho_k(\vec{x})\|_2^2 \leq 1$. In addition, we can show the mean itself concentrates towards $1/2$.
\begin{align}
    \left| \Ebb_{\vec{x},\vec{x'}}\left[ m_k(\vec{x},\vec{x'})\right] - \frac{1}{2}\right| & \leq  \Ebb_{\vec{x},\vec{x'}} \left| m_k(\vec{x},\vec{x'}) - \frac{1}{2}\right| \\
    & =  \Ebb_{\vec{x}, \vec{x'}} \left|\Tr[\rho_k(\vec{x})\rho_k(\vec{x'})] - \frac{1}{2} \right| \\
    & = \Ebb_{\vec{x}, \vec{x'}} \left| \Tr\left[\left(\rho_k(\vec{x}) - \frac{\mathbb{1}_k}{2} + \frac{\mathbb{1}_k}{2} \right)\rho_k(\vec{x'})\right]  - \frac{1}{2} \right| \\
    & = \Ebb_{\vec{x}, \vec{x'}} \left| \Tr\left[\left(\rho_k(\vec{x}) - \frac{\mathbb{1}_k}{2}  \right)\rho_k(\vec{x'})\right]\right| \\
    & \leq \Ebb_{\vec{x}, \vec{x'}} \left\|\rho_k(\vec{x}) - \frac{\mathbb{1}_k}{2}  \right\|_2 \left\| \rho_k(\vec{x'})\right\|_2 \\
    & \leq \Ebb_{\vec{x}}  \left\|\rho_k(\vec{x}) - \frac{\mathbb{1}_k}{2}  \right\|_2 \\
    & \in  \OC\left( \frac{1}{b^n}\right) \;, \label{eq:overlap-concen}
\end{align}
where the first inequality is due to Jensen's inequality, in the second inequality we apply H\"{o}lder's inequality and the third inequality is due to the fact that $\| \rho_k(\vec{x})\|_2 \leq 1$ for any quantum state. Together, we have found that $m_k(\vec{x}, \vec{x'})$ exponentially concentrates towards $1/2$. Note that for the purity case, one can repeat the above steps with $\vec{x} = \vec{x'}$, which gives the same result.

We now show that, for any given input pair, $m_k(\vec{x},\vec{x'})$ takes a value exponentially close to $1/2$ with probability exponentially close to $1$ using Chebyshev's inequality together with Eq.~\eqref{eq:overlap-var-range}. Following the same steps as for the local SWAP strategy we obtain
\begin{align}
   {\rm Pr}_{\vec{x}, \vec{x'}}\left[\left|m_k(\vec{x},\vec{x'}) - \mu \right| \leq \sqrt{ \sigma} \right] \geq 1 - \sigma\; ,
\end{align}
with $\mu = \Ebb_{\vec{x},\vec{x'}} \left[m_k(\vec{x},\vec{x'}) \right]$ and $\sigma^2 = \Var_{\vec{x},\vec{x'}} \left[m_k(\vec{x},\vec{x'}) \right]$. This implies that, with probability at least $1 - \sigma$ such that $\sigma \in \OC(b^{-n/2})$, the quantity $m_k(\vec{x}, \vec{x'})$ takes the value within the range between $\mu - \sqrt{\sigma}$ and $\mu + \sqrt{\sigma}$. Furthermore, by using Eq.~\eqref{eq:overlap-concen}, we conclude that $1/2 - \sqrt{\sigma} - \beta \leq m_k(\vec{x}, \vec{x'}) \leq 1/2 + \sqrt{\sigma} + \beta$ with probability at least $1-\sigma$. 

All that remains is to identify the appropriate fixed distribution. This can be found by replacing $m_k(\vec{x},\vec{x'})$ in $\PC_{m_k(\vec{x},\vec{x'})}$ with $1/2$ (which is the concentration point of the mean) leading to $\widehat{\PC}_0 = \{3/4,1/4\}$. This completes the proof. 

\end{proof}

\bigskip

\begin{supplemental_proposition}
Consider the same assumptions as in Supplemental Lemma~\ref{sup-lem:pqk-tomo-swap} and suppose a polynomial number of measurement shots is used to estimate the 2-norms. For any input data pair $\vec{x}$ and $\vec{x'}$, the following statements hold, with probability exponentially close to $1$ (i.e., with probability $1 - \delta_a$ with $\delta_a \in \OC(\tilde{c}^{-n})$ for $\tilde{c}>1$)

\begin{itemize}
    \item For the tomography strategy, the statistical estimate of the 2-norm between two reduced quantum states is statistically indistinguishable (as per Defintion \ref{def:StatIndistOutputs}) from an input-data-independent random variable
\begin{align}
    \ell_N^{(\rm rand, T)} = \frac{1}{2}\left[ \left(\widehat{\kappa}_{N,1}^{(\rm rand)} - \widehat{\kappa}_{N,2}^{(\rm rand)} \right)^2 + \left(\widehat{\kappa}_{N,3}^{(\rm rand)} - \widehat{\kappa}_{N,4}^{(\rm rand)} \right)^2 + \left(\widehat{\kappa}_{N,5}^{(\rm rand)} - \widehat{\kappa}_{N,6}^{(\rm rand)} \right)^2 \right] \;,
\end{align}
where $\left\{\widehat{\kappa}_{N,i}^{(\rm rand)} \right\}_{i=1}^6$ are different instances of $\widehat{\kappa}_N^{(\rm rand)}$ defined in Eq.~\eqref{eq:k0-no-input-appx}.
\item For the SWAP strategy, the statistical estimate of the 2-norm between two reduced quantum states is statistically indistinguishable from an input-data-independent random variable
\begin{align}
     \ell_N^{(\rm rand, S)} = \widehat{\kappa}^{(\rm biased \; rand)}_{N,1} + \widehat{\kappa}^{(\rm biased \; rand)}_{N,2} - 2 \widehat{\kappa}^{(\rm biased \; rand)}_{N,3} \;,
\end{align}
where  $\left\{\widehat{\kappa}^{(\rm biased \; rand)}_{N,i} \right\}_{i=1}^3$ are different instances of $\widehat{\kappa}^{(\rm biased \; rand)}_N$ as defined in Eq.~\eqref{eq:k0-no-input2}.
\end{itemize}
In addition, the estimate of the projected quantum kernel is statistically indistinguishable from an input-data-independent random variable
\begin{align} \label{eq:rand-pqk-tomography}
    \widehat{\kappa}^{(\rm rand, PQ)}_N = \exp\left[ \gamma \sum_{k=1}^n \ell_{N,k}^{(\rm rand)}\right] \;, 
\end{align}
where $\left\{\ell_{N,k}^{(\rm rand)}\right\}_{k=1}^n$ are different instances of either $\ell_N^{(\rm rand, T)}$ for the tomography strategy, or $\ell_N^{(\rm rand, S)}$ for the SWAP strategy.
\end{supplemental_proposition}

\medskip

\begin{proof}
To prove this result, we incorporate Supplemental Lemma~\ref{sup-lem:pqk-tomo-swap} with a union bound over different terms we need to measure. Since the total number of terms required to be measured on a quantum computer are at most polynomial in the number of qubits, we have that the total probability that all of them are simultaneously indistinguishable remains exponentially close to $1$. We note that this proof follows the same steps as Supplemental Corollary~\ref{sup-coro:fidelity-swap} (though, we still clearly unfold it for completeness).

First, consider estimating the 2-norm distance between two reduced data-encoded states on the qubit $k$ i.e.~$\| \rho_k(\vec{x}) - \rho_k(\vec{x'})\|_2$. In either strategy, we construct an estimation of this by  measuring a number of different quantities as detailed at the start of Section \ref{appendix:projected}. For the tomography strategy, the 2-norm can be expressed as
\begin{align} \label{eq:appx-2norm-tomo-proof}
    \| \rho_k(\boldsymbol{x}) - \rho_k(\boldsymbol{x'})\|_2^2 = \frac{1}{2}\left( (c_{x_k}(\vec{x}) - c_{x_k}(\vec{x'}))^2 + (c_{y_k}(\vec{x}) - c_{y_k}(\vec{x'}))^2 + (c_{z_k}(\vec{x}) - c_{z_k}(\vec{x'}))^2  \right) \;,
\end{align}
which means we are required to measure $6$ different expectation values (3 expectation values to reconstruct each reduced state). On the other hand, for the local SWAP strategy, the 2-norm can be expressed as
\begin{align}\label{eq:appx-2norm-swap-proof}
    \| \rho_k(\boldsymbol{x}) - \rho_k(\boldsymbol{x'})\|_2^2 = \Tr[\rho_k^2(\boldsymbol{x})] + \Tr[\rho_k^2(\boldsymbol{x'})] - 2 \Tr[\rho_k(\boldsymbol{x})\rho_k(\boldsymbol{x'})] \;, 
\end{align}
which implies there are total of $3$ different terms to be measured ($2$ purities and $1$ state overlap). When measuring each term (with polynomial measurement shots $N\in\OC(\poly(n))$, it follows from Supplemental Lemma~\ref{sup-lem:pqk-tomo-swap} that
\begin{align}\label{eq:appx-pqk-proof1}
    {\rm Pr}[E_i^{(k)}] \geq 1 - \delta \;, \;\; i \in \{1,2, ... , m \} \;,
\end{align}
where $\delta\in \OC(b^{-n})$ for some $b>1$ and $m$ is the total number of terms to be measured i.e., $m=6$ for the tomography strategy and $m=3$ for the local SWAP strategy. Here, we have denoted $E_i^{(k)}$ as the event that an estimate of a chosen term (among the $m$ terms) on the qubit $k$ is statistically indistinguishable from a data-independent random variable i.e., $\widehat{\kappa}^{\rm (rand)}$ in Eq.~\eqref{eq:k0-no-input-appx} for the tomography strategy, and $\widehat{\kappa}^{(\rm biased \; rand)}_N$ in Eq.~\eqref{eq:k0-no-input2}) for the local SWAP strategy. Thus, the probability that all $E_i^{(k)}$ simultaneously occur can be upper bounded as
\begin{align}
    {\rm Pr}\left[ \bigcap_{i=1}^m E_i^{(k)} \right] & =  1 - {\rm Pr}\left[ \bigcup_{i=1}^m \bar{E}_i^{(k)} \right] \label{eq:appx-pqk-lower-prob0}  \\
    & \geq  1 - \sum_{i=1}^{m} {\rm Pr}\left[\bar{E}^{(k)}_i \right] \\
    & \geq 1 - m\delta \;, \label{eq:appx-pqk-lower-prob1}
\end{align}
where we denote $\bar{E}^{(k)}_i$ has the conjugate event of $E^{(k)}_i$, the union bound is applied in the second line, and to reach the final line we use the fact that ${\rm Pr}[\bar{E}^{(k)}_i] \leq \delta_k$ by reversing the inequality in Eq.~\eqref{eq:appx-pqk-proof1}. This shows that for the two considered strategies the estimated 2-norm is indistinguishable from some data-independent random variable with probability exponentially close to $1$, in that part of the collected measurement statistics are distinguishable. Specifically, the statistical estimate of the 2-norm from the tomography strategy is indistinguishable from
\begin{align}
    \ell_N^{(\rm rand, T)} = \frac{1}{2}\left[ \left(\widehat{\kappa}_{N,1}^{(\rm rand)} - \widehat{\kappa}_{N,2}^{(\rm rand)} \right)^2 + \left(\widehat{\kappa}_{N,3}^{(\rm rand)} - \widehat{\kappa}_{N,4}^{(\rm rand)} \right)^2 + \left(\widehat{\kappa}_{N,5}^{(\rm rand)} - \widehat{\kappa}_{N,6}^{(\rm rand)} \right)^2 \right] \;.
\end{align}
This is obtained by replacing coefficients in Eq.~\eqref{eq:appx-2norm-tomo-proof} with $\left\{\widehat{\kappa}_{N,i}^{(\rm rand)} \right\}_{i=1}^6$ which are different instances of $\widehat{\kappa}_N^{(\rm rand)}$ defined in Eq.~\eqref{eq:k0-no-input-appx}. Similarly, the statistical estimate of the 2-norm from the local SWAP strategy is indistinguishable from
\begin{align}
     \ell_N^{(\rm rand, S)} = \widehat{\kappa}^{(\rm biased \; rand)}_{N,1} + \widehat{\kappa}^{(\rm biased \; rand)}_{N,2} - 2 \widehat{\kappa}^{(\rm biased \; rand)}_{N,3} \;.
\end{align}
where $\left\{\widehat{\kappa}^{(\rm biased \; rand)}_{N,i} \right\}_{i=1}^3$ are different instances of $\widehat{\kappa}^{(\rm biased \; rand)}_N$ in Eq.~\eqref{eq:k0-no-input2}. This completes the first part of the proof.

\medskip

For the second half of the proof, we show that the statistical estimate of the projected quantum kernel with polynomial measurement shots is also indistinguishable from some data-independent random variable (with high probability). The projected kernel in Eq.~\eqref{eq:projected-gaussian-kernel-appx} is estimated by (classically) post-processing estimated 2-norms over all single qubit subsystems. It follows from the conclusion of the first half of the proof that these estimated 2-norms on any single qubit subsystem are statistically indistinguishable with probability at least $1-m\delta$. We note that the estimate of the projected kernel is statistically indistinguishable when all estimated 2-norms on a single qubit subsystem are indistinguishable. 

To show this, we again use the union bound. Let $F_k = \bigcap_{i=1}^m E_i^{(k)}$ as the event that the estimated 2-norm on the qubit $k$ is indistinguishable. We have that the probability of all $F_k$ simultaneously occur to be bounded as
\begin{align}
    {\rm Pr}\left[ \bigcap_{k=1}^n F_k \right] & =  1 - {\rm Pr}\left[ \bigcup_{k=1}^n \bar{F}_k \right]  \\
    & \geq  1 - \sum_{k=1}^{n} {\rm Pr}\left[\bar{F}_k \right] \\
    & \geq 1 - nm\delta \;, \label{eq:appx-pqk-lower-prob1}
\end{align}
where the steps follow identically as Eq.~\eqref{eq:appx-pqk-lower-prob0} to Eq.~\eqref{eq:appx-pqk-lower-prob1}. We note that the total measurement shots spent here are $nmN$ which remains polynomial in the number of qubits.

In other words, for any given input pair, the statistical estimate of the projected kernel is indistinguishable with the polynomial shots from the data-independent random variable $\widehat{\kappa}^{(\rm rand, PQ)}_N$ with the probability at least $1 - nm\delta $ such that
\begin{align}
    \widehat{\kappa}^{(\rm rand, PQ)}_N = \exp\left[ \gamma \sum_{k=1}^n \ell_{N,k}^{(\rm rand)}\right] \;, 
\end{align}
where $\left\{\ell_{N,k}^{(\rm rand)}\right\}_{k=1}^n$ are different instances of either $\ell_N^{(\rm rand, T)}$ for the tomography strategy, or $\ell_N^{(\rm rand, S)}$ for the SWAP strategy. 
\end{proof}

\bigskip

\begin{supplemental_corollary} 
Consider a kernel ridge regression task with the projected quantum kernel, a squared loss function and a training dataset $\SC = \{\vec{x}, y_i \}_{i=1}^{N_s}$ with $N_s \in \OC(\poly(n))$. Given that the kernel value is estimated using either the SWAP test or tomography strategies and under the same assumption as in Supplemental Lemma~\ref{sup-lem:pqk-tomo-swap}, the following statements hold with probability exponentially close to $1$ (i.e., with the probability $1 - \delta_a$ with $\delta_a \in \OC(\tilde{c}^{-n})$ for $\tilde{c}>1$)
\begin{itemize}
    \item The estimated Gram matrix $\widehat{K}$ is statistically indistinguishable (as per Definition~\ref{def:StatIndistOutputs}) from an input-data-independent matrix $\widehat{K}_N^{\rm (rand, PQ)}$ whose diagonal elements are $1$ and off-diagonal elements are instances of $\widehat{\kappa}^{(\rm rand, PQ)}_N$ determined by Supplemental Proposition~\ref{sup-prop-norm-indis-pqk}.
    \item The estimated optimal parameters are statistically indistinguishable from the input-data-independent random variables
\begin{align}
    \vec{a}_{\rm rand}(\vec{y}, \lambda) = \left( \widehat{K}_N^{\rm (rand, PQ)} - \lambda \mathbb{1} \right)^{-1}\vec{y} \;. 
\end{align}
where $\vec{y}$ is a vector of output data points with its $i^{\rm th}$ element equal to $y_i$ and $\lambda$ is the regularization parameter. 
\item The model prediction on an unseen input data $\vec{x}$ is  statistically indistinguishable from the input data independent random variable
\begin{align}
    h_{\rm rand} = \vec{a}_{\rm rand}(\vec{y}, \lambda) ^T\vec{k}_N^{(\rm rand, PQ)} \;,
\end{align}
where $\vec{k}_N^{(\rm rand, PQ)}$ is a random vector where each of its element is an instance of $\widehat{\kappa}^{(\rm rand, PQ)}_N$ in Eq.~\eqref{eq:rand-pqk-tomography}.
\end{itemize}
\end{supplemental_corollary}

\medskip

\begin{proof}
We remark that the proof steps are identical to the proofs of Supplemental Corollary~\ref{sup-coro:fidelity-swap} (with the fidelity kernel replaced by the projected kernel). Nevertheless, we fully provide the proof of this result for the completeness and convenience. 

To prove our main result here, we combine Supplemental Proposition~\ref{sup-prop-norm-indis-pqk} with a union bound over the individual kernel values. More explicitly, Supplemental Proposition~\ref{sup-prop-norm-indis-pqk} concludes that for any given input pair the estimate of the projected kernel is indistinguishable (with the polynomial shots) from the data-independent random variable $\widehat{\kappa}^{(\rm rand, PQ)}_N$ in Eq.~\eqref{eq:rand-pqk-tomography} with probability exponentially close to $1$. Taking into account a polynomial number of training data $N_s \in \OC(\poly(n))$, we are required to measure a polynomial number of kernel values and each kernel follows Supplemental Proposition~\ref{sup-prop-norm-indis-pqk}. 

Now, consider the Gram matrix which can be constructed by measuring each kernel in the set $\KC_{\rm PQ}$ in Eq.~\eqref{eq:appx-kernel-set-projected}. This results in $N_s(N_s-1)/2$ unique kernel values. To proceed, we denote $\kappa_i$ as an $i^{\rm th}$ element in $\KC_{\rm PQ}$ with $i$ running from $1$ to $|\KC_{\rm PQ}| = N_s(N_s-1)/2$. In addition, let $E_i$ be the event that the estimate of $\kappa_i$ is statistically indistinguishable from $\widehat{\kappa}^{(\rm rand, PQ)}_N$. By invoking Supplemental Proposition~\ref{sup-prop-norm-indis-pqk}, we have
\begin{align}\label{eq:appx-proof-coro-pqk1}
    {\rm Pr}[E_i] \geq 1 - \delta_\kappa \;\;, \forall \kappa_i \in \KC_{\rm PQ} \;,
\end{align}
where $\delta_\kappa \in \OC(c^{-n})$ for $c > 1$. Then, by applying the union bound, the probability of all $E_i$ occurring at the same time is lower bounded as
\begin{align}
    {\rm Pr}\left[ \bigcap_{i} E_i \right] & =  1 - {\rm Pr}\left[ \bigcup_i \bar{E}_i \right]  \\
    & \geq  1 - \sum_{i=1}^{|\KC_{\rm PQ}|} {\rm Pr}\left[\bar{E}_i \right] \\
    & \geq 1 - \frac{N_s (N_s-1)\delta_k}{2} \;,
\end{align}
where we denote $\bar{E}_i$ is a conjugate event of $E_i$, we use the union bound in the second line and use ${\rm Pr}[\bar{E}_i] \leq \delta_k$ by reversing the final inequality in Eq.~\eqref{eq:appx-proof-coro-pqk1}. Thanks to $N_s \in \OC(\poly(n))$, the probability that each of the kernel values are statistically indistinguishable (and so the Gram matrix is statistically indistinguishable) is $1 - \delta_K$ with $\delta_K := N_s (N_s-1)\delta_k/2 \in \OC(\tilde{c}^{-n})$ for some $\tilde{c} >1$.

The statistical indistinguishability of the optimal parameters directly follows from the above result. This is since the optimal parameters are estimated by simply a post-processing of the Gram matrix.

Finally, the indistinguishability of the model prediction can be proven by further considering $N_s$ additional kernel values for the test input data. Hence, the same procedure via the union bound is repeated which leads to the conclusion that all estimated kernel values (from the Gram matrix and new ones associated with a test input) are indistinguishable from $\widehat{\kappa}^{(\rm rand, PQ)}_N$ with probability exponentially close to $1$.
\end{proof}}

\subsection{Indistinguishability of concentrated quantum states}\label{appendix:indis-states}
So far we have discussed how exponential concentration leads to the statistical indistinguishability of samples obtained from quantum computers. 
Here we extend this by showing that when quantum states concentrate they exhibit a stronger form of indistinguishability. In particular, we show that those quantum states remain indistinguishable even when given coherent access to many copies. Let us begin by stating the standard textbook result of the Helstrom measurement.
\begin{lemma}\label{lemma:state-discrim}
    Suppose that one of either $\rho$ or $\rho'$ is provided to us with equal probability. Then, the probability of \sth{making right decision}  which state is given to us using the optimal POVM measurement is
    \begin{align}
        {\rm Pr}[\sth{{\rm ``right \, decision \, between \,} \rho\, {\rm and} \, \sigma"} ] = \frac{1}{2} + \frac{\| \rho - \sigma \|_1}{4} \;.
    \end{align}
\end{lemma}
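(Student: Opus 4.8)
\textbf{Proof proposal for Supplemental Lemma~\ref{lemma:state-discrim} (Helstrom bound).}

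The plan is to establish the standard Helstrom discrimination bound for two states $\rho,\sigma$ given with equal prior probability $1/2$. First I would observe that any decision strategy corresponds to a two-outcome POVM $\{E_\rho, E_\sigma\}$ with $E_\rho + E_\sigma = \mathbb{1}$ and $0 \leq E_\rho, E_\sigma \leq \mathbb{1}$, where outcome ``$\rho$'' is declared when $E_\rho$ clicks. The success probability is
\begin{align}
    {\rm Pr}[\text{right decision}] = \tfrac{1}{2}\Tr[E_\rho \rho] + \tfrac{1}{2}\Tr[E_\sigma \sigma] = \tfrac{1}{2}\Tr[E_\rho\rho] + \tfrac{1}{2}\Tr[(\mathbb{1}-E_\rho)\sigma] = \tfrac{1}{2} + \tfrac{1}{2}\Tr[E_\rho(\rho-\sigma)] \;.
\end{align}
Thus maximizing the success probability reduces to maximizing $\Tr[E_\rho(\rho-\sigma)]$ over valid effects $E_\rho$.

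Next I would diagonalize the Hermitian (traceless) operator $\Delta := \rho - \sigma$ as $\Delta = \sum_i \lambda_i \ketbra{i}{i}$ and write $\Delta = \Delta_+ - \Delta_-$ for its positive and negative parts, where $\Delta_\pm \geq 0$ have orthogonal support. Since $0 \leq E_\rho \leq \mathbb{1}$, we have $\Tr[E_\rho \Delta] = \Tr[E_\rho\Delta_+] - \Tr[E_\rho\Delta_-] \leq \Tr[\Delta_+]$, with equality achieved by choosing $E_\rho = \Pi_+$, the projector onto the support of $\Delta_+$ (the Helstrom measurement). Finally I would use $\Tr[\Delta] = \Tr[\Delta_+] - \Tr[\Delta_-] = 0$ together with $\|\Delta\|_1 = \Tr[\Delta_+] + \Tr[\Delta_-]$ to conclude $\Tr[\Delta_+] = \tfrac{1}{2}\|\rho-\sigma\|_1$. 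Substituting back gives
\begin{align}
    {\rm Pr}[\text{right decision}] = \tfrac{1}{2} + \tfrac{1}{2}\cdot\tfrac{1}{2}\|\rho-\sigma\|_1 = \tfrac{1}{2} + \frac{\|\rho-\sigma\|_1}{4} \;,
\end{align}
as claimed.

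This argument has no real obstacle — it is a textbook computation — so the only thing to be careful about is bookkeeping: ensuring the operator inequality $\Tr[E_\rho \Delta_-] \geq 0$ is correctly invoked for the upper bound, and that the traceless property of $\rho - \sigma$ is used to convert $\Tr[\Delta_+]$ into half the trace norm. One could alternatively cite this directly as the Holevo--Helstrom theorem, but since the excerpt states it as a lemma to be proven, the short POVM-optimization argument above suffices. The result then feeds into the subsequent discussion of why concentrated quantum states (those exponentially close in trace norm to the maximally mixed state, or to each other) cannot be distinguished even with coherent access to polynomially many copies, by applying the lemma to $\rho^{\otimes k}$ versus $\sigma^{\otimes k}$ together with a bound on $\|\rho^{\otimes k} - \sigma^{\otimes k}\|_1$ analogous to Supplemental Lemma~\ref{lemma:1norm-product}.
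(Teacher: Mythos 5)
Your proof is correct and complete. The paper itself does not prove this lemma --- it simply quotes it as "the standard textbook result of the Helstrom measurement" --- so there is no in-paper argument to compare against; your POVM-optimization derivation (reduce to maximizing $\Tr[E_\rho(\rho-\sigma)]$, use the Jordan decomposition $\Delta=\Delta_+-\Delta_-$, and exploit tracelessness to get $\Tr[\Delta_+]=\tfrac{1}{2}\|\rho-\sigma\|_1$) is exactly the standard one and is, in fact, the precise quantum analogue of the argument the paper does spell out for the classical case in Supplemental Lemma~\ref{lemma:guess-distribution}, where the optimal decision region $\AC=\{s: p(s)>q(s)\}$ plays the role of your projector $\Pi_+$. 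One cosmetic point: the lemma statement in the paper has a typo ($\rho'$ in the hypothesis versus $\sigma$ in the conclusion); your proof consistently uses $\sigma$, which is the intended reading.
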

Importantly, when the two states become exponentially close to each other in the one-norm distance $\| \rho - \sigma \|_1 \in \OC(1/b^n)$ with $b > 1$, the probability of guessing correctly is exponentially tight to $1/2$. This strong concentration of quantum states can arise due to noise and entanglement. 

One could imagine that having access to multiple copies of quantum states and processing them coherently could help improve their distinguishability. However, we formalize in the following that, when the number of copies is polynomial in the number of qubits, the indistinguishability of the states remains. 
\begin{supplemental_proposition}\label{prop:state-discrim-multiple}
Given that $m$ copies of either $\rho$ or $\sigma$ are provided to us and we are allowed to coherently process all of them at the same time, the probability of \sth{making right decision} which state is given to us is upper bounded by
\begin{align}
    {\rm Pr}[\sth{{\rm ``right \, decision \, between \,} \rho\, {\rm and} \, \sigma"} ] \leq \frac{1}{2} + \frac{m \| \rho - \sigma \|_1}{4}
\end{align}
\end{supplemental_proposition}
\begin{proof}
By invoking Lemma~\ref{lemma:state-discrim}, the probability of guessing correctly given the $m$ copies of the quantum state is
\begin{align}
    {\rm Pr}[\sth{{\rm ``right \, decision \, between \,} \rho\, {\rm and} \, \sigma"} ] = \frac{1}{2} + \frac{\| \rho^{\otimes m} - \sigma^{\otimes m} \|_1}{4} \;.  \label{eq:state-dis-multiple}
\end{align}
We now upper bound the one-norm as
\begin{align}
    \| \rho^{\otimes m} - \sigma^{\otimes m} \|_1 & =  \| \rho^{\otimes m} - \left( \rho^{\otimes m -1} \otimes \sigma \right) +  \left( \rho^{\otimes m -1} \otimes \sigma \right) - \left( \rho^{\otimes m-2} \sigma^{\otimes 2} \right) + ... + \left( \rho \sigma^{\otimes m-1} \right) - \sigma^{\otimes m} \|_1 \\
    & = \| \rho^{\otimes m -1} \otimes (\rho - \sigma) + \rho^{\otimes m-2} \otimes (\rho - \sigma) \otimes \sigma + ... + (\rho - \sigma) \otimes \sigma^{\otimes m-1} \|_1 \\
    & \leq  \|  \rho^{\otimes m -1} \otimes (\rho - \sigma) \|_1 + \| \rho^{\otimes m-2} \otimes (\rho - \sigma) \otimes \sigma  \|_1 + ... + \|  (\rho - \sigma) \otimes \sigma^{\otimes m-1} \|_1 \\
    & \leq \|  \rho^{\otimes m -1} \|_1 \| \rho - \sigma \|_1 + \| \rho^{\otimes m-2}  \|_1 \| \rho - \sigma \|_1 \| \sigma \|_1 + ... +\| \rho - \sigma \|_1 \| \sigma^{\otimes m-1} \|_1 \\
    & \leq  m \| \rho - \sigma \|_1 \; ,
\end{align}
where the first inequality is the triangle inequality, the second inequality is from $\| A \otimes B \|_p \leq \| A \|_p \| B\|_p$ and in the last inequality we use the fact that one-norm of the quantum state is upper bounded by $1$. By substituting this upper bound back in the Eq.~\eqref{eq:state-dis-multiple}, the proof is completed.
\end{proof}
\begin{supplemental_corollary}
Assume quantum states $\rho$ and $\sigma$ are exponentially close in the one-norm i.e., $\| \rho - \sigma \|_1 \in \OC(1/b^n)$ with $b >1$. The quantum states $\rho$ and $\sigma$ are indistinguishable even if a polynomial number of copies can be processed coherently.
\end{supplemental_corollary}
\begin{proof}
    This can be proved by plugging $\| \rho - \sigma \|_1 \in \OC(1/b^n)$ and $m \in \OC(\poly(n))$ in the upper bound of Eq.~\eqref{eq:state-dis-multiple} in Supplemental Proposition~\ref{prop:state-discrim-multiple}. 
\end{proof}

Crucially, this implies that in the presence of noise any error mitigation techniques using multiple copies of the states to reduce the effect of noise cannot be used to avoid the data-independence of the solution which results from exponential concentration. More comprehensive treatment of the error mitigation is shown in Appendix~\ref{appendix:em-fail}.

\subsection{Sufficient condition to resolve kernel values}\label{appendix:further-discuss-impact}

So far we have argued that it is not possible to resolve kernel values with the polynomial number of measurement shots. In this section we formalize a sufficient condition to resolve the kernel values in the presence of exponential concentration. 
When the quantity $X(\vec{\alpha})$ is exponentially concentrated towards a fixed value $\mu$ for all $\vec{\alpha}$ (with high probability or deterministically), we can resolve the statistical estimate of $X(\vec{\alpha})$ from the estimates of some other $X(\vec{\alpha}')$ only if the \textit{relative} precision is sufficiently large. 
More concretely, we can quantify this \textit{relative} precision of the estimated $X(\vec{\alpha})$ by using a relative error
\begin{align}\label{eq:relative-error}
\tilde{\epsilon} = \frac{\epsilon}{\sqrt{\Var_{\vec{\alpha}}[X(\vec{\alpha})]}}
\end{align}
where $\epsilon$ is the statistical uncertainty due to finite measurement shots and $\sqrt{
\Var_{\vec{\alpha}}[X(\vec{\alpha})]}$ characterizes the concentration of $X(\vec{\alpha})$ over different $\vec{\alpha}$.
In particular, 
$\tilde{\epsilon} \lesssim 1$ is needed to resolve $X(\vec{\alpha})$ from some other $X(\vec{\alpha}')$.
The following proposition shows that exponential scaling in measurement shots is indeed required to be in this regime of sufficient resolution.  

\begin{supplemental_proposition}\label{sup-prop:shots-scale}
Assume that $X(\vec{\alpha})$ can be estimated as an expectation value of some observable $O$.  
In order to resolve $X(\vec{\alpha})$ from some other $X(\vec{\alpha}')$ to additive error $\epsilon$ and with probability at least $(1 - p)$ via Hoeffding's inequality it is sufficient to provide a number of measurement shots $N$ satisfying
\begin{align}
    N \geq  \frac{{2}\| O\|_{\infty}^2 \log(2/p)}{\tilde{\epsilon}^2 \Var_{\vec{\alpha}}[X(\vec{\alpha})]}\;,
\end{align}
where $\tilde{\epsilon}\lesssim1$ is the relative error as defined in Eq.~\eqref{eq:relative-error}. 
In addition, if $X(\vec{\alpha})$ is exponentially concentrated to a fixed point $\mu$ according to Definition~\ref{def:exp-concentration} (either deterministically or probabilistically), the number of shots scales as
\begin{align}
    N \in \Omega\left(\frac{b^{2 n}}{\tilde{\epsilon}^2}\right) \;; \;\; b>1 \;,
\end{align}
for arbitrary constant success probability, under the assumption $\| O\|_{\infty} \in \mathcal{O}(1)$. 
\end{supplemental_proposition}

\begin{proof}
Denote $X(\vec{\alpha}) = \langle O \rangle$, i.e. $X(\vec{\alpha})$ is the expectation of some observable $O$. Estimating $X(\vec{\alpha})$ in practice is done by measuring the observable $N$ times, with each outcome associated with one of the eigenvalues of $O$. Then, we can estimate the expectation value as
\begin{align}
    \hat O_{N} = \frac{1}{N} \sum_{i=1}^{N} O_i \;,
\end{align}
where $O_i \in [\lambda_{min}, \lambda_{max}]$ is the outcome of the $i^{\rm th}$ measurement and can be treated as a random variable, with $\lambda_{\rm min}$ and $\lambda_{\rm max}$ being the smallest and the largest eigenvalues of $O$. Invoking Hoeffding's inequality, we have

\begin{align}
    {\rm Pr}\left[ |  \hat O_{N} - \langle O \rangle |  \geq \epsilon \right] & \leq 2 \exp\left( - \frac{2N^2\epsilon^2}{\sum_i (\lambda_{\rm max} - \lambda_{\rm min})^2}\right) \\
    & {\leq 2 \exp\left( - \frac{N\epsilon^2}{2 \| O \|_{\infty}^2}\right)\,,
    }
\end{align}
where in the second line we have used the fact that $\lambda_{\rm max} - \lambda_{\rm min} \leq \|O\|_{\infty}$.

Let $p \geq 2 \exp\left( - \frac{N \epsilon^2}{{2} \| O \|_{\infty}^2}\right)$ be an upper bound on this probability. Upon rearranging, we see that the number of shots scales as
\begin{align}
    N \geq \frac{{2}\| O\|_{\infty}^2 \log(2/p)}{\epsilon^2} = \frac{{2}\| O\|_{\infty}^2 \log(2/p)}{\tilde{\epsilon}^2 \Var_{\vec{\alpha}}[X(\vec{\alpha})]}\;.
\end{align}
to obtain $ | \hat O_{N} - \langle O \rangle |  \leq \epsilon$ with probability at least $(1 - p)$. 
We recall that in order to resolve  $X(\vec{\alpha})$ from $X(\vec{\alpha'})$, we need $\tilde{\epsilon} \lesssim 1$ in general.

For deterministic exponential concentration, i.e. $|X(\vec{\alpha}) - \mu| \leq \beta \in \mathcal{O}(1/b^n)$, we have $\Var_{\vec{\alpha}}[X(\vec{\alpha})] \leq \mathbb{E}_{\vec{\alpha}}[X^2(\vec{\alpha})] \leq \beta^2$. For probabilistic exponential concentration, i.e.~${\rm Pr}_{\vec{\alpha}}\left[ |X(\vec{\alpha}) - \mu| \geq \delta\right] \leq \beta^2 / \delta^2$, we have $\Var_{\vec{\alpha}}[X(\vec{\alpha})] = \beta^2\in \mathcal{O}(1/b^{2n})$. Thus, in both cases, this leads to the number of measurement shots scaling as
\begin{align}
    N \geq \frac{{2}\| O\|_{\infty}^2 \log(2/p) }{\tilde{\epsilon}^2 \beta^2} \in \Omega\left(\frac{b^{2n}}{\tilde{\epsilon}^2}\right) \;,
\end{align}
with probability at least $(1-p)$ for a fixed $p$, where we have assumed that $\| O\|_{\infty} \in \mathcal{O}(1)$.
\end{proof}

Again, in the context of quantum kernel, exponential concentration negatively impacts the performance of kernel-based models in the sense that the Gram matrix $K$ cannot be efficiently estimated in practice. Supplemental Corollary~\ref{coro:scaling-gram} shows that an exponential number of measurement shots is required to distinguish $K$ from a fixed matrix $K_0$ for the fidelity quantum kernel.

\begin{supplemental_corollary}\label{coro:scaling-gram}
Assume that the fidelity quantum kernel $\kappa(\vec{x},\vec{x'})$ is exponentially concentrated towards a fixed value $\mu$ as in Definition \ref{def:exp-concentration}. Further, denote $K_0$ as the fixed matrix whose diagonal elements are all $1$ and off-diagonal elements are all $\mu$. Then, the number of measurement shots $N$ required to resolve each off-diagonal matrix element of the Gram matrix $K$ from $K_0$, 
up to additive error $\epsilon$ and for some arbitrary constant success probability, scales asymptotically as
\begin{align}
    N \in \Omega\left( \frac{N_s^2 b^{2n}}{\tilde{\epsilon}^2}\right) \;,
\end{align}
where $\tilde{\epsilon} \lesssim 1$ is the relative error as defined in Definition \ref{def:exp-concentration}, and we have assumed that statistical fluctuations associated with individual measurement outcomes stay constant.
\end{supplemental_corollary}
\begin{proof}
For a given training dataset of size $N_s$, the number of unique off-diagonal elements in the Gram matrix $K$ is $N_s(N_s-1)/2$. Hence, using Supplemental Proposition~\ref{sup-prop:shots-scale} for each matrix element, this leads to the total number of measurement shots scaling as claimed.
\end{proof}

We remark that one may be able to reduce the quadratic scaling in $N_s$ in the measurement shot scaling in Supplemental Corollary~\ref{coro:scaling-gram} using classical shadow protocols~\cite{huang2020predicting,elben2022randomized}. However, the exponential scaling in the number of qubits $n$ cannot be removed as this already happens at the level of measuring one element of the Gram matrix. Note that we consider the fidelity quantum kernel as an example but the conclusion can be easily extended to the case of the projected quantum kernel.

Thus, the absence of exponential concentration is a necessary condition to enable the potential of quantum kernels. For example, in the case of quantum support vector machine a non-vanishing separation between the two classes obtained from the feature map is essential. In general such embeddings are hard to construct; however, one strategy is to encode the problem structure directly into the embedding. Ref.~\cite{liu2021rigorous} shows that for a specific encryption-inspired learning task one can build a feature map, based on Shor's algorithm, leading to the absence of exponential concentration. In Ref.~\cite{glick2021covariant}, this embedding is shown to be a part of a family of so-called ``covariant quantum kernels'' where the symmetry properties of the target problem are encoded into the embedding. Exploring the extent to which such approaches generalize to other problems is an important direction for future research.

\section{Proof of Theorem~\ref{thm:expressivity-kernel}: Expressivity-induced concentration}

Here we provide a detailed proof of Theorem~\ref{thm:expressivity-kernel} which formally relates the expressivity of data-encoded unitaries and kernel concentration. For convenience, we recall the theorem below. 

\begin{theorem}[Expressivity-induced concentration]
Consider the fidelity quantum kernel as defined in Eq.~\eqref{eq:fidelity-kernel-mt} and the projected quantum kernel as defined in Eq.~\eqref{eq:projected-gaussian-kernel-mt}. Assume that input data $\vec{x}$ and $\vec{x'}$ are drawn from the same distribution, leading to an ensemble of unitaries $\mathbb{U}_{\vec{x}}$ as defined in Eq.~\eqref{eq:unitaryensemble}. We have
\begin{align}
    {\rm Pr}_{\vec{x},\vec{x'}}[|\kappa(\vec{x},\vec{x'}) -  \mathbb{E}_{\vec{x},\vec{x'}} [\kappa(\vec{x},\vec{x'})]| \geq \delta] \leq \frac{G_n(\varepsilon_{\mathbb{U}_{\vec{x}}})}{\delta^2} \;, \label{eq:thm1-concentration-express-mt} 
\end{align}
where $ \varepsilon_{\mathbb{U}_{\vec{x}}} = \|\mathcal{A}_{\mathbb{U}_{\vec{x}}}(\rho_0) \|_1$ is the data-dependent expressivity measure over $\mathbb{U}_{\vec{x}}$ defined in Eq.~\eqref{eq:expressivity-measure-epsilon}, and  $G_n(\varepsilon_{\mathbb{U}_{\vec{x}}})$ is a function of $\varepsilon_{\mathbb{U}_{\vec{x}}}$ defined as below.
\begin{enumerate}
    \item For the fidelity quantum kernel $\kappa(\vec{x},\vec{x'}) = \kappa^{FQ}(\vec{x},\vec{x'})$, we have
\begin{align}
    {G_n(\varepsilon_{\mathbb{U}_{\vec{x}}})} = \beta_{\rm Haar} +  \varepsilon_{\mathbb{U}_{\vec{x}}} ( \varepsilon_{\mathbb{U}_{\vec{x}}} + 2\sqrt{\beta_{\rm Haar}}) \; ,
\end{align}
where $\beta_{\rm Haar} = \frac{1}{2^{n-1}(2^n+1)}$.
    \item For the projected quantum kernel $\kappa(\vec{x},\vec{x'}) = \kappa^{PQ}(\vec{x},\vec{x'})$, we have 
\begin{align}
    {G_n(\varepsilon_{\mathbb{U}_{\vec{x}}})} = 4 \gamma n ( \tilde{\beta}_{\rm Haar} + \varepsilon_{\mathbb{U}_{\vec{x}}}) \;,
\end{align}
where $\tilde{\beta}_{\rm Haar} = \frac{3}{2^{n+1}+2}$.
\end{enumerate}
\end{theorem}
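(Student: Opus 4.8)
The plan is to derive both bounds from Chebyshev's inequality applied to $\kappa(\vec{x},\vec{x'})$ over the independent, identically distributed pair $(\vec{x},\vec{x'})$: since $\mathrm{Pr}_{\vec{x},\vec{x'}}[|\kappa - \Ebb_{\vec{x},\vec{x'}}[\kappa]| \geq \delta] \leq \Var_{\vec{x},\vec{x'}}[\kappa]/\delta^2$, it suffices to show $\Var_{\vec{x},\vec{x'}}[\kappa] \leq G_n(\varepsilon_{\mathbb{U}_{\vec{x}}})$ in each case. I take $\rho_0$ pure (consistent with the stated Haar constants) and write $\Phi_{\mathbb{U}}(\rho_0) := \int_{\mathbb{U}_{\vec{x}}} dU\, U^{\otimes 2}\rho_0^{\otimes 2}(U^\dagger)^{\otimes 2} = \Ebb_{\vec{x}}[\rho(\vec{x})^{\otimes 2}]$ for the second-moment operator of the data-encoded ensemble; by the definition of $\mathcal{A}_{\mathbb{U}_{\vec{x}}}$ in Eq.~\eqref{eq:expressivity-measure-mt} this gives $\Phi_{\mathbb{U}}(\rho_0) = \mathcal{V}_{\rm Haar}(\rho_0) - \mathcal{A}_{\mathbb{U}_{\vec{x}}}(\rho_0)$, with $\mathcal{V}_{\rm Haar}(\rho_0) = \tfrac{1}{d(d+1)}(\mathbb{1} + \mathrm{SWAP})$, $d = 2^n$, by Schur--Weyl. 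For the fidelity kernel I would use $(\kappa^{FQ}(\vec{x},\vec{x'}))^2 = \Tr[\rho(\vec{x})^{\otimes 2}\rho(\vec{x'})^{\otimes 2}]$ and average to get $\Var[\kappa^{FQ}] \leq \Ebb[(\kappa^{FQ})^2] = \Tr[\Phi_{\mathbb{U}}(\rho_0)^2]$; expanding the square with $\Phi_{\mathbb{U}} = \mathcal{V}_{\rm Haar} - \mathcal{A}_{\mathbb{U}_{\vec{x}}}$ produces $\Tr[\mathcal{V}_{\rm Haar}(\rho_0)^2] = \beta_{\rm Haar}$ (a direct trace computation), a cross term bounded via Cauchy--Schwarz by $2|\Tr[\mathcal{V}_{\rm Haar}(\rho_0)\mathcal{A}_{\mathbb{U}_{\vec{x}}}(\rho_0)]| \leq 2\|\mathcal{V}_{\rm Haar}(\rho_0)\|_2 \|\mathcal{A}_{\mathbb{U}_{\vec{x}}}(\rho_0)\|_1 = 2\sqrt{\beta_{\rm Haar}}\,\varepsilon_{\mathbb{U}_{\vec{x}}}$, and $\Tr[\mathcal{A}_{\mathbb{U}_{\vec{x}}}(\rho_0)^2] = \|\mathcal{A}_{\mathbb{U}_{\vec{x}}}(\rho_0)\|_2^2 \leq \varepsilon_{\mathbb{U}_{\vec{x}}}^2$ (using $\|\cdot\|_2 \leq \|\cdot\|_1$ and Hermiticity). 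The sum is exactly $G_n(\varepsilon_{\mathbb{U}_{\vec{x}}})$.

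For the projected kernel, I would first reduce to bounding $\Ebb[1 - \kappa^{PQ}]$: since $0 \leq 1 - \kappa^{PQ} \leq 1$, $\Var[\kappa^{PQ}] = \Var[1-\kappa^{PQ}] \leq \Ebb[(1-\kappa^{PQ})^2] \leq \Ebb[1-\kappa^{PQ}]$, and then $1 - e^{-t} \leq t$ gives $\Ebb[1-\kappa^{PQ}] \leq \gamma\sum_{k=1}^n \Ebb_{\vec{x},\vec{x'}}\|\rho_k(\vec{x}) - \rho_k(\vec{x'})\|_2^2$. The triangle inequality through $\mathbb{1}_k/2$ together with $(a+b)^2 \leq 2a^2+2b^2$ bounds the $k$-th summand by $4\,\Ebb_{\vec{x}}\|\rho_k(\vec{x}) - \mathbb{1}_k/2\|_2^2 = 4\big(\Ebb_{\vec{x}}\Tr[\rho_k(\vec{x})^2] - \tfrac12\big)$. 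The key identity is $\Tr[\rho_k(\vec{x})^2] = \Tr[\rho(\vec{x})^{\otimes 2} O_k]$, where $O_k$ is the two-copy SWAP acting only on qubit $k$ and the identity elsewhere; hence $\Ebb_{\vec{x}}\Tr[\rho_k(\vec{x})^2] = \Tr[\Phi_{\mathbb{U}}(\rho_0)O_k] = \Tr[\mathcal{V}_{\rm Haar}(\rho_0)O_k] - \Tr[\mathcal{A}_{\mathbb{U}_{\vec{x}}}(\rho_0)O_k]$. A permutation-trace computation yields $\Tr[\mathcal{V}_{\rm Haar}(\rho_0)O_k] - \tfrac12 = \tilde\beta_{\rm Haar}$, while $|\Tr[\mathcal{A}_{\mathbb{U}_{\vec{x}}}(\rho_0)O_k]| \leq \|\mathcal{A}_{\mathbb{U}_{\vec{x}}}(\rho_0)\|_1\|O_k\|_\infty = \varepsilon_{\mathbb{U}_{\vec{x}}}$ since $O_k$ is a permutation. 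Summing over $k$ gives $\Ebb[1-\kappa^{PQ}] \leq 4\gamma n(\tilde\beta_{\rm Haar} + \varepsilon_{\mathbb{U}_{\vec{x}}}) = G_n(\varepsilon_{\mathbb{U}_{\vec{x}}})$, and Chebyshev closes this case too.

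The conceptual steps are all short; the genuine effort lies in the second-moment/permutation-operator bookkeeping for the projected kernel — identifying $\Tr[\rho_k^2]$ with a partial-SWAP expectation on the doubled Hilbert space and evaluating $\Tr[\mathcal{V}_{\rm Haar}(\rho_0)O_k]$ so as to land precisely on $\tilde\beta_{\rm Haar} = 3/(2^{n+1}+2)$ (and likewise $\Tr[\mathcal{V}_{\rm Haar}(\rho_0)^2] = \beta_{\rm Haar}$). A point to watch: $\Ebb_{\vec{x},\vec{x'}}\Tr[\rho_k(\vec{x})\rho_k(\vec{x'})]$ does \emph{not} equal $\Tr[\Phi_{\mathbb{U}}(\rho_0)O_k]$ — it equals $\Tr[(\bar\rho\otimes\bar\rho)O_k]$ with $\bar\rho = \Ebb_{\vec{x}}[\rho(\vec{x})]$ — which is exactly why it is cleaner to route the $2$-norm through $\mathbb{1}_k/2$ rather than expand it directly. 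Finally, the appendix generalization to $\vec{x},\vec{x'}$ drawn from different distributions follows the identical template, replacing $\Tr[\Phi_{\mathbb{U}}(\rho_0)^2]$ by a mixed trace $\Tr[\Phi_{\mathbb{U}}(\rho_0)\Phi_{\mathbb{U}'}(\rho_0)]$ (and the $O_k$-contractions analogously), with the cross Haar term again controlled by Cauchy--Schwarz.
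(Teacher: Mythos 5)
Your proposal is correct and follows essentially the same route as the paper's proof: Chebyshev applied to a second-moment bound, with the fidelity case reduced to $\Tr[(\mathcal{V}_{\rm Haar}(\rho_0)-\mathcal{A}_{\mathbb{U}_{\vec{x}}}(\rho_0))^2]$ and controlled by H\"older plus norm monotonicity, and the projected case routed through $\mathbb{1}_k/2$, the partial-SWAP trick, and the Haar two-copy formula (your $\Tr[\mathcal{V}_{\rm Haar}(\rho_0)O_k]-\tfrac12=\tilde\beta_{\rm Haar}$ is the same computation the paper performs with $\sigma=\rho_0-\mathbb{1}/2^n$). The only cosmetic difference is that you expand the fidelity-kernel square into three terms directly rather than rearranging to bound $|\Var-\beta_{\rm Haar}|$, which yields the identical $G_n$.
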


\begin{proof}
We separate the proof into two parts, corresponding to each type of quantum kernel.

\bigskip

\noindent\underline{\noindent Fidelity quantum kernel:} our strategy here is to compute the upper bound of the variance of the fidelity quantum kernel $\kappa^{FQ}(\vec{x},\vec{x'})$ and then use Chebyshev's inequality to show kernel concentration.
Now consider the following
\begin{align}
    \Var_{\vec{x},\vec{x'}}[\kappa^{FQ}(\vec{x},\vec{x'})] 
    & \leq \mathbb{E}_{\vec{x},\vec{x'}} [(\kappa^{FQ}(\vec{x},\vec{x'}))^2] \\
    & = \int dU(\vec{x}) \int dU(\vec{x'}) \Tr[U(\vec{x}) \rho_0 U^\dagger(\vec{x}) U(\vec{x'}) \rho_0 U^\dagger(\vec{x'})] \Tr[U(\vec{x}) \rho_0 U^\dagger(\vec{x}) U(\vec{x'}) \rho_0 U^\dagger(\vec{x'})] \\
    & = \int dU(\vec{x}) \int dU(\vec{x'}) \Tr[(U(\vec{x}))^{\otimes 2} \rho^{\otimes 2}_0 (U^{\dagger}(\vec{x}))^{\otimes 2} (U(\vec{x'}))^{\otimes 2}  \rho^{\otimes 2}_0 (U^{\dagger}(\vec{x'}))^{\otimes 2}] \;  \\
    & = \Tr[\int dU(\vec{x})(U(\vec{x}))^{\otimes 2} \rho^{\otimes 2}_0 (U^{\dagger}(\vec{x}))^{\otimes 2} \int dU(\vec{x'})(U(\vec{x'}))^{\otimes 2} \rho^{\otimes 2}_0 (U^{\dagger}(\vec{x'}))^{\otimes 2} ] \label{eq:proof-thm1-express} \\
    & = \Tr[ (\mathcal{V}_{{\rm Haar}}(\rho_0) -  \mathcal{A}_{\mathbb{U}_{\vec{x}}}(\rho_0) )^2 ] \label{eq:express-line-0}\; , \\
    & = \beta_{\rm Haar} + \Tr[ \mathcal{A}_{\mathbb{U}_{\vec{x}}}(\rho_0)(\mathcal{A}_{\mathbb{U}_{\vec{x}}}(\rho_0) - 2 \mathcal{V}_{{\rm Haar}}(\rho_0))] 
\end{align}
where the second equality comes from the fact that $\Tr[X]\Tr[Y] = \Tr[X\otimes Y] $ and $(AC) \otimes (BD)=(A\otimes B)(C\otimes D) $, in the fourth equality we use the fact that the two integrals are identical due to our starting assumptions and substitute in $\mathcal{A}_{\mathbb{U}_{\vec{x}}}(\rho_0)$ as defined in Eq.~\eqref{eq:expressivity-measure-mt}, and in the last line we introduce $\beta_{\rm Haar} = \Tr[(\mathcal{V}_{\rm Haar}(\rho_0))^2]$.
Additionally, $\beta_{\rm Haar} = \frac{1}{2^{n-1}(2^n+1)}$ which is the result of explicitly performing Haar integration and assuming that the input states $\rho_0$ are pure. We then rearrange the expression to get
\begin{align}
    |\Var_{\vec{x},\vec{x'}}[\kappa^{FQ}(\vec{x},\vec{x'})] - \beta_{\rm Haar}  | & \leq \left|\Tr[\mathcal{A}_{\mathbb{U}_{\vec{x}}}(\rho_0)(\mathcal{A}_{\mathbb{U}_{\vec{x}}}(\rho_0) - 2 \mathcal{V}_{\rm Haar}(\rho_0))] \right| \label{eq:proof-thm1-first-fqk}\\
    & \leq \Tr[\left|\mathcal{A}_{\mathbb{U}_{\vec{x}}}(\rho_0)(\mathcal{A}_{\mathbb{U}_{\vec{x}}}(\rho_0) - 2 \mathcal{V}_{\rm Haar}(\rho_0))\right|] \\
    & \leq \| \mathcal{A}_{\mathbb{U}_{\vec{x}}}(\rho_0)\|_2  \| \mathcal{A}_{\mathbb{U}_{\vec{x}}}(\rho_0) - 2 \mathcal{V}_{{\rm Haar}}\|_2 \label{eq:express-line-1} \\
    & \leq  \| \mathcal{A}_{\mathbb{U}_{\vec{x}}}(\rho_0)\|_2 \left( \| \mathcal{A}_{\mathbb{U}_{\vec{x}}}(\rho_0)\|_2 + 2 \|\mathcal{V}_{\rm Haar}(\rho_0)\|_2  \right) \\
    & \leq \varepsilon_{\mathbb{U}_{\vec{x}}} ( \varepsilon_{\mathbb{U}_{\vec{x}}} + 2\sqrt{\beta_{\rm Harr}}) \; ,\label{eq:proof-thm1-last-fqk}
\end{align}
where the second inequality is due to the triangle inequality (here $|A|=\sqrt{A\ad A}$), the third equality follows from the matrix H\"older's inequality, the fourth inequality is again due to another use of the triangle inequality. Finally, in the last inequality we use the monotonicity of the Schatten $p$-norms, along with the definitions of $\varepsilon_{\mathbb{U}_{\vec{x}}} =  \| \mathcal{A}_{\mathbb{U}_{\vec{x}}}(\rho_0)\|_1$ and $\beta_{\rm Harr}$. Having upper bounded the variance, we can now invoke Chebyshev's inequality to complete the first part of the proof.

\bigskip
\noindent\underline{ Projected quantum kernel:} we first note that as $1 - k^{PQ}(\vec{x},\vec{x'})$ is always non-negative and bounded by 1. Then, we have
\begin{align}
    \Var_{\vec{x},\vec{x'}}[k^{PQ}(\vec{x},\vec{x'})] & = \Var_{\vec{x},\vec{x'}}[1 - k^{PQ}(\vec{x},\vec{x'})]\label{eq:appx-var-pqk-2norm1} \\
    & \leq \mathbb{E}_{\vec{x},\vec{x'}}[(1 - k^{PQ}(\vec{x},\vec{x'}))^2] \\
    & \leq \mathbb{E}_{\vec{x},\vec{x'}}[1 - k^{PQ}(\vec{x},\vec{x'})] \\
    & =  \mathbb{E}_{\vec{x},\vec{x'}} \left[ 1 -  e^{ - \gamma \sum_{k=1}^n \| \rho_k(\vec{x}) - \rho_k(\vec{x'})\|^2_2} \right] \\
    & \leq \mathbb{E}_{\vec{x},\vec{x'}}  \left[ \gamma \sum_{k=1}^n \| \rho_k(\vec{x}) - \rho_k(\vec{x'})\|^2_2 \right]  \\
    & = \gamma \sum_{k=1}^n \mathbb{E}_{\vec{x},\vec{x'}} \| \rho_k(\vec{x}) - \rho_k(\vec{x'})\|^2_2 \;, \label{eq:proof-thm1-pqk-var}
\end{align}
where the second inequality uses $0 \leq (1 - k^{PQ}(\vec{x},\vec{x'})) \leq 1$, the second equality is from substituting in the definition of the projected quantum kernel in Eq.~\eqref{eq:projected-gaussian-kernel-mt}, and finally the last inequality is due to the fact  that $1 - e^{-t} \leq t $. 

Let us focus on one of the expectation values in the sum in Eq.~\eqref{eq:proof-thm1-pqk-var}.
\begin{align}
    \mathbb{E}_{\vec{x},\vec{x'}} \| \rho_k(\vec{x}) - \rho_k(\vec{x'})\|^2_2 & \leq \mathbb{E}_{\vec{x},\vec{x'}} \left( \left\|\rho_k(\vec{x}) - \frac{\mathbb{1}_k}{2} \right\|_2 +  \left\|\rho_k(\vec{x'}) - \frac{\mathbb{1}_k}{2} \right\|_2 \right)^2 \\
    & \leq  \mathbb{E}_{\vec{x},\vec{x'}} \left( 2 \left\|\rho_k(\vec{x}) - \frac{\mathbb{1}_k}{2} \right\|_2^2 + 2 \left\|\rho_k(\vec{x'}) - \frac{\mathbb{1}_k}{2} \right\|_2^2 \right) \\
    & = 2  \mathbb{E}_{\vec{x}} \left\|  \rho_k(\vec{x})  - \frac{\mathbb{1}_k}{2}\right \|_2^2 +  2 \mathbb{E}_{\vec{x'}} \left\|  \rho_k(\vec{x'})  - \frac{\mathbb{1}_k}{2}\right \|_2^2   \;,\label{eq:appx-var-pqk-2norm2}
\end{align}
where $\mathbb{1}_k$ is the identity matrix on the qubit $k$. The first inequality is due to the triangle inequality and the second inequality comes from the fact that $(t+s)^2 \leq 2t^2 + 2s^2$.  Now, consider
\begin{align}
    \mathbb{E}_{\vec{x}} \left\|  \rho_k(\vec{x})  - \frac{\mathbb{1}_k}{2}\right \|_2^2 
    = & \mathbb{E}_{\vec{x}} \Tr_k \left[ \Tr_{\bar k}\left( \rho(\vec{x}) - \mathbb{1}/2^n \right)\Tr_{\bar k}\left( \rho(\vec{x}) - \mathbb{1}/2^n \right)\right] \\
    = &  \mathbb{E}_{\vec{x}} \Tr[ (\rho(\vec{x}) - \mathbb{1}/2^n)\otimes (\rho(\vec{x}) - \mathbb{1}/2^n) ({\rm SWAP_{k_1,k_2}} \otimes \mathbb{1}_{\bar{k}_1, \bar{k}_2})] \\
    = &  \mathbb{E}_{\vec{x}} \Tr[(U(\vec{x})\otimes U(\vec{x}))(\sigma \otimes \sigma)(U^\dagger(\vec{x}) \otimes U^\dagger(\vec{x}))  ({\rm SWAP_{k_1,k_2}} \otimes \mathbb{1}_{\bar{k}_1, \bar{k}_2})] \\ 
    = & \mathbb{E}_{V \sim \rm Haar} \Tr[(V\otimes V)(\sigma \otimes \sigma)(V^\dagger\otimes V^\dagger)  ({\rm SWAP_{k_1,k_2}} \otimes \mathbb{1}_{\bar{k}_1, \bar{k}_2})] -  \Tr[\mathcal{A}_{\mathbb{U}_{\vec{x}}}(\rho_0)  ({\rm SWAP_{k_1,k_2}} \otimes \mathbb{1}_{\bar{k}_1, \bar{k}_2})]  \\
    \leq &  \left| \mathbb{E}_{V \sim \rm Haar} \Tr[(V\otimes V)(\sigma \otimes \sigma)(V^\dagger\otimes V^\dagger)  ({\rm SWAP_{k_1,k_2}} \otimes \mathbb{1}_{\bar{k}_1, \bar{k}_2})] \right| +  \left| \Tr[\mathcal{A}_{\mathbb{U}_{\vec{x}}}(\rho_0) ({\rm SWAP_{k_1,k_2}} \otimes \mathbb{1}_{\bar{k}_1, \bar{k}_2})] \right|  \\
    \leq & \left|\mathbb{E}_{V \sim \rm Haar} \Tr[(V\otimes V)(\sigma \otimes \sigma)(V^\dagger\otimes V^\dagger)  ({\rm SWAP_{k_1,k_2}} \otimes \mathbb{1}_{\bar{k}_1, \bar{k}_2})] \right| + \| \mathcal{A}_{\mathbb{U}_{\vec{x}}}(\rho_0)\|_1 \| {\rm SWAP_{k_1,k_2}} \otimes \mathbb{1}_{\bar{k}_1, \bar{k}_2}  \|_\infty\\
    \leq & \left|\mathbb{E}_{V \sim \rm Haar} \Tr[(V\otimes V)(\sigma \otimes \sigma)(V^\dagger\otimes V^\dagger)  ({\rm SWAP_{k_1,k_2}} \otimes \mathbb{1}_{\bar{k}_1, \bar{k}_2})] \right|  + \| \mathcal{A}_{\mathbb{U}_{\vec{x}}}(\rho_0)\|_1  \\
    = & \mathbb{E}_{V \sim \rm Haar} \| \Tr_{\bar k}[V \sigma V^\dagger] \|_2^2  + \varepsilon_{\mathbb{U}_{\vec{x}}}  \label{eq:proof-thm1-pqk}\;,
\end{align}
where the indices $k$ and $\bar k$ represent the qubit $k$ and the rest of the system excluding $k$ respectively. Further, we introduce $k_1,k_2$ and $\bar{k}_1, \bar{k}_2$ as two copies of such subsystems. 
The second equality comes from using SWAP trick where we denote ${\rm SWAP}_{k_1,k_2}$ as the SWAP operator between $k_1$ and $k_2$, 
in the third equality we denote $\sigma = \rho_0 - \mathbb{1}/2^n$, in the fourth equality we substitute in the expressivity measure $\mathcal{A}_{\mathbb{U}_{\vec{x}}}(\sigma)$ and we note that $\mathcal{A}_{\mathbb{U}_{\vec{x}}}(\sigma) = \mathcal{A}_{\mathbb{U}_{\vec{x}}}(\rho_0)$. In addition, the first inequality is due to $s - t \leq |s| + |t|$,
the second inequality comes from applying the triangle inequality followed by H\"older's inequality to the second term. Finally, in the last inequality we upper bound the second term using the fact that ${\rm SWAP_{k_1,k_2}} \otimes \mathbb{1}_{\bar{k}_1, \bar{k}_2}$ has eigenvalues $\pm 1$, we reverse the SWAP trick on the first term, and we recall that $\varepsilon_{\mathbb{U}_{\vec{x}}} = \| \mathcal{A}_{\mathbb{U}_{\vec{x}}}(\rho_0)\|_1$. 

Next, we evaluate the Haar integration in the first term of~\eqref{eq:proof-thm1-pqk}.
\begin{align}
    \mathbb{E}_{V \sim \rm Haar} \| \Tr_{\bar k}[V \sigma V^\dagger] \|_2^2 & = \mathbb{E}_{V \sim \rm Haar} \left\| \Tr_{\bar k}[V \rho_0 V^\dagger] - \frac{\mathbb{1}_k}{2} \right\|_2^2 \\
    & = \mathbb{E}_{V \sim \rm Haar} \Tr_k \left[ \left( \Tr_{\bar k}[V \rho_0 V^\dagger]  \right)^2 \right] - \frac{1}{2} \\ 
    & = \mathbb{E}_{V \sim \rm Haar} \Tr \left[  \left(V \rho_0  V^\dagger\otimes V \rho_0 V^\dagger \right) ({\rm SWAP}_{k_1,k_2} \otimes \mathbb{1}_{\bar{k}_1,\bar{k}_2})\right]  - \frac{1}{2} \;, \label{eq:express-proof-beta2-1}
\end{align}

where in the first equality we substitute back $\sigma = \rho_0 - \mathbb{1}/2^n$, the second equality is from explicitly expanding the 2-norm and the third equality is due to the $\rm SWAP$ trick. Due to linearity, $\mathbb{E}_{V \sim \rm Haar}$ can be moved inside the trace and we further use the standard Haar integral $\mathbb{E}_{V \sim \rm Haar} [(V\rho_0 V^\dagger)\otimes (V\rho_0 V^\dagger)] = \frac{\mathbb{1} \otimes \mathbb{1} + {\rm SWAP}}{2^n(2^n+1)}$ and the fact that $\rho_0$ is pure (see for example Eq.~(2.26) in~\cite{roberts2017chaos}), leading to
\begin{align}
    \mathbb{E}_{V \sim \rm Haar} \Tr \left[  \left(V \rho_0  V^\dagger\otimes V \rho_0 V^\dagger \right) ({\rm SWAP}_{k_1,k_2} \otimes \mathbb{1}_{\bar{k}_1,\bar{k}_2})\right] & = 
    \Tr\left[ \left(\frac{ \mathbb{1} \otimes \mathbb{1} + {\rm SWAP}}{2^n(2^n+1)}\right){\rm SWAP_{k_1,k_2}} \otimes \mathbb{1}_{\bar{k}_1,\bar{k}_2}\right] \\
    & = \frac{2^{2(n-1)}\Tr[(\mathbb{1}_{k_1} \otimes \mathbb{1}_{k_2}){\rm SWAP}_{k_1,k_2}] + 2^{n-1} \Tr[{\rm SWAP}_{k_1,k_2} ^2] }{2^n(2^n + 1)} \\
    & = \frac{2^{n-1}+2}{2^n+1} \label{eq:express-proof-beta2-2},
\end{align}
where in the last line we have used the fact that $\mathrm{SWAP}^2=\mathbb{1}$. By substituting Eq.~\eqref{eq:express-proof-beta2-2} back into Eq.~\eqref{eq:express-proof-beta2-1}, we have $ \mathbb{E}_{V \sim \rm Haar} \| \Tr_{\bar k}[V \sigma V^\dagger] \|_2^2 = \tilde{\beta}_{\rm Haar} = \frac{3}{2^{n+1}+2}$
Altogether, we can now upper bound the variance in~\eqref{eq:proof-thm1-pqk-var} as 
\begin{align}
    \Var_{\vec{x},\vec{x'}}[k^{PQ}(\vec{x},\vec{x'})] \leq 4 \gamma n (\tilde{\beta}_{\rm Haar} + \varepsilon_{\mathbb{U}_{\vec{x}}}) \;.
\end{align}
Upon using Chebyshev's inequality, we complete the proof. 

\end{proof}

\subsection{Extensions of Theorem~\ref{thm:expressivity-kernel} to different input distributions}\label{appendix:extension-expressivity}
In Theorem~\ref{thm:expressivity-kernel}, we assume that both $\vec{x}$ and $\vec{x'}$ are averaged over all possible input data, implying that they are drawn from the same distribution. In this section, we
relax this assumption and consider a scenario where $\vec{x}$ and $\vec{x'}$ are drawn from different distributions leading to different data-embedded unitary ensembles $\mathbb{U}_{\vec{x}}$ and $\mathbb{U}_{\vec{x'}}$. We still observe kernel concentration in the same form as in~\eqref{eq:thm1-concentration-express-mt} of Theorem~\ref{thm:expressivity-kernel} but with modified values of $G_n(\varepsilon_{\mathbb{U}_{\vec{x}}}, \varepsilon_{\mathbb{U}_{\vec{x'}}})$ where $\varepsilon_{\mathbb{U}_{\vec{x}}}$ and $\varepsilon_{\mathbb{U}_{\vec{x'}}}$ are expressivity measures averaging over $\vec{x}$ and $\vec{x'}$.
\begin{enumerate}
    \item For the fidelity quantum kernel, $G_n(\varepsilon_{\mathbb{U}_{\vec{x}}}, \varepsilon_{\mathbb{U}_{\vec{x'}}}) = \beta_{\rm Haar} + \varepsilon_{\mathbb{U}_{\vec{x}}} \varepsilon_{\mathbb{U}_{\vec{x'}}} + \sqrt{\beta_{\rm Haar}}(\varepsilon_{\mathbb{U}_{\vec{x}}} + \varepsilon_{\mathbb{U}_{\vec{x'}}})$. 
    \item For the projected quantum kernel, $G_n(\varepsilon_{\mathbb{U}_{\vec{x}}}, \varepsilon_{\mathbb{U}_{\vec{x'}}}) = 2\gamma n (2 \tilde{\beta}_{\rm Haar} +  \varepsilon_{\mathbb{U}_{\vec{x}}} + \varepsilon_{\mathbb{U}_{\vec{x'}}})$. 
\end{enumerate}

\begin{proof}
First, consider the fidelity quantum kernel. We revisit~\eqref{eq:proof-thm1-express} in the proof of Theorem~\ref{thm:expressivity-kernel}. 
\begin{align}
    \Var_{\vec{x},\vec{x'}}[\kappa^{FQ}(\vec{x},\vec{x'})] & \leq \Tr[\int dU(\vec{x})(U(\vec{x}))^{\otimes 2} \rho^{\otimes 2}_0 (U^{\dagger}(\vec{x}))^{\otimes 2} \int dU(\vec{x'})(U(\vec{x'}))^{\otimes 2} \rho^{\otimes 2}_0 (U^{\dagger}(\vec{x'}))^{\otimes 2} ] \\
    & = \Tr[ (\mathcal{V}_{{\rm Haar}}(\rho_0) -  \mathcal{A}_{\mathbb{U}_{\vec{x}}}(\rho_0) ) (\mathcal{V}_{{\rm Haar}}(\rho_0) -  \mathcal{A}_{\mathbb{U}_{\vec{x'}}}(\rho_0) ) ] \\
    & = \beta_{\rm Haar} -  \Tr[ \mathcal{V}_{{\rm Haar}}(\rho_0)\mathcal{A}_{\mathbb{U}_{\vec{x}}}(\rho_0) ] -  \Tr[ \mathcal{V}_{{\rm Haar}}(\rho_0)\mathcal{A}_{\mathbb{U}_{\vec{x'}}}(\rho_0) ] + \Tr[\mathcal{A}_{\mathbb{U}_{\vec{x}}}(\rho_0)\mathcal{A}_{\mathbb{U}_{\vec{x'}}}(\rho_0) ] \; .
\end{align}
Similar to before, we rearrange the terms leading to
\begin{align}
     |  \Var_{\vec{x},\vec{x'}}[\kappa^{FQ}(\vec{x},\vec{x'})] - \beta_{\rm Haar} ] | & =  \left|\Tr[\mathcal{A}_{\mathbb{U}_{\vec{x}}}(\rho_0)\mathcal{A}_{\mathbb{U}_{\vec{x'}}}(\rho_0) ] -  \Tr[ \mathcal{V}_{{\rm Haar}}\mathcal{A}_{\mathbb{U}_{\vec{x}}}(\rho_0) ] -  \Tr[ \mathcal{V}_{{\rm Haar}}\mathcal{A}_{\mathbb{U}_{\vec{x'}}}(\rho_0) ] \right|  \\
     & \leq \left|\Tr[\mathcal{A}_{\mathbb{U}_{\vec{x}}}(\rho_0)\mathcal{A}_{\mathbb{U}_{\vec{x'}}}(\rho_0) ]\right| + \left|\Tr[ \mathcal{V}_{{\rm Haar}}\mathcal{A}_{\mathbb{U}_{\vec{x}}}(\rho_0) ]\right| + \left|\Tr[ \mathcal{V}_{{\rm Haar}}\mathcal{A}_{\mathbb{U}_{\vec{x'}}}(\rho_0) ] \right| \label{eq:express-different-class-1}\\
     & \leq \varepsilon_{\mathbb{U}_{\vec{x}}} \varepsilon_{\mathbb{U}_{\vec{x'}}} +\sqrt{\beta_{\rm Haar}} (\varepsilon_{\mathbb{U}_{\vec{x}}} + \varepsilon_{\mathbb{U}_{\vec{x'}}}) \label{eq:express-different-class-2}\;,
\end{align}
where the first inequality is from the triangle inequality and the second inequality due to H\"older's inequality and the monotonicity of the Schatten $p$-norms.
Hence, we have a bound for the variance as
\begin{align}
    \Var_{\vec{x},\vec{x'}}[\kappa^{FQ}(\vec{x},\vec{x'})] \leq \beta_{\rm Haar} + \varepsilon_{\mathbb{U}_{\vec{x}}} \varepsilon_{\mathbb{U}_{\vec{x'}}} + \sqrt{\beta_{\rm Haar}}(\varepsilon_{\mathbb{U}_{\vec{x}}} + \varepsilon_{\mathbb{U}_{\vec{x'}}}) \;.
\end{align}

For the projected quantum kernel, the bound of $ 2 \mathbb{E}_x \left\|  \rho_k(\vec{x})  - \mathbb{1}_k/2\right \|_2^2 $ remains unchanged as in~\eqref{eq:proof-thm1-pqk}.
However, when assembling terms together in the last step, we need to treat expressivity measures over $\vec{x}$ and $\vec{x'}$ to be different. 
The modification leads to
\begin{align}
    \Var_{\vec{x},\vec{x'}}[k^{PQ}(\vec{x},\vec{x'})] \leq 2 \gamma n (2\tilde{\beta}_{\rm Haar} + \varepsilon_{\mathbb{U}_{\vec{x}}} + \varepsilon_{\mathbb{U}_{\vec{x'}}}) \;,
\end{align}
which completes the proof.
\end{proof}

\section{Proof of Theorem~\ref{thm:entanglement-kernel}: Entanglement-induced concentration}\label{appendix:proof-entangled}
In this section, we provide a proof of Theorem~\ref{thm:entanglement-kernel}, describing the concentration of the kernel in terms of concentration of reduced states. The theorem is restated below for convenience.
\begin{theorem} [Entanglement-induced concentration]
Consider the projected quantum kernel as defined in Eq.~\eqref{eq:projected-gaussian-kernel-mt}. For a given pair of data-encoded states associated with $\vec{x}$ and $\vec{x'}$, we have
\begin{align}
    \left| 1 - \kappa^{PQ}(\vec{x},\vec{x'})\right|   \leq (2\ln2) \gamma \Gamma_s(\vec{x},\vec{x'}) \; ,
\end{align}
where
\begin{align}
    \Gamma_s(\vec{x},\vec{x'}) =  \sum_{k=1}^n \left[ \sqrt{S\left(\rho_k(\vec{x})\Big\|  \frac{\mathbb{1}_k}{2}\right)} +  \sqrt{S\left(\rho_k(\vec{x'})\Big\|  \frac{\mathbb{1}_k}{2}\right)} \right]^2  \;,
\end{align}
where we denote $S\left(\cdot\|  \cdot \right)$ as the quantum relative entropy, $\rho_k$ as a reduced state on qubit $k$, and $\mathbb{1}_k$ as the maximally mixed state on qubit $k$.
\end{theorem}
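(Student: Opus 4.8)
The plan is to bound $|1 - \kappa^{PQ}(\vec{x},\vec{x'})|$ by first controlling the exponent appearing in the definition Eq.~\eqref{eq:projected-gaussian-kernel-mt}, and then relating the Schatten-2 distances between reduced states to quantum relative entropies. First I would use the elementary inequality $|1 - e^{-t}| \leq t$ valid for $t \geq 0$ (which applies since the argument $\gamma \sum_k \|\rho_k(\vec{x}) - \rho_k(\vec{x'})\|_2^2$ is non-negative), to get
\begin{align}
    \left| 1 - \kappa^{PQ}(\vec{x},\vec{x'})\right| \leq \gamma \sum_{k=1}^n \left\| \rho_k(\vec{x}) - \rho_k(\vec{x'})\right\|_2^2 \;.
\end{align}
Then I would apply the triangle inequality for the Schatten-2 norm, inserting the single-qubit maximally mixed state $\mathbb{1}_k/2$ as an intermediate point:
\begin{align}
    \left\| \rho_k(\vec{x}) - \rho_k(\vec{x'})\right\|_2 \leq \left\| \rho_k(\vec{x}) - \frac{\mathbb{1}_k}{2}\right\|_2 + \left\| \rho_k(\vec{x'}) - \frac{\mathbb{1}_k}{2}\right\|_2 \;,
\end{align}
so that $\|\rho_k(\vec{x}) - \rho_k(\vec{x'})\|_2^2$ is bounded by $\left(\|\rho_k(\vec{x}) - \mathbb{1}_k/2\|_2 + \|\rho_k(\vec{x'}) - \mathbb{1}_k/2\|_2\right)^2$.

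The crux of the argument — and what I expect to be the main obstacle — is converting the Schatten-2 distance $\|\rho_k - \mathbb{1}_k/2\|_2$ from the maximally mixed state into the square root of the relative entropy $S(\rho_k \| \mathbb{1}_k/2)$. For a single qubit, $\|\rho_k - \mathbb{1}_k/2\|_2^2 = \Tr[\rho_k^2] - 1/2 = \tfrac14 \|\vec r\|^2$ where $\vec r$ is the Bloch vector, while $S(\rho_k\|\mathbb{1}_k/2) = \ln 2 - H(\rho_k)$ with $H$ the von Neumann entropy. The cleanest route is to go through the trace distance: Pinsker's inequality gives $\tfrac12\|\rho_k - \mathbb{1}_k/2\|_1 \leq \sqrt{\tfrac12 S(\rho_k\|\mathbb{1}_k/2)}$ (with the convention $S(\cdot\|\cdot) = \Tr[\rho(\ln\rho - \ln\sigma)]$ in natural log), and for a single qubit the Schatten norms are ordered as $\|A\|_2 \leq \|A\|_1$ (more precisely, since $\rho_k - \mathbb{1}_k/2$ has the two eigenvalues $\pm\|\vec r\|/2$, one actually has $\|A\|_2 = \|A\|_1/\sqrt 2$). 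Combining these yields $\|\rho_k - \mathbb{1}_k/2\|_2 \leq \|\rho_k - \mathbb{1}_k/2\|_1 \leq \sqrt{2\ln 2}\sqrt{S(\rho_k\|\mathbb{1}_k/2)}$ — but I should be careful about which constant is tight enough to land the stated $(2\ln 2)\gamma$ prefactor. Working through: $\|\rho_k - \mathbb{1}_k/2\|_2^2 \leq \|\rho_k-\mathbb{1}_k/2\|_1^2 \leq 2\ln 2\, S(\rho_k\|\mathbb{1}_k/2)$ via $\|\cdot\|_2\le\|\cdot\|_1$ and Pinsker. Actually using $\|\cdot\|_2 = \|\cdot\|_1/\sqrt 2$ for this traceless qubit operator gives the sharper $\|\rho_k-\mathbb{1}_k/2\|_2^2 = \tfrac12\|\rho_k-\mathbb{1}_k/2\|_1^2 \leq \ln 2\,S(\rho_k\|\mathbb{1}_k/2)$; since the final claim has the factor $2\ln 2$ I would use the safe version to be robust to convention mismatches.

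Putting the pieces together, I would write
\begin{align}
    \left| 1 - \kappa^{PQ}(\vec{x},\vec{x'})\right| &\leq \gamma \sum_{k=1}^n \left( \left\| \rho_k(\vec{x}) - \frac{\mathbb{1}_k}{2}\right\|_2 + \left\| \rho_k(\vec{x'}) - \frac{\mathbb{1}_k}{2}\right\|_2 \right)^2 \nonumber\\
    &\leq \gamma \sum_{k=1}^n \left( \sqrt{2\ln 2}\sqrt{S\!\left(\rho_k(\vec{x})\Big\|\frac{\mathbb{1}_k}{2}\right)} + \sqrt{2\ln 2}\sqrt{S\!\left(\rho_k(\vec{x'})\Big\|\frac{\mathbb{1}_k}{2}\right)} \right)^2 \nonumber\\
    &= (2\ln 2)\,\gamma \sum_{k=1}^n \left( \sqrt{S\!\left(\rho_k(\vec{x})\Big\|\frac{\mathbb{1}_k}{2}\right)} + \sqrt{S\!\left(\rho_k(\vec{x'})\Big\|\frac{\mathbb{1}_k}{2}\right)} \right)^2 \;,
\end{align}
which is exactly $(2\ln 2)\gamma\,\Gamma_s(\vec{x},\vec{x'})$. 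The remaining care points are: (i) confirming the base of the logarithm in the paper's definition of $S(\cdot\|\cdot)$ matches the $\ln 2$ that Pinsker produces (the factor $2\ln 2$ strongly suggests a natural-log convention absorbing the base-2 vs.\ base-$e$ conversion, so I would state the convention explicitly); (ii) verifying the single-qubit identity $\|\rho_k - \mathbb{1}_k/2\|_2 \le \|\rho_k-\mathbb{1}_k/2\|_1$, which is immediate from monotonicity of Schatten norms; and (iii) noting the bound is deterministic in $(\vec{x},\vec{x'})$ since no averaging is involved. No further machinery beyond Pinsker's inequality, the triangle inequality, and $1 - e^{-t}\le t$ is needed.
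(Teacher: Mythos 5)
Your proposal is correct and follows essentially the same route as the paper's proof: the bound $1-e^{-t}\leq t$, the triangle inequality through $\mathbb{1}_k/2$, monotonicity of Schatten norms ($\|\cdot\|_2\leq\|\cdot\|_1$), and Pinsker's inequality with the $2\ln 2$ constant arising from the base-2 relative entropy convention. The only cosmetic difference is that the paper passes to the trace norm before applying the triangle inequality whereas you apply it in the 2-norm first; the constants and conclusion are identical.
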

\begin{proof}
We consider the reduced state on a sub-system of $n_s$ qubits, which we denote as $\rho_s(\vec{x}) = \Tr_{\bar s}[\rho(\vec{x})]$ where $\Tr_{\bar s}[\cdot]$ is the partial trace over the rest of the system $\bar s$. We first remark that the trace distance and relative quantum entropy are related via Pinsker’s inequality as
\begin{align}
    \left\| \rho_s(\vec{x}) - \frac{\mathbb{1}_s}{2^{n_s}}\right\|_1^2 \leq 2 {\rm ln}2 \cdot S\left(\rho_s(\vec{x})\Big\|  \frac{\mathbb{1}_s}{2^{n_s}}\right) \;, \label{eq:proof-thm2-pqk-concentration-reduced-state1}
\end{align}
where $ S(\cdot \| \cdot)$ is the relative von Neumann entropy between two quantum states.

For a given pair of quantum data states $\rho(\vec{x}),  \rho(\vec{x'})$, we now look at a quantity $ \| \rho_s(\vec{x}) - \rho_s(\vec{x'}) \|_2^2$ which is a crucial ingredient to construct the projected quantum kernels (see Eq.~\eqref{eq:projected-gaussian-kernel-mt}). Consider the following bound:
\begin{align}
    \| \rho_s(\vec{x}) - \rho_s(\vec{x'}) \|_2 & \leq \| \rho_s(\vec{x}) - \rho_s(\vec{x'}) \|_1  \\ 
    & =\left \| \left( \rho_s(\vec{x})  - \frac{\mathbb{1}_s}{2^{n_s}}\right) - \left(\rho_s(\vec{x'})  - \frac{\mathbb{1}_s}{2^{n_s}}\right) \right\|_1  \\
    & \leq  \left\|  \rho_s(\vec{x})  - \frac{\mathbb{1}_s}{2^{n_s}}\right \|_1 + \left\|  \rho_s(\vec{x'})  - \frac{\mathbb{1}_s}{2^{n_s}}\right \|_1  \\
    & \leq \sqrt{2 {\rm ln}2} \left( \sqrt{S\left(\rho_s(\vec{x})\Big\|  \frac{\mathbb{1}_s}{2^{n_s}}\right)} +  \sqrt{S\left(\rho_s(\vec{x'})\Big\|  \frac{\mathbb{1}_s}{2^{n_s}}\right)} \right) \label{eq:proof-thm2-pqk-concentration-reduced-state2}\;,
\end{align}
where the first inequality comes from the monotonicity of Schatten $p$ norms, the second inequality is due to the triangle inequality and the last inequality is from the inequality in Eq.~\eqref{eq:proof-thm2-pqk-concentration-reduced-state1}.

For $n_s=1$ as in the projected quantum kernel, we can upper bound $\left| 1 - k^{PQ}(\vec{x},\vec{x'})\right| $ as 
\begin{align}
    \left| 1 - k^{PQ}(\vec{x},\vec{x'})\right|  & =  \left| 1 -e^{-\gamma \sum_{k=1}^n \|\rho_k(\vec{x}) - \rho_k(\vec{x'}) \|_2^2}\right| \\
    & \leq \left| \gamma \sum_{k=1}^n \|\rho_k(\vec{x}) - \rho_k(\vec{x'})\|_2^2\right|  \\
    & \leq (2 \ln 2) \gamma  \sum_{k=1}^n \left[ \sqrt{S\left(\rho_k(\vec{x})\Big\|  \frac{\mathbb{1}_k}{2}\right)} +  \sqrt{S\left(\rho_k(\vec{x'})\Big\|  \frac{\mathbb{1}_k}{2}\right)} \right]^2 
\end{align}
where we use $1 - e^{-t} \leq t $ in the first inequality and the second inequality follows from using the inequality in Eq.~\eqref{eq:proof-thm2-pqk-concentration-reduced-state2}.
\end{proof}

\section{Proof of Proposition~\ref{prop:global-measurement}: Global-measurement-induced concentration}

We restate Proposition~\ref{prop:global-measurement} here for convenience, which describes a model of concentration due to global measurement.

\begin{proposition}[Global-measurement-induced concentration]
Consider the fidelity quantum kernel as defined in Eq.~\eqref{eq:fidelity-kernel-mt} where the data embedding is of the form $U(\vec{x}) = \bigotimes_{k=1}^n U_k(x_k)$ with $x_k$ being an input component encoded in the qubit $k$, and $U_k$ being a single-qubit rotation about the $y$-axis on the $k$-th qubit. For an input data point $\vec{x}$, assume that all components of $\vec{x}$ are independent and uniformly sampled in $[-\pi,\pi]$. Given a product initial state  $\rho_0 = \bigotimes_{k=1}^n \dya{0}$, we have,
\begin{align}
    {\rm Pr}_{\vec{x},\vec{x'}}[|\kappa^{FQ}(\vec{x},\vec{x'}) - \sth{1/2^n}  | \geq \delta] \leq \left(\frac{3}{8}\right)^n \cdot \frac{1}{\delta^2} \;. 
\end{align}
\end{proposition}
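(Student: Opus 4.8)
The plan is to prove the bound via Chebyshev's inequality, so the entire task reduces to computing (or bounding) the mean and second moment of $\kappa^{FQ}(\vec{x},\vec{x'})$ over the product distribution of input pairs. The key structural fact I would use first is that a tensor‑product embedding $U(\vec{x}) = \bigotimes_{k=1}^n U_k(x_k)$ acting on the product state $\rho_0 = \bigotimes_{k=1}^n \dya{0}$ produces a product encoded state $\rho(\vec{x}) = \bigotimes_{k=1}^n \rho_k(\vec{x})$ with $\rho_k(\vec{x}) = U_k(x_k)\dya{0}U_k^\dagger(x_k)$, so the fidelity kernel factorizes as
\[
\kappa^{FQ}(\vec{x},\vec{x'}) = \Tr[\rho(\vec{x})\rho(\vec{x'})] = \prod_{k=1}^n \kappa_k \,, \qquad \kappa_k := \Tr[\rho_k(\vec{x})\rho_k(\vec{x'})] \,.
\]
Since each factor is an overlap of single‑qubit pure states and $U_k$ is a $y$‑rotation with $U_k(x_k)\ket{0} = \cos(x_k/2)\ket{0} + \sin(x_k/2)\ket{1}$, one gets $\kappa_k = |\langle 0|U_k^\dagger(x_k')U_k(x_k)|0\rangle|^2 = \cos^2\!\big((x_k - x_k')/2\big)$. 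Because all the components $x_k$ and $x_k'$ are mutually independent and uniform on $[-\pi,\pi]$, the variables $\{\kappa_k\}_{k=1}^n$ are i.i.d., which is exactly what turns moments of the product into products of scalar moments.

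The second step is to evaluate the single‑qubit moments. Using $\cos^2 u = \tfrac12(1+\cos 2u)$ and $\cos^4 u = \tfrac38 + \tfrac12\cos 2u + \tfrac18\cos 4u$ with $u = (x_k-x_k')/2$, every term reduces to an expectation of $\cos\!\big(m(x_k-x_k')\big)$ for $m\in\{1,2\}$. Expanding $\cos(m(x_k-x_k')) = \cos(mx_k)\cos(mx_k') + \sin(mx_k)\sin(mx_k')$ and using independence gives $\Ebb[\cos(m(x_k-x_k'))] = \Ebb[\cos(mx_k)]^2 + \Ebb[\sin(mx_k)]^2$, and for $x_k$ uniform on $[-\pi,\pi]$ and nonzero integer $m$ both expectations vanish. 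Hence $\Ebb[\kappa_k] = 1/2$ and $\Ebb[\kappa_k^2] = 3/8$.

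Putting the pieces together, $\Ebb_{\vec{x},\vec{x'}}[\kappa^{FQ}(\vec{x},\vec{x'})] = \prod_{k=1}^n \Ebb[\kappa_k] = 1/2^n$, which confirms the claimed fixed point $\mu = 1/2^n$, and $\Ebb_{\vec{x},\vec{x'}}[(\kappa^{FQ}(\vec{x},\vec{x'}))^2] = \prod_{k=1}^n \Ebb[\kappa_k^2] = (3/8)^n$, so that
\[
\Var_{\vec{x},\vec{x'}}[\kappa^{FQ}(\vec{x},\vec{x'})] = \left(\tfrac38\right)^n - \left(\tfrac14\right)^n \leq \left(\tfrac38\right)^n \,.
\]
Applying Chebyshev's inequality around the mean $1/2^n$ then yields $\Pr_{\vec{x},\vec{x'}}[|\kappa^{FQ}(\vec{x},\vec{x'}) - 1/2^n| \geq \delta] \leq (3/8)^n/\delta^2$, which is the statement. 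I do not anticipate a genuine obstacle in this argument: it is elementary trigonometric integration once the factorization is in place. The only points requiring care are fixing the $R_y$ convention so that $\kappa_k$ comes out precisely as $\cos^2\!\big((x_k-x_k')/2\big)$, and the bookkeeping that converts the i.i.d. product structure into the product of the scalar moments $1/2$ and $3/8$.
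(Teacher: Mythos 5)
Your proof is correct and follows essentially the same route as the paper's: factorize the kernel over qubits using the product structure of the embedding and initial state, compute the per-qubit moments $\Ebb[\kappa_k]=1/2$ and $\Ebb[\kappa_k^2]=3/8$, and apply Chebyshev's inequality. The only cosmetic difference is that you carry out the single-qubit integrals via explicit trigonometric identities (and note the exact variance $(3/8)^n-(1/4)^n$), whereas the paper states the value $3/8$ as a direct computation and simply bounds the variance by the second moment.
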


\begin{proof}
Similar to the proof of Theorem~\ref{thm:expressivity-kernel}, we upper bound the variance of the kernel over the input data and then use Chebyshev's inequality to obtain the concentration bound. The difference here is that we specify the form of the data-embedding as $U(\vec{x}) = \bigotimes_{k=1}^n U_k(x_k)$ and the initial state as $\rho_0 = \bigotimes_{k=1}^n \rho_0^{(k)}$. Now consider the following:
\begin{align}
    \Var_{\vec{x},\vec{x'}}[\kappa^{FQ}(\vec{x},\vec{x'})] 
    & \leq \mathbb{E}_{\vec{x},\vec{x'}} [(\kappa^{FQ}(\vec{x},\vec{x'}))^2] \\
    & = \int dU(\vec{x}) \int dU(\vec{x'}) \Tr[U(\vec{x}) \rho_0 U^\dagger(\vec{x}) U(\vec{x'}) \rho_0 U^\dagger(\vec{x'})] \Tr[U(\vec{x}) \rho_0 U^\dagger(\vec{x}) U(\vec{x'}) \rho_0 U^\dagger(\vec{x'})] \\
    & = \int dU(\vec{x}) \int dU(\vec{x'})\prod_{k=1}^n \Tr[U_k(x_k) \rho_0^{(k)} U^\dagger_k(x_k)U_k(x'_k) \rho_0^{(k)} U^\dagger_k(x'_k)] \nonumber \\ 
    & \qquad\qquad\qquad\qquad\qquad\qquad\qquad\qquad \times \prod_{k=1}^n \Tr[U_k(x_k) \rho_0^{(k)} U^\dagger_k(x_k)U_k(x'_k) \rho_0^{(k)} U^\dagger_k(x'_k)] \\
    & = \int dU(\vec{x}) \int dU(\vec{x'}) \prod_{k=1}^n \Tr[(U_k(x_k))^{\otimes 2} (\rho_0^{(k)})^{\otimes 2}(U_k^\dagger(x_k))^{\otimes 2}(U_k(x'_k))^{\otimes 2} (\rho_0^{(k)})^{\otimes 2}(U_k^\dagger(x'_k))^{\otimes 2}] \\
    & =   \prod_{k=1}^n \Tr[\int dU_k(x_k)(U_k(x_k))^{\otimes 2} (\rho_0^{(k)})^{\otimes 2}(U_k^\dagger(x_k))^{\otimes 2}\int dU_k(x'_k)(U_k(x'_k))^{\otimes 2} (\rho_0^{(k)})^{\otimes 2}(U_k^\dagger(x'_k))^{\otimes 2}] \\
    & =   \left(\frac{3}{8}\right)^n\,.
\end{align}
In the third equality we have used the fact that since the components of $\vec{x}$ are independently sampled, then $\int dU(\vec{x})=\prod \int dU_k(x_k)$. Then, in the fourth inequality we have used the following result, which can be verified to  hold for $U_k(x_k)=e^{-i x_k Y}$, and for $\rho_0^{(k)}=\dya{0}$ via a direct computation:
\begin{equation}
    \Tr[\int dU_k(x_k)(U_k(x_k))^{\otimes 2} (\rho_0^{(k)})^{\otimes 2}(U_k^\dagger(x_k))^{\otimes 2}\int dU_k(x'_k)(U_k(x'_k))^{\otimes 2} (\rho_0^{(k)})^{\otimes 2}(U_k^\dagger(x'_k))^{\otimes 2}]=\frac{3}{8}\,.
\end{equation}
\sth{In addition, we can show that the concentration point becomes exponentially small with the number of qubits.
\begin{align}
    \Ebb_{\vec{x},\vec{x'}}[\kappa^{\rm FQ}(\vec{x},\vec{x'})] & =  \prod_{k=1}^n \int dU_k(x_k) \int dU_k(x'_k)U_k(x'_k) \Tr[U_k(x_k) \rho_0^{(k)}U_k^\dagger(x_k)  \rho_0^{(k)}U_k^\dagger(x'_k)] \\
    & = \frac{1}{2^n} \;.
\end{align}
where each term in the product is evaluated to be $1/2$ with $U_k(x_k)=e^{-i x_k Y}$ and $\rho_0^{(k)}=\dya{0}$.}
\end{proof}

\subsection{Extension to arbitrary local unitaries}

We can further generalize the previous proposition to the case $U_k$ is a general unitary. Now, the following result holds.

\begin{supplemental_proposition}[Generalized global-measurement-induced concentration]
Consider the fidelity quantum kernel as defined in Eq.~\eqref{eq:fidelity-kernel-mt} where the data embedding is of the form $U(\vec{x}) = \bigotimes_{k=1}^n U_k(x_k)$ with $x_k$ is the $k^{th}$ element of $\vec{x}$ encoded in the qubit $k$. For an input data point $\vec{x}$, assume that all components of $\vec{x}$ are independent. Given that the initial state is a product state $\rho_0 = \bigotimes_{k=1}^n \rho_0^{(k)}$, we have
\begin{align}
    {\rm Pr}_{\vec{x},\vec{x'}}[|\kappa^{FQ}(\vec{x},\vec{x'}) - \mu  | \geq \delta] \leq \frac{\prod_{k=1}^n G^{(k)}_1(\varepsilon_{\mathbb{U}_{x_k}})}{\delta^2} \;, 
\end{align}
where $\mu = \mathbb{E}_{\vec{x},\vec{x'}}[\kappa^{FQ}(\vec{x},\vec{x'})]$ and
\begin{align}
   G^{(k)}_1(\varepsilon_{\mathbb{U}_{x_k}}) = \frac{1}{3} + \varepsilon_{\mathbb{U}_{x_k}}\left(\varepsilon_{\mathbb{U}_{x_k}} + \sqrt{\frac{4}{3}}\right) \;. 
\end{align}
Here, $\varepsilon_{\mathbb{U}_{x_k}} = \left\|\mathcal{A}_{\mathbb{U}_{x_k}}\left(\rho_0^{(k)}\right) \right\|_1$ is a data-dependent local expressivity measure of the local unitary $U_k(x_k)$ over all possible values of $x_k$ encoded in qubit $k$, where $\mathcal{A}_{\mathbb{U}}(\cdot)$ is defined in Eq.~\eqref{eq:expressivity-measure-mt}. 
\end{supplemental_proposition}

\begin{proof}
Similar to the proof of Theorem~\ref{thm:expressivity-kernel}, we upper bound the variance of the kernel over the input data and then use Chebyshev's inequality to obtain the concentration bound. The difference here is that we specify the form of the data-embedding as $U(\vec{x}) = \bigotimes_{k=1}^n U_k(x_k)$ and the initial state as $\rho_0 = \bigotimes_{k=1}^n \rho_0^{(k)}$. Now consider the following:
\begin{align}
    \Var_{\vec{x},\vec{x'}}[\kappa^{FQ}(\vec{x},\vec{x'})] 
    & \leq \mathbb{E}_{\vec{x},\vec{x'}} [(\kappa^{FQ}(\vec{x},\vec{x'}))^2] \\
    & = \int dU(\vec{x}) \int dU(\vec{x'}) \Tr[U(\vec{x}) \rho_0 U^\dagger(\vec{x}) U(\vec{x'}) \rho_0 U^\dagger(\vec{x'})] \Tr[U(\vec{x}) \rho_0 U^\dagger(\vec{x}) U(\vec{x'}) \rho_0 U^\dagger(\vec{x'})] \\
    & = \int dU(\vec{x}) \int dU(\vec{x'})\prod_{k=1}^n \Tr[U_k(x_k) \rho_0^{(k)} U^\dagger_k(x_k)U_k(x'_k) \rho_0^{(k)} U^\dagger_k(x'_k)] \nonumber \\ 
    & \qquad\qquad\qquad\qquad\qquad\qquad\qquad\qquad \times \prod_{k=1}^n \Tr[U_k(x_k) \rho_0^{(k)} U^\dagger_k(x_k)U_k(x'_k) \rho_0^{(k)} U^\dagger_k(x'_k)] \\
    & = \int dU(\vec{x}) \int dU(\vec{x'}) \prod_{k=1}^n \Tr[(U_k(x_k))^{\otimes 2} (\rho_0^{(k)})^{\otimes 2}(U_k^\dagger(x_k))^{\otimes 2}(U_k(x'_k))^{\otimes 2} (\rho_0^{(k)})^{\otimes 2}(U_k^\dagger(x'_k))^{\otimes 2}] \\
    & =  \prod_{k=1}^n \Tr[\int dU_k(x_k)(U_k(x_k))^{\otimes 2} (\rho_0^{(k)})^{\otimes 2}(U_k^\dagger(x_k))^{\otimes 2}\int dU_k(x'_k)(U_k(x'_k))^{\otimes 2} (\rho_0^{(k)})^{\otimes 2}(U_k^\dagger(x'_k))^{\otimes 2}] \\
    & = \prod_{k=1}^n \Tr[ \left(\mathcal{V}_{{\rm Haar}}(\rho_0^{(i)}) - \AC_{\Ubb_{x_k}}(\rho_0^{(i)}) \right)^2 ] \\
    & = \prod_{k=1}^n \left( \Tr[\left(\VC_{\rm Haar} (\rho_0^{(i)})\right)^2] + \Tr[  \AC_{\Ubb_{x_k}}(\rho_0^{(i)})\left( \AC_{\Ubb_{x_k}}(\rho_0^{(i)}) - 2 \VC_{{\rm Haar}}(\rho_0^{(i)})\right)] \right) \\
    & \leq \prod_{k=1}^n \left( \frac{1}{3} + \left| \Tr[  \AC_{\Ubb_{x_k}}(\rho_0^{(i)})\left( \AC_{\Ubb_{x_k}}(\rho_0^{(i)}) - 2 \VC_{{\rm Haar}}(\rho_0^{(i)})\right)]\right| \right)  \\
    & \leq  \prod_{k=1}^n \left( \frac{1}{3}  + \left\|\AC_{\Ubb_{x_k}}(\rho_0^{(i)}) \right\|_2 \left\|\AC_{\Ubb_{x_k}}(\rho_0^{(i)})  - 2 \VC_{{\rm Haar}}(\rho_0^{(i)})\right\|_2 \right) \label{eq:proof-global-1}\\
    & \leq \prod_{k=1}^n\left[\frac{1}{3} + \varepsilon_{\mathbb{U}_{x_k}}\left(\varepsilon_{\mathbb{U}_{x_k}} + \sqrt{\frac{4}{3}}\right) \right]\label{eq:proof-global-2}
\end{align}
where the second equality comes from substituting $U(\vec{x}) = \bigotimes_{k=1}^n U_k(x_k)$ and $\rho_0 = \bigotimes_{k=1}^n \rho_0^{(k)}$ followed by using the trace property $\Tr[X\otimes Y] =\Tr[X]\Tr[Y]$, the fourth equality is due to the assumption that all components of $x$ and $x'$ are independent, the fifth equality is due to the assumption that $\vec{x}$ and $\vec{x'}$ are drawn from the same distribution and the use of the definition of the local superoperator $\AC_{\Ubb_{x_k}}\left(\rho_0^{(i)}\right)$. In addition, we note that $\Tr[\left(\VC_{\rm Haar} (\rho_0^{(i)})\right)^2] = \frac{1}{3}$. The inequalities~\eqref{eq:proof-global-1} and \eqref{eq:proof-global-2} follow the same steps as \eqref{eq:proof-thm1-first-fqk} to \eqref{eq:proof-thm1-last-fqk} in the proof of Theorem~\ref{thm:expressivity-kernel}. That is, we apply the triangle inequality followed by H\"older's inequality in~\eqref{eq:proof-global-1} and we use the monotonicity of Schatten $p$-norm in~\eqref{eq:proof-global-2}. In the last step, we also substitute in $\varepsilon_{\Ubb_{x_k}} = \left\| \AC_{\Ubb_{x_k}}(\rho_0^{(i)})\right\|_1$. With this upper bound of the variance, we invoke Chebyshev's inequality, leading to our desired result.

\end{proof}
In the limit where all single-qubit unitaries are random (i.e. $\varepsilon_{U_k} = 0 \; \forall k$), the upper bound in~\eqref{eq:prop1} takes the value $1/3^n$ and therefore the kernel exponentially concentrates probabilistically. 

\section{Proof of Theorem~\ref{thm:noise-kernel}: Noise-induced concentration}\label{appendix:proof-noise}
In this section, we prove Theorem~\ref{thm:noise-kernel} which formally establishes how noise leads to the concentration of quantum kernels. 
We first note that a quantum state of $n$ qubits can be expressed in the Pauli basis as
\begin{align}\label{eq:noise-initial-state}
    \rho & = \frac{1}{2^n} (\mathbb{1} + \sum_i a_i \sigma_i) \\
    & = \frac{1}{2^n} (\mathbb{1}  + \vec{a}\cdot\vec{\sigma}) \;,
\end{align}
where $a_i$ is a coefficient associated with a Pauli operator $\sigma_i \in \{\mathbb{1}, X, Y, Z \}^{\otimes n}/\{\mathbb{1}^{\otimes n}\}$. Correspondingly, $\vec{a}$ is a vector of such coefficients and $\vec{\sigma}$ is a vector of such Pauli operators. We now provide three lemmas describing the evolution of quantum states under unitary transformations and noise channels.

\begin{lemma} [Pauli coefficients under unitary transformations] \label{lemma:noise-unitary}
Consider the Pauli decomposition of a state $\rho$ that takes the form in Eq.~\eqref{eq:noise-initial-state}. $\| \vec{a} \cdot \vec{\sigma} \|_p$ is invariant under the unitary transformation $\rho \rightarrow U\rho U^\dagger$.
\end{lemma}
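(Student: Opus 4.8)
The plan is to reduce the claim to the standard fact that Schatten $p$-norms are invariant under unitary conjugation. First I would write $\rho = \frac{1}{2^n}(\mathbb{1} + M)$ with $M := \vec{a}\cdot\vec{\sigma}$, a Hermitian operator with $\Tr[M]=0$. Under the transformation $\rho \mapsto U\rho U^\dagger$ we have $U\rho U^\dagger = \frac{1}{2^n}(\mathbb{1} + UMU^\dagger)$ since $U\mathbb{1}U^\dagger=\mathbb{1}$. Because $UMU^\dagger$ is Hermitian and traceless ($\Tr[UMU^\dagger]=\Tr[M]=0$), and because the non-identity Paulis $\{\sigma_i\}$ together with $\mathbb{1}$ form an orthogonal basis of Hermitian operators, $UMU^\dagger$ expands purely over the $\{\sigma_i\}$, i.e. $UMU^\dagger = \vec{a}'\cdot\vec{\sigma}$ for some real vector $\vec{a}'$. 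Hence the transformed state indeed has the stated Pauli form, and it suffices to show $\|UMU^\dagger\|_p = \|M\|_p$.

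Next I would invoke unitary invariance of the Schatten $p$-norm directly. The singular values of $UMU^\dagger$ are the eigenvalues of $|UMU^\dagger| = \sqrt{(UMU^\dagger)^\dagger UMU^\dagger} = \sqrt{U M^\dagger M U^\dagger} = U\,|M|\,U^\dagger$, which coincide (with multiplicity) with the eigenvalues of $|M|$. Since the Schatten $p$-norm is a function only of the multiset of singular values, we get $\|UMU^\dagger\|_p = \|M\|_p = \|\vec{a}\cdot\vec{\sigma}\|_p$, which is precisely the asserted invariance.

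There is no substantial obstacle here: the only point requiring care is the bookkeeping in the first step, namely verifying that the identity component is untouched while the traceless component transforms by conjugation, so that ``the Pauli part of $U\rho U^\dagger$'' is well defined and equals $UMU^\dagger$; after that the statement is exactly unitary invariance of Schatten norms. If desired one can additionally note $a'_i = \frac{1}{2^n}\Tr[\sigma_i\, U M U^\dagger]$ and that conjugation by $U$ induces an orthogonal map $O_U$ on the coefficient vector via $\Tr[\sigma_i U \sigma_j U^\dagger] = 2^n (O_U)_{ij}$, which gives an alternative route for $p=2$; but this is not needed for general $p$ and I would mention it only in passing.
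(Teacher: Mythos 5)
Your proof is correct and follows essentially the same route as the paper, which simply cites linearity of conjugation and unitary invariance of Schatten norms; you have just spelled out the bookkeeping (that the identity component is fixed and the traceless part transforms to $UMU^\dagger = \vec{a}'\cdot\vec{\sigma}$) that the paper leaves implicit. No gaps.
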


\begin{proof}
The invariance under the transformation is a direct consequence of the linearity of unitary transformations and the unitary invariance of Schatten norms.
\end{proof}

\begin{lemma}[Pauli coefficients under noise channels]\label{lemma:noise-noise}
Consider the Pauli coefficients of a state $\rho$ that takes the form in Eq.~\eqref{eq:noise-initial-state} under the action of the local Pauli noise channel $\mathcal{N} = \mathcal{N}_1 \otimes ... \otimes \mathcal{N}_n $ where each $\mathcal{N}_j$ acts on qubit $j$ according to Eq.~\eqref{eq:noise-noise-model}. Then, we have
\begin{align}
    \| \vec{a'}\cdot\vec{\sigma} \|_2 \leq q  \| \vec{a}\cdot\vec{\sigma} \|_2 \;,
\end{align}
where $\vec{a'}$ are the new Pauli coefficients after the action of noise.
\end{lemma}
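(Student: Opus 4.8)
The plan is to prove Supplemental Lemma on how Pauli coefficients contract under the local Pauli noise channel by working directly in the Pauli basis and exploiting the fact that the noise channel acts diagonally on Pauli strings. First I would write the $n$-qubit Pauli decomposition $\rho = \frac{1}{2^n}(\mathbb{1} + \vec{a}\cdot\vec{\sigma})$ and observe that, because $\mathcal{N} = \mathcal{N}_1 \otimes \cdots \otimes \mathcal{N}_n$ with each $\mathcal{N}_j$ satisfying $\mathcal{N}_j(\sigma) = q_\sigma \sigma$ for $\sigma \in \{X,Y,Z\}$ and $\mathcal{N}_j(\mathbb{1})=\mathbb{1}$ (unitality), each Pauli string $\sigma_i = \bigotimes_j P_j^{(i)}$ is an eigenvector of $\mathcal{N}$ with eigenvalue $\prod_{j: P_j^{(i)}\neq\mathbb{1}} q_{P_j^{(i)}}$. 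Hence the new coefficients are $a_i' = \lambda_i a_i$ with $|\lambda_i| \leq q^{|\sigma_i|} \leq q$ for every nontrivial Pauli string, where $|\sigma_i| \geq 1$ denotes its weight and $q = \max\{|q_X|,|q_Y|,|q_Z|\}$ as in Eq.~\eqref{eq:noise-charac}.

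The key computational step is then to bound $\|\vec{a'}\cdot\vec{\sigma}\|_2$. Since the nontrivial Pauli strings form an orthogonal basis with $\Tr[\sigma_i \sigma_j] = 2^n \delta_{ij}$, I would compute $\|\vec{a'}\cdot\vec{\sigma}\|_2^2 = \Tr[(\vec{a'}\cdot\vec{\sigma})^2] = 2^n \sum_i (a_i')^2 = 2^n \sum_i \lambda_i^2 a_i^2 \leq 2^n q^2 \sum_i a_i^2 = q^2 \|\vec{a}\cdot\vec{\sigma}\|_2^2$, using $\lambda_i^2 \leq q^2$. Taking square roots gives the claimed inequality $\|\vec{a'}\cdot\vec{\sigma}\|_2 \leq q\|\vec{a}\cdot\vec{\sigma}\|_2$. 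The only subtlety worth spelling out is that the bound $|\lambda_i| \leq q$ requires each Pauli string appearing with a nonzero coefficient to have weight at least one, which is guaranteed because we have explicitly excluded the identity term $\mathbb{1}^{\otimes n}$ from the sum (it is tracked separately by the leading $\mathbb{1}/2^n$, which is fixed by unitality).

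I do not expect any real obstacle here — this is a short orthogonality-plus-eigenvalue argument. The one place to be careful is the normalization convention for the Schatten $2$-norm and the orthogonality relation $\Tr[\sigma_i\sigma_j] = 2^n\delta_{ij}$ (as opposed to $\delta_{ij}$ for normalized Pauli operators); as long as the same convention is used on both sides of the inequality the factor of $2^n$ cancels, so the statement is convention-independent. If one wanted a cleaner argument avoiding explicit coordinates, one could alternatively note that $\mathcal{N}$ restricted to the traceless subspace is a contraction in the Hilbert–Schmidt norm with operator norm at most $q$ (being a tensor product of single-qubit contractions, each of Hilbert–Schmidt operator norm $\max_\sigma|q_\sigma| \leq q$ on the traceless part), which immediately yields the result; but the direct Pauli-coefficient computation is the most transparent and matches the lemma's phrasing.
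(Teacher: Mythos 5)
Your proof is correct and follows essentially the same route as the paper: both diagonalize the Pauli noise channel on the Pauli-string basis, bound each eigenvalue by $q$ using the fact that every nontrivial string has weight at least one, and conclude via the Hilbert--Schmidt norm. Your version is slightly more careful in that it makes explicit the orthogonality relation $\Tr[\sigma_i\sigma_j]=2^n\delta_{ij}$, which the paper's chain of norm inequalities uses only implicitly when replacing coefficients by larger ones.
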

\begin{proof}
We have
\begin{align}
    \left\| \vec{a'}\cdot \vec{\sigma} \right\|_2 & = \left\| \mathcal{N}(\vec{a}\cdot\vec{\sigma}) \right\|_2  \\
    & = \left\|\mathcal{N}\left(  \sum_i a_i \sigma_i \right) \right\|_2 \\
    & = \left\|\sum_i a_i q_X^{x(i)}q_Y^{y(i)}q_Z^{z(i)} \sigma_i \right\|_2  \label{eq:noise-proof-lemma1} \\
    &  \leq \left\|\sum_i a_i q^{x(i) + y(i) + z(i)} \sigma_i \right\|_2  \\
    & \leq q \left\|\vec{a}\cdot\vec{\sigma} \right\|_2 \; ,
\end{align}
where, in Eq.~\eqref{eq:noise-proof-lemma1} we use the fact that $\mathcal{N}(\sigma_i) = q_X^{x(i)}q_Y^{y(i)}q_Z^{z(i)} \sigma_i$ with $x(i), y(i), z(i)$ being the number of respective single-qubit $X, Y, Z$ Pauli operators that appear in the Pauli string $\sigma_i$, the first inequality comes from replacing the coefficients with the noise parameter as defined in Eq.~\eqref{eq:noise-charac}, and in the final inequality we use the fact that there is at least one non-identity single-qubit Pauli term in $\sigma_i$ i.e. $x(i)+y(i)+z(i) \geq 1$ (recall that $\sigma_i \in \{\mathbb{1}, X, Y, Z \}^{\otimes n}/\{\mathbb{1}^{\otimes n}\}$). 
\end{proof}

\begin{lemma}[Supplemental Lemma 6 from Ref.~\cite{wang2020noise}, adapted]\label{lemma:noise-pauli-entropy} Consider a quantum state $\rho$ under the action of the local Pauli noise channel $\mathcal{N} = \mathcal{N}_1 \otimes ... \otimes \mathcal{N}_n $ where each $\mathcal{N}_j$ acts on qubit $j$ according to Eq.~\eqref{eq:noise-noise-model}. Then, we have
\begin{align}
    S_2\left(\mathcal{N}(\rho)\Big\|  \frac{\mathbb{1}}{2^{n}}\right) \leq q^{b} S_2\left(\rho \Big\|  \frac{\mathbb{1}}{2^{n}}\right) \;, 
\end{align}
where $S_2(\cdot\|\cdot)$ is the sandwiched 2-R\'enyi relative entropy and $b = 1/(2\ln(2)) \approx 0.72$.
\end{lemma}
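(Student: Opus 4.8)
The statement is a restatement, in the present notation, of Supplemental Lemma~6 of Ref.~\cite{wang2020noise}, so the plan is to reproduce that argument. The structural fact to exploit is that the reference state $\mathbb{1}/2^n$ is the fixed point of the unital channel $\mathcal{N}$ and commutes with every operator; hence the sandwiched $2$-R\'enyi relative entropy collapses to a function of the purity alone, $S_2(\rho\,\|\,\mathbb{1}/2^n) = \log_2\!\bigl(2^n\,\Tr[\rho^2]\bigr)$. Writing $\rho = \tfrac{1}{2^n}(\mathbb{1} + \vec{a}\cdot\vec{\sigma})$ in the Pauli basis as in Eq.~\eqref{eq:noise-initial-state}, one has $2^n\,\Tr[\rho^2] = 1 + \tfrac{1}{2^n}\|\vec{a}\cdot\vec{\sigma}\|_2^2$, so $S_2(\rho\,\|\,\mathbb{1}/2^n)$ is a monotone function of the Hilbert--Schmidt distance of $\rho$ to $\mathbb{1}/2^n$.

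With this in hand I would proceed in three steps. First, introduce $x := 2^n\,\Tr[\rho^2] - 1 = \tfrac{1}{2^n}\|\vec{a}\cdot\vec{\sigma}\|_2^2 \ge 0$, so that $S_2(\rho\,\|\,\mathbb{1}/2^n) = \log_2(1+x)$. Second, apply Supplemental Lemma~\ref{lemma:noise-noise} to the Pauli coefficients of $\mathcal{N}(\rho)$: since $\|\vec{a}'\cdot\vec{\sigma}\|_2 \le q\,\|\vec{a}\cdot\vec{\sigma}\|_2$, the ``purity gap'' contracts quadratically in $q$, i.e. $x' := 2^n\,\Tr[\mathcal{N}(\rho)^2] - 1 \le q^2 x$. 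Third, transfer this contraction of $x$ to a contraction of $\log_2(1+x)$: the heart of the argument is a real-variable inequality controlling how $\log_2(1+x)$ decreases when $x$ is rescaled by $q^2$, and it is here that the exponent $b = 1/(2\ln 2)$ is produced, via elementary bounds on $\ln(1+t)$ (of the type $\tfrac{t}{1+t} \le \ln(1+t) \le t$) together with the change of base $\log_2 = \ln/\ln 2$. Chaining the three steps gives $S_2(\mathcal{N}(\rho)\,\|\,\mathbb{1}/2^n) = \log_2(1+x') \le \log_2(1+q^2 x) \le q^{b}\log_2(1+x) = q^{b}\,S_2(\rho\,\|\,\mathbb{1}/2^n)$. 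When this lemma is later chained with the interleaved unitary layers of Eq.~\eqref{eq:noise-noise-evolution-mt}, Supplemental Lemma~\ref{lemma:noise-unitary} guarantees those layers leave $S_2(\cdot\,\|\,\mathbb{1}/2^n)$ unchanged, so the factors $q^{b}$ accumulate cleanly across the $L+1$ noise layers.

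The first two steps are routine: the closed form of $S_2$ against the maximally mixed reference is a one-line computation and the purity contraction is an immediate consequence of Supplemental Lemma~\ref{lemma:noise-noise}. The delicate step is the third one. The logarithm compresses the gap between $x$ and $q^2 x$ strongly when $x$ is large --- precisely the regime relevant to the application, where the initial state is pure and $x_0 = 2^n-1$ --- so the naive quadratic factor $q^2$ is unattainable and one must settle for the weaker $q^{b}$ with $b\approx 0.72$; pinning down this sharp exponent requires the logarithmic estimate to be carried out carefully rather than bounded crudely. I expect that real-variable inequality to be the main obstacle.
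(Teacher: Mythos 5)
Your first two steps are fine --- the identity $S_2(\rho\,\|\,\mathbb{1}/2^n)=\log_2\bigl(2^n\Tr[\rho^2]\bigr)$ and the purity contraction $x'\leq q^2x$ both hold --- but the real-variable inequality on which you rest the third step is false, and this is not a matter of carrying out the logarithmic estimate more carefully. For fixed $0<q<1$ the ratio $\log_2(1+q^2x)/\log_2(1+x)$ tends to $1$ as $x\to\infty$, so it cannot be bounded by any constant $q^{b}<1$ uniformly in $x$; concretely, for $q=1/2$ and $x=10^6$ one has $\log_2(1+q^2x)\approx 17.9$ while $q^{1/(2\ln 2)}\log_2(1+x)\approx 12.1$. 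The large-$x$ regime is exactly the one you need (a pure input has $x=2^n-1$), so the chain $x'\leq q^2x \Rightarrow \log_2(1+x')\leq q^{b}\log_2(1+x)$ cannot be repaired. The real defect sits in your step two rather than step three: Supplemental Lemma~\ref{lemma:noise-noise} throws away the fact that a weight-$w$ Pauli coefficient is damped by $q^{w}$ rather than by a single factor of $q$, and it is precisely this weight-resolved contraction that makes the lemma true. A state with $x$ of order $2^n$ necessarily carries most of its Hilbert--Schmidt weight on high-weight Pauli strings, where the purity gap shrinks by far more than $q^2$: for $\rho=\dya{0}^{\otimes n}$ under pure dephasing one finds $x'=(1+q_Z^2)^n-1$, giving $\log_2(1+x')=n\log_2(1+q_Z^2)\leq q_Z^{b}\,n$, not $x'=q_Z^2(2^n-1)$.

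The paper does not attempt this scalar reduction at all. It invokes Corollary~5.6 of Ref.~\cite{hirche2020contraction}, a tensorization result for the contraction coefficient of the sandwiched $2$-R\'enyi relative entropy relative to a product reference state: one verifies that the single-qubit Pauli channel $\NC_j$ composed with the inverse of a depolarizing channel of parameter $1-p_d=q$ is a $2\to 2$ contraction (a diagonal Pauli-transfer-matrix computation), and the corollary then yields the $n$-qubit bound with constant $\alpha(q,\mathbb{1}/2)=\exp\bigl(\tfrac{1}{2}\tfrac{\ln q}{\ln 2}\bigr)=q^{1/(2\ln 2)}$, which is where the exponent $b$ actually comes from. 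If you want a self-contained argument along your lines, you would have to keep track of the full weight distribution $\sum_i a_i^2\, q_X^{2x(i)}q_Y^{2y(i)}q_Z^{2z(i)}$ rather than the single scalar $x$.
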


\begin{proof}
This result is a direct consequence of Corollary 5.6 of Ref.~\cite{hirche2020contraction}. Let us first restate the general result for convenience: For some density operator $\gamma$ and probability $p>0$ consider the channel $\AC_{p,\gamma}(\cdot)=p (\cdot)+(1-p) \gamma$. Suppose that some other channel $\BC$ satisfies
\begin{equation}
    \left\|\Gamma_{\BC(\gamma)}^{-\frac{1}{2}} \circ \BC \circ \AC_{p,\gamma}^{-1} \circ \Gamma_{\gamma}^{\frac{1}{2}}\right\|_{2 \rightarrow 2} \leq 1\,
\end{equation}
where $\AC_{p,\gamma}^{-1}$ denotes the inverse map of $\AC_{p,\gamma}$ and $\Gamma_{\gamma}^{p}$ denotes the map $\Gamma_{\gamma}^{p}(\cdot)=\gamma^{\frac{p}{2}} (\cdot) \gamma^{\frac{p}{2}}$. Then, for all states $\rho$, 
\begin{equation}\label{eq:2renyi}
    S_{2}\!\left(\BC^{\otimes n}(\rho) \| \BC^{\otimes n}\left(\gamma^{\otimes n}\right)\right) \leq \alpha(p, \gamma) S_{2}\!\left(\rho \| \gamma^{\otimes n}\right)
\end{equation}
where $\alpha(p, \gamma)=\exp \left(\left(1-\left\|\gamma^{-1}\right\|_\infty^{-1}\right) \frac{\ln (p)}{\ln \left(\left\|\gamma^{-1}\right\|_\infty\right)}\right)$. Now, in our case, we consider $\AC_{p,\gamma}$ and $\BC$ to act on a single qubit. Then, if one chooses $\AC_{p,\gamma}$ to be the single qubit depolarizing channel $\DC_{p_d}$ with depolarizing probability $p_d$ and maximally mixed fixed point $\gamma = \frac{\mathbb{1}}{2}$, then \eqref{eq:2renyi} implies that if some unital qubit channel $\BC$ (which acts trivially on the identity) satisfies
\begin{equation}\label{eq:ball}
    \left\|\BC \circ \DC_{p_d}^{-1} \right\|_{2 \rightarrow 2} \leq 1\,.
\end{equation}
From the previous, we have  for any $n$-qubit state $\rho$ 
\begin{align}
    S_{2}\!\left(\BC^{\otimes n}(\rho) \Big\| \frac{{\mathbb{1}}^{\otimes n}}{2^n}\right) & = \alpha\big((1-p_d),\mathbb{1}/2\big) S_{2}\!\left(\rho \Big\| \frac{{\mathbb{1}}^{\otimes n}}{2^n}\right) \\
    & \leq (1-p_d)^b S_{2}\!\left(\rho \Big\| \frac{{\mathbb{1}}^{\otimes n}}{2^n}\right)\,,\label{eq:2renyi2}
\end{align}
where we denote $b = 1/(2\ln(2)) \approx 0.72$.

Now suppose that $\BC$ is the single-qubit Pauli noise channel $\NC_i$ as defined in \eqref{eq:noise-noise-model}. We can explicitly write the condition \eqref{eq:ball} as
\begin{equation}\label{eq:ball2}
    \sup_{X\neq 0} \frac{\|\NC_i\circ\DC_{p_d}^{-1}(X) \|_2}{\|X \|_2} \leq 1.
\end{equation}
We note that the superoperator (Pauli transfer matrix) of the concatenated channel $\NC_i\circ\DC_{p_d}^{-1}$ is diagonal with diagonal entries $(1,\frac{q_x}{1-p_d},\frac{q_y}{1-p_d},\frac{q_z}{1-p_d})$. Consider an arbitrary complex matrix $X$ decomposed in the Pauli basis as $X = a\mathbb{1} + \Vec{b}\cdot\Vec{\sigma}$, where $a$ is a complex number and $\Vec{b}$ is a vector of complex coefficients. Then one can verify 
\begin{align}
    \|X \|_2 &= \sqrt{2}\sqrt{|a|^2 + \textstyle\sum_i |b_i|^2}\,, \\
    \|\NC_i\circ\DC_p^{-1} (&X) \|_2 = \sqrt{2}\sqrt{|a|^2 + \textstyle\sum_i\left(\frac{q_i}{1-p_d}\right)^2|b_i|^2} \,,
\end{align}
where the second expression is obtained by reading off the diagonal entries of the superoperator of $\NC_i\circ\DC_{p_d}^{-1}$. In order to satisfy condition \eqref{eq:ball2}, one can pick
\begin{equation}
    1-p_d=\max_{\sigma \in \{X,Y,Z\}} |q_{\sigma}|\,.
\end{equation}
Thus, by denoting $q = {\max_{\sigma \in \{X,Y,Z\}} |q_\sigma|}$ and inspecting \eqref{eq:2renyi2} we obtain the result as required.
\end{proof}

\bigskip

Now, we are ready to prove Theorem~\ref{thm:noise-kernel}, which is restated below for convenience. 

\begin{theorem}[Noise-induced concentration]
Consider the $L$-layered data embedding circuit defined in Eq.~\eqref{eq:noise-embedding-mt} with input state $\rho_0$ and the layer-wise Pauli noise model defined in Eq.~\eqref{eq:noise-noise-evolution-mt} with characteristic noise parameter $q < 1$. The concentration of quantum kernel values may be bounded as follows
\begin{align}
    \left| \tilde{\kappa}(\vec{x},\vec{x'}) - \mu \right| \leq F(q,L) \; .
\end{align}
\begin{enumerate}
    \item For the fidelity quantum kernel $\tilde{\kappa}(\vec{x},\vec{x'}) =\tilde{\kappa}^{FQ}(\vec{x},\vec{x'})$, we have $ \mu = 1/2^n$, and
    \begin{align}
        F(q,L) = q^{2L+1}  \left\| \rho_0 - \frac{\mathbb{1}}{2^n} \right\|_2\;.
    \end{align}
    \item For the projected quantum kernel $\tilde{\kappa}(\vec{x},\vec{x'}) =\tilde{\kappa}^{PQ}(\vec{x},\vec{x'})$, we have $ \mu = 1$, and
    \begin{align}
        F(q,L) = (8 \ln 2) \gamma n  q^{b(L+1)}S_2\left(\rho_0 \Big\|  \frac{\mathbb{1}}{2^{n}}\right) \;,
    \end{align}
    where $S_2( \cdot \| \cdot)$ denotes the sandwiched 2-R\'enyi relative entropy and $b = 1/(2\ln(2)) \approx 0.72 $.
\end{enumerate}
Additionally, the noisy data-encoded quantum state $\tilde{\rho}(\vec{x})$ concentrates towards the maximally mixed state as
\begin{align} 
     \left\| \tilde{\rho}(\vec{x}) - \frac{\mathbb{1}}{2^n} \right\|_2 \leq q^{L+1} \left\| \rho_0 - \frac{\mathbb{1}}{2^n} \right\|_2 \; .
\end{align}
\end{theorem}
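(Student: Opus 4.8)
The plan is to prove the three bounds in Theorem~\ref{thm:noise-kernel} by tracking how the relevant distance-to-maximally-mixed quantities contract under each layer of the noisy embedding circuit \eqref{eq:noise-noise-evolution-mt}. The natural order is: (i) prove the state-concentration bound \eqref{eq:noise-state-con-mt} first, since it is the cleanest and the other two essentially reduce to it; (ii) deduce the fidelity-kernel bound; (iii) deduce the projected-kernel bound using the R\'enyi-entropy contraction lemma.

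\medskip

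\noindent\textbf{State concentration.} Write $\tilde\rho(\vec x) = \frac{\mathbb 1}{2^n} + \vec a_{\text{fin}}\cdot\vec\sigma/2^n$ in the Pauli basis, so that $\|\tilde\rho(\vec x) - \mathbb 1/2^n\|_2 = \|\vec a_{\text{fin}}\cdot\vec\sigma\|_2/2^n$. The idea is to peel off the $2L+1$ operations in \eqref{eq:noise-noise-evolution-mt} one at a time, starting from $\rho_0$. By Supplemental Lemma~\ref{lemma:noise-unitary}, each channel $\UC_l(\vec x_l)$ leaves $\|\vec a\cdot\vec\sigma\|_2$ invariant; by Supplemental Lemma~\ref{lemma:noise-noise}, each of the $L+1$ noise channels $\NC$ multiplies it by at most $q$. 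Hence after the full evolution $\|\vec a_{\text{fin}}\cdot\vec\sigma\|_2 \leq q^{L+1}\|\vec a_0\cdot\vec\sigma\|_2$, which upon dividing by $2^n$ gives \eqref{eq:noise-state-con-mt}. (Careful bookkeeping: there are $L+1$ layers of noise — one before each of the $L$ unitaries plus one after the last — so the exponent is $L+1$.)

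\medskip

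\noindent\textbf{Fidelity kernel.} Here $\tilde\kappa^{FQ}(\vec x,\vec x') = \Tr[\tilde\rho(\vec x)\tilde\rho(\vec x')]$ and $\mu = 1/2^n = \Tr[\frac{\mathbb 1}{2^n}\tilde\rho(\vec x')]$ (using $\Tr\tilde\rho(\vec x')=1$). Subtracting and inserting $\mathbb 1/2^n$ gives $|\tilde\kappa^{FQ} - 1/2^n| = |\Tr[(\tilde\rho(\vec x) - \mathbb 1/2^n)\tilde\rho(\vec x')]|$; then apply H\"older's inequality with the $2$–$2$ pairing, $|\Tr[AB]|\le \|A\|_2\|B\|_2$, bound $\|\tilde\rho(\vec x')\|_2 \le 1$, and invoke the state bound just proved — but note that to recover the stated exponent $q^{2L+1}$ one should instead bound $\|\tilde\rho(\vec x')\|_2$ more carefully, or rather split the difference symmetrically: write $|\Tr[(\tilde\rho(\vec x)-\mathbb 1/2^n)(\tilde\rho(\vec x')-\mathbb 1/2^n)]|$ plus a cross term. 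Actually the clean route giving $q^{2L+1}$ is: $\Tr[\tilde\rho(\vec x)\tilde\rho(\vec x')] - 1/2^n = \Tr[(\tilde\rho(\vec x)-\mathbb 1/2^n)(\tilde\rho(\vec x')-\mathbb 1/2^n)]$ since the cross terms each equal $1/2^n$ and cancel appropriately; then H\"older gives $\le \|\tilde\rho(\vec x)-\mathbb 1/2^n\|_2\,\|\tilde\rho(\vec x')-\mathbb 1/2^n\|_1$, and using $\|\cdot\|_1 \le 1$ together with \eqref{eq:noise-state-con-mt} applied once more (noting $\|\rho_0-\mathbb 1/2^n\|_1 \le \sqrt{2^n}\|\rho_0-\mathbb 1/2^n\|_2$ is too lossy) — the exponent $2L+1$ strongly suggests one uses the $2$-norm bound on \emph{one} factor and absorbs only \emph{one} extra power of $q$ from noise on the other factor via $\|\tilde\rho(\vec x')-\mathbb 1/2^n\|_1\le q\cdot(\text{something }\le 1)$; I would reconstruct the precise telescoping so that total noise powers sum to $2L+1 = (L+1)+L$, i.e. the full noise on one copy and all-but-the-last-layer contraction estimated trivially on the other.

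\medskip

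\noindent\textbf{Projected kernel.} Use $|1-\tilde\kappa^{PQ}| \le \gamma\sum_k \|\tilde\rho_k(\vec x)-\tilde\rho_k(\vec x')\|_2^2$ via $1-e^{-t}\le t$, exactly as in the proof of Theorem~\ref{thm:expressivity-kernel}. Then bound each $\|\tilde\rho_k(\vec x)-\tilde\rho_k(\vec x')\|_2 \le \|\tilde\rho_k(\vec x)-\mathbb 1_k/2\|_2 + \|\tilde\rho_k(\vec x')-\mathbb 1_k/2\|_2$ by the triangle inequality, convert $2$-norm to $1$-norm, relate $1$-norm to the sandwiched $2$-R\'enyi relative entropy $S_2(\tilde\rho(\vec x)\|\mathbb 1/2^n)$ of the \emph{full} state via data-processing (partial trace) and Pinsker-type/Audenaert-type bounds, and finally apply Supplemental Lemma~\ref{lemma:noise-pauli-entropy} iteratively: $L+1$ noise layers each contract $S_2(\cdot\|\mathbb 1/2^n)$ by $q^b$, and unitary layers leave it invariant, giving $S_2(\tilde\rho(\vec x)\|\mathbb 1/2^n)\le q^{b(L+1)}S_2(\rho_0\|\mathbb 1/2^n)$. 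Collecting the constants $(2\ln 2)$ from Pinsker, the factor $4$ from $(t+s)^2\le 2t^2+2s^2$ twice, etc., yields the prefactor $(8\ln 2)\gamma n$.

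\medskip

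\noindent\textbf{Main obstacle.} The routine part is the telescoping; the genuinely delicate steps are (a) getting the exact exponent $2L+1$ (not $2L+2$ or $2(L+1)$) in the fidelity bound right, which requires being precise about which copy absorbs how many powers of $q$ and tracking that the $\Tr\rho=1$ normalization removes exactly the cross terms; and (b) justifying that the single-qubit reduced-state distance $\|\tilde\rho_k-\mathbb 1_k/2\|_1$ is controlled by the \emph{global} R\'enyi-2 relative entropy $S_2(\tilde\rho\|\mathbb 1/2^n)$ — this needs monotonicity of $S_2$ under the partial-trace channel plus a Pinsker-type inequality for the sandwiched R\'enyi divergence, which must be cited or established carefully since $S_2$ is not the von Neumann relative entropy used in Theorem~\ref{thm:entanglement-kernel}.
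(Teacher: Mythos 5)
Your overall architecture (state concentration first, then the two kernel bounds as corollaries of it) matches the paper, and your treatments of the state bound and of the projected kernel are essentially the paper's: the paper also telescopes Supplemental Lemmas~\ref{lemma:noise-unitary} and~\ref{lemma:noise-noise} over the $L+1$ noise layers, and for the projected kernel it goes $|1-\tilde\kappa^{PQ}|\le\gamma\sum_k\|\tilde\rho_k(\vec x)-\tilde\rho_k(\vec x')\|_2^2$, triangle inequality, then bounds the local $2$-norm by the \emph{global} $1$-norm (via the SWAP trick and H\"older rather than your partial-trace data-processing step, but both are valid), applies ordinary Pinsker to get the von Neumann relative entropy, and resolves your worry (b) simply by the monotonicity $S(\cdot\|\cdot)\le S_2(\cdot\|\cdot)$ in the R\'enyi order before invoking Supplemental Lemma~\ref{lemma:noise-pauli-entropy} once per noise layer; the constant $8\ln 2$ is $2\times 2$ from $(s+t)^2\le 2s^2+2t^2$ times $2\ln 2$ from Pinsker, as you anticipated.

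On the fidelity kernel you talk yourself out of a proof you already have. Your symmetric route gives $\tilde\kappa^{FQ}-1/2^n=\Tr[(\tilde\rho(\vec x)-\mathbb{1}/2^n)(\tilde\rho(\vec x')-\mathbb{1}/2^n)]$ (the cross terms vanish exactly as you say), and Cauchy--Schwarz plus two applications of \eqref{eq:noise-state-con-mt} yields $q^{2(L+1)}\|\rho_0-\mathbb{1}/2^n\|_2^2$. Since $q<1$ and $\|\rho_0-\mathbb{1}/2^n\|_2\le 1$, this is \emph{stronger} than the stated $q^{2L+1}\|\rho_0-\mathbb{1}/2^n\|_2$, so the theorem follows; there is no need to reverse-engineer the exponent, and your proposed repair via ``$\|\tilde\rho(\vec x')-\mathbb{1}/2^n\|_1\le q\cdot(\text{something}\le 1)$'' is the one genuinely shaky step in the write-up. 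The device the paper actually uses, which you did not find, is to pull one preparation through the trace as an adjoint channel: since $\NC$ is self-adjoint and unitary channels have unitary adjoints, $\Tr[\WC_{\vec x}(\rho_0)\WC_{\vec x'}(\rho_0)]=\Tr[\rho_0\,\WC_{\vec x,\vec x'}(\rho_0)]$ with $\WC_{\vec x,\vec x'}=\WC_{\vec x}^{\dagger}\circ\WC_{\vec x'}$ a single composite circuit implementing $U^{\dagger}(\vec x)U(\vec x')$ with $2L+1$ interleaved noise layers; one H\"older step with $\|\rho_0\|_2\le 1$ then gives exactly $q^{2L+1}\|\rho_0-\mathbb{1}/2^n\|_2$. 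Both routes are legitimate; the adjoint trick buys the exact stated constant, while your symmetric bound is marginally tighter but asymmetric in form.
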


\begin{proof}
First we prove the concentration of noisy quantum states toward the maximally mixed state, following Ref.~\cite{wang2020noise}. We express $\tilde{\rho}(\vec{x})$ explicitly in terms of its Pauli decomposition as $\tilde{\rho}(\vec{x}) =  \frac{1}{2^n} (\mathbb{1} + \vec{\tilde{a}}\cdot\vec{\sigma})$ where $\vec{\tilde{a}}$ are the coefficients after the noisy embedding in Eq.~\eqref{eq:noise-noise-evolution-mt}. Hence, we have
\begin{align}
    \left\|\tilde{\rho}(\vec{x}) -  \frac{\mathbb{1}}{2^n} \right\|_2 & = \left\| \frac{1}{2^n}  \vec{\tilde{a}}\cdot\vec{\sigma}\right\|_2\\
    & \leq q^{L+1} \left\| \frac{1}{2^n} \vec{a}\cdot\vec{\sigma}\right\|_2 \\
    & = q^{L+1} \left\| \rho_0 - \frac{\mathbb{1}}{2^n} \right\|_2 \;, \label{eq:noise-state-concentration}
\end{align}
where the inequality comes from repeatedly applying Lemma~\ref{lemma:noise-unitary} and Lemma~\ref{lemma:noise-noise} $L+1$ times. This completes the proof of the quantum state concentration.
Now we prove the concentration of quantum kernels. Similar to the proof of Theorem~\ref{thm:noise-kernel}, we separate the proof into two sub-sections for the fidelity and projected quantum kernels.

\bigskip\noindent
\underline{Fidelity quantum kernels:} Consider a noisy fidelity quantum kernel 
which can be expressed as 
\begin{align} 
    \tilde{\kappa}^{FQ}(\vec{x},\vec{x'}) & = \Tr[\tilde{\rho}(\vec{x})\tilde{\rho}(\vec{x'})] \\
    & = \Tr[\WC_{\vec{x}}(\rho_0) \WC_{\vec{x'}}(\rho_0)] \\
    & =  \Tr[\rho_0 \WC_{\vec{x},\vec{x'}}(\rho_0)]\;,\label{eq:noise-fqk}
\end{align}
where we have denoted the channel $\WC_{\vec{x},\vec{x'}} = \WC_{\vec{x}}^{\dag}\circ\WC_{\vec{x'}}$ which is composed of
\begin{align} \label{eq:noise-fqk-channel}
    \WC_{\vec{x},\vec{x'}} = \NC \circ \UC_1^\dagger(\vec{x}_1) \circ \NC \cdots \NC\circ  \UC_L^\dagger(\vec{x}_L) \circ \NC \circ \UC_L(\vec{x'}_L)\circ \NC \cdots \NC \circ \UC_1(\vec{x'}_1) \;,
\end{align}
where we have used the fact that the Pauli noise channel in Eq.~\eqref{eq:noise-noise-model} is self adjoint. Now, we show the concentration of the fidelity kernel. 
\begin{align}
    \left|\tilde{\kappa}^{FQ}(\vec{x},\vec{x'}) - \frac{1}{2^n}\right| & = \left| \Tr[\rho_0  \WC_{\vec{x},\vec{x'}}(\rho_0)]  - \frac{1}{2^n} \Tr\left[ \rho_0 \right]\right| \;  \\
    & = \left| \Tr[\rho_0 \left( \WC_{\vec{x},\vec{x'}}(\rho_0) - \frac{\mathbb{1}}{2^n}\right)] \right| \; \\
    & \leq \|\rho_0 \|_2 \left\| \WC_{\vec{x},\vec{x'}}(\rho_0) - \frac{\mathbb{1}}{2^n}  \right\|_2 \; \\
    & \leq q^{2L+1} \left\|\rho_0 - \frac{\mathbb{1}}{2^n}  \right\|_2 \; ,
\end{align}
where in the first line we express the noisy quantum kernel as in Eq.~\eqref{eq:noise-fqk} and also use $\Tr[\rho_0] = 1$, the first inequality is due to H\"older's inequality, and lastly the second inequality comes from using the fact that $\|\rho_0\|_2 \leq 1$ together with repeatedly applying Lemma~\ref{lemma:noise-unitary} and Lemma~\ref{lemma:noise-noise} for the noisy quantum channel $\WC_{\vec{x},\vec{x'}}$ in Eq.~\eqref{eq:noise-fqk-channel}. We note that this is similar to the proof of quantum state concentration but the number of instances of noise $\mathcal{N}$ is now $2L + 1$ as we want to implement $U^\dagger(\vec{x})U(\vec{x'})$ instead of $U(\vec{x})$.  

\bigskip\noindent
\underline{Projected quantum kernels:} Here we have
\begin{align}
   \left|1 - \tilde{\kappa}^{PQ}(\vec{x},\vec{x'})\right|  & =   \left| 1 -  e^{ - \gamma \sum_{k=1}^n \| \tilde{\rho}_k(\vec{x}) - \tilde{\rho}_k(\vec{x'})\|^2_2} \right| \\
    & \leq   \gamma \sum_{k=1}^n  \| \tilde{\rho}_k(\vec{x}) - \tilde{\rho}_k(\vec{x'})\|^2_2  \\
    & \leq \gamma \sum_{k=1}^n \left( \left\|\tilde{\rho}_k(\vec{x}) - \frac{\mathbb{1}_k}{2} \right\|_2 +  \left\|\tilde{\rho}_k(\vec{x'}) - \frac{\mathbb{1}_k}{2} \right\|_2 \right)^2 \\
    & \leq \gamma \sum_{k=1}^n \left( 2 \left\|\tilde{\rho}_k(\vec{x}) - \frac{\mathbb{1}_k}{2} \right\|_2^2 + 2 \left\|\tilde{\rho}_k(\vec{x'}) - \frac{\mathbb{1}_k}{2} \right\|_2^2 \right) \;,
\end{align}
where the first inequality is due to the standard inequality $1 - e^{-t} \leq t$, the second inequality is due to the triangle inequality, the third inequality is due to the fact that $(s+t)^2 \leq 2s^2 + 2t^2$. 
Note that the concentration of the reduced state $\tilde{\rho}_k(\vec{x})$ can be bounded as
\begin{align}
    \left\|  \tilde{\rho}_k(\vec{x})  - \frac{\mathbb{1}_k}{2}\right \|_2^2 
    &=  \Tr_k \left[ \Tr_{\bar k}\left( \tilde{\rho}(\vec{x}) - \mathbb{1}/2^n \right)\Tr_{\bar k}\left( \tilde{\rho}(\vec{x}) - \mathbb{1}/2^n \right)\right] \\
    &= \Tr[ (\tilde{\rho}(\vec{x}) - \mathbb{1}/2^n)\otimes (\tilde{\rho}(\vec{x}) - \mathbb{1}/2^n) ({\rm SWAP_{k_1,k_2}} \otimes \mathbb{1}_{\bar{k}_1, \bar{k}_2})] \\
    &\leq  \|(\tilde{\rho}(\vec{x}) - \mathbb{1}/2^n)\otimes (\tilde{\rho}(\vec{x}) - \mathbb{1}/2^n) \|_1 \|{\rm SWAP_{k_1,k_2}} \otimes \mathbb{1}_{\bar{k}_1, \bar{k}_2} \|_{\infty} \\
    &\leq \|\tilde{\rho}(\vec{x}) - \mathbb{1}/2^n \|_1^2 \\
    &\leq 2 \ln 2 \cdot S\left(\tilde{\rho}(\vec{x})\Big\|  \frac{\mathbb{1}}{2^{n}}\right) \\
    &\leq  2 \ln 2 \cdot S_2\left(\tilde{\rho}(\vec{x})\Big\|  \frac{\mathbb{1}}{2^{n}}\right) \\
    &\leq   (2 \ln 2 )q^{b(L+1)}  S_2\left(\rho_0 \Big\|  \frac{\mathbb{1}}{2^{n}}\right) 
\end{align}
where we use the SWAP trick in the second line with $\rm SWAP_{k_1,k_2}$ being the SWAP operator between two reduced subsystems,  
the first inequality comes from H\"older's inequality, in the second line we use the fact that ${\rm SWAP_{k_1,k_2}} \otimes \mathbb{1}_{\bar k_1, \bar k_2}$ has eigenvalues in $\{1,-1 \}$ and that $\| X \otimes Y \|_1 = \|X\|_1 \|Y\|_1$, the third inequality is due to Pinsker's inequality, the fourth inequality is due to the monotonicity of the sandwiched 2-R\'enyi relative entropy, and finally the last inequality is from repeatedly applying the data-processing inequality and Lemma~\ref{lemma:noise-pauli-entropy} for each layer of unitaries and noise.
Hence, we have the concentration bound of the projected quantum kernel as
\begin{align}
    \left|1 - \tilde{\kappa}^{PQ}(\vec{x},\vec{x'})\right| \leq (8 \ln 2) \gamma n  q^{b(L+1)}S_2\left(\rho_0 \Big\|  \frac{\mathbb{1}}{2^{n}}\right)  \;,
\end{align}
which completes the proof. 

\end{proof}

\section{Error Mitigation}\label{appendix:em-fail}

Error mitigation (EM) strategies have been widely implemented to reduce the effect of noise in variational quantum algorithms (VQAs) and QML.
Despite tremendous success to significantly suppress errors of expectation values, it has been recently shown that current common EM strategies cannot resolve the issue of noise-induced barren plateaus for QNNs~\cite{wang2021can, takagi2021fundamental, quek2022exponentially}. In particular, even after applying EM protocols, the cost landscape could remain exponentially flat, or otherwise exponential resources are required to reach a sufficiently high resolution of the expectation values. 
As estimating quantum kernels in practice requires us to measure expectation values of some operators, the results derived in Ref.~\cite{wang2021can} can be directly applied to the kernel framework, which we explain in more detail below. Consequently, EM strategies also fails to remove the exponential decay in the noise-induced kernel concentration.

Given that we are interested in a noise-free expectation value  $C = \Tr[ \rho O ]$ of some operator $O$ and $n$-qubit quantum state of interest $\rho$, the main purpose of EM strategies is to approximate $C$ under the effect of noise by implementing some protocol which gives us a noise-mitigated quantity $C_m$. Usually, an EM strategy includes one or more of the following protocols: running some modification of the initial circuit of interest, modifying the observable, utilizing multiple copies of the state of interest, performing classical post-processing. 
Most of well-known EM strategies can be grouped under a unified framework, which includes Zero-Noise Extrapolation~\cite{li2017efficient,temme2017error,endo2018practical,kandala2018error}, Clifford Data Regression~\cite{czarnik2020error}, Virtual Distillation~\cite{huggins2020virtual,koczor2020exponential} and Probabilistic Error Cancellation~\cite{temme2017error,endo2018practical}. Within this unified framework, we prepare expectation values of the form
\begin{align}\label{eq:appendix-em-term}
    E_{\sigma, A, M, k} = \Tr\left[ A\left( \sigma^{\otimes M}\otimes \dya{0}^{\otimes k} \right)\right] \;,
\end{align}
where we allow modifications to the original circuit leading to the state $\sigma$ instead of $\rho$, $M$ copies of $\sigma$ and $k$ ancillary qubits are allowed, and we measure some operator $A$ that is allowed to act on up to the entire composite system. 
Then, the noise-mitigated value $C_m$ can be expressed as a linear combination of $E_{\sigma, A, M, k}$ over different $\sigma, A, M, k$, as
\begin{align} \label{eq:appendix-em-general}
    C_m = \sum_{\sigma, A, M, k \in \mathcal{T}_{\rm EM}} a_{\sigma, A,M,k} E_{\sigma, A, M, k} \; ,
\end{align}
where $a_{\sigma, A,M,k}$ are chosen coefficients and $\mathcal{T}_{\rm EM}$ is a set containing all relevant indices for the considered EM strategy. As an example, consider Zero Noise Extrapolation (ZNE). In this strategy, the noise strength in the circuit is augmented leading to noisier expectation values, and then the error mitigated value is estimated via extrapolating back to the noiseless regime. Given two noisy expectation values $\tilde{C}(\epsilon_q), \tilde{C}(a\epsilon_q)$ with two different noise strengths $\epsilon_q$ and $a \epsilon_q$ for $a>1$, we can express $C_m$ using the first level of Richard extrapolation as
\begin{align}
    C_m = \frac{a \tilde{C}(\epsilon_q) - \tilde{C}(a\epsilon_q)}{a - 1}  \;,
\end{align}
which can serve as a better approximation of $C$ than $\tilde{C}(\epsilon_q)$. Note that this takes the form of the general expression in Eq.~\eqref{eq:appendix-em-general}. 
For more details about error mitigation, we refer the reader to Ref.~\cite{cerezo2020variationalreview,endo2021hybrid}.
We now quote one of the main results in Ref.~\cite{wang2021can} which is relevant to our work.

\begin{supplemental_theorem}[Theorem 1 and Corollary 1 in Ref.~\cite{wang2021can}]\label{sup-thm:em-noise}
Consider an error mitigation strategy that as a step in its protocol, estimates $E_{\sigma, A, M, k} $ as defined in  Eq.~\eqref{eq:appendix-em-term}. Suppose that $\sigma$ is prepared with a circuit consisting of $L_{\sigma}$ layers of gates with local depolarizing noise with depolarizing probability $p$ occurring before and after each layer. Under these conditions, $E_{\sigma, A, M, k} $ exponentially concentrates on a state-independent fixed point with increasing circuit depth as
\begin{align}
    \left|E_{\sigma, A, M, k} - \Tr\left[ A \left( \frac{\mathbb{1}^{\otimes M}}{2^{ Mn}}\otimes \ket{0}\bra{0}^{\otimes k}\right)\right]  \right| \leq \sqrt{\ln 4} \|A\|_\infty M n^{1/2} (1-p)^{L_{\sigma} + 1} \;.
\end{align}
In addition, consider a noise-mitigated expectation value $C_m$ constructed via Eq.~\eqref{eq:appendix-em-general}. 
Assume $ \|A\|_\infty  \in \mathcal{O}({\rm poly}(n))$ and denote $a_{\rm max}, M_{\rm max}$ as the maximum values $a_{\sigma,A,M,k}$ and $M$ in $\mathcal{T}_{\rm EM}$. Then $C_m$ concentrates towards some fixed point ${F}_0$ as
\begin{align}
    |C_m - {F}_0| \in \mathcal{O}(2^{-b n} a_{\rm max} |\mathcal{T}_{\rm EM}| M_{\rm max}) \;,
\end{align}
for some constant $b>0$ if the circuit depths satisfies $L_{\sigma} \in \Omega(n)$ for all $\sigma \in \mathcal{T}_{\rm EM}$. 
\end{supplemental_theorem}

In the context of quantum kernels, the fidelity quantum kernel $\kappa^{FQ}(\vec{x},\vec{x'})$ can be estimated by executing a circuit $U^\dagger(\vec{x'})U(\vec{x})$ and then measuring the expectation value of the projection $O = \dya{\psi_0}$. 
Alternatively, we can perform a SWAP test to measure the fidelity kernel via $\kappa^{FQ}(\vec{x},\vec{x'}) = \Tr[(\rho(\vec{x})\otimes\rho(\vec{x'})){\rm SWAP}]$. Here, in the context of error mitigation we can regard $\rho(\vec{x})\otimes\rho(\vec{x'})$ as the state of interest and the SWAP operator as the measurement observable.  

On the other hand, for the projected quantum kernel, a kernel value $\kappa^{PQ}(\vec{x},\vec{x'})$ can be obtained by first estimating $n$ individual terms $\| \rho_k(\vec{x}) - \rho_k(\vec{x'})\|_2^2$ on the quantum computer. Since we consider the reduced states on a single-qubit subsystem, it is efficient to directly estimate $\rho_k(\vec{x})$ and $\rho_k(\vec{x'})$
by measuring the expectation values of Pauli operators $X, Y$ and $Z$ on qubit $k$. Thus, $6$ expectation values are required for each pair of states, leading to $6n$ expectation values in total. Alternatively, $\| \rho_k(\vec{x}) - \rho_k(\vec{x'})\|_2^2$ can be expressed as
\begin{align}\label{eq:appendix-em-pqk}
    \| \rho_k(\vec{x}) - \rho_k(\vec{x'})\|_2^2 = \Tr[\rho_k^2(\vec{x})] + \Tr[\rho_k^2(\vec{x'})] - 2 \Tr[\rho_k(\vec{x})\rho_k(\vec{x'})] \;.
\end{align}
We can then measure the purities and the overlap in~\eqref{eq:appendix-em-pqk} using the SWAP test, leading to 3 expectation values for each pair of states and $3n$ expectation values in total.
After all individual terms are estimated, we can sum them and exponentiate them classically to obtain the kernel value.

In all cases, we can see that measuring quantum kernels in practice requires estimating the expectation values of some observable. Therefore, Supplemental Theorem~\ref{sup-thm:em-noise} can be directly applied. Consequently, EM strategies which prepare the noise-mitigated expectation value according to Eq.~\eqref{eq:appendix-em-general} cannot mitigate the exponential concentration of kernel values due to the effect of noise. We note that, for small $L$ we do not rule out that error mitigation can indeed offer improvements. However, as found in Ref.~\cite{wang2021can}, even when considering fixed system size, error mitigation can often impair resolvability compared to applying no error mitigation at all.

\section{Proof of Proposition~\ref{prop:target-kernel}: Concentration of kernel target alignment}

In this section, we provide a proof of Proposition~\ref{prop:target-kernel}, showing that the concentration of the kernel target alignment in Eq.~\eqref{eq:kernel-ta-mt} can be upper bounded by the concentration of parametrized quantum kernels. We first present some useful lemmas.

\begin{lemma}[Variance of sum of correlated random variables]\label{lemma:var-sum}
For a collection of $N_s$ correlated random variables $\{ R_i \}_{i=1}^{N_s}$, we have
\begin{align}
    \Var\left[\sum_i R_i\right] \leq N_s \sum_i \Var[R_i] \; .
\end{align}
\end{lemma}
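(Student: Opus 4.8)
The plan is to prove this directly from the definition of variance together with the elementary quadratic inequality $(\sum_{i=1}^{N_s} a_i)^2 \leq N_s \sum_{i=1}^{N_s} a_i^2$ (itself a consequence of Cauchy--Schwarz applied to the vectors $(a_1,\dots,a_{N_s})$ and $(1,\dots,1)$), applied pointwise to the centered random variables.

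First I would set $\mu_i = \mathbb{E}[R_i]$ and write
\begin{align}
    \Var\left[\sum_{i=1}^{N_s} R_i\right] = \mathbb{E}\left[\left(\sum_{i=1}^{N_s} (R_i - \mu_i)\right)^2\right] \;.
\end{align}
Then, applying the quadratic inequality $(\sum_i a_i)^2 \leq N_s \sum_i a_i^2$ with $a_i = R_i - \mu_i$ inside the expectation, and using linearity and monotonicity of $\mathbb{E}$, I obtain
\begin{align}
    \Var\left[\sum_{i=1}^{N_s} R_i\right] \leq \mathbb{E}\left[N_s \sum_{i=1}^{N_s} (R_i - \mu_i)^2\right] = N_s \sum_{i=1}^{N_s} \mathbb{E}[(R_i - \mu_i)^2] = N_s \sum_{i=1}^{N_s} \Var[R_i] \;,
\end{align}
which is the claim. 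An equivalent route, if one prefers to keep the covariances visible, is to expand $\Var[\sum_i R_i] = \sum_{i,j} \Cov(R_i,R_j)$, bound each term by $|\Cov(R_i,R_j)| \leq \sqrt{\Var[R_i]\Var[R_j]}$ via Cauchy--Schwarz, and then use $(\sum_i \sqrt{\Var[R_i]})^2 \leq N_s \sum_i \Var[R_i]$; this gives the same conclusion.

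There is essentially no obstacle here: the only ``ingredient'' is the $\ell_1$--$\ell_2$ norm inequality on $\mathbb{R}^{N_s}$, and the rest is bookkeeping. The one point to state carefully is that the inequality holds for arbitrary (correlated, not necessarily independent) random variables with finite second moments, since we never factorize any expectation — the argument is entirely pointwise inside a single expectation. I would close with the $\square$ provided by the proof environment.
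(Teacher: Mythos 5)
Your proof is correct, and it takes a cleaner route than the paper. The paper first establishes the two-variable case $\Var[R_1+R_2]\leq 2\Var[R_1]+2\Var[R_2]$ (via Cauchy--Schwarz on the covariance followed by AM--GM) and then appeals to induction; as literally stated, iterating that two-variable bound accumulates factors of $2$ on the earlier terms rather than producing the uniform constant $N_s$, so the paper's induction requires the reader to redo the argument more carefully (e.g.\ exactly along the lines of your second route). Your primary argument sidesteps all of this: centering the variables and applying the pointwise inequality $\bigl(\sum_i a_i\bigr)^2 \leq N_s \sum_i a_i^2$ inside a single expectation gives the constant $N_s$ in one step, with no induction and no covariance bookkeeping. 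Your alternative route --- expanding $\Var[\sum_i R_i]=\sum_{i,j}\Cov(R_i,R_j)$, bounding each covariance by $\sqrt{\Var[R_i]\Var[R_j]}$, and finishing with the $\ell_1$--$\ell_2$ inequality --- is the "repaired" one-shot version of what the paper is gesturing at, and makes the role of the correlations explicit. Either of your arguments is complete; the pointwise one is the more economical, while the covariance expansion makes clearer where equality can be approached (all variables perfectly correlated with equal variances).
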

\begin{proof}
The variance of the sum of two correlated random variables is given by
\begin{align} \label{bound-var-sum}
    {\rm Var}[R_1 + R_2] & =  {\rm Var}[R_1] +  {\rm Var}[R_2] +  2{\rm Cov}[R_1,R_2] \; ,\\
    & \leq  {\rm Var}[R_1] +  {\rm Var}[R_2]  + 2\sqrt{\Var[R_1]\Var[R_2]}  \; ,\\
    &\leq \Var[R_1] + \Var[R_2] + \sqrt{\Var[R_1]\Var[R_1]} + \sqrt{\Var[R_2]\Var[R_2]} \; , \\
    &= 2\Var[R_1] + 2\Var[R_2] \; ,
\end{align}
where in the first inequality we have used Cauchy-Schwarz, and the second inequality comes from the rearrangement inequality.
Using induction along with the fact that ${\rm Cov}(R_1+R_2,R_3) = {\rm Cov}(R_1,R_3) + {\rm Cov}(R_2,R_3) $, the variance of the full sum can be bounded as presented.
\end{proof}

\begin{lemma}[Variance of product] \label{lemma:var-prod}
Given two correlated random variables $X$ and $Y$, we have
\begin{align}
    \Var[XY] \leq 2\Var[X] |Y^2|_{max} + 2(\mathbb{E}[X])^2\Var[Y]\,, \label{eq:}
\end{align}
where $|Y^2|_{max}$ is the maximum possible value of $Y^2$ i.e. $|Z| _{max} =  \max\{ |Z| : \rm{Pr}(Z) > 0 \}$. 
\end{lemma}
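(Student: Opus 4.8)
The plan is to reduce the claim to the elementary bound $\Var[A+B]\le 2\Var[A]+2\Var[B]$, which is precisely the $N_s=2$ instance of Supplemental Lemma~\ref{lemma:var-sum} (proven just above), together with the trivial inequality $\Var[Z]\le\mathbb{E}[Z^2]$.

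First I would peel off the mean of $X$: write $X=\mathbb{E}[X]+\delta X$ with $\delta X:=X-\mathbb{E}[X]$, so that $XY=\mathbb{E}[X]\,Y+\delta X\cdot Y$. Applying $\Var[A+B]\le 2\Var[A]+2\Var[B]$ with $A=\mathbb{E}[X]\,Y$ and $B=\delta X\cdot Y$, and using that $\mathbb{E}[X]$ is a constant so $\Var[\mathbb{E}[X]\,Y]=(\mathbb{E}[X])^2\Var[Y]$, gives
\begin{align}
    \Var[XY]\le 2(\mathbb{E}[X])^2\Var[Y]+2\Var[\delta X\cdot Y]\,.
\end{align}
It then remains to control the last term. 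Since $\Var[\delta X\cdot Y]\le\mathbb{E}[(\delta X)^2Y^2]$ and $Y^2\le|Y^2|_{max}$ holds pointwise on the support of $Y$ (this being exactly how $|Y^2|_{max}$ is defined), we obtain $\Var[\delta X\cdot Y]\le|Y^2|_{max}\,\mathbb{E}[(\delta X)^2]=|Y^2|_{max}\,\Var[X]$. Substituting this back yields the claimed inequality.

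The argument is entirely elementary, so there is no real obstacle; the only point requiring a moment's care is the pointwise bound $Y^2\le|Y^2|_{max}$, which is legitimate precisely because $|Y^2|_{max}$ is the maximum of $|Y|^2$ over outcomes of nonzero probability, together with keeping track of the factors of $2$ introduced by $\Var[A+B]\le 2\Var[A]+2\Var[B]$.
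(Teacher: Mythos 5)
Your proposal is correct and follows essentially the same route as the paper: the identical decomposition $XY=(X-\mathbb{E}[X])Y+\mathbb{E}[X]\,Y$, the same $\Var[A+B]\leq 2\Var[A]+2\Var[B]$ bound, and the same chain $\Var[(X-\mathbb{E}[X])Y]\leq\mathbb{E}[(X-\mathbb{E}[X])^2Y^2]\leq |Y^2|_{\max}\Var[X]$. No gaps.
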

\begin{proof}
We have
\begin{align}
    \Var[X + Y] &= \Var[X] + \Var[Y] + 2\Cov[X,Y] \\
    &\leq \Var[X] + \Var[Y] + 2\sqrt{\Var[X]\Var[Y]} \\
    &\leq \Var[X] + \Var[Y] + \sqrt{\Var[X]\Var[X]} + \sqrt{\Var[Y]\Var[Y]} \\
    &= 2\Var[X] + 2\Var[Y]\,, \label{eq:var-sum}
\end{align}
where in the first inequality we have used Cauchy-Schwarz, and the second inequality comes from the rearrangement inequality. Now consider 
\begin{align}
    \Var[XY] &= \Var\big[(X-\mathbb{E}[X])Y + \mathbb{E}[X]Y\big]\\
    &\leq 2\Var\big[(X-\mathbb{E}[X])Y\big] + 2\Var\big[\mathbb{E}[X]Y\big]\\
    &\leq 2\mathbb{E}\big[ (X-\mathbb{E}[X])^2Y^2 \big] + 2(\mathbb{E}[X])^2\Var[Y]\\
    &\leq 2\mathbb{E}\big[ (X-\mathbb{E}[X])^2 \big] |Y^2 |_{max} + 2(\mathbb{E}[X])^2\Var[Y] \\
    &= 2\Var[X] |Y^2|_{max} + 2(\mathbb{E}[X])^2\Var[Y]\,,
\end{align}
where in the first inequality we have used Eq.~\eqref{eq:var-sum}, in the second inequality we have used the definition of the variance, and in the third inequality we have simply taken the maximum value for $Y^2$. 
\end{proof}

\begin{lemma}\label{lemma:var-taylor}
Given a positive bounded random variable $X$, whose minimum value $|X|_{\min}$ is strictly non-zero, and whose maximum value is $|X|_{\max}$. Then, the following inequality holds
\begin{align}
    \Var\left[1/\sqrt{X}\right] \leq \left (\frac{1}{2 |X|_{\rm min}^3} + \frac{9 (|X|_{\rm max} - |X|_{\rm min} )^2}{32 |X|_{\rm min}^5} \right)\Var[X] \;.
\end{align}
\end{lemma}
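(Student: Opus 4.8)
\textbf{Proof plan for Supplemental Lemma~\ref{lemma:var-taylor}.}
The plan is to bound $\Var[1/\sqrt{X}]$ by Taylor-expanding the function $g(x) = 1/\sqrt{x}$ around a convenient point and controlling the remainder. Concretely, I would first note that for a random variable $X$ with mean $\mu_X := \mathbb{E}[X]$, one has the elementary identity $\Var[g(X)] \le \mathbb{E}[(g(X) - g(\mu_X))^2]$, since the mean minimizes the expected squared deviation. So it suffices to bound $\mathbb{E}[(g(X) - g(\mu_X))^2]$ where $g(x) = x^{-1/2}$.

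Next I would apply Taylor's theorem with the Lagrange (or integral) form of the remainder: $g(X) - g(\mu_X) = g'(\mu_X)(X - \mu_X) + \tfrac{1}{2} g''(\xi)(X - \mu_X)^2$ for some $\xi$ between $X$ and $\mu_X$, hence lying in $[|X|_{\min}, |X|_{\max}]$. Using $(a+b)^2 \le 2a^2 + 2b^2$, this gives
\begin{align}
    \mathbb{E}[(g(X) - g(\mu_X))^2] \le 2 g'(\mu_X)^2 \Var[X] + \tfrac{1}{2}\mathbb{E}\big[g''(\xi)^2 (X-\mu_X)^4\big] \;.
\end{align}
Then I would compute $g'(x) = -\tfrac{1}{2} x^{-3/2}$ and $g''(x) = \tfrac{3}{4} x^{-5/2}$, and bound each piece: $g'(\mu_X)^2 = \tfrac{1}{4}\mu_X^{-3} \le \tfrac{1}{4}|X|_{\min}^{-3}$, and $g''(\xi)^2 \le \tfrac{9}{16}|X|_{\min}^{-5}$. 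For the quartic moment I would use the crude bound $(X - \mu_X)^2 \le (|X|_{\max} - |X|_{\min})^2$ pointwise (since both $X$ and $\mu_X$ lie in $[|X|_{\min},|X|_{\max}]$), so that $\mathbb{E}[(X-\mu_X)^4] \le (|X|_{\max}-|X|_{\min})^2 \Var[X]$. Assembling these gives $\Var[1/\sqrt{X}] \le \big(\tfrac{1}{2}|X|_{\min}^{-3} + \tfrac{9}{32}(|X|_{\max}-|X|_{\min})^2 |X|_{\min}^{-5}\big)\Var[X]$, matching the claimed inequality.

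The main obstacle — really the only subtle point — is handling the remainder term cleanly: one must be careful that $\xi$ genuinely lies in $[|X|_{\min},|X|_{\max}]$ so that $g''(\xi)$ is controlled (this uses positivity and boundedness of $X$ away from zero, exactly the hypotheses), and one must decide how loosely to bound the quartic moment. A tighter route would keep $\mathbb{E}[(X-\mu_X)^4]$ and relate it to $\Var[X]$ via the range, which is what produces the factor $(|X|_{\max}-|X|_{\min})^2$ rather than something involving higher moments; I would take this crude pointwise bound since the statement is only an inequality and the downstream use (bounding $\Var_{\vec\theta}[\mathrm{TA}(\vec\theta)]$ via $\Var_{\vec\theta}[\kappa_{\vec\theta}]$) does not need sharpness. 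The coefficient bookkeeping ($2 \cdot \tfrac14 = \tfrac12$ and $\tfrac12 \cdot \tfrac{9}{16} = \tfrac{9}{32}$) is routine and I would not belabor it.
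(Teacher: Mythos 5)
Your proposal is correct and follows essentially the same route as the paper's proof: a first-order Taylor expansion of $x^{-1/2}$ around $\mathbb{E}[X]$, a Lagrange-remainder bound of the form $\tfrac{3}{8}|X|_{\min}^{-5/2}(X-\mathbb{E}[X])^2$, and a pointwise range bound to reduce the quartic moment to $(|X|_{\max}-|X|_{\min})^2\Var[X]$, yielding the identical constants. The only (immaterial) difference is bookkeeping — you apply $(a+b)^2\le 2a^2+2b^2$ inside the expectation after replacing $\Var[g(X)]$ by $\mathbb{E}[(g(X)-g(\mathbb{E}[X]))^2]$, whereas the paper applies $\Var[A+B]\le 2\Var[A]+2\Var[B]$ to the truncation-plus-remainder decomposition.
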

\begin{proof}
Let us denote $f(X) = 1/\sqrt{X}$. The truncated Taylor expansion of $f(X)$ around $X_0 = \Ebb[X]$ up to order $p$ can be expressed as
\begin{align}
    f_p(X) & = \sum_{m=0}^{p} \frac{f^{(m)}(X_0) }{m!}  (X-X_0)^m \\
    & = \sum_{m=0}^{p} \left(\frac{(-1)^m(2m)!}{2^{2m} (m!)^2}\right)\cdot \frac{(X-X_0)^m}{X_0^{m+1/2}}\,, \label{eq:taylor-sqrt} 
\end{align}
where $f^{(m)}(X_0)$ is the $m$-order derivative evaluated at $X_0$. The second equality is the result of explicitly computing the derivatives. Truncating the series to the first order ($p=1$) gives $f_1(X) = \frac{X - X_0}{2 X_0^{3/2}}$. 
The difference $R_p(X)$ between $f(X)$ and $f_p(X)$ can be bounded using Taylor’s remainder theorem (see, for example, Chapter 20 of Ref.~\cite{kline1998calculus})
\begin{align}
    R_p(X) & \leq \frac{\max_{Z\in[X_0,X]}|f^{(p+1)}(Z)|}{(p+1)!} |X - X_0|^{p+1} \,.
\end{align}
For $p=1$, we have 
\begin{equation}\label{eq:taylor-remainder}
    R_1(X) \leq \frac{3(X - X_0)^{2}}{8|X|_{\min}^{5/2}}\,.
\end{equation}

Then, the variance of $f(X)$ can then be upper bounded as
\begin{align}
    \Var[f(X)] &= \Var[f_1(X) + R_1(X)] \\
    & \leq 2\Var[f_1(x)] + 2\Var[R_1(X)] \\
    & = \frac{\Var[X]}{2 X_0^3} + 2\Var[R_1(X)] \\
    & \leq \frac{\Var[X]}{2 X_0^3} + 2|R_1|_{\rm max}\Ebb[R_1(X)]\\
    & \leq \frac{\Var[X]}{2 X_0^3} + 2 \left( \frac{3 |X-X_0|^2_{\rm max}}{8|X|^{5/2}_{\rm min}}\right) \cdot \Ebb\left[\frac{3(X - X_0)^{2}}{8|X|_{\min}^{5/2}}\right] \\
    & \leq  \frac{\Var[X]}{2 |X|_{\rm min}^3} + \frac{9 (|X|_{\rm max} - |X|_{\rm min} )^2}{32 |X|_{\rm min}^5} \Ebb\left[ (X-X_0)^2\right] \\
    & = \left(\frac{1}{2 |X|_{\rm min}^3} + \frac{9 (|X|_{\rm max} - |X|_{\rm min} )^2}{32 |X|_{\rm min}^5} \right)\Var[X]
\end{align}
where the first inequality is due to the variance of the sum in Lemma~\ref{lemma:var-sum}, the second equality is due explicitly evaluating $\Var[f_1(X)]$, the second inequality is from $\Var[R_1(X)] \leq \Ebb[(R_1(X))^2] \leq |R_1(X)|_{\rm max}\Ebb[R_1(X)]$, in the third inequality we have used Eq.~\eqref{eq:taylor-remainder},  in the fourth inequality we have used $ |X|_{\rm min}\leq X_0$ for the denominator of the term in the brackets and taken the minimum value of $|X|_{\min}^{5/2}$ in the expectation together with $|X-X_0|^2_{\rm max} \leq (|X|_{\rm max} - |X|_{\rm min})^2 $.
In the last line, we recall that $X_0 = \Ebb[X]$ and hence $\Ebb[(X-X_0)^2] = \Var[X]$.

\end{proof}

We are now ready to prove our proposition relating concentration of the kernel target alignment with the concentration of the kernel, which is recalled below for convenience.
\begin{proposition}[Concentration of kernel target alignment]
Consider an arbitrary parameterized kernel $\kappa_{\vec{\theta}}(\vec{x},\vec{x'})$ and a training dataset $\{ \vec{x}_i, y_i \}_{i=1}^{N_s}$ for binary classification with $y_i = \pm 1$. The probability that the kernel target alignment ${\rm TA}(\vec{\theta})$ (defined in Eq.~\eqref{eq:kernel-ta-mt}) deviates from its mean value is approximately bounded as 
\begin{align}
    {\rm Pr}_{\vec{\theta}}[|{\rm TA}(\vec{\theta}) - \mathbb{E}_{\vec{\theta}}[{\rm TA}(\vec{\theta})]| \geq \delta]  \leq \frac{ M \sum_{i,j}  \Var_{\vec{\theta}}[\kappa_{\vec{\theta}}(\vec{x}_i,\vec{x}_j)]}{\delta^2} \;,
\end{align}
with $M =\frac{8+N_s^3\left(9(N_s-1)^2+16\right)}{4N_s}$. 
\end{proposition}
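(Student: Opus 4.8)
\emph{Proof proposal.} The plan is to bound $\Var_{\vec{\theta}}[{\rm TA}(\vec{\theta})]$ in terms of $\sum_{i,j}\Var_{\vec{\theta}}[\kappa_{\vec{\theta}}(\vec{x}_i,\vec{x}_j)]$ and then apply Chebyshev's inequality, following the template of the other concentration results. First I would rewrite the target alignment in a more convenient form: since $y_i\in\{-1,+1\}$ we have $(y_iy_j)^2=1$ and hence $\sum_{i,j}(y_iy_j)^2=N_s^2$, so that
\begin{equation}
    {\rm TA}(\vec{\theta})=\frac{1}{N_s}\,A(\vec{\theta})\,B(\vec{\theta})^{-1/2}\,,\qquad A(\vec{\theta}):=\sum_{i,j}y_iy_j\,\kappa_{\vec{\theta}}(\vec{x}_i,\vec{x}_j)\,,\quad B(\vec{\theta}):=\sum_{i,j}\kappa_{\vec{\theta}}(\vec{x}_i,\vec{x}_j)^2\,.
\end{equation}
Thus it suffices to control $\Var_{\vec{\theta}}[A\,B^{-1/2}]$.

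The core of the argument then combines the three auxiliary variance lemmas proved above. Applying Supplemental Lemma~\ref{lemma:var-prod} to the product $A\cdot B^{-1/2}$ reduces the task to bounding $\Var_{\vec{\theta}}[A]$, $\Var_{\vec{\theta}}[B^{-1/2}]$, together with $|B^{-1}|_{\max}$ and $(\mathbb{E}_{\vec{\theta}}[A])^2$. For these I would use (i) $\Var_{\vec{\theta}}[A]\le N_s^2\sum_{i,j}\Var_{\vec{\theta}}[\kappa_{\vec{\theta}}(\vec{x}_i,\vec{x}_j)]$ by Supplemental Lemma~\ref{lemma:var-sum} applied to the $N_s^2$ summands $y_iy_j\kappa_{\vec{\theta}}(\vec{x}_i,\vec{x}_j)$, whose variances equal those of $\kappa_{\vec{\theta}}(\vec{x}_i,\vec{x}_j)$; (ii) the bounds $0\le\kappa_{\vec{\theta}}(\vec{x}_i,\vec{x}_j)\le 1$ and $\kappa_{\vec{\theta}}(\vec{x}_i,\vec{x}_i)=1$ (valid for the fidelity and projected kernels), which give $N_s\le B(\vec{\theta})\le N_s^2$, hence $|B^{-1}|_{\max}\le 1/N_s$, $|\mathbb{E}_{\vec{\theta}}[A]|\le N_s^2$, and — via Supplemental Lemma~\ref{lemma:var-taylor} with $|X|_{\min}=N_s$ and $|X|_{\max}=N_s^2$ — $\Var_{\vec{\theta}}[B^{-1/2}]\le\frac{16+9(N_s-1)^2}{32N_s^3}\Var_{\vec{\theta}}[B]$; and (iii) $\Var_{\vec{\theta}}[B]\le 4N_s^2\sum_{i,j}\Var_{\vec{\theta}}[\kappa_{\vec{\theta}}(\vec{x}_i,\vec{x}_j)]$, obtained from Supplemental Lemma~\ref{lemma:var-sum} followed by Supplemental Lemma~\ref{lemma:var-prod} applied to $\kappa_{\vec{\theta}}(\vec{x}_i,\vec{x}_j)^2=\kappa_{\vec{\theta}}(\vec{x}_i,\vec{x}_j)\cdot\kappa_{\vec{\theta}}(\vec{x}_i,\vec{x}_j)$, which yields $\Var[\kappa^2]\le 4\Var[\kappa]$ for $\kappa\in[0,1]$. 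Substituting these estimates into the product bound and dividing by $N_s^2$ gives $\Var_{\vec{\theta}}[{\rm TA}(\vec{\theta})]\le M\sum_{i,j}\Var_{\vec{\theta}}[\kappa_{\vec{\theta}}(\vec{x}_i,\vec{x}_j)]$ with $M$ the stated combinatorial prefactor, and Chebyshev's inequality then produces the claimed probability bound.

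The main obstacle is the non-linear dependence of ${\rm TA}$ on $B(\vec{\theta})$ through the normalisation $B^{-1/2}$. Controlling $\Var_{\vec{\theta}}[B^{-1/2}]$ requires a first-order Taylor expansion of $x\mapsto x^{-1/2}$ about $\mathbb{E}_{\vec{\theta}}[B]$ together with a bound on the Taylor remainder (Supplemental Lemma~\ref{lemma:var-taylor}), which is only possible because $B(\vec{\theta})$ is bounded away from zero — this is precisely where $\kappa_{\vec{\theta}}(\vec{x}_i,\vec{x}_i)=1$ enters, ensuring $B(\vec{\theta})\ge N_s$. This remainder estimate is also the reason the statement is phrased as an \emph{approximate} bound. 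A secondary, purely bookkeeping point is tracking the polynomial-in-$N_s$ factors generated by the repeated applications of the variance-of-sum and variance-of-product lemmas, which is what fixes the explicit value of $M$.
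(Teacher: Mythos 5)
Your proposal is correct and essentially reproduces the paper's proof: it uses the same three auxiliary variance lemmas, the same bounds $N_s \le \sum_{i,j}\kappa_{\vec{\theta}}(\vec{x}_i,\vec{x}_j)^2 \le N_s^2$, the same first-order Taylor control of $x\mapsto x^{-1/2}$ with its remainder estimate, and Chebyshev at the end; the only organizational difference is that you apply the variance-of-product lemma once to the global product $A\cdot B^{-1/2}$, whereas the paper first splits ${\rm TA}$ into a sum of $N_s^2$ terms and applies it termwise, and the two orderings yield the same constant. One minor remark: carrying your estimates through gives $M=\frac{8+N_s^2\left(9(N_s-1)^2+16\right)}{4N_s}$, which agrees with the final line of the paper's own proof but differs by a factor of $N_s$ (in the second term) from the $N_s^3$ quoted in the proposition statement — an inconsistency internal to the paper rather than a gap in your argument, and harmless since the quoted $M$ is the larger of the two.
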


\begin{proof}
We recall the kernel target alignment ${\rm TA}(\vec{\theta})$ in Eq.~\eqref{eq:kernel-ta-mt}
\begin{align}
    {\rm TA}(\vec{\theta}) & = \frac{\sum_{i,j}y_i y_j \kappa_{\vec{\theta}}(x_i,x_j)}{\sqrt{\left( \sum_{i,j} (\kappa_{\vec{\theta}}(x_i,x_j))^2\right)\left( \sum_{i,j} (y_i y_j)^2\right)}} \\
    & = \frac{\sum_{i,j}y_i y_j \kappa_{\vec{\theta}}(x_i,x_j)}{ \sqrt{D_A(\vec{\theta})\sum_{i,j} (y_i y_j)^2}} \;,
\end{align}
where we define $D_A(\vec{\theta}) = \sum_{i,j} (\kappa_{\vec{\theta}}(x_i,x_j))^2$. We remark that, as the kernel is normalized and the sum is over all the training data, the minimum value of $D_A(\vec{{\theta}})$ (over all possible kernel-based models) happens when $\kappa_{\vec{\theta}}(x_i,x_j) = 0$ for all $i\neq j$, leading to
\begin{align}
    |D_A|_{min} = N_s \;. \label{eq:proof-ta-da-min}
\end{align}
Similarly, the maximum value of $D_A(\vec{{\theta}})$ (over all possible kernel-based models) is upper bounded with the scenario where $\kappa_{\vec{\theta}}(x_i,x_j) = 1$ for all $i$ and $j$, leading to
\begin{align}
    |D_A|_{max} = N^2_s \;.  \label{eq:proof-ta-da-max}
\end{align}

We now consider the variance of the kernel target alignment~\eqref{eq:kernel-ta-mt} and, again, the concentration bound can be shown via Chebyshev's inequality.
\begin{align}
    \Var_{\vec{\theta}}[{\rm TA}(\vec{\theta})] & = \Var_{\vec{\theta}} \left[ \frac{\sum_{i,j} y_iy_j \kappa_{\vec{\theta}}(x_i,x_j)}{\sqrt{D_A(\vec{\theta})\sum_{i',j'} (y_{i'} y_{j'})^2}} \right]  \\
    & \leq N_s^2 \sum_{i,j} \Var_{\vec{\theta}}\left[\frac{y_i y_j\kappa_{\vec{\theta}}(\vec{x}_i,\vec{x}_j)}{\sqrt{D_A(\vec{\theta})\sum_{i',j'} (y_{i'} y_{j'})^2}} \right] \\
    & =  \sum_{i,j} \Var_{\vec{\theta}}\left[\frac{\kappa_{\vec{\theta}}(\vec{x}_i,\vec{x}_j)}{\sqrt{D_A(\vec{\theta})}} \right] \\
    & \leq \sum_{i,j} \left( 2 \Var_{\vec{\theta}}[\kappa_{\vec{\theta}}(\vec{x}_i,\vec{x}_j)] \cdot \left( \frac{1}{|D_A|_{\rm min}}\right) + 2 (\Ebb_{\vec{\theta}}[\kappa_{\vec{\theta}}(\vec{x}_i,\vec{x}_j)])^2 \Var_{\vec{\theta}}\left[ \frac{1}{\sqrt{D_A(\vec{\theta})}}\right]\right) \\
    & \leq   \sum_{i,j} \left( \frac{ 2\Var_{\vec{\theta}}[\kappa_{\vec{\theta}}(\vec{x}_i,\vec{x}_j)]}{N_s} + 2 \Var_{\vec{\theta}}\left[ \frac{1}{\sqrt{D_A(\vec{\theta})}}\right] \right) \label{eq:proof-ta-var-1}
\end{align}
where the first inequality is due to Lemma~\ref{lemma:var-sum}, the second equality is from $\Var[c X] = c^2 \Var[X]$ for a constant $c$ and evaluating $ \frac{(y_i y_j)^2}{\sum_{i',j'}(y_{i'} y_{j'})^2 }= \frac{1}{N_s^2}$ thanks to $(y_iy_j)^2 = 1$ $\forall i,j$ , the second inequality comes from using Lemma~\ref{lemma:var-prod} and the last inequality is due to the fact that $\Ebb_{\vec{\theta}}[\kappa_{\vec{\theta}}(\vec{x}_i,\vec{x}_j)] \leq 1$.

We now focus on the variance of $1/\sqrt{D_A(\vec{\theta})}$. We have
\begin{align}
    \Var_{\vec{\theta}}\left[ \frac{1}{\sqrt{D_A(\vec{\theta})}}\right] & \leq \left (\frac{1}{2 |D_A|_{\rm min}^3} + \frac{9 (|D_A|_{\rm max} - |D_A|_{\rm min} )^2}{32 |D_A|_{\rm min}^5} \right)\Var[D_A(\vec{\theta})] \\
    & = \left( \frac{16+ 9(N_s -1)^2}{32N_s^3}\right) \Var_{\vec{\theta}}\left[\sum_{ij} \kappa^2_{\vec{\theta}}(\vec{x}_i,\vec{x}_j)\right] \\
    & \leq \left( \frac{16+ 9(N_s -1)^2}{32N_s^3}\right) N_s^2 \sum_{i,j}  \Var_{\vec{\theta}}\left[\kappa^2_{\vec{\theta}}(\vec{x}_i,\vec{x}_j) \right] \\
    & \leq \left( \frac{16+ 9(N_s -1)^2}{8N_s}\right)  \sum_{i,j}  \Var_{\vec{\theta}}[\kappa_{\vec{\theta}}(\vec{x}_i,\vec{x}_j)] \label{eq:proof-ta-var-2}
\end{align}
where the first inequality is from using Lemma~\ref{lemma:var-taylor}, the first equality is from substituting $|D_A|_{\rm min}$ in Eq.~\eqref{eq:proof-ta-da-min} and $|D_A|_{\rm max}$ in Eq.~\eqref{eq:proof-ta-da-max}, the second inequality is due to Lemma~\ref{lemma:var-sum}, and finally the third inequality is from using Lemma~\ref{lemma:var-prod} followed by $\Ebb_{\vec{\theta}}[\kappa_{\vec{\theta}}(\vec{x}_i,\vec{x}_j)] \leq 1$.  Substituting Eq.~\eqref{eq:proof-ta-var-2} back into Eq.~\eqref{eq:proof-ta-var-1} leads to 
\begin{align}
    \Var_{\vec{\theta}}[{\rm TA}(\vec{\theta})] \leq \left( \frac{8+N_s^2\left(9(N_s-1)^2+16\right)}{4N_s}\right) \sum_{i,j} \Var_{\vec{\theta}}[\kappa_{\vec{\theta}}(\vec{x}_i,\vec{x}_j)] \;.
\end{align}
Using Chebyshev's inequality leads us to the desired concentration result.

\end{proof}

\section{Sources that lead to exponentially flat landscape of parameterized quantum kernels}\label{appendix:features-ta}
Proposition~\ref{prop:target-kernel} establishes that the training landscape of the kernel target alignment ${\rm TA}(\vec{\theta})$ can be analyzed at the level of the parameterized quantum kernels $\kappa_{\vec{\theta}}(\vec{x},\vec{x'})$. Namely, 
if the training landscape of $\kappa_{\vec{\theta}}(\vec{x},\vec{x'})$ with respect to the variational parameters $\vec{\theta}$ is exponentially flat in the number of qubits $n$, then the training landscape of ${\rm TA}(\vec{\theta})$ also suffers the same fate. In this section, we investigate features of the parameterized data embedding $U(\vec{x},\vec{\theta})$ that lead to an exponentially flat training landscape of the parameterized quantum kernels. In particular, when designing the parameterized quantum kernels,  features that induce barren plateaus in QNNs should be avoided. These include the \textit{expressivity of the training block}, \textit{entanglement}, \textit{global measurements} and \textit{noise}.
We note that, although the proofs of the following results are similar to those in the previous sections, the implication of the results is different. While the kernel concentration in the previous sections happens due to the input data, here the training flat landscape is due to the variational part of the parameterized data embedding.

\subsection{Expressivity}
Similar to the ensemble of data-encoded unitaries over the possible input data, we can define an ensemble of parametrized unitaries $U(\vec{x},\vec{\theta})$ for a given input data $\vec{x}$ over variational parameters $\vec{\theta}$ sampled from a domain $\Theta$. That is, for $\vec{\theta} \in \Theta$, we have the ensemble $\Ubb_{\vec{\theta}}(\vec{x})$ for a given $\vec{x}$
\begin{align}
    \Ubb_{\vec{\theta}}(\vec{x}) = \{ U(\vec{x},\vec{\theta}) | \vec{\theta} \in \Theta \}\,.\label{eq:ensemble-params}
\end{align}
Then, the expressivity can be measured using the superoperator~\eqref{eq:expressivity-measure-mt} with $\Ubb =  \Ubb_{\vec{\theta}}(\vec{x})$.

\begin{supplemental_theorem} 
Consider the parametrized fidelity quantum kernel and the parametrized projected quantum kernel defined in Eq.~\eqref{eq:projected-gaussian-kernel-mt} associated with the parametrized data embedding $U(\vec{x},\vec{\theta})$. For a given pair of input data $\vec{x}$ and $\vec{x'}$, we have
\begin{align}
    {\rm Pr}_{\vec{\theta}}[|\kappa_{\vec{\theta}}(\vec{x},\vec{x'}) - \mu| \geq \delta] \leq \frac{\tilde{G}_n(\varepsilon_{\Ubb_{\vec{\theta}}(\vec{x})},\varepsilon_{\Ubb_{\vec{\theta}}(\vec{x'})})}{\delta^2};, 
\end{align}
where $\mu = \Ebb_{\vec{\theta}} [\kappa_{\vec{\theta}}(\vec{x},\vec{x'})]$ and
$\varepsilon_{\Ubb_{\vec{\theta}}(\vec{x})} = \| \AC_{\Ubb_{\vec{\theta}}(\vec{x})}(\rho_0)\|_1$ is the expressivity measure over $\Ubb_{\vec{\theta}}(\vec{x})$
as defined in Eq.~\eqref{eq:ensemble-params} 
with $\AC_{\Ubb_{\vec{\theta}}(\vec{x})}(\cdot)$ defined in Eq.~\eqref{eq:expressivity-measure-mt}.
\begin{enumerate}
    \item For the fidelity quantum kernel $\kappa_{\vec{\theta}}(\vec{x},\vec{x'}) = \kappa^{FQ}_{\vec{\theta}}(\vec{x},\vec{x'})$, we have
\begin{align}
    \tilde{G}_n(\varepsilon_{\Ubb_{\vec{\theta}}(\vec{x})},\varepsilon_{\Ubb_{\vec{\theta}}(\vec{x'})}) = \beta_{\rm Haar} +  \varepsilon_{\Ubb_{\vec{\theta}}(\vec{x})}\varepsilon_{\Ubb_{\vec{\theta}}(\vec{x'})}+ \sqrt{\beta_{\rm Haar}}\left(\varepsilon_{\Ubb_{\vec{\theta}}(\vec{x})} + \varepsilon_{\Ubb_{\vec{\theta}}(\vec{x'})}\right) \; ,
\end{align}
where $\beta_{\rm Haar} = \frac{1}{2^{n-1}(2^n+1)}$. 
    \item For the projected quantum kernel $\kappa_{\vec{\theta}}(\vec{x},\vec{x'}) = \kappa^{PQ}_{\vec{\theta}}(\vec{x},\vec{x'})$, we have
\begin{align}
    \tilde{G}_n(\varepsilon_{\Ubb_{\vec{\theta}}(\vec{x})},\varepsilon_{\Ubb_{\vec{\theta}}(\vec{x'})}) = 2 \gamma n ( 2 \tilde{\beta}_{\rm Haar} +\varepsilon_{\Ubb_{\vec{\theta}}(\vec{x})} + \varepsilon_{\Ubb_{\vec{\theta}}(\vec{x'})}) \;.
\end{align}
where $\tilde{\beta}_{\rm Haar} = \frac{3}{2^{n+1}+2}$.
\end{enumerate}
\end{supplemental_theorem}

\begin{proof}

The proof follows the same steps as the proof of the extension of Theorem~\ref{thm:expressivity-kernel} in Appendix~\ref{appendix:extension-expressivity} with the integration over $\vec{x}$ and $\vec{x'}$ replaced with the integration over $\vec{\theta}$.
\end{proof}

\subsection{Entanglement}

We show that the entanglement generated via the parametrized data embedding can have a negative impact on the projected quantum kernels. Particularly, the following theorem generalizes Theorem~\ref{thm:entanglement-kernel} for the parametrized projected quantum kernel.

\begin{supplemental_theorem}\label{sup-thm:entanglement-kernel}
Consider the parametrized projected quantum kernel $\kappa^{PQ}_{\vec{\theta}}(\vec{x},\vec{x'})$. Consider a pair of parametrized data-encoded states $\rho(\vec{x},\vec{\theta})$ and  $\rho(\vec{x'},\vec{\theta})$, associated with $\vec{x}$, $\vec{x'}$ and $\vec{\theta}$. Then, we have
\begin{align}
    \left| 1 - \kappa^{PQ}_{\vec{\theta}}(\vec{x},\vec{x'})\right|   \leq (2\ln2) \gamma \Gamma_s(\vec{x},\vec{x'},\vec{\theta}) \; ,
\end{align}
where
\begin{align}
    \Gamma_s(\vec{x},\vec{x'},\vec{\theta}) =  \sum_{k=1}^n \left[ \sqrt{S\left(\rho_k(\vec{x},\vec{\theta})\Big\|  \frac{\mathbb{1}_k}{2}\right)} +  \sqrt{S\left(\rho_k(\vec{x'},\vec{\theta})\Big\|  \frac{\mathbb{1}_k}{2}\right)} \right]^2  \;,
\end{align}
where we denote $S\left(\cdot\|  \cdot \right)$ as the quantum relative entropy, $\rho_k$ as a reduced state on qubit $k$, and $\mathbb{1}_k$ as the maximally mixed state on qubit $k$.
\end{supplemental_theorem}
\begin{proof}
The proof is the same as the proof of Theorem~\ref{thm:entanglement-kernel} in Appendix~\ref{appendix:proof-entangled} with $\rho(\vec{x})$ replaced with $\rho(\vec{x},\vec{\theta})$.
\end{proof}

\subsection{Global measurements}
We argue that the variational part of $U(\vec{x},\vec{\theta})$ should not contain global measurements. This is only relevant to the fidelity quantum kernel since its associated observable is global. On the other hand, global measurements have no impact on projected kernels due to their local construction. 

To illustrate this, we consider the parametrized embedding of the form $U(\vec{x},\vec{\theta}) = U_d(\vec{x})U_p(\vec{\theta})$ where $U_d(\vec{x})$ and $U_p(\vec{\theta})$ can be arbitrary. Supplemental Proposition~\ref{sup-prop:global} then shows that the variance of the parametrized kernel with respect to $\vec{\theta}$ is upper bounded by the variance of an expectation of some global observable.
\begin{supplemental_proposition}\label{sup-prop:global}
Consider the parametrized fidelity quantum kernels $\kappa^{FQ}_{\vec{\theta}}(\vec{x},\vec{x'})$ with the parametrized embedding $U(\vec{x},\vec{\theta}) = U_d(\vec{x})U_p(\vec{\theta})$ where $U_d(\vec{x})$ and $U_p(\vec{\theta})$ are the embedding and parametrized unitary blocks. In addition, consider the decomposition of $U_d^\dagger(\vec{x})U_d(\vec{x'})$ as $\mathcal{M}_R(\vec{x},\vec{x'}) + i \mathcal{M}_I(\vec{x},\vec{x'})$ with $\mathcal{M}_R(\vec{x},\vec{x'})$ and $\mathcal{M}_I(\vec{x},\vec{x'})$ being some Hermitian operators.
Then, we have
\begin{align}
    \Var_{\vec{\theta}}[\kappa^{FQ}_{\vec{\theta}}(\vec{x},\vec{x'})] \leq 4 {\rm max}( \Var_{\vec{\theta}}[a_R],  \Var_{\vec{\theta}}[a_I]) \;,
\end{align}
where $a_R = \Tr[\mathcal{M}_R(\vec{x},\vec{x'})U_p(\vec{\theta}) \rho_0 U_p^\dagger(\vec{\theta})] $ and $a_I = \Tr[\mathcal{M}_I(\vec{x},\vec{x'})U_p(\vec{\theta}) \rho_0 U_p^\dagger(\vec{\theta})] $ with an initial state $\rho_0$.
\end{supplemental_proposition}
\begin{proof}

We are now ready to prove the proposition. Consider the decomposition of $\kappa^{FQ}_{\vec{\theta}}(\vec{x},\vec{x'})$ with an initial state $\rho_0 = \ket{\psi_0}\bra{\psi_0}$
\begin{align}
    \kappa^{FQ}_{\vec{\theta}}(\vec{x},\vec{x'}) = & |\bra{\psi_0}U_p^\dagger(\vec{\theta}) U_d^\dagger(\vec{x})U_d(\vec{x'})U_p(\vec{\theta}) \ket{\psi_0}|^2 \;,\\
    = & |\bra{\psi_0}U_p^\dagger(\vec{\theta}) (\mathcal{M}_R(\vec{x},\vec{x'}) + i\mathcal{M}_I(\vec{x},\vec{x'}) )U_p(\vec{\theta}) \ket{\psi_0}|^2 \;,\\
    = & (\bra{\psi_0}U_p^\dagger(\vec{\theta}) \mathcal{M}_R(\vec{x},\vec{x'}) U_p(\vec{\theta}) \ket{\psi_0})^2 + (\bra{\psi_0}U_p^\dagger(\vec{\theta}) \mathcal{M}_I(\vec{x},\vec{x'}) U_p(\vec{\theta}) \ket{\psi_0})^2  \label{eq:param-kernel-line0}\\
    = & a_R^2 + a_I^2 \;,
\end{align}
where we express $U_d^\dagger(\vec{x})U_d(\vec{x'})$ as $\mathcal{M}_R(\vec{x},\vec{x'}) + i\mathcal{M}_I(\vec{x},\vec{x'})$ with $\mathcal{M}_R(\vec{x},\vec{x'})$ and $\mathcal{M}_I(\vec{x},\vec{x'})$ being some Hermitian matrices. We now upper bound the variance of $\kappa^{FQ}_{\vec{\theta}}(\vec{x},\vec{x'})$.
\begin{align}
    \Var_{\vec{\theta}}[\kappa^{FQ}_{\vec{\theta}}(\vec{x},\vec{x'})] & = \Var_{\vec{\theta}}[a_R^2 + a_I^2]  \\
    & \leq 2 \Var_\theta [a_R^2] + 2 \Var_\theta[a_I^2] \label{eq:param-kernel-line1} \\
    & \leq 8|a_R|_{\rm max}^2 \Var_{\vec{\theta}}[a_R] + 8|a_I|_{\rm max}^2 \Var_{\vec{\theta}}[a_I] \\
    & \leq 8\Var_{\vec{\theta}}[a_R] + 8\Var_{\vec{\theta}}[a_I]\\
    & \leq 16 {\rm max}(\Var_{\vec{\theta}}[a_R], \Var_{\vec{\theta}}[a_I]) \;, 
\end{align}
where the first inequality is due to Lemma~\ref{lemma:var-sum}, the second inequality is due to Lemma~\ref{lemma:var-prod} followed by $\Ebb[X] \leq |X|_{\rm max} $, the third inequality comes from the fact that $a_R$ and $a_I$ are upper bounded by 1 (since $\|\mathcal{M}_R(\vec{x},\vec{x'})\|_{\infty}, \|\mathcal{M}_I(\vec{x},\vec{x'})\|_{\infty} \leq 1$), the last inequality is from choosing the maximum of the two terms.
\end{proof}

It follows from Supplemental Proposition \ref{sup-prop:global} that if $a_R$ and $a_I$ exhibit barren plateaus (with respect to their implicit $\vec{\theta}$ dependence), then $\kappa^{FQ}_{\vec{\theta}}(\vec{x},\vec{x'})$ will also exhibit a barren plateau. 
Since $a_I$ and $a_R$ are linear expectation values of Hermitian operators $ \mathcal{M}_R(\vec{x},\vec{x'})$ and $ \mathcal{M}_I(\vec{x},\vec{x'})$, this allows us to apply barren plateaus results from QNNs to $\kappa^{FQ}_{\vec{\theta}}(\vec{x},\vec{x'})$. In particular, if $U^\dagger_d(\vec{x})U_d(\vec{x'})$ is global and $U_p(\vec{\theta}) $ is a layer hardware efficient ansatz, the results in Ref.~\cite{cerezo2020cost} for global costs imply $\kappa^{FQ}_{\vec{\theta}}(\vec{x},\vec{x'})$ exponentially concentrates around its mean.

\subsection{Noise}

Noise negatively affects the trainability of the parametrized quantum kernels, exponentially flattening the training landscape (with respect to $\vec{\theta}$) of $\tilde{\kappa}_{\vec{\theta}}(\vec{x},\vec{x'})$ and at the same time leading to the exponential concentration (with respect to $\vec{x},\vec{x'}$). The following theorem generalizes Theorem~\ref{thm:noise-kernel} to the noisy parametrized quantum kernels $ \tilde{\kappa}_{\vec{\theta}}(\vec{x},\vec{x'})$

First, we specify the parametrized data embedding to be in the following form
\begin{align}\label{eq:noise-params-embedding}
U(\vec{x},\vec{\theta}) =  \prod_{l=1}^L U_l (\vec{x}_l,\vec{\theta}_l)
\end{align}
We consider the same local Pauli noise model as described in where the noise acts before and after each layer~\eqref{eq:noise-noise-evolution-mt} with the noise characteristic $q$.

\begin{supplemental_theorem}\label{sup-thm:noise-kernel}
Consider the $L$-layered parametrized data embedding defined in Eq.~\eqref{eq:noise-params-embedding} with input state $\rho_0$ and the layer-wise Pauli noise model defined Eq.~\eqref{eq:noise-noise-evolution-mt} with characteristic noise parameter $q\leq 1$. The concentration of quantum kernel values may be bounded as follows
\begin{align}
    \left| \tilde{\kappa}_{\vec{\theta}}(\vec{x},\vec{x'}) - \mu \right| \leq F(q,L) \; . 
\end{align}
\begin{enumerate}
    \item For the fidelity quantum kernel $\tilde{\kappa}_{\vec{\theta}}(\vec{x},\vec{x'}) =\tilde{\kappa}_{\vec{\theta}}^{FQ}(\vec{x},\vec{x'})$, we have $ \mu = 1/2^n$, and
    \begin{align}
        F(q,L) = q^{2L+1}  \left\| \rho_0 - \frac{\mathbb{1}}{2^n} \right\|_2\;.
    \end{align}
    \item For the projected quantum kernel $\tilde{\kappa}_{\vec{\theta}}(\vec{x},\vec{x'}) =\tilde{\kappa}_{\vec{\theta}}^{PQ}(\vec{x},\vec{x'})$, we have $ \mu = 1$, and
    \begin{align}
        F(q,L) = (8 \ln 2) \gamma n  q^{b(L+1)}S_2\left(\rho_0 \Big\|  \frac{\mathbb{1}}{2^{n}}\right) \;,
    \end{align}
    where $S_2( \cdot \| \cdot)$ denotes the sandwiched 2-R\"enyi relative entropy and $b = 1/(2\ln(2)) \approx 0.72 $.
\end{enumerate}
Additionally,  the noisy data-encoded quantum state $\tilde{\rho}(\vec{x},\vec{\theta})$ concentrates towards the maximally mixed state as
\begin{align} 
     \left\| \tilde{\rho}(\vec{x},\vec{\theta}) - \frac{\mathbb{1}}{2^n} \right\|_2 \leq q^{L+1} \left\| \rho_0 - \frac{\mathbb{1}}{2^n} \right\|_2 \; .
\end{align}
\end{supplemental_theorem}
\begin{proof}
The proof is the same as the proof of Theorem~\ref{thm:noise-kernel} in Appendix~\ref{appendix:proof-noise} with $\tilde{\rho}(\vec{x})$ replaced with $\tilde{\rho}(\vec{x},\vec{\theta})$.
\end{proof}

\end{document}